\documentclass[10pt, a4paper,reqno]{article}


\usepackage[english]{babel}
\usepackage[T1]{fontenc}  
\usepackage[latin1]{inputenc}   

\usepackage{amsmath,amssymb,amsthm,amsfonts}
\usepackage{url}
\usepackage{nicefrac,enumerate,bbm}
\usepackage{stmaryrd} 

\theoremstyle{plain}
\newtheorem{theorem}{{Theorem}}[section] 
\newtheorem*{theorem*}{{Theorem}}
\newtheorem{proposition}[theorem]{Proposition}
\newtheorem*{proposition*}{Proposition}
\newtheorem{lemma}[theorem]{Lemma}
\newtheorem*{lemma*}{Lemma}
\newtheorem{corollary}[theorem]{Corollary}
\newtheorem*{corollary*}{Corollary*}

\theoremstyle{definition}

\newtheorem*{definition*}{Definition}

\theoremstyle{remark}
\newtheorem*{remark*}{Remark}


\makeatletter

\@addtoreset{equation}{section}  
\makeatother

\newcommand{\singl}[1]{\left\{ #1 \right\}}		
\newcommand{\Ii}[2]{\llbracket #1,#2 \rrbracket}	
\newcommand{\R}{\mathbb{R}}

\newcommand{\C}{\mathbb{C}}
\newcommand{\N}{\mathbb{N}}

\newcommand{\tqe}{:}

\renewcommand{\leq}{\leqslant}
\renewcommand{\geq}{\geqslant}
\renewcommand{\bar}[1]{\overline{#1}}
\newcommand{\inv}{^{-1}}
\newcommand {\limt}[2]{\xrightarrow[#1 \to #2]{}}
\newcommand{\abs}[1]{\left\vert #1\right\vert} 
\newcommand{\abspetit}[1]{\vert #1\vert}
\newcommand{\nr}[1]{\left\Vert #1\right\Vert}         
\newcommand{\innp}[2]{\left< #1 , #2 \right>}         
\newcommand{\Four}{\mathcal{F}}		
\newcommand{\Op}{{\mathop{Op}}_h}		
\newcommand{\Opw}{{\mathop{Op}}_h^w}		
\newcommand{\pppg}[1] {\left< #1 \right>} 	
\newcommand{\bigo}[2]{\mathop{O}\limits_{#1 \to #2}}
\newcommand{\littleo}[2]{\mathop{o}\limits_{#1 \to #2}}
\newcommand{\restr}[2]{\left.#1\right|_{#2}}         
\renewcommand{\Re}{\mathop{\rm{Re}}\nolimits}        
\renewcommand{\Im}{\mathop{\rm{Im}}\nolimits}        
\newcommand{\Ran}{\mathop{\rm{Ran}}\nolimits} 
\DeclareMathOperator{\supp}{supp}                    
\newcommand{\1}[1]{\ensuremath{\mathbbm{1}_{#1}}} 

\DeclareMathOperator{\Jac}{Jac}
\DeclareMathOperator{\Hess}{Hess}
\newcommand{\seq}[2]{\left({#1}_{#2}\right)_{#2 \in\N}}
\newcommand{\eqv}{\Longleftrightarrow} 
\newcommand{\trsp}[1]{{\vphantom{#1}}^{\mathit t}{#1}}
\DeclareMathOperator{\Mat}{Mat}
\DeclareMathOperator{\GL}{GL}
\DeclareMathOperator{\sgn}{sgn} 
\newcommand{\symbor}{\Sc_b}	
\newcommand{\symb} {\Sc}
\DeclareMathOperator{\divg}{div}

\renewcommand{\a}{\alpha}
\renewcommand{\b}{\beta}
\newcommand{\g}{\gamma}
\newcommand{\G}{\Gamma}
\renewcommand{\d}{\delta}
\newcommand{\D}{\Delta}
\newcommand{\e}{\varepsilon}

\newcommand{\z}{\zeta}
\newcommand{\y}{\eta}
\renewcommand{\th}{\theta}

\renewcommand{\k}{\kappa}

\renewcommand{\l}{\lambda}
\renewcommand{\L}{\Lambda}
\newcommand{\m}{\mu}
\newcommand{\n}{\nu}
\newcommand{\x}{\xi}
\newcommand{\X}{\Xi}
\newcommand{\s}{\sigma}

\renewcommand{\t}{\tau}
\newcommand{\f}{\varphi}
\newcommand{\vf}{\phi}
\newcommand{\h}{\chi}
\newcommand{\p}{\psi}

\renewcommand{\o}{\omega}
\renewcommand{\O}{\Omega}

\newcommand{\Bc}{{\mathcal B}}

\newcommand{\Gc}{{\mathcal G}}

\newcommand{\Oc}{{\mathcal O}}
\newcommand{\Sc}{{\mathcal S}}
\newcommand{\Tc}{{\mathcal T}}
\newcommand{\Uc}{{\mathcal U}}
\newcommand{\Vc}{{\mathcal V}}
\newcommand{\Wc}{{\mathcal W}}

\addtolength{\textheight}{2cm}
\addtolength{\textwidth}{2cm}
\addtolength{\oddsidemargin}{-1cm}
\addtolength{\evensidemargin}{-1cm}
\addtolength{\topmargin}{-1cm}

\newcommand{\hoh}{{H_0^h}}
\newcommand{\huh}{{H_1^h}}
\newcommand{\hh}{H_h}
\newcommand{\negg}{{N_E \G}}
\newcommand{\snegg}{{\tilde\s}}

\begin{document}
\title
{Semiclassical measure for the solution of the dissipative Helmholtz equation\footnotetext{\noindent 2010 \emph{Mathematical Subject Classification.} 35J10, 35Q40, 35S30, 47B44, 47G30, 81Q20.}\footnotetext{\noindent \emph{Keywords.} Dissipative operators, Helmholtz equation, Semiclassical measures.} \footnotetext{This work is partially supported by the French National Research Project NONAa, No. ANR-08-BLAN-0228-01, entitled {\em Spectral and microlocal analysis of non selfadjoint operators}.}}
\author{Julien Royer}
\date{}

\maketitle

\begin{abstract}
We study the semiclassical measures for the solution of a dissipative Helmholtz equation with a source term concentrated on a bounded submanifold. The potential is not assumed to be non-trapping, but trapped trajectories have to go through the region where the absorption coefficient is positive. In that case, the solution is microlocally written around any point away from the source as a sum (finite or infinite) of lagragian distributions. Moreover we prove and use the fact that the outgoing solution of the dissipative Helmholtz equation is microlocally zero in the incoming region.
\end{abstract}

\tableofcontents

\section{Introduction and statement of the result}

We consider on $L^2(\R^n)$ the dissipative semiclassical Helmholtz equation:
\begin{equation} \label{helmholtz} 
(-h^2 \D + V_h - E_h) u_h = S_h
\end{equation}
in the high frequency limit, that is when the semiclassical parameter $h>0$ goes to 0. Here the potential $V_h = V_1 -ihV_2$ has a nonpositive imaginary part of size $h$. We recall (see \cite{benamou-al-03}) that this equation modelizes for instance the propagation of the electromagnetic field of a laser in material medium. In this setting the parameter $h$ is the wave length of the laser, $\Re(E_h - V_h)$ is linked to the electronic density of the material medium (and plays the role of the refraction index for the corresponding hamiltonian problem) while $h\inv \Im(E_h - V_h)$ is the absorption coefficient of the laser energy by the material.
 
Thus, in order to consider the case of a non-constant absorption coefficient we have to allow non-real potentials. We proved in \cite{royer} that if the potential has non-positive imaginary part then (with decay and regularity assumptions on $V$) the resolvent $(-h^2\D + V_h -z)\inv$ is well-defined for $\Im z > 0$ and is of size $O(h\inv)$ uniformly for $z$ close to $E \in \R_+^*$ on condition $E$ satisfies an assumption on classical trajectories for the corresponding hamiltonian problem. In this case, the resolvent has a limit for $z \to E$ in the space of bounded operators in some weighted spaces, and this limit operator gives the (outgoing) solution for \eqref{helmholtz} (see below).\\

Given a source term $S_h$ and such an energy $E > 0$, our purpose in this paper is to study the asymptotic when $h \to 0$ for the outgoing solution $u_h$ of \eqref{helmholtz}. More precisely we are interested in the semiclassical measures (or Wigner measures) of $u_h$. The first work in this direction seems to be the paper of J.-D. Benamou, F. Castella, T. Katsaounis and B. Perthame (\cite{benamou-al-02}). In their paper $S_h = S(x/h)/h$ concentrates on 0 and $\Im E_h = h \a_h$ with $\a_h \to \a \geq 0$. They consider the family of Wigner transforms $f_h$ of the solutions $u_h$ and prove that after extracting a subsequence, this family of Wigner transforms converges to a measure $f$ which is the (outgoing) solution of the transport equation\footnote{ given with our notations.}:
\begin{equation} \label{liouville-cast}
\a f + \x.\partial_x f(x,\x) - \frac 1 2 \partial_x V_1(x) .\partial_\x f(x,\x) = \frac 1 {(4\pi)^2} \d(x) \big|\hat S(\x)\big|^2 \d(\abs \x = 1)
\end{equation}
Note that the solution is estimated by Morrey-Companato-type estimates (see \cite{perthamev99}) and that part of the result is left as a conjecture and proved in \cite{castella05}.

F. Castella, B. Perthame and O. Runborg study in \cite{castella-pr-02} the similar problem with a source term which concentrates on an unbounded submanifold of $\R^n$. As a consequence there is a lack of decay of the source and Morrey-Companato estimates cannot be used. Actually only a formal description of the asymptotics is given and the proof concerns the case where the refraction index is constant, that is $V_1 = 0$, and the submanifold is an affine subspace. X.-P. Wang and P. Zhang give a proof for $V_1\neq 0$ (variable refraction index) in \cite{wangz06} using uniform estimates given by Mourre method. We also mention the work of E. Fouassier who considered the case of a source which concentrates on two points (see \cite{fouassier06}, $V_1=0$ in this case) and the case of a potential discontinuous along an affine hyperplane in \cite{fouassier07} (the source concentrates on 0 in this case). All this papers use \emph{a priori} estimates of the solution in Besov spaces (we have already mentionned \cite{perthamev99}, see also \cite{castellaj06,wangz06,wang07,castellajk08} for further results about these estimates).

Here we are going to use the point of view of J.-F. Bony (see \cite{bony}). He considers the case of a source which concentrates on one or two points (with $V_1 \neq 0$) using a time-dependant method based on a BKW approximation of the propagator to prove that, microlocally, the solution of the Helmholtz equation is a finite sum of lagrangian distributions. In particular, abstract estimates of the solution are only used for the large times control, and this part of the solution has no contribution for the semiclassical measure, so the measure is actually constructed explicitely. Moreover, this method requires a geometrical assumption weaker than the Virial hypothesis used in the previous works. \\

In this paper we consider the case where not only the refraction index but also the absorption coefficient can be non-constant, and hence we have to work with a non-selfadjoint Schrödinger operator. But, as already mentionned, we know that the resolvent is well-defined for a spectral parameter $z$ with $\Im z > 0$. For the selfadjoint semiclassical Schrödinger, we need a non-trapping condition on classical trajectories of energy $E > 0$ to have uniform estimates of the resolvent and the limiting absorption principle around $E$ (see \cite{robertt87,wang87}). In the dissipative case, this assumption can be weakened as follows: any trajectory should either go to infinity or meet the region where $V_2 >0$. This is the assumption we are going to use, and as as consequence, even if we can show that the outgoing solution $u_h$ of \eqref{helmholtz} is microlocally zero in the incoming region, the contribution of large times in $u_h$ does not vanishes when $h\to 0$ as is the case in \cite{bony}, and in particular the solution can be an infinite sum of lagrangian distributions around some points of the phase space. However, the assumption that bounded trajectories should meet the region where there is absorption will make the series of amplitudes of these distributions convergent, which is the key argument in order to have a well-defined semiclassical measure in our case.

Concerning the source term, $S_h$ is allowed to concentrate on any bounded submanifold of $\R^n$. We do not have problem like in \cite{castella-pr-02, wangz06} with decay assumptions, but this allows us to see what happens when the source concentrates on a non-flat submanifold. Note that we do not have phase factor in our source term (see below) so we are in the propagative regime described in \cite{castella-pr-02}.\\

Let us now state the assumptions we are going to use in this work. We denote the free laplacian $-h^2 \D$ by $\hoh$ and $\hh$ is the dissipative Schrödinger operator on $L^2(\R^n)$ ($n\geq 1$): 
\[
\hh = -h^2 \D + V_1(x) -ihV_2(x)
\]
We also denote by $\huh = -h^2 \D + V_1(x)$ the selfadjoint part of $\hh$. $V_1,V_2$ are smooth real functions on $\R^n$, $V_2$ is nonnegative and for $j \in \{1,2\}$, $\a \in\R^n$:
\begin{equation} \label{h1}
\abs{\partial^\a V_j(x)} \leq C_\a \pppg x ^{-\rho - \abs \a}
\end{equation}
for some $\rho > 0$. Here $\pppg \cdot$ denotes the function $x \mapsto (1 + \abs x ^2)^{\frac 12}$. Let $p : (x,\x) \mapsto \x^2 +V_1(x)$ be the symbol on $\R^{2n} \simeq T^*\R^n$ of the selfadjoint part $\huh$. The classical trajectories for this problem are the solutions $\vf^t(w) = (\bar x (t,w),\bar \x(t,w))$ for $w\in\R^{2n}$ of the hamiltonian problem:
\[
\begin{cases}
\partial_t \bar x(t,w) = 2 \bar \x(t,w) \\
\partial_t \bar \x(t,w) = -\nabla V_1 (\bar x(t,w)) \\
\vf^0(w)  = w
\end{cases}
\]

We recall from \cite{royer} that the exact hypothesis we need on an energy $E>0$ to have the limiting absorption principle around $E$ is the following: if we set
\[
\Oc = \singl{x \in \R^n \tqe V_2 (x) > 0}
\]
then for all $w\in\R^{2n}$ such that $p(w) = E$ we have:
\begin{equation}\label{hyp2}
\singl{\vf^t(w),t\in\R} \text{ is unbounded in $\R^{2n}$ or }\singl{\vf^t(w),t\in\R} \cap \Oc \neq \emptyset
\end{equation}
which means that any trapped trajectories should meet the set where there is absorption. For further use we also set, for $\g >0$:
\[
\Oc_\g = \singl{x \in \R^n \tqe V_2 (x) > \g}
\]
With this condition (which is actually necessary), for any $\a > \frac 12$ there exist $\e>0$ and $c\geq 0$ such that:
\[
\sup_{\abs{\Re z - E} \leq \e, \Im z > 0} \nr{\pppg x^{-\a} (\hh - z)\inv \pppg x ^{-\a}} \leq \frac c h
\]
and for all $\l \in [E-\e,E+ \e]$ the limit:
\[
(\hh - (E+i0))\inv := \lim_{\m \to 0^+} (\hh - (E+i\m))\inv
\]
exists (and is a continuous function of $\l$) in the space of bounded operators from $L^{2,\a}(\R^n)$ to $L^{2,-\a}(\R^n)$, where $L^{2,\d}(\R^n)$ stands for $L^2(\pppg x^{2\d} dx)$. Then for all $S_h \in L^{2,\a}(\R^n)$, $u_h = (\hh-(E+i0))\inv S_h \in L^{2,-\a}(\R^n)$ is the outgoing solution for \eqref{helmholtz}.

About the classical hamiltonian problem, we use the following notations:
\begin{eqnarray*}
&&\O_b^\pm(J) = \singl{w \in \R^{2n} \tqe \{\bar x (t,w),\pm t \geq 0\} \text{ is bounded}}\\
&&\O_\infty^\pm(J) = \singl{w \in \R^{2n} \tqe \abs{\bar x (t,w)} \limt t {\pm \infty} +\infty}
\end{eqnarray*}
Note that $\O_\infty^\pm(J)$ is open if $J$ is open and $\O_b^\pm(J)$ is closed if $J$ is closed.\\

Let us now introduce the source term we consider. Given a (bounded) submanifold $\G_2$ of dimension $d \in \Ii 0 {n-1}$ in $\R^n$ with the measure $\s$ induced by the Lebesgue measure on $\R^n$, a smooth function $A$ of compact support on $\G_2$ and a Schwartz function $S \in \Sc(\R^ n)$, we note for $x \in \R^n$:
\begin{equation}
S_h(x) = h^{\frac {1-n-d} 2} \int_{z\in\G} A(z) S\left( \frac {x-z}h \right)\, d\s(z)
\end{equation}

We can choose $\G$ and $\G_1$ open in $\G_2$ such that ${\G_0 := \supp A \subset \G}$, $\bar \G \subset \G_1$ and $\bar {\G_1} \subset \G_2$ (if $\G_2$ is compact we can have $\G_0 = \G = \G_1 = \G_2$).

As usual, for $z \in \G_2$ and $\z \in T_z\G_2$ small enough (where $T_z\G_2$ is the tangent space to $\G_2$ at $z$), we denote by $\exp_z(\z)$ the point $c_\z(1)$  where $t \mapsto c_\z(t)$ is the unique geodesic on $\G_2$ with initial conditions $c_\z(0) =z$ and $c_\z'(0) = \z$ (see \cite[§2.86]{gallothl}). On $\G_2$ we define the distance $d_\G$ as usual: for $x,y\in\G_2$, $d_\G(x,y)$ is the infimum of the length of all piecewise $C^1$ curves from $x$ to $y$. For $z \in \G_2$, there exists a neighborhood $\Uc$ of $z$ in $\G_2$ and $\e > 0$ such that for $x,y \in \Uc$ there is a unique geodesic $c$ from $x$ to $y$ of length less than $\e$. And the length of $c$ is $d_\G(x,y)$ (see \cite[§2.C.3]{gallothl}).

We consider a family of energies $E_h \in \C$ for $h \in ]0,1]$. We assume that $\Im E_h \geq 0$ and:
\begin{equation} \label{h6}
E_h = E_0 + h E_1 + \littleo h 0 (h)
\end{equation}
where $E_0 > 0$ satisfies \eqref{hyp2} and:
\begin{equation} \label{hyp3}
\forall z \in \bar \G, \quad V_1(z) < E_0
\end{equation}

We set $ N\G = \singl{(z,\x) \in \G \times \R^n \tqe \x \bot T_z\G}$,
\begin{equation*} 
\negg = \singl{(z,\x) \in N\G \tqe  \abs{\x} = \sqrt{E_0 - V_1(z)}}
\end{equation*}
and:
\begin{equation*}
\L = \singl{\vf^t(z,\x) ; t >0, (z,\x) \in \negg}
\end{equation*}

We similarly define $\negg_0$ and $\negg _1$. For $(z,\x) \in \negg$ and $(Z,\X) \in T_{(z,\x)}\negg$ we have $Z \in T_z \G$ and $\X \in \R^n$ decomposes as $\X = \X_T + \X_\sslash + \X_\bot$ with $\X_T \in T_z \G$, $\X_\sslash \in \R \x$ and $\X_\bot \in (T_z\G\oplus \R \x)^\bot$. Then $\negg$ is endowed with the metric $g$ defined by:
\[
g_{(z,\x)}\left((Z^1,\Xi^1),(Z^2,\Xi^2)\right) = \innp{Z^1}{Z^2}_{\R^n} + \innp{\X^1_\bot}{\Xi^2_\bot}_{\R^n}
\]
for all $(Z^1,\Xi^1),(Z^2,\Xi^2) \in T_{(z,\x)}\negg$. This means that we do not take into account the part of $\X$ colinear to $\x$ and $T_z \G$, which is allowed since $(Z,\X)$ never reduces to $(0,\X_T + \X_\sslash)$ unless $(Z,\X) = (0,0)$. Indeed, if $Z = 0$ then $\Xi \in T_{(z,\x)} (\negg \cap N_z\G)$ and hence $\Xi = \Xi_\bot$. Now we denote by $\snegg$ the canonical measure on $\negg$ given by the metric $g$. This means that for any smooth map $\p : \Uc \to \Vc$ (where $\Uc$ is an open set in $\R^{n-1}$ and $\Vc$ is an open set in $\negg$) and any function $f$ on $\Vc$ we have (see \cite[§3.H]{gallothl}):
\[
\int_\Vc f(v) \, d\snegg(v) = \int_{\Uc} f(\p(u)) \left( \det (g_{\p(u)}(\partial_i \p (u),\partial_j \p(u)))_{1 \leq i,j\leq n-1}\right) ^{\frac 12} \, du
\]

Finally we set:
\begin{equation*}
\Phi_0 = \singl{(z,\x) \in \negg \tqe \exists t > 0, \vf^t(z,\x) \in \negg}
\end{equation*}
The last assumption we need is:
\begin{equation} \label{hyp4}
\snegg(\Phi_0) = 0
\end{equation}
In \cite[section 4]{bony} is given an example of what can happen without an hypothesis of this kind. Note that when $\G = \singl{0}$, this assumption is weaker than the assumption $\n_0 (E_0-V_1(x)) - x.\nabla V_1(x) \geq c_0 >0$ for some $\n_0 \in ]0,2]$ which is used for instance in \cite{wang07}. This is no longer true in general (for instance we can take $V_1 = 0$, $E_0= 1$ and any circle in $\R^2$ for $\G$).\\

To study semiclassical measures of $u_h$, we choose the point of view of pseudo-differential operators. Let us recall that the Weyl quantization of an observable $a : \R^{2n} \to \C$ is the operator:
\[
\Opw (a) u (x) = \frac 1 {(2\pi h)^n} \int_{\R^n} \int_{\R^n} e^{\frac ih \innp{x-y} \x} a\left(\frac{x+y}2,\x \right) u(y)\, dy \, d\x
\]
We also use the standard quantization:
\[
\Op (a) u (x) = \frac 1 {(2\pi h)^n} \int_{\R^n} \int_{\R^n} e^{\frac ih \innp{x-y} \x} a(x,\x) u(y)\, dy \, d\x
\]
See \cite{robert, martinez, evansz} for more details about semiclassical pseudo-differential operators, \cite{gerard91} for semiclassical measures. We are going to use the following classes of symbols. For $\d \in\R$ we set:
\begin{equation*}
\Sc_\d = \singl{ a \in C^\infty(\R^{2n}) \tqe \forall \a ,\b \in \N^n, \exists c_{\a,\b},\forall (x,\x) \in \R^{2n}, \abs{\partial_x ^\a \partial_\x^\b a (x,\x)} \leq c_{\a,\b} \pppg x ^{\d-\abs \a}}
\end{equation*}
while $\Sc_b$ is the set of $C^\infty(\R^{2n})$ functions whose derivatives up to any order are in $L^\infty(\R^{2n})$.\\

We can now state the main theorem of this paper:

\begin{theorem} \label{th2.1}
There exists a Radon measure $\m$ on $\R^{2n}$ such that for all $q\in C_0^\infty(\R^{2n})$:
\begin{equation*}
\innp{\Opw(q) u_h}{u_h} \limt h 0  \int_{\R^{2n}} q \, d\m
\end{equation*}

Moreover $\m$ is characterized by the following three properties:
\begin{enumerate}[(i)]
\item $\m$ is supported on the hypersurface of energy $E_0$:
\begin{equation*}
\supp \m \subset p\inv(\singl{E_0})
\end{equation*}
\item $\m$ vanishes in the incoming region: let $\s \in]0,1[$, then there exists $R \geq 0$ such that for $q\in C_0^\infty (\R^{2n})$ supported in the incoming region $\G_-(R,-\s)$ (see definition in section \ref{sec-tps-gd}) we have:
\begin{equation*}
\int q \, d\m = 0 
\end{equation*}
\item $\m$ satisfies the Liouville equation:
\begin{equation} \label{liouville}
(H_p + 2 \Im E_1 + 2 V_2) \m = \pi (2\pi)^{d-n} A(z)^2 \abs \x \inv \hat S (\x)^2   \snegg
\end{equation}
where $H_p = \{p,\cdot\} = 2\x. \partial_x - \nabla V_1(x) .\partial_\x$ and $\snegg$ is extended by 0 on $\R^{2n} \setminus \negg$. This means that for any $q \in C_0^\infty(\R^{2n})$ we have:
\[
\int_{\R^{2n}} (-H_p + 2 \Im E_1 + 2 V_2) q \, d\m = \pi (2\pi)^{d-n}\int_{\negg} q(z,\x) A(z)^2 \abs \x \inv \hat S(\x)^2 \, d\snegg(z,\x)
\]
\end{enumerate}
\end{theorem}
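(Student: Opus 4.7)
The plan is to follow the time-dependent approach of J.-F. Bony \cite{bony}, suitably adapted to the dissipative setting. Starting from the Laplace representation valid for $\Im E_h > 0$ and passing to the limit by the limiting absorption principle recalled above, I would work with
\begin{equation*}
u_h = \frac{i}{h}\int_0^{+\infty} e^{itE_h/h}\, e^{-itH_h/h} S_h\, dt,
\end{equation*}
split at some fixed time $T > 0$ to be chosen large. For bounded times $t \in (0,T]$, a BKW parametrix for $e^{-it\huh/h}$ combined with a Duhamel expansion in the dissipative perturbation $-ihV_2$ represents $e^{-itH_h/h} S_h$ as a finite sum of Lagrangian distributions supported on the images $\vf^t(N\G)$, each carrying an amplitude damped by $\exp(-\int_0^t V_2(\bar x(s,z,\x))\,ds)$ which records the absorption accumulated along the classical trajectory. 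After energy localization around $E_0$ by a functional calculus $\chi(\huh)$ and stationary phase in the tangential directions of $\G$, only the images of $\negg$ contribute to leading order.

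For the tail $\int_T^{+\infty}$, the decisive new input compared with \cite{bony} is that classical trajectories no longer need to escape to infinity. Instead, every bounded trajectory enters $\Oc_\g$ with positive sojourn time by \eqref{hyp2}, so each successive WKB branch obtained by continuing the Lagrangian through one more return picks up a multiplicative damping factor bounded by some $\k < 1$ coming from $\exp(-\int V_2)$. Summing these geometrically decaying contributions yields a (possibly infinite) series of Lagrangian distributions describing $u_h$ microlocally near any point $(x_0,\x_0) \notin \bar\G \times \R^n$. Hypothesis \eqref{hyp4} guarantees that on $\negg$ the set of base points from which a later branch returns has $\snegg$-measure zero, which is crucial to ensure that squaring these Lagrangian distributions yields a well-defined Radon measure with no constructive interference between distinct branches.

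To obtain $\m$, expand
\begin{equation*}
\innp{\Opw(q) u_h}{u_h} = -\frac{1}{h^2}\int_0^\infty\!\!\int_0^\infty e^{i(t-s)\Re E_h/h - (t+s)\Im E_h/h}\, \innp{\Opw(q) e^{-itH_h/h}S_h}{e^{-isH_h/h}S_h}\, ds\, dt,
\end{equation*}
plug in the WKB parametrices and apply (non)stationary phase in $(t,s,z,z')$. Under \eqref{hyp4} only the diagonal $t=s$, $(z,\x)=(z',\x')$ survives to leading order, producing $\m$ as an explicit transported sum of Lagrangian measures. Property (i) then follows from the fact that these Lagrangians lie on $p\inv(E_0)$. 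Property (ii) is the microlocal vanishing of the outgoing solution in the incoming region, established in a preparatory section of the paper. For (iii), combining $(H_h - E_h) u_h = S_h$ with its adjoint yields
\begin{equation*}
\innp{\tfrac{1}{ih}[H_h,\Opw(q)] u_h}{u_h} = \tfrac{1}{ih}\bigl(\innp{\Opw(q) u_h}{S_h} - \innp{\Opw(q) S_h}{u_h}\bigr) + 2\innp{\Opw(qV_2) u_h}{u_h} - \tfrac{2\Im E_h}{h}\innp{\Opw(q) u_h}{u_h},
\end{equation*}
and as $h \to 0$ the left-hand side converges to $\int H_p q\, d\m$ by symbolic calculus, the last two terms to $\int (2V_2 + 2\Im E_1) q\, d\m$, while the two source pairings, after inserting the WKB form of $u_h$ in a tubular neighborhood of $\G$ and performing a stationary-phase computation along $\G$, reproduce $\pi (2\pi)^{d-n}\int A(z)^2\abs\x\inv\abs{\hat S(\x)}^2 q(z,\x)\,d\snegg$.

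The two main obstacles I expect are: (a) proving uniform convergence in a weighted $L^2$ norm of the infinite series of Lagrangian contributions, which requires combining the geometric damping from $V_2$ with a careful tracking of WKB remainders over long times, in the spirit of Isozaki--Kitada constructions; and (b) identifying the precise source term of \eqref{liouville}, which boils down to a stationary-phase analysis of $\innp{\Opw(q) S_h}{u_h}$ restricted to a neighborhood of $\G$, exploiting the explicit form of $S_h$ and the Fourier transform of $S$ to recover the factor $\abs\x\inv\abs{\hat S(\x)}^2 A(z)^2$.
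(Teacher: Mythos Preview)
Your strategy is sound and close to the paper's, but the paper organizes the large-time control differently and this avoids precisely the obstacle you flag as (a). Rather than summing an infinite series of Lagrangian distributions describing $u_h$ (which would require uniform-in-time control of WKB remainders), the paper runs a two-limit argument: for each fixed $T$ it studies $u_h^T = \frac{i}{h}\int_0^T U_h^E(t)S_h\,dt$, which near any $w$ decomposes into \emph{finitely} many Lagrangian pieces $B_{w,k}^T$, $0\leq k\leq K_w^T$, giving a measure $\mu_T$; then it shows $\abs{\innp{\Opw(q)u_h}{u_h} - \innp{\Opw(q)u_h^T}{u_h^T}} \leq \e$ for $T$ large and $h\leq h_T$. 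The key device for this last step is not an Isozaki--Kitada-type long-time WKB but a dissipative Egorov estimate: for $q,Q$ supported in $p^{-1}(J)$ one has $\nr{\Opw(q)U_h(T)\Opw(Q)} \leq \e$ once $T$ is large (using that every bounded trajectory meets $\Oc$, so the principal Egorov symbol $(q\circ\vf^T)e^{-2\int_0^T V_2\circ\vf^s\,ds}$ is small), combined with the incoming-region estimate to dispose of the piece of $u_h$ outside a compact set. Finally $\mu_T$ is monotone in $T$ and its limit is $\mu$. This bypasses any need for convergence of an infinite Lagrangian sum in $L^2$.

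For (iii), your commutator identity with $(H_h-E_h)u_h=S_h$ is correct in spirit, but the paper again works at the level of the finite-time pieces. It shows $(H_h-E_h)B_{w,k}^T = 0$ microlocally near $w$ for $k\geq 1$, hence $\int(-H_p+2\Im E_1+2V_2)q\,d\mu_{w,T,k}=0$ for those $k$ by exactly your commutator computation applied to $B_{w,k}^T$ instead of $u_h$; only the $k=0$ piece near $\negg$ survives. The source term is then extracted not from a pairing $\innp{\Opw(q)S_h}{u_h}$ but from the explicit WKB form of $B_{w,0}^T$: one writes $\int(-H_p+2\Im E_1+2V_2)(qg_m)\,d\mu_{w,T,0}$ with a cutoff $g_m$ concentrating on $\G$, uses the change of variables $x\mapsto(t,z,\x)$ in the tube $\tilde\G(0,\t_0)$, and lets the stationary-phase amplitude $\abs{b_0}^2$ (together with the Jacobian $2^{n-d}t^{n-d-1}\abs\x$) produce the constant $\pi(2\pi)^{d-n}A(z)^2\abs\x^{-1}\hat S(\x)^2$ in the limit $t\to 0$. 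Your plan to recover this from a direct stationary-phase analysis of $\innp{\Opw(q)S_h}{u_h}$ would work too, but it still requires the small-time control of $u_h$ near $\G$ that the paper develops in its Section~3, so there is no real shortcut.
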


We first remark that this theorem gives not only existence of a semiclassical measure but also uniqueness, since we do not need to extract a subsequence to have convergence of $\innp{\Opw(q) u_h}{u_h}$ when $h \to 0$.

Moreover, we see that in the Liouville equation the absorption coefficient $\a$ of \eqref{liouville-cast} is replaced by our full non-constant absorption coefficient $\Im E_1 + V_2$, as one could expect.

And finally we will prove that the three properties of the theorem implies that the measure $\m$ is given, for $q \in C_0^\infty(\R^{2n})$, by:
\begin{equation} \label{expr-mu}
\int_{\R^{2n}} \! \! q \, d\mu = \pi (2\pi)^{d-n}\!\!\int_{\R_+} \!\int_\negg \!\!\! A(z)^2 \abs \x \inv  \! \hat S (\x)^2  q(\vf^t(z,\x)) e^{-2t\Im E_1 - 2\int_0^t V_2(\bar x (s,z,\x))\,ds} \,d\snegg(z,\x) \,dt
\end{equation}

To prove this theorem we write as in \cite{bony} the resolvent as the integral over positive times of the propagator, the main difference being the large times contribution. Let:
\[
U_h(t) = e^{-\frac{it}h \hh}, \quad U_0^h (t) =e^{-\frac{it}h \hoh},\quad \text{and}\quad U_h^E(t) = e^{-\frac{it}h (\hh-E_h)}
\]
Then:
\begin{equation} \label{uh-int} 
u_h = (\hh -(E_h+i0))\inv S_h = \frac ih \int_0^{+\infty} U_h^E(t) S_h\,dt
\end{equation}
and for $T \geq 0$ we set:
\begin{equation} \label{def-uht}
\begin{aligned}
u_h^T 
& = (\hh -(E_h+i0))\inv S_h-(\hh -(E_h+i0))\inv U_h^E(T) S_h\\
& = \frac ih \int_0^{T} U_h^E(t) S_h\,dt
\end{aligned}
\end{equation}
Our purpose is to study the quantity:
\[
\lim _{h\to 0} \lim_ {T \to +\infty} \innp{\Op(q) u_h^T}{u_h^T}
\]
which we cannot do directly.
Around $w \in \R^{2n}$, troubles appear when proving that relevant parts of integral \eqref{uh-int} are around times $t$ for which we can find $(z,\x)\in\negg$ such that $\vf^t(z,\x) = w$ (see proposition \ref{prop-terme-reste}). Indeed, far from these times we can find $t$ such that $\vf^t (\negg)$ is close to $w$, giving contribution for the semiclassical measure in any neighborhood of $w$. Moreover, the Egorov theorem we use gives estimates uniform in $h$ but not in time (see \cite{bouzouinar02} for a discussion of this problem). The key of our proof is to check that even if the contribution of large times is not zero as for the non-trapping case, the damping term $V_2$ makes it so small that the semiclassical measure is also given by:
\[
\lim _{T \to +\infty} \lim _{h\to 0} \innp{\Op(q) u_h^T}{u_h^T}
\]
which is much easier to study. Indeed, this means that we study the semiclassical measure for the family $(u_h^T)$. This can be done as for the non-trapping case since  we do not have to worry about large times behavior. This gives a family of measures on $\R^{2n}$, and then we can take the limit $T \to +\infty$, since we no longer have problems with the parameter $h$. It only remains to check this gives the measure we are looking for.\\

We begin the proof by a few preliminary results: we show to what extent the damping term $V_2$ implies a decay of $U_h(t)$, we look at the classical trajectories around the submanifold $\G$ and give more details about the assumption on $\Phi_0$. Finally we show that the solution $u_h$ concentrates on the hypersurface of energy $E_0$. In section 3 we give an estimate of the solution near $\G$, since we cannot give a precise description of $u_h$ there. This part is close to section 3.3 of \cite{bony} but we give a complete proof in order to see how to deal with the general case $\dim \G \geq 1$. In section 4 we study the finite times contribution and give the semiclassical measure for $u_h^T$, and then in section 5 we prove that taking the limit $T \to +\infty$ for this family of measures gives a semiclassical measure for the solution $u_h$. We also show that this limit is the solution of the Liouville equation \eqref{liouville} where $V_2$ naturally appears as a damping factor.

Finally in section 6 we give the proof of the estimate in the incoming region we use in section 5. Indeed if we no longer assume that all the classical trajectories of energy $E_0$ go to infinity, there still are some non-trapped trajectories. So we still need the estimate of the outgoing solution in the incoming region used in the non-trapping case. For the self-adjoint Schrödinger operator, this is proved in \cite{robertt89} but here we need to show that this remains true in our dissipative setting.

\section{Some preliminary results}

\subsection{Damping effect of the absorption coefficient on the semigroup generated by $\hh$}

We saw in \cite{royer} that assumption \eqref{hyp2} is actually satisfied for any energy close enough to $E_0$, hence we can consider two closed intervals $I$ and $J$ such that $E_0 \in \mathring I$, $\bar I \subset \mathring J$ and any trapped trajectory of energy in $J$ meets $\Oc$.

The main tool we need in this section is the dissipative version of Egorov theorem. We already stated this theorem in \cite{royer} but we give here a more precise version we are going to use in the proof of proposition \ref{prop-terme-reste}.

\begin{proposition} \label{prop-Egorov}
Let $a \in \symbor$.
\begin{enumerate}[(i)]
\item There exists a family of symbols $\a_j(t)$ for $j\in\N$ and $t\geq 0$ such that for any $N\in\N$ and $t \geq 0$ the symbol $A_N(t,h) = \sum_{j=0}^{N} h^j \a_j(t)$ satisfies:
\[
U_h(t)^* \Opw(a) U_h(t) = \Opw (A_N(t,h)) + \bigo h 0 (h^{N+1})
\]
where the rest is bounded as an operator on $L^2(\R^n)$ uniformly in $t \in [0,T]$ for any $T \geq 0$.
\item $\a_0(t) = (a \circ \vf^{t}) \exp\left( -2 \int_0^t V_2 \circ \vf^s\,ds\right)$ where for $(x,\x) \in \R^{2n}$, $V_2(x,\x)$ means $V_2(x)$.
\item If $a$ vanishes on the open set $\Wc \subset \R^{2n}$ then for all $j\in\N$ the symbol $\a_j(t)$ vanishes on $\vf^{-t}(\Wc)$.
\end{enumerate}
\end{proposition}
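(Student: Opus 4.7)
The plan is to derive an asymptotic expansion of $C(t) := U_h(t)^* \Opw(a) U_h(t)$ by solving a hierarchy of transport equations and then controlling the remainder with a Duhamel argument. Writing $H_h = \Opw(p) - ihV_2$, so that $H_h^* = \Opw(p) + ihV_2$, and using that $H_h$ commutes with $U_h(t)$, I would first compute
\begin{equation*}
C'(t) = \frac{i}{h}\bigl( H_h^* C(t) - C(t) H_h \bigr) = \frac{i}{h}[\Opw(p), C(t)] - \bigl(V_2 C(t) + C(t) V_2\bigr).
\end{equation*}
Looking for $A_N(t, h) = \sum_{j=0}^N h^j \a_j(t)$ such that $\Opw(A_N(t, h))$ satisfies the same ODE modulo $O(h^{N+1})$, with $\a_0(0) = a$ and $\a_j(0) = 0$ for $j \geq 1$, one expands the Moyal product in the Weyl calculus (commutators produce only odd powers of $h$, anticommutators only even ones) and collects powers of $h$. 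This yields the cascade
\begin{equation*}
\partial_t \a_j = \{p, \a_j\} - 2V_2 \a_j + F_j(\a_0, \ldots, \a_{j-2}),
\end{equation*}
with $F_0 = F_1 = 0$ and, for $j \geq 2$, $F_j$ a differential expression in the previously constructed symbols.

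The second step is to solve each equation by the method of characteristics along the Hamiltonian flow $\vf^t$. For the leading order the explicit solution with damping $-2V_2$ is
\begin{equation*}
\a_0(t)(w) = a(\vf^t(w)) \exp\Bigl( -2 \int_0^t V_2(\bar x(s, w))\, ds \Bigr),
\end{equation*}
which is (ii). For $j \geq 2$, Duhamel along the same characteristic gives
\begin{equation*}
\a_j(t)(w) = \int_0^t \exp\Bigl(-2 \int_s^t V_2(\bar x(\s, w))\, d\s \Bigr)\, F_j\bigl(s, \vf^{t-s}(w)\bigr)\, ds.
\end{equation*}
An induction on $j$ shows each $\a_j(t)$ remains in $\symbor$ with seminorms bounded uniformly on any compact interval $[0, T]$, using smoothness of $\vf^t$ together with the global bounds on $V_2$ and its derivatives coming from \eqref{h1}.

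Third, I would set $R_N(t) := C(t) - \Opw(A_N(t, h))$. By construction $R_N(0) = 0$ and
\begin{equation*}
R_N'(t) = \frac{i}{h}\bigl( H_h^* R_N(t) - R_N(t) H_h \bigr) + \Opw(r_N(t, h)),
\end{equation*}
where $r_N(t, h)$ is bounded in $\symbor$ by $O(h^{N+1})$ uniformly for $t \in [0, T]$. Calder\'on--Vaillancourt gives $\nr{\Opw(r_N(t, h))} = O(h^{N+1})$, and the Duhamel formula
\begin{equation*}
R_N(t) = \int_0^t U_h(t-s)^* \Opw(r_N(s, h)) U_h(t-s)\, ds
\end{equation*}
combined with the dissipative contraction $\nr{U_h(t)} \leq 1$ gives $\nr{R_N(t)} \leq T \cdot O(h^{N+1})$, proving (i). I expect this uniform-in-$t$ control to be the main obstacle: the symbolic seminorms of the $\a_j(t)$ typically degrade (exponentially) with $t$ through derivatives of the flow, and the Duhamel step depends crucially on the dissipative nature of $H_h$ to keep the remainder genuinely of order $h^{N+1}$ without amplification.

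Finally for (iii), the explicit formula for $\a_0$ makes it vanish exactly where $a \circ \vf^t$ does, i.e., on the open set $\vf^{-t}(\Wc)$ (open since $\vf^t$ is a diffeomorphism). By induction, if $\a_i$ vanishes on $\vf^{-s}(\Wc)$ at every time $s$ for all $i < j$, then all derivatives of $\a_i$ also vanish there, hence $F_j(s, \cdot)$ vanishes on $\vf^{-s}(\Wc)$. For $w \in \vf^{-t}(\Wc)$, we have $\vf^{t-s}(w) \in \vf^{-s}(\Wc)$ for every $s \in [0, t]$, so the integrand in the Duhamel representation of $\a_j(t, w)$ vanishes identically, giving $\a_j(t, w) = 0$.
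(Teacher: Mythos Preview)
Your proof is correct and follows essentially the same approach as the paper: differentiate $U_h(t)^*\Opw(a)U_h(t)$, match powers of $h$ to obtain transport equations along the Hamiltonian flow with damping $-2V_2$, solve these by characteristics, and control the remainder via Duhamel together with the contraction property $\nr{U_h(t)}\leq 1$; the support propagation (iii) is then an induction on $j$ using the Duhamel representation, exactly as in the paper. The only cosmetic difference is that the paper organizes (i) as an iteration of the $N=0$ case (producing the remainder as a nested time integral), whereas you set up the full hierarchy at once and apply a single Duhamel step; your observation that the Weyl structure forces $F_j$ to depend only on $\a_0,\dots,\a_{j-2}$ (so in particular $\a_1\equiv 0$) is a slight sharpening not made explicit in the paper.
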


\begin{proof}
In \cite{royer} we proved (i) for $N = 0$ and (ii). Moreover (iii) is a direct consequence of (ii) for $j=0$. What remains can be proved as in the selfadjoint case (see \cite{robert}) so we only recall the ideas. (i) is proved by induction. More precisely, we show that for any $N\in\N$:
\begin{eqnarray*}
\lefteqn{U_h(t)^* \Opw (a) U_h(t) = \sum_{j=0}^N h^j \Opw (\a_j(t)) }\\
&& + h^{N+1} \int_{\t_1 = 0}^t \int_{\t_2 = 0}^{\t_1} \dots \int_{\t_{N+1}=0}^{\t_N}  U_h(\t_{N+1})^* \Opw(b_N(\t_1,\dots,\t_{N+1},h)) U_h(\t_{N+1}) \, d\t_{N+1}\dots d\t_1
\end{eqnarray*}
for some symbol $b_N$. The case $N+1$ is obtained by applying the case $N=0$ to the principal symbol of $b_N$.

To prove (iii) we take the derivative of $U_h(t)^* \Opw(a) U_h(t)$ with report to $t$. This gives, for $j\in\N$:
\[
\partial_t \a_j(t) = H_p(\a_j) - 2 V_2 \a_j(t) + \sum_{q= 0}^{j-1} C_{j,q} D^*_{j,q} \a_q 
\]
where $C_{j,q}$ is a function with bounded derivatives and $D^*_{j,q}$ is a differential operator. Then if $\tilde \a_j(t) = (\a_j(t) \circ \vf^{-t}) \exp\left(2 \int_0^t V_2 \circ \vf^{-s} \,ds\right)$ we have:
\[
\partial_t \tilde \a_j(t) = \sum_{q=0}^{j-1} C_{j,q} D^*_{j,q}(\a_q(t) \circ \vf^{-t}) \exp\left(2 \int_0^t V_2 \circ \vf^{-s} \,ds\right)
\]
and it is easy to check by induction on $j\geq 1$:
\[
\tilde \a_j (0) = 0, \quad \partial_t \tilde \a_j(t) = 0 \text{ on } \Wc, \quad  \text{and hence } \a_j(t) = 0 \text{ on } \vf^{-t}(\Wc)
\]
\end{proof}

\begin{lemma} \label{prop-amortis}
Let $K$ be a compact subset of $\O_b^+(J)$. There is $C \geq 0$ and $\d >0$ such that:
\[
\forall w \in K, \quad \exp\left( -\int_{s=0}^t V_2(\vf^s(w))\, ds\right) \leq C e^{-\d t}
\]
\end{lemma}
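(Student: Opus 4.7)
The plan is to reduce the exponential decay to a uniform lower bound of the form $\int_0^{T_0} V_2(\vf^s(v))\,ds \geq c$ valid on the closure $\tilde K := \overline{\{\vf^t(w)\tqe w\in K,\ t\geq 0\}}$ of the forward orbit of $K$, and then iterate via the semigroup property. Two structural facts from the setup will be used throughout: $\O_b^+(J)$ is closed and forward-invariant under $\vf^t$, and by \eqref{hyp2} every trapped trajectory of energy in $J$ meets $\Oc$. The main obstacle will be to show that $\tilde K$ is compact; without this, any uniform bound obtained on $K$ cannot be propagated along $\vf^t$ and the iteration below fails.

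The inclusion $\tilde K\subset\O_b^+(J)$ is immediate from closedness and forward-invariance of $\O_b^+(J)$. For compactness, energy conservation bounds $\bar\x$, so only the $x$-component needs to be controlled. The decay assumption \eqref{h1} together with a standard escape-function argument outside a large ball makes the full trapped set $\L_J := \O_b^+(J)\cap\O_b^-(J)\cap p\inv(J)$ compact. Since $\o$-limit points of a forward-bounded orbit sit on orbits that are bounded in both directions, one has $\o(w)\subset\L_J$ for every $w\in\O_b^+(J)$; continuous dependence on initial conditions and compactness of $K$ then give $\vf^t(K)$ in a fixed bounded neighborhood of $\L_J$ for large $t$, so that $\tilde K$ is bounded.

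Once $\tilde K$ is compact, the next step is an $\o$-limit argument: for $v\in\tilde K$, the set $\o(v)$ is nonempty, compact, invariant, contained in $\L_J$, and every orbit inside $\o(v)$ meets $\Oc$ by \eqref{hyp2}. Picking $v'\in\o(v)\cap\Oc$ and $t_k\to+\infty$ with $\vf^{t_k}(v)\to v'$, the openness of $\Oc$, continuity of $V_2$, and the bound on $|\bar\x|$ (so the flow cannot cross a fixed neighborhood of $v'$ in arbitrarily short time) produce $T_v, c_v>0$ with
\[
\int_0^{T_v} V_2(\vf^s(v))\,ds \geq c_v.
\]
Continuity of the left-hand side in $v$, a finite subcover of $\tilde K$, and $V_2\geq 0$ (used to equalize the upper integration limits) yield uniform constants $T_0, c>0$ such that $\int_0^{T_0} V_2(\vf^s(v))\,ds\geq c$ for every $v\in\tilde K$.

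It remains to iterate. For $w\in K$ and $k\in\N$, $\vf^{kT_0}(w)\in\tilde K$, so
\[
\int_{kT_0}^{(k+1)T_0} V_2(\vf^s(w))\,ds \;=\; \int_0^{T_0} V_2\bigl(\vf^s(\vf^{kT_0}(w))\bigr)\,ds \;\geq\; c.
\]
Summing over $0\leq k<n$ gives $\int_0^{nT_0} V_2\circ\vf^s\,ds\geq nc$, and writing a general $t\geq 0$ as $t=nT_0+r$ with $r\in[0,T_0)$ together with $V_2\geq 0$ yields
\[
\exp\left(-\int_0^t V_2(\vf^s(w))\,ds\right) \;\leq\; e^c\, e^{-(c/T_0)t},
\]
which is the required exponential decay with $C=e^c$ and $\d=c/T_0$.
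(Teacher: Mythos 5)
Your proof follows the same architecture as the paper's: pass to the closure $\tilde K$ of the forward flow-out of $K$, use \eqref{hyp2} (via $\o$-limit points, which is the paper's step 1 in a slightly different guise) to produce for each $v\in\tilde K$ a finite time window carrying a definite amount of absorption, make this uniform by continuity and a finite subcover of $\tilde K$, and then iterate over windows of length $T_0$ using forward invariance and $V_2\geq 0$. All of that part of your argument is correct; the paper phrases the uniform window as a subinterval of length $\t$ on which $V_2>\g$, you phrase it as $\int_0^{T_0}V_2(\vf^s(v))\,ds\geq c$, which is the same thing, and the final bookkeeping $C=e^{c}$, $\d=c/T_0$ matches the paper's $C=e^{\t\g}$, $\d=\t\g/T$.

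The one step that does not hold up as written is your justification of the compactness of $\tilde K$. From $\o(w)\subset\L_J$ you infer, ``by continuous dependence on initial conditions and compactness of $K$'', that $\vf^t(K)$ lies in a fixed bounded neighborhood of $\L_J$ for large $t$. Continuous dependence is uniform only on compact time intervals, so it cannot upgrade the pointwise, non-uniform attraction of individual orbits to $\L_J$ into a bound valid for all large $t$ uniformly over $K$; moreover, a uniform bound on the distance from $\vf^t(K)$ to the compact set $\L_J$ is exactly the boundedness of $\tilde K$ you are trying to establish, so the inference is essentially circular. The fact itself is true, and the escape estimate you already invoke (to get compactness of $\L_J$) gives it directly: by \eqref{h1} there is $R_0$, depending only on $J$, such that along any trajectory of energy in $J$ with $\abs{\bar x}\geq R_0$ one has $\frac{d}{dt}\innp{\bar x}{\bar \x}\geq c_J>0$, so a trajectory which is at radius $\geq R_0$ and moving outward escapes to infinity; consequently a forward-bounded orbit issued from $K$ can never exceed the radius $\max(R_0,\sup_{w\in K}\abs{x_w})+1$ (otherwise it would be outgoing at some radius $\geq R_0$), and with $\bar\x$ controlled by energy conservation this bounds $\tilde K$. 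Note that the paper itself simply asserts the compactness of $\tilde K$, so once this step is repaired your argument is a faithful, and in fact slightly more detailed, version of the paper's proof.
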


\begin{proof}
\noindent{\bf 1.}
We first recall that if $w \in \O_b^+(J)$ then there exists $T \geq 0$ such that $\vf^T(w) \in \Oc$ (this is slightly stronger than assumption \eqref{hyp2}). Indeed, the set $K_w = \bar{\singl{\vf^t(w),t \geq 0}}$ is compact, so there is an increasing sequence $\seq t m$ with $t_m \to +\infty$ and $w_\infty \in K_w$ such that $\vf^{t_m}(w) \to w_\infty$. Since $\O_b^+(\singl{p(w)})$ is closed, $w_\infty \in \O_b^+(\singl{p(w)})$. Moreover, for $M \in \N$ and $m \geq M$ we have $\vf^{-t_M}(\vf^{t_m}(w)) \in K_w$ and hence $\vf^{-t_M}(w_\infty) \in K_w$, which proves that $w_\infty \in \O_b^-(\R)$. By assymption \eqref{hyp2}, there is $T \in \R$ such that $\vf^T(w_\infty) \in \Oc$. Hence $\vf^{T+t_m}(w)$ lies in $\Oc$ for large $m$. Since $T+t_m \geq 0$ when $m$ is large enough, the claim is proved.\\

\noindent{\bf 2.}
We set: 
\[
\tilde K = \overline{\singl{\vf^t(w), t \geq 0, w \in K}}
\]
By definition of $K$, $\tilde K$ is compact in $\R^{2n}$. Let $w \in \tilde K$. There are $T_w \geq 0$ and $\g_w >0$ such that $\vf^{T_w}(w) \in \Oc_{2\g_w}$, so we can find $\t_w >0$ and a neighborhood $\Vc_w$ of $w$ in $\R^{2n}$ such that for all $v \in \Vc_w$ and $t \in [T_w-\t_w, T_w ]$ we have: $\vf^t(v) \in \Oc_{\g_w}$. As $\tilde K$ is compact we can find $w_1,\dots,w_k$ such that $K \subset \cup_{i=1}^k \Vc_{w_i}$. Then we take $T = \max \{ T_{w_i}, 1\leq i \leq k\}$, $\t = \min\{ \t_i, 1 \leq i \leq k\}$ and $\g = \min \{ \g_{w_i}, 1 \leq i\leq k\}$. For all $w \in K$ and $t\geq 0$, $\vf^t(w)$ is in $\tilde K$ and hence in $[t,t+T]$ there is a subinterval $I_{w,t}$ of length at least $\t$ such that $\vf^s(w) \in \Oc_\g$ for $s \in I_{w,t}$. Thus:
\[
\exp\left( \int_{s=t}^{t+T} V_2(\vf^s(w))\, ds\right) \leq  e^{-\t \g}
\]
We apply this for $t_n = nT$ with $n \leq t/T$ and this gives:
\[
\begin{aligned}
\exp\left( \int_{0}^{t} V_2(\vf^s(w))\, ds\right)
 \leq e^{- \frac{t-T}T \t\g}
& \leq e^{\t\g} e^{-t\frac {\t\g} T}
\end{aligned}
\]
so the result follows with $C =  e^{\t\g}$ et $\d = \frac {\t\g} T$.
\end{proof}

\begin{proposition}  \label{prop-super-egorov}
Let $q,q' \in C_0^\infty(\R^{2n})$ supported in $p\inv(J)$ and $\e>0$. Then there exists $T_0 \geq 0$ such that for all $T \geq T_0$ we can find $h_T > 0$ which satisfies:
\[
\forall h \in ]0, h_T], \quad \nr{\Opw(q) U_h(T) \Opw(q')} \leq \e
\]
\end{proposition}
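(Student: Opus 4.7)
My plan is to combine a $TT^*$ argument with the dissipative Egorov theorem (Proposition~\ref{prop-Egorov}), so that the operator norm is eventually controlled by the supremum of a classical damping factor along orbits. Using $(\Opw(q))^* = \Opw(\bar q)$ and the standard composition $\Opw(\bar q)\Opw(q) = \Opw(|q|^2) + O(h)$, I first write
\[
\|\Opw(q) U_h(T) \Opw(q')\|^2 = \|\Opw(q')^* U_h(T)^* \Opw(|q|^2) U_h(T) \Opw(q')\| + O(h).
\]
Proposition~\ref{prop-Egorov} with $N=0$ and $a=|q|^2$ then gives, for each fixed $T$,
\[
U_h(T)^* \Opw(|q|^2) U_h(T) = \Opw(b_T) + h R_{T,h}, \qquad b_T(w) = |q(\vf^T(w))|^2 \exp\Bigl(-2\int_0^T V_2(\vf^s(w))\,ds\Bigr),
\]
with $\|R_{T,h}\|_{L^2 \to L^2} = O_T(1)$. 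Conjugating once more by $\Opw(q')$ and invoking the $L^\infty$ bound for compactly supported symbols reduces the proposition to the purely classical estimate
\[
M(T) := \sup_{w \in \supp q'} |q'(w)|^2 \, b_T(w) \xrightarrow[T\to\infty]{} 0.
\]

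For the classical limit I would split $\supp q'$ according to trapping. On the compact set $K_b := \supp q' \cap \O_b^+(J)$, rereading Step~2 of the proof of Lemma~\ref{prop-amortis} shows that the bound $\exp(-\int_0^t V_2 \circ \vf^s\,ds) \leq C e^{-\d t}$ actually extends to an open neighbourhood $\Vc$ of the compact $\overline{\bigcup_{t \geq 0}\vf^t(K_b)}$ in $\R^{2n}$, valid as long as the orbit $\{\vf^s(w) : s \in [0,t]\}$ remains inside $\Vc$, since the construction of the covering $\Vc_{w_i}$ uses only positivity of $V_2$ and continuity of the flow. If on the contrary $M(T_n) \geq \e$ for some $T_n \to \infty$, there exist $w_n \in \supp q'$ with $\vf^{T_n}(w_n) \in \supp q$; extracting $w_n \to w_*$ by compactness, the short-range decay \eqref{h1} of $V_1$ combined with the positive-energy scattering dichotomy forces $w_* \in \O_b^+(J)$, hence $w_* \in K_b$. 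Continuity of $\vf^s$ and compactness of the orbit segments then place $\vf^s(w_n)$ inside $\Vc$ for all $s \in [0, T_n]$ once $n$ is large, contradicting the resulting bound $b_{T_n}(w_n) \leq C^2 \|q\|_\infty^2 e^{-2\d T_n}\to 0$.

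The proof then concludes by first picking $T_0$ so that $M(T) \leq \e^2/2$ for $T \geq T_0$, and then, for each such $T$, choosing $h_T$ small enough that the cumulative Egorov remainders stay below $\e^2/2$. I expect the main obstacle to be precisely the geometric step above: verifying that, uniformly in $n$, the orbit segments $\{\vf^s(w_n) : s \in [0, T_n]\}$ eventually sit in the fixed neighbourhood $\Vc$ of the trapped set even though $T_n \to \infty$. This combines the absorption hypothesis~\eqref{hyp2} (which makes Lemma~\ref{prop-amortis} applicable on $K_b$) with the short-range structure of $V_1$ (so that non-trapped orbits really escape to infinity and cannot return to $\supp q$), and is the place where the dissipative framework pays off: the exponential damping coming from $V_2$ replaces the non-trapping hypothesis usually needed for large-time estimates of this kind.
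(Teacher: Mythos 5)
Your quantum reduction (the $TT^*$ step, the dissipative Egorov theorem, and the sharp $L^2$ bound for the quantization of $|q'|^2 b_T$, with all $T$-dependent remainders absorbed into the choice of $h_T$) is sound and is essentially the paper's own reduction, which conjugates by the unitary $U_1^h(-T)$ instead; both routes reduce the proposition to the classical statement that $\sup_{w\in\supp q'}\abs{q'(w)}\,\abs{q(\vf^T(w))}\exp\bigl(-\int_0^T V_2(\vf^s(w))\,ds\bigr)$ is small for $T\geq T_0$. The gap is in your proof of that classical statement, exactly at the step you flag: for $w_n\to w_*\in\O_b^+(J)$ and $T_n\to\infty$ you assert that, for $n$ large, $\vf^s(w_n)\in\Vc$ for \emph{all} $s\in[0,T_n]$, where $\Vc$ is a fixed neighbourhood of $\overline{\bigcup_{t\geq 0}\vf^t(K_b)}$. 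Continuity of the flow gives this only on compact time intervals; for times up to $T_n$ it can fail: a non-trapped point arbitrarily close to a trapped one may shadow its orbit for a long while, then leave $\Vc$, wander in the bounded interaction region (for instance near parts of the trapped set that are not in the forward orbit closure of $K_b$), and still satisfy $\vf^{T_n}(w_n)\in\supp q$. Your escape argument only excludes orbits that have gone far out in the outgoing region, not orbits that merely leave $\Vc$, so the extended Lemma \ref{prop-amortis} cannot be invoked on $[0,T_n]$ and the contradiction does not go through as written. (The other ingredient, that $w_*\notin\O_b^+(J)$ would force escape and hence $\vf^{T_n}(w_n)\notin\supp q$, is fine and is used at the same level of detail in the paper.)

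The missing idea, which is in effect how the paper argues, is monotonicity in $T$: since $V_2\geq 0$, the damping factor $\exp\bigl(-\int_0^T V_2\circ\vf^s\,ds\bigr)$ is non-increasing in $T$, so it suffices to make it small at a \emph{single fixed} time. Apply Lemma \ref{prop-amortis} on the compact $K=\supp q'\cap\O_b^+(\R)$ to fix $T_0$ with $\nr{q}_\infty\nr{q'}_\infty\exp\bigl(-\int_0^{T_0}V_2\circ\vf^s\bigr)\leq \frac \e 4$ on $K$; at this fixed $T_0$ the left-hand side is continuous in $w$, so the bound (with $\frac\e2$) extends to a neighbourhood $\Vc$ of $K$ itself (not of its orbit closure), and it persists for every $T\geq T_0$ by monotonicity — no control of the perturbed orbits beyond time $T_0$ is needed, and no extension of Lemma \ref{prop-amortis} to a neighbourhood is required. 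The remaining compact set $\supp q'\setminus\Vc\subset\O_\infty^+$ is then handled by your escape step, made uniform by compactness: enlarging $T_0$, one has $\vf^T(w)\notin\supp q$ there for all $T\geq T_0$, hence $q(\vf^T(w))=0$. With this replacement your argument closes and coincides with the paper's proof.
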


\begin{proof}
We set $K = \supp q' \cap \O_b^+(\R)$. As $K$ is a compact subset of $\O_b^+(J)$, lemma \ref{prop-amortis} shows that there is $T_0 \geq 0$ such that:
\[
\sup_{w\in K} \nr {q} _\infty \nr{q'}_\infty \exp\left(-\int_{s=0}^T (V_2 \circ \vf^{s}) (w)\, ds \right) \leq \frac \e 4
\]

As the left-hand side is a continuous function of $w$, we can find a neighborhood $\Vc$ of $K$ in $\R^{2n}$ such that this holds for $w \in \Vc$ after having replaced $\e/4$ by $\e/2$. Let now $K_\infty = \supp q' \setminus  \Vc$. $K_\infty$ is a compact subset of $\O_+^\infty$. Therefore, if $T_0$ is large enough, we can assume that for $T\geq T_0$ and $w \in K_\infty$ we have  $\vf^{T}(w) \notin \supp q$. Hence by Egorov theorem (see also remark 4.4 in \cite{royer}), for any $T\geq T_0$ we have:
\begin{equation}
\begin{aligned}
\nr{\Opw(q) U_h(T) \Opw(q')}
& = \nr{U_1^h(-T) \Opw(q) U_h(T) \Opw(q')}\\
& = \nr{\Opw\left((q\circ \vf^T)e^{-\int_{s=0}^T V_2 \circ \vf^s\, ds} \right) \Opw(q')} + \bigo h 0 (h)\\
& \leq \sup_{w\in\R^{2n}} \abs{q'(w) (q ( \vf^T(w))) e^{-\int_{s=0}^T V_2 ( \vf^s(w)) \, ds}} + C(T) \sqrt h\\
& \leq \frac \e 2 + C(T) \sqrt h
\end{aligned}
\end{equation}
and hence for any fixed $T \geq T_0$ we can find $h_T>0$ small enough to conclude.
\end{proof}

\subsection{Classical trajectories around $\G$}

In this section we assume that assumptions \eqref{h1}, \eqref{hyp2} and \eqref{hyp3} are satisfied.

\begin{proposition} \label{prop-superbij}
There exists $\t_0 > 0$ such that:
\begin{equation}\label{superbij}
\Tc : \begin{cases} ]0,3\t_0] \times \negg_1 & \to  \R^{n} \\ (t,w) & \mapsto  \bar x (t,w) \end{cases}
\end{equation}
is one-to-one and $\Ran (\Tc)\cup \G_1$ is a neighborhood of $\G$ in $\R^n$. Furthermore:
\begin{enumerate}[(i)]
\item We can choose $\t_0$ to have:
\begin{equation}\label{dist-gamma}
\forall t \in ]0,3\t_0], \forall w \in \negg_1, \quad 2 \g_m t \leq d(\bar x (t,w),\G_2) \leq 2 \g_M t
\end{equation}
for some $\g_M \geq \g_m >0$.
\item If $f$ is a continuous function with support in $\Tc(]0,3\t_0[ \times \negg)$ then:
\begin{equation}\label{chgt-var}
\int_{x\in\R^n} f(x)\,dx = 2^{n-d} \int_{0}^{3\t_0} \!\! \int_{\negg} f(\bar x (t,z,\x))  t^{n-d-1} \abs \x \Big(1+\bigo  t 0 (t) \Big)\, d\snegg(z,\x)\,dt 
\end{equation}
\end{enumerate}
\end{proposition}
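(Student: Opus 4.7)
My plan is to recognize $\Tc$ as a perturbation of the normal exponential map from the submanifold $\G_1$ and then apply the tubular neighborhood theorem. Integrating the Hamilton equations yields the Taylor expansion
\[
\bar x(t,z,\x) = z + 2t\x - t^2 \nabla V_1(z) + \bigo t 0 (t^3),
\]
uniformly on $\negg_1$. After possibly shrinking $\G_1$ so that $V_1 < E_0$ on the compact set $\bar{\G_1}$ (legitimate since $V_1 < E_0$ on $\bar \G$ by \eqref{hyp3}), $\abs \x = \sqrt{E_0-V_1(z)}$ is bounded between two positive constants $\g_m < \g_M$ on $\negg_1$, and the closure of $\negg_1$ is compact in $\R^{2n}$. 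Item (i) then follows immediately: since $\x \perp T_z\G$, the leading term in $\bar x(t,z,\x) - z$ is normal to $\G_2$, so by local straightening of $\G_2$, $d(\bar x(t,z,\x),\G_2) = 2t\abs\x + \bigo t 0 (t^2) \in [2\g_m t, 2\g_M t]$ for $\t_0$ small.

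\textbf{Bijectivity and covering.} The leading-order map $\Tc_0(t,z,\x) = z + 2t\x$ becomes, after the substitution $s = 2t\abs\x$ and $\hat\x = \x/\abs\x$, the classical normal exponential map $E : (s,z,\hat\x) \mapsto z + s\hat\x$ on the unit normal bundle $N^1\G_1$. By the tubular neighborhood theorem applied to the compact submanifold $\bar{\G_1}$, $E$ is a diffeomorphism from $[0,\e_0) \times N^1\G_1$ onto an open tubular neighborhood of $\G_1$ in $\R^n$ for some $\e_0 > 0$. The full map $\Tc = \Tc_0 + \bigo t 0 (t^2)$ is a smooth perturbation; a standard implicit-function-theorem argument (using that $E$ is a diffeomorphism and the correction is small in $C^1$ topology on the relevant compact set when $\t_0$ is small) shows that $\Tc$ is a bijection from $]0, 3\t_0] \times \negg_1$ onto its image and that $\Ran(\Tc) \cup \G_1$ is an open neighborhood of $\G_1 \supset \G$.

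\textbf{Jacobian and change of variables.} At a fixed $(z_0,\x_0) \in \negg_1$, I choose orthonormal frames $(e_j)_{j=1}^d$ of $T_{z_0}\G$ and $(f_k)_{k=1}^{n-d}$ of $N_{z_0}\G$ with $f_1 = \x_0/\abs{\x_0}$ and with the $f_k$'s parallel at $z_0$ with respect to the normal connection (so $\partial_{z_j} f_k(z_0) \in T_{z_0}\G$), together with normal coordinates on $\G$ at $z_0$ and transverse coordinates $\y_2, \ldots, \y_{n-d}$ on $\negg_1$, the remaining $\y_1$ being fixed by the energy relation. A direct computation of the pushforward of $\Tc$ in these adapted coordinates yields a block-triangular Jacobian with diagonal blocks
\[
I_d + \bigo t 0 (t), \qquad 2\abs{\x_0} + \bigo t 0 (t), \qquad 2t\, I_{n-d-1} + \bigo t 0 (t^2),
\]
whence $\abs{\det J_\Tc(t,z_0,\x_0)} = 2^{n-d}\abs{\x_0}\, t^{n-d-1}(1+\bigo t 0 (t))$, uniformly on $\negg_1$. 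The metric $g$ of the statement takes the Euclidean form in this adapted basis at $(z_0,\x_0)$ (the $\x$-parallel direction, which $g$ ignores, is exactly the one constrained out by the energy equation), so $d\snegg$ equals the Lebesgue measure $dz_1 \cdots dz_d\, d\y_2 \cdots d\y_{n-d}$ at $z_0$. Plugging the Jacobian into the standard change-of-variables formula for the diffeomorphism $\Tc$ produces \eqref{chgt-var}.

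\textbf{Main obstacle.} The chief technical difficulty is the degeneracy of $\Tc$ at $t = 0$: $\abs{\det J_\Tc}$ vanishes like $t^{n-d-1}$ there, so the inverse function theorem cannot be applied uniformly down to $t = 0$ in the variables $(t,z,\x)$. The cure is the rescaling $s = 2t\abs\x$, which identifies the leading part of $\Tc$ with the honest normal exponential map and lets the tubular neighborhood theorem do its work uniformly on a neighborhood of $\{0\} \times \negg_1$.
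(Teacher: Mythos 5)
Your proposal is correct and follows essentially the same route as the paper: the degeneracy at $t=0$ is removed by a rescaling that identifies the leading part of $\Tc$ with the (tubular-neighborhood) map $z\mapsto z+2\x$, item (i) comes from the second-order Taylor expansion of the flow together with $d(z+t\x,\G_2)=t\abs\x$, and item (ii) from computing the Jacobian in adapted orthonormal frames, giving the same block structure and determinant $2^{n-d}\abs\x\, t^{n-d-1}(1+\bigo t 0 (t))$. The only cosmetic difference is that the paper rescales via $\tilde\Tc(z,\x)=\Tc\bigl(\abs\x/\sqrt{E_0-V_1(z)},z,\x\sqrt{E_0-V_1(z)}/\abs\x\bigr)$ on the full normal bundle rather than passing to the unit normal bundle with $s=2t\abs\x$, which is the same idea.
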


For $0\leq r_1 \leq r_2 \leq 3\t_0$ we set:
\[
\tilde \G(r_2) = \Tc ([0,r_2] \times \negg) \quad \text{and} \quad \tilde \G(r_1,r_2) = \Tc (]r_1,r_2] \times \negg)
\]
When $x \in \tilde \G(0,3\t_0)$ we write $(t_x,z_x,\x_x) = \Tc\inv(x)$.

\begin{proof}
For $\t > 0$, let :
\[
N(\t) = \singl{(z,\x) \in N\G_1 \tqe  \abs \x \leq \t  \sqrt{E_0-V_1(z)}}
\]

We consider the function $\tilde \Tc$ from $N(1)$ to $\R^n$ defined by:
\[
\tilde \Tc(z,\x) = \begin{cases} \bar x \left(\frac {\abs \x}{ \sqrt{E_0-V_1(z)}} , z ,\frac {\x\sqrt{E_0-V_1(z)}}{\abs \x }\right) & \text{if } \x \neq 0 \\ z & \text{if } \x = 0 \end{cases}
\]
We have:
\[
\tilde \Tc (z,\x) = z + 2\x + o(\abs \x)
\]
Hence for $\t_0 >0$ small enough, $\tilde \Tc$ is a diffeomorphism from $N(3\t_0)$ to a tubular neighborhood of $\G_1$ (we can follow the proof for the function $(x,\x) \mapsto z +2\x$, see for instance theorem 2.7.12 in \cite{berger-gostiaux}). In particular $\tilde \Tc$ and hence $\Tc : (t,z,\x) \mapsto \tilde \Tc (z , t\x)$ are one-to-one and $\Ran \Tc \cup \G_1 = \Ran \restr{\tilde \Tc}{N(3\t _0)} \cup \G_1$ is a neighborhood of $\G_0$.

(i)
We have:
\[
\begin{aligned}
\bar x (t,z,\x) - z
 = \int_{0}^t  2\bar \x (s,z,\x)\,ds 
 = 2t\x - 2\int_{0}^t \int_{0}^s \nabla V_1 (u,z,\x) \, du \,ds
\end{aligned}
\]
Hence, if $M = \sup _{x\in\R^n} \abs{\nabla V_1(x)}$ this gives:
\[
\abs{\bar x (t,z,\x) - z - 2t\x} \leq 2t^2 M
\]
Denote $\x_{\min} = \min \{\abs \x , \x \in \negg_1\}>0$ and $\x_{\max} = \max \{\abs \x , \x \in \negg_1\}$. We recall from \cite{berger-gostiaux} that for $(z,\x) \in \negg_1$ and $t$ small enough we have $d(z+t\x,\G_2) =  t \abs \x$.
Then for $\t_0$ small enough we have $2\t_0 M \leq \x_{\min}$ so:
\[
d(\bar x (t,z,\x),\G_2) \geq d(z+2t\x,\G_2) - \abs{\bar x (t,z,\x) - z - 2t\x} \geq 2 t \abs \x - t {\x_{\min}} \geq t  {\x_{\min}} 
\]
and:
\[
d(\bar x (t,z,\x),\G_2) \leq d(z+2t\x,\G_2) + \abs{\bar x (t,z,\x) - z - 2 t\x} \leq 2 t \abs \x + t \x_{\min} \leq t (2 \x_{\max} + \x_{\min}) 
\]

(ii)
Let $(t,z,\x) \in ]0, 3\t_0[ \times \negg$. For $(T_1,Z_1,\Xi_1),(T_2,Z_2,\Xi_2) \in T_{(t,z,\x)} (]0,3\t_0[\times \negg)$ we set:
\[
\tilde g_{(t,z,\x)}( (T_1,Z_1,\Xi_1),(T_2,Z_2,\Xi_2)) = T_1 T_2 + g_{(z,\x)}((Z_1,\Xi_1),(Z_2,\Xi_2))
\]

We first look for good orthonormal bases of $ T_{(t,z,\x)} (]0,3\t_0[\times \negg)$ (for the metric $\tilde g$) and $\R^n$ (for the usual metric) to compute the jacobian of $\Tc$.
$\negg \cap (\singl z \times \R^n)$ is a submanifold of dimension $n-d-1$ in $\negg$, so we can consider an orthonormal basis $((0,\Xi_j))_{d+2\leq j \leq n}$ of its tangent space at $(z,\x)$. We now choose an orthonormal basis $(Z_j)_{2 \leq j \leq d+1}$ of $T_z\G$. We can find $\Xi_{2},\dots , \Xi_{d+1} \in \R^n$ such that $(Z_j,\Xi_j) \in T_{(z,\x)}\negg$ for $j \in \Ii {2}{d+1}$ and since linear combinations of $(0,\Xi_{d+2}),\dots,(0,\X_{n})$ can be added, we may assume that $\Xi_j \in T_z\G \oplus \R \x$ for all $j \in \Ii {2}{d+1}$. These $n-1$ vectors form an orthonormal family of $T_{(z,\x)}\negg$ to which we add the canonical unit vector of $\R$ for the time component. This gives an orthonormal basis $\Bc_{(t,z,\x)}$ of $T_{(t,z,\x)} (]0,3\t_0[ \times \negg)$. In $\R^n$ we consider the orthonormal basis:
\[
\tilde \Bc_{\Tc(t,z,\x)} = (\x/\abs \x,Z_{n-d},\dots,Z_{n-1},\Xi_1,\dots , \X_{n-d-1})
\]
Since $\Tc(t,z,\x) = z  +2t\x + O(t^2)$, the jacobian matrix of $\Tc$ in these two bases is:
\[
\Mat _{\Bc_{(t,z,\x)} \to \tilde \Bc_{\Tc(t,z,\x)}} D_{(t,z,\x)}\Tc = 
\begin{pmatrix} 
2 \abs \x & 0 & 0 \\ 
0 & I_d & 0 \\ 
0 & 0 & 2t I_{n-d-1}
\end{pmatrix}
 \left( 1 + \bigo t 0 (t) \right) 
\]
On the other hand, since basis $\Bc_{(t,z,\x)}$ and $\tilde \Bc_{\Tc(t,z,\x)}$ are orthonormal, we have, for $x \in \tilde \G(0,3\t_0)$:
\[
\left(\det (\tilde g_{\Tc\inv(x)} (\partial_i \Tc\inv (x),\partial_j \Tc\inv (x)))_{1 \leq i,j\leq n}\right)^{\frac 12} = \abs {\det \Mat _{ \tilde \Bc_{x}\to\Bc_{\Tc\inv(x)} } D_x\Tc\inv }
\]
Thus, using the definition of the measure $dt \, d\snegg$ on $]0,3\t_0[\times \negg$ and the fact that $\Tc \inv : \tilde \G(0,3\t_0) \to ]0,3\t_0[ \times \negg $ can be seen as a map for the manifold $]0,3\t_0[ \times \negg$, we obtain:
\begin{eqnarray*}
\lefteqn{\int_{x \in \R^n} f(x)\, dx}\\
&& = \int_{x\in\R^n} (f\circ \Tc) (\Tc\inv x) \abs {\det \Mat _{ \tilde \Bc_{x}\to\Bc_{\Tc\inv(x)} } D_x\Tc\inv}\abs {\det \Mat _{  \Bc_{\Tc\inv(x)} \to\tilde\Bc_{x}} D_{\Tc\inv(x)}\Tc }\,dx\\
&& = \int_{t=0}^{3\t_0} \int_{(z,\x)\in \negg} (f\circ \Tc)(t,z,\x) \abs {\det \Mat _{  \Bc_{(t,z,\x)} \to\tilde\Bc_{\Tc(t,z,\x)}} D_{(t,z,\x)}\Tc  } \, d\snegg(z,\x) \, dt\\
&& = 2^{n-d} \int_{0}^{3\t_0} \!\! \int_{\negg} f(\Tc(t,z,\x))  t^{n-d-1} \abs \x \Big(1+\bigo  t 0 (t) \Big)\, d\snegg(z,\x)\,dt
\end{eqnarray*}
\end{proof}

\begin{corollary} \label{dist-e0}
Let $(t,z,\x)\neq (s,\z,\y) \in \R_+^* \times \negg$ such that $\vf^t(z,\x) = \vf^s(\z,\y)$. Then $\abs {t-s} \geq 3\t_0$ where $\t_0$ is given by proposition \ref{prop-superbij}.
\end{corollary}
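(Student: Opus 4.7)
The plan is to reduce the problem to a statement about a single trajectory starting on $\negg$ and hitting $\G$, so that proposition \ref{prop-superbij}(i) gives the contradiction.

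First I would argue by symmetry: assume without loss of generality that $s \geq t$. The case $s = t$ is ruled out by injectivity of the flow $\vf^t$, since we would have $\vf^t(z,\x) = \vf^t(\z,\y)$ with $(z,\x) \neq (\z,\y)$.

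Next, I would apply $\vf^{-t}$ to both sides of $\vf^t(z,\x) = \vf^s(\z,\y)$ to obtain
\[
\vf^{r}(\z,\y) = (z,\x), \qquad r := s - t > 0.
\]
In particular, $\bar x(r,\z,\y) = z \in \G \subset \G_2$, so $d(\bar x(r,\z,\y),\G_2) = 0$. Since $(\z,\y) \in \negg \subset \negg_1$, if we had $r \in ]0,3\t_0]$ then estimate \eqref{dist-gamma} of proposition \ref{prop-superbij}(i) would give
\[
d(\bar x(r,\z,\y),\G_2) \geq 2\g_m r > 0,
\]
a contradiction. Hence $r > 3\t_0$, and in particular $|t-s| \geq 3\t_0$.

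There is no real obstacle here; the whole point of corollary \ref{dist-e0} is simply to package the injectivity statement of proposition \ref{prop-superbij} in a symmetric form that will later be used to control how many times a trajectory starting from $\negg$ can return to $\negg$.
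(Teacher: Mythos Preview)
Your proof is correct and is exactly the argument the paper intends: the corollary is stated without proof precisely because it follows immediately from proposition \ref{prop-superbij}(i) via the reduction you give (flow back by $t$, then use \eqref{dist-gamma} to see that a trajectory from $\negg_1$ cannot return to $\G_2$ within time $3\t_0$).
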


Let $w \in \R^{2n}$ and denote:
\[
( (t_{w,k},z_{w,k},\x_{w,k}))_{1\leq k \leq K_w} = \singl{ (t,z,\x) \in \R_+^* \times \negg \tqe \vf^t(z,\x) = w}
\]
with $t_{w,1} < t_{w,2}<\dots$ and $K_w \in \N \cup \singl \infty$ ($\Ii 1 {K_w}$ is to be understood as $\N^*$ if $K_w = \infty$ and $K_w = 0$ if $w \notin \L$). We also define $K_w^T = \sup \singl{k \in \Ii 1 {K_w} \tqe t_{w,k} \leq T} \in \N$. For $w \in \R^{2n}$ and $k\in \Ii 1 {K_w}$ we write:
\[
\L_{w,k} = \singl{\vf^t(z,\x), \abs{t-t_{w,k}} <\t_0, \abs{(z,\x)-(z_k,\x_k)} < \t_0}
\]
and if $w \in \negg$:
\[
\L_{w,0} = \singl{\vf^t(z,\x), \abs t  < \t_0, \abs{(z,\x)-w} <\t_0}
\]

\begin{proposition}
Let $w = (x,\x) \in \R^{2n}$ and $j,k\in \Ii 1 {K_w}$ ($\Ii 0 {K_w}$ if $w \in \negg$). Then
\begin{enumerate}[(i)]
\item $\L_{w,j} \cap \L_{w,k}$ is of measure zero in $\L_{w,j}$ is and only if it is of measure zero in $\L_{w,k}$.
\item Assumption \eqref{hyp4} is equivalent to:
\begin{equation}\label{hyp4bis}
\forall w \in \R^{2n},\forall j , k \in \Ii 1 {K_w} (\text{or } \Ii 0 {K_w}), \quad \L_j\cap\L_k\text{ is of measure 0 in } \L_j
\end{equation}
\end{enumerate}
\end{proposition}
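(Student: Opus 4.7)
The plan is to handle (i) by pulling the intersection $\L_{w,j} \cap \L_{w,k}$ back to an open set in $\R \times \negg$ via the natural parametrizations of the two submanifolds, and then to derive (ii) from (i), with a density-point argument for the converse direction.

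For (i), I first check that the Hamiltonian vector field $H_p(z,\x) = (2\x, -\nabla V_1(z))$ is transverse to $\negg$ at every point of $\negg$: if $H_p(z,\x)$ lay in $T_{(z,\x)}\negg$, then $2\x$ would belong to $T_z\G$, whereas by definition of $\negg$ one has $\x \in N_z\G$ and $\abs\x^2 = E_0 - V_1(z) > 0$ thanks to \eqref{hyp3}. Hence, for $\t_0$ small enough, the map $(t,z,\x) \mapsto \vf^t(z,\x)$ is a diffeomorphism from $(t_{w,j}-\t_0, t_{w,j}+\t_0) \times B_\negg((z_{w,j},\x_{w,j}), \t_0)$ onto $\L_{w,j}$, and identifies the natural measure on $\L_{w,j}$ with $dt \otimes d\snegg$. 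Since $\L_{w,j} \cap \L_{w,k}$ is locally saturated by trajectory arcs, its preimage in this parametrization is of the form (time interval) $\times A_j$ for some $A_j \subseteq B_\negg((z_{w,j},\x_{w,j}), \t_0)$, and likewise there is a corresponding set $A_k$ on the $k$-side. Thus measure zero in $\L_{w,j}$ is equivalent to $\snegg(A_j) = 0$, and similarly for $A_k$. The Poincar\'e first-return map from $B_\negg((z_{w,j},\x_{w,j}), \t_0)$ to $B_\negg((z_{w,k},\x_{w,k}), \t_0)$, which is a local diffeomorphism by the same transversality, restricts to a bijection $A_j \leftrightarrow A_k$ with bounded Jacobian, giving (i).

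For (ii), the implication \eqref{hyp4} $\Rightarrow$ \eqref{hyp4bis} is immediate: for $j < k$ the elements of $A_j$ have their trajectories crossing $\negg$ again after a positive time close to $t_{w,k} - t_{w,j}$, so $A_j \subseteq \Phi_0$ and \eqref{hyp4} forces $\snegg(A_j) = 0$. For the converse I argue by contrapositive. Assuming $\snegg(\Phi_0) > 0$, I decompose $\Phi_0 = \bigcup_{m,n} \Phi_0^{m,n}$, where on $\Phi_0^{m,n}$ the first-return time $\tau_1$ is restricted to the $m$-th interval of a fine cover of $(3\t_0,+\infty)$ and the first-return point $R = \vf^{\tau_1}$ lies in the $n$-th ball of a cover of $\negg$ by balls of radius less than $\t_0$. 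Some $\Phi_0^{m_0,n_0}$ has positive $\snegg$-measure; picking a Lebesgue density point $(z_0, \x_0)$ of it, setting $w = R(z_0, \x_0) \in \negg$, and letting $k$ be the index with $(t_{w,k}, z_{w,k}, \x_{w,k}) = (\tau_1(z_0, \x_0), z_0, \x_0)$, by construction a large fraction of $B_\negg((z_0,\x_0), \t_0)$ is contained in $\Phi_0^{m_0,n_0}$, and hence in the set $A_k$ from (i) associated to the point $w$ and the indices $0$ and $k$. Therefore $\snegg(A_k) > 0$, and (i) shows that $\L_{w,0} \cap \L_{w,k}$ is not of measure zero, contradicting \eqref{hyp4bis}.

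The delicate step is this contrapositive: one has to check that $\tau_1$ and $R$ are Borel measurable, so that the sets $\Phi_0^{m,n}$ are legitimate measurable pieces, and that the covers are fine enough for the selected piece to sit entirely inside the $\t_0$-window defining $A_k$ rather than spilling out of it. Both points hinge once again on the transversality of the flow to $\negg$, which controls $\tau_1$ and $R$ locally and also underlies the use of the Poincar\'e map in the proof of (i).
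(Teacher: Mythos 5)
The paper itself gives no argument for this proposition (it simply refers to section 6 of \cite{bony}), so your proposal has to stand on its own; its overall strategy --- flow-box parametrization of $\L_{w,j}$ and $\L_{w,k}$ using transversality of $H_p$ to $\negg$, transfer of null sets between the two base sets $A_j$, $A_k$ by holonomy of the flow, and for the converse of (ii) a countable decomposition of $\Phi_0$ by return time and return point followed by a density-point argument --- is sound and does yield the proposition. The transversality check is correct (it uses exactly \eqref{hyp3}), injectivity of the parametrization on a $2\t_0$-time window follows from corollary \ref{dist-e0}, and the reduction of ``measure zero in $\L_{w,j}$'' to $\snegg(A_j)=0$ is legitimate.

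Several steps, however, are stated inaccurately and need repair, even though each is fixable with the tools you already use. First, the preimage of $\L_{w,j}\cap\L_{w,k}$ is \emph{not} a product ``(time interval)$\times A_j$'': the set of admissible times depends on the base point; what is true is that each fiber over a point of $A_j$ is a nonempty open set of times, and Fubini then gives the equivalence you want. Second, there is no globally defined ``Poincar\'e first-return map'' from one $\t_0$-ball of $\negg$ to the other, and on $A_j$ the crossing correspondence need not be single-valued (the admissible crossing times fill a window of width $4\t_0$ while corollary \ref{dist-e0} only guarantees $3\t_0$-separation); the correct and sufficient statement is that transversality gives, near each point of $A_j$, a local crossing diffeomorphism of $\negg$ sending $A_j$ into $A_k$ (and symmetrically), and countably many such branches cover, so null sets transfer both ways. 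Third, in the forward direction of (ii) you have the time direction reversed: for $j<k$ a point of $A_j$ meets $\negg$ at a \emph{negative} time close to $-(t_{w,k}-t_{w,j})$; it is the points of $A_k$ whose forward trajectories return to $\negg$ after the positive time close to $t_{w,k}-t_{w,j}$, so the correct inclusion is $A_k\subset\Phi_0$ --- harmless, since (i) makes $j$ and $k$ interchangeable, but the sentence as written is false. Finally, in the density step the covering balls of $\negg$ should have radius less than $\t_0/2$ (so that the return point of a nearby trajectory stays within $\t_0$ of $w$), you only need a small ball around the density point rather than ``a large fraction of $B_{\negg}((z_0,\x_0),\t_0)$'', the statement should of course carry $j\neq k$, and the measurability of $\tau_1$ and $R$, which you only flag, is most easily settled by noting that the $3\t_0$-separation of return times (again corollary \ref{dist-e0}) makes the set of return times discrete and closed, so the first return is well defined and the pieces $\Phi_0^{m,n}$ are measurable.
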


This proposition is proved in section 6 of \cite{bony}.

\subsection{Localization around $E_0$-energy hypersurface}

\begin{proposition} \label{prop-norme-S}
For any $\d \in \R$ we have:
\begin{equation} \label{norme-S}
\nr{S_h}_{L^{2,\d} (\R^n)} = \bigo h 0 \big(\sqrt h\big)
\end{equation}
\end{proposition}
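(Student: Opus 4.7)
The plan is to reduce the weighted estimate to an unweighted $L^2$-estimate, and then to evaluate $\nr{S_h}^2_{L^2}$ directly by expanding the square and exploiting the Schwartz decay of $S$.

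\textbf{Step 1 (reduction to $L^2$).} Since $\G$ is bounded, fix $R > 0$ with $\supp A \subset B(0,R)$ and split $\R^n = B(0,2R) \cup (\R^n \setminus B(0,2R))$. On the ball the weight $\pppg{x}^{2\d}$ is bounded by a constant $C_{R,\d}$, so this part of $\nr{S_h}^2_{L^{2,\d}}$ is at most $C_{R,\d}\nr{S_h}^2_{L^2}$. On the complement, $\abs{x-z} \geq \abs x /2$ for all $z \in \supp A$, so the Schwartz decay of $S$ yields $\abs{S((x-z)/h)} \leq C_N h^N \pppg x ^{-N}$ for any $N$. This gives $\abs{S_h(x)} \leq C h^{(1-n-d)/2 + N}\pppg x^{-N}$ uniformly for $\abs x \geq 2R$, and choosing $N$ large enough makes this contribution $O(h^\infty)$. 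Hence it suffices to show $\nr{S_h}^2_{L^2} = O(h)$.

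\textbf{Step 2 (reformulation of $\nr{S_h}^2_{L^2}$).} Expanding the modulus squared and swapping the integrals gives
\begin{equation*}
\nr{S_h}^2_{L^2} = h^{1-n-d}\int_\G\int_\G A(z)\bar{A(z')} \int_{\R^n} S\bigl((x-z)/h\bigr)\bar S\bigl((x-z')/h\bigr)\,dx\,d\s(z)\,d\s(z').
\end{equation*}
The change of variable $y = (x-z)/h$ in the inner integral produces $h^n K((z-z')/h)$, where $K(w) = \int_{\R^n} S(y)\bar S(y - w)\,dy$ is Schwartz (being the correlation of a Schwartz function). Therefore
\begin{equation*}
\nr{S_h}^2_{L^2} = h^{1-d}\int_\G\int_\G A(z)\bar{A(z')}\,K\bigl((z-z')/h\bigr)\,d\s(z)\,d\s(z').
\end{equation*}

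\textbf{Step 3 (the geometric estimate).} It remains to bound the double integral by $O(h^d)$. For fixed $z \in \supp A$, split the integral in $z'$ according to whether $\abs{z'-z} \geq \e$ or not, for a small $\e > 0$. The far part is $O(h^\infty)$ by the Schwartz decay of $K$. For the near part, use a local chart $\psi_z \tqe U \subset \R^d \to \G$ with $\psi_z(0) = z$, $\psi_z(U) \supset \G \cap B(z,\e)$, $d\s(\psi_z(u)) = J_z(u)\,du$ (with $J_z$ smooth and bounded), and $\psi_z(u) - z = D\psi_z(0) u + O(\abs u ^2)$. Rescaling $v = u/h$ transforms the near part into
\begin{equation*}
h^d \int_{\abs v \leq \e/h} \bigl|K\bigl(D\psi_z(0)v + O(h\abs v^2)\bigr)\bigr|\,J_z(hv)\,dv,
\end{equation*}
which is bounded by $C h^d$ uniformly in $z \in \supp A$, again by the rapid decay of $K$ and compactness of $\supp A$ (covered by finitely many charts). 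Integrating in $z$ yields $\nr{S_h}^2_{L^2} = O(h^{1-d}\cdot h^d) = O(h)$, as required.

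\textbf{Main obstacle.} The technical heart of the proof is Step 3. Although $K$ is Schwartz on all of $\R^n$, the integration variable $z'$ is confined to the $d$-dimensional submanifold $\G$; the correct scaling $h^d$ emerges only after parametrizing $\G$ by its tangent space $T_z\G \simeq \R^d$ locally, so that the rapid decay of $K$ in the $n-d$ transverse directions plays no role and only its restriction to the tangent plane contributes. A naive bound such as Minkowski's inequality gives only $\nr{S_h}_{L^2} = O(h^{(1-d)/2})$, which is worse than $O(\sqrt h)$ as soon as $d \geq 1$; the gain comes precisely from the oscillation/cancellation built into the double integral above.
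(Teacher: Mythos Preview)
Your proof is correct but follows a genuinely different route from the paper.

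The paper works pointwise: it first establishes the Ahlfors regularity bound $\sigma(B(x,r)\cap\Gamma)\leq Cr^d$, then for each fixed $x$ decomposes the $z$-integral defining $S_h(x)$ into annuli $mh\leq\abs{x-z}<(m+1)h$, applies Cauchy--Schwarz on each annulus, integrates the result in $x$, and sums in $m$ using the decay of $S$. The weight $\pppg x^{2\d}$ is carried along throughout rather than removed at the start.

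Your approach instead expands $\nr{S_h}_{L^2}^2$ as a double integral over $\Gamma\times\Gamma$ against the Schwartz correlation kernel $K$, and then extracts the factor $h^d$ by parametrizing $\Gamma$ locally by its tangent space at $z$ and rescaling. The paper's argument is more robust in that it only uses the measure-theoretic $d$-regularity of $\sigma$ (no smooth charts, no Taylor expansion of the embedding), so it would extend to rougher sets supporting an Ahlfors $d$-regular measure. Your argument is perhaps more transparent about where the exponent $d$ comes from geometrically, and the kernel $K$ (whose Fourier transform is $\abs{\hat S}^2$) connects naturally to the source term appearing later in the Liouville equation.

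One small point in Step~3: you should make explicit that since $D\psi_z(0)$ is injective (rank $d$) with a lower bound $\abs{D\psi_z(0)v}\geq c_0\abs v$ uniform over the finitely many charts, the perturbation $O(h\abs v^2)$ does not spoil the decay of $K$ on the domain $\abs{hv}\leq\e$, giving $\abs{K(D\psi_z(0)v+O(h\abs v^2))}\leq C_M\pppg v^{-M}$ and hence a convergent $dv$-integral. You allude to this (``rapid decay of $K$'') but the injectivity of the differential is the reason it works.
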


\begin{proof}
\noindent{\bf 1.} There exists $C\geq 0$ such that for all $x\in\R^n$ and $r>0$, the measure of $B(x,r)\cap \G$ in $\G$ is less than $Cr^d$. Otherwise for all $m\in\N$ we can find $x _m \in \R^n$ and $r _m > 0$ such that the measure of the ball $B(x_m,r_m)\cap \G$ in $\G$ is greater than $mr_m^d$. As $\G$ is of finite measure, $r_m$ necessarily goes to 0 as $m \to +\infty$. On the other hand $x_m$ has to stay close to $\G$, hence in a compact subset of $\R^n$, so taking a subsequence we can assume that $x_m \to x_\infty \in \G$. But the part of $\G$ close to $x_\infty$ is diffeomorphic to a subset of $\R^d \subset \R^n$, hence the measure of $B(x_\infty,r)\cap \G$ in $\G$ is less than $Cr^d$ for some $C\geq0$.\\

\noindent{\bf 2.} Let $x \in \R^n$. We have:
\[
\begin{aligned}
S_h(x)^2
& = h^{1-n-d} \left( \sum_{m\in\N} \int_{mh\leq \abs{x-z}< (m+1)h} A(z)S\left( \frac {x-z}h\right) \, d\s(z)\right)^2\\
& \leq c \, h^{1-n-d}  \sum_{m\in\N}m^2\left( \int_{mh\leq \abs{x-z}< (m+1)h} A(z)S\left( \frac {x-z}h\right) \, d\s(z)\right)^2\\
& \leq c \,h^{1-n}\sum_{m\in\N}m^{2+d} \int_{mh\leq \abs{x-z}< (m+1)h}  S\left( \frac {x-z}h\right)^2 \, d\s(z)\\
\end{aligned}
\]
and hence:
\[
\begin{aligned}
\nr{S_h}^2_{L^{2,\d}(\R^n)}
& \leq c\, h^{1-n} \int_{x\in\R^n}\sum_{m\in\N}m^{2+d} \int_{mh\leq \abs{x-z}< (m+1)h} \pppg x ^{2\d}  S\left( \frac {x-z}h\right)^2 \, d\s(z)\, dx\\
& \leq c\, h \sum_{m\in\N}m^{2+d} \int_{z\in\G} \int_{m\leq \abs{y}< (m+1)} \pppg {z+hy} ^{2\d}  S(y)^2 \, dy \, d\s(z)\\
& \leq c\, h \sum_{m\in\N}m^{2+d} \int_{z\in\G} \int_{m\leq \abs{y}< (m+1)} \pppg {y} ^{2\d}  S(y)^2 \, dy \, d\s(z)
\end{aligned}
\]
for $h\in]0,1]$, since $\G$ is bounded. As $S$ decays faster than $\pppg y ^{-\frac{n+2\d+4+d}2}$ we have:
\[
\begin{aligned}
\nr{S_h}^2_{L^{2,\d}(\R^n)}
& \leq c\, h \sum_{m\in\N}m^2 \pppg m ^{-4-d} \leq c\, h
\end{aligned}
\]
\end{proof}

Since $(\hh - (E_h+i0))\inv = O(h\inv)$ as an operator from $L^{2,\a}(\R^n)$ to $L^{2,-\a}(\R^n)$ we get:

\begin{corollary}
$u_h = \bigo h 0 (h^{-\frac 12})$ in $L^{2,-\a}(\R^n)$. The same applies to $u_h^T$ for all $T \geq 0$.
\end{corollary}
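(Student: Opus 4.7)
The statement is a direct consequence of Proposition \ref{prop-norme-S} combined with the resolvent bound recalled in the introduction, so the proof plan is short.

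For the first assertion, I would simply write
\[
u_h = (\hh - (E_h+i0))\inv S_h
\]
and combine the two available estimates. Since $\Re E_h \to E_0$ and $\Im E_h \geq 0$, the assumption \eqref{h6} guarantees that for $h$ small enough $E_h$ (or its limit from above if $\Im E_h = 0$) lies in the range where the limiting absorption principle gives
\[
\nr{(\hh-(E_h+i0))\inv}_{L^{2,\a}(\R^n) \to L^{2,-\a}(\R^n)} \leq \frac{c}{h}.
\]
Proposition \ref{prop-norme-S} with $\d = \a$ yields $\nr{S_h}_{L^{2,\a}(\R^n)} = O(\sqrt h)$. Multiplying the two bounds gives $\nr{u_h}_{L^{2,-\a}(\R^n)} = O(h^{-1/2})$.

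For $u_h^T$ I would use the integral representation \eqref{def-uht}:
\[
u_h^T = \frac{i}{h}\int_0^T U_h^E(t) S_h \, dt.
\]
The key observation is that $U_h^E(t)$ is a contraction on $L^2(\R^n)$ for $t \geq 0$: indeed, $U_h(t) = e^{-it\hh/h}$ is generated by a dissipative operator (because $V_2 \geq 0$), hence is contractive, and the extra factor $e^{itE_h/h}$ has modulus $e^{-t\Im E_h/h} \leq 1$ since $\Im E_h \geq 0$. Using the trivial embedding $L^2(\R^n) \hookrightarrow L^{2,-\a}(\R^n)$ and Proposition \ref{prop-norme-S} we get
\[
\nr{U_h^E(t) S_h}_{L^{2,-\a}(\R^n)} \leq \nr{U_h^E(t) S_h}_{L^2(\R^n)} \leq \nr{S_h}_{L^2(\R^n)} = O(\sqrt h)
\]
uniformly in $t \geq 0$. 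Integrating over $t \in [0,T]$ and dividing by $h$ yields
\[
\nr{u_h^T}_{L^{2,-\a}(\R^n)} \leq \frac{T}{h} \, O(\sqrt h) = O(h^{-1/2})
\]
for each fixed $T \geq 0$.

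There is no real obstacle here: the only slightly delicate point is justifying that the contraction property of $U_h^E(t)$ persists in our dissipative non-selfadjoint setting, but this is immediate from $V_2 \geq 0$ and $\Im E_h \geq 0$.
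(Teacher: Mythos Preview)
Your proof is correct and, for $u_h$, identical to the paper's one-line argument (the sentence just before the corollary). For $u_h^T$ the paper gives no details at all; your route via the integral representation \eqref{def-uht}, the contraction property of $U_h^E(t)$ on $L^2(\R^n)$, and the embedding $L^2 \hookrightarrow L^{2,-\a}$ is a clean and correct way to fill the gap, and arguably more transparent than trying to push the resolvent identity $u_h^T = (\hh-(E_h+i0))\inv(S_h - U_h^E(T)S_h)$, since $U_h^E(T)$ does not obviously preserve $L^{2,\a}$.
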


\begin{proposition} \label{microloc-sh}
$S_h$ is microlocalized in ${N\G_0}$.
\end{proposition}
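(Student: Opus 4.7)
The plan is to show that if $q \in C_0^\infty(\R^{2n})$ has support disjoint from $N\G_0$, then $\Opw(q) S_h = \bigo h 0 (h^\infty)$ in $L^2(\R^n)$, which is the meaning of microlocalization in $N\G_0$. The starting point is the oscillatory integral
\[
\Opw(q) S_h(x) = \frac{h^{\frac{1-n-d}{2}}}{(2\pi h)^n} \int_\G \int \int e^{\frac ih (x-y)\cdot\x}\, q\!\left(\tfrac{x+y}{2}, \x\right) A(z)\, S\!\left(\tfrac{y-z}{h}\right) dy\, d\x\, d\s(z).
\]

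First I would handle the case where the $x$-projection of $\supp q$ is disjoint from the compact set $\overline{\G_0}$. Integration by parts in $\x$, using the compact support of $q$ in $\x$, shows that the Schwartz kernel of $\Opw(q)$ enjoys bounds of the form $O(h^N(1+|x-y|/h)^{-M})$ for every $N,M$, so $\Opw(q)$ is essentially local on scale $h$. Combined with the Schwartz decay of $S$, this forces the factor $S((y-z)/h)$ to be $O(h^\infty)$ on the region that actually contributes, and a Schur-test type argument gives $\Opw(q) S_h = \bigo h 0 (h^\infty)$ in $L^2$.

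The main case is when $\supp q$ meets $\overline{\G_0}\times\R^n$ but still avoids $N\G_0$. Using a smooth partition of unity I would reduce to a symbol localized near a point $(z_0,\x_0)\in \supp q$ with $z_0\in \overline{\G_0}$ and $\x_0 \notin N_{z_0}\G$, and parametrize a neighborhood of $z_0$ in $\G$ by a smooth embedding $\f : U \subset \R^d \to \G$ with $\f(0) = z_0$. After the change of variables $z = \f(u)$ and $y = \f(u) + hw$, the integral $\Opw(q) S_h(x)$ becomes
\[
\frac{h^{\frac{1-n-d}{2}}}{(2\pi)^n} \int_U \int \int e^{\frac ih (x-\f(u))\cdot\x - iw\cdot\x}\, Q(x,u,w,\x,h)\, S(w)\, dw\, d\x\, du,
\]
where $Q$ is smooth, uniformly bounded in $h \in (0,1]$ together with all its derivatives, and compactly supported in $(u,\x)$. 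After shrinking the localization if necessary, the hypothesis $\x_0 \notin N_{z_0}\G$ guarantees that $d\f(u)^{T}\x \neq 0$ uniformly on the support of $Q$, so the phase $(x-\f(u))\cdot\x/h$ is non-stationary in $u$. Iterated integration by parts in $u$ then produces arbitrary powers of $h$, and taking the $L^2$ norm in $x$ and summing over the finite partition yields $\Opw(q) S_h = \bigo h 0 (h^\infty)$.

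The main technical hurdle is the bookkeeping of the subprincipal terms obtained when Taylor-expanding $q\!\left(\tfrac{x+y}{2},\x\right)$ around $y=\f(u)$ in powers of $hw$: each such term has the same form as the leading one but with an extra polynomial factor in $w$, harmless because $S \in \Sc(\R^n)$. The non-stationary-phase argument and the Schur-type $L^2$ bound commute with this expansion, so the desired $O(h^N)$ estimate holds at every order.
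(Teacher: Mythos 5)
Your first case (the $x$-projection of $\supp q$ disjoint from $\G_0$) is sound in substance, though the kernel bound you quote, $O(h^N(1+|x-y|/h)^{-M})$, is not literally true near the diagonal: integration by parts in $\x$ gives $O(h^{-n}(1+|x-y|/h)^{-M})$, and the powers of $h$ are only gained after restricting to $|x-y|\geq \delta$, which is indeed all you need there.

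The main case, however, has a genuine gap. Localizing the symbol $q$ near a point $(z_0,\x_0)$ with $z_0\in\G_0$ and $\x_0\notin N_{z_0}\G$ does \emph{not} localize the integration over the source manifold: in $\Opw(q)S_h(x)$ the variable $z$ (your $u$) still runs over all of $\supp A$, because the cutoff $q\bigl(\tfrac{x+y}{2},\x\bigr)$ constrains $\tfrac{x+y}{2}$, not $z$. Hence the support of your amplitude $Q$ contains points $(u,\x)$ with your embedded point far from $z_0$, and there is no reason why $d\f(u)^{T}\x\neq 0$ there: for instance if $\G$ is a circle, $\supp A$ will generally contain points $z_1$ with $\x_0\perp T_{z_1}\G$, and near such a configuration (with $x\approx 2z_0-z_1$, which keeps the $q$-factor alive) the phase is stationary in $u$, so your integrations by parts in $u$ produce no gain, and the claimed uniform lower bound on $|d\f(u)^{T}\x|$ on $\supp Q$ is false. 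What saves these configurations is the other half of the gradient: there $|x-z|$ is bounded below, since $\tfrac{x+y}{2}$ is pinned near $z_0$ while $y$ is within $O(h)$ of $z$ far from $z_0$, so that portion of the $z$-integral must be handled by integration by parts in $\x$ (essentially your case (a) argument), not in $u$. The paper avoids this bookkeeping altogether: after rescaling $y=z+hv$ it applies non-stationary phase \emph{jointly} in $(z,\x)$ — a critical point forces $x=z$ and $\x\perp T_z\G$, hence $(x,\x)\in N\G_0$ whenever $z\in\supp A$, which is excluded on $\supp q$, so there are no critical points on the support of the amplitude and the $O(h^\infty)$ bound follows in one step. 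Your argument can be repaired by splitting the $z$-integral according to the distance of $z$ to $z_0$ and using non-stationarity in $\x$ for the far region, but as written the proof is incomplete.
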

\begin{proof}
Let $q \in C_0^\infty(\R^{2n})$ supported outside ${N\G_0}$. We have:
\[
\begin{aligned}
\Opw (q) S_h(x)
& = \frac 1 {(2\pi h)^n} \int_{\G} \int_{\R^n}\int_{\R^n} e^{\frac ih \innp{x-y}\x} q(x,\x) A(z) S\left(\frac  {y-z}h\right) \, dy\,d\x\,d\s(z)\\
& = \frac 1 {(2\pi)^n} \int_{\G} \int_{\R^n}\int_{\R^n} e^{\frac ih \innp{x-z}\x}e^{-i\innp v \x} q(x,\x) A(z) S(v) \, dv\,d\x\,d\s(z)\\
\end{aligned}
\]
If $\partial_z \innp{x-z}\x=0$ and $\partial_\x\innp{x-z}\x=0$ then $x = z$ and $\x \in N_z\G$ so $A(z) q(x,\x) = 0$. According to the non-stationnary phase theorem, we have $\Opw(q) S_h = O(h^\infty)$ in $L^2(\R^n)$.
\end{proof}

\begin{proposition} \label{prop-loc-surf}
\begin{enumerate}[(i)]
\item Let $g\in \symbor$ equal to 1 in a neighborhood of $p\inv(\singl{E_0})$. We have:
\begin{equation} \label{surfa}
\nr{\Opw(1-g) (\hh-(E_h+i0))\inv }_{L^{2,\a}(\R^n) \to L^{2-\a}(\R^n)} = \bigo h 0 (1)
\end{equation}
\item Let $f \in \symbor$ equal to 1 in a neighborhood of $\bar {\negg_0}$, then in $L^{2,-\a}(\R^n)$:
\begin{equation}\label{surfb}
u_h = (\hh -(E_h+i0))\inv \Op(f) S_h + \bigo h 0 (\sqrt h)
\end{equation}
\item Moreover there exists $\tilde g \in C_0^\infty(\R)$ equal to 1 in a neighborhood of $E_0$ such that in $L^{2,-\a}(\R^n)$:
\begin{equation}\label{surfc}
(\hh -(E_h+i0))\inv \Op(1-f) S_h = (1-\tilde g)(\huh) (\hh -(E_h+i0))\inv \Op(1-f) S_h + \bigo h 0\big(h^{\frac 32}\big)
\end{equation}
\end{enumerate}
Similar results hold for $u_h^T$, $T \geq 0$.
\end{proposition}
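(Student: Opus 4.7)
The three statements rest on a pseudodifferential parametrix for the resolvent off the energy shell $\Sigma_0 := p\inv(\singl{E_0})$, combined with the microlocalization of $S_h$ in $N\G_0$ supplied by Proposition~\ref{microloc-sh}. For (i), I would introduce the symbol $b = (1-g)/(p-E_0)$. Since $g\equiv 1$ on a neighborhood of $\Sigma_0$, the factor $1-g$ vanishes where $p - E_0$ does, and the quadratic growth of $p$ in $\x$ keeps all derivatives of $1/(p-E_0)$ bounded on $\supp(1-g)$; hence $b\in\Sc_b$. Weyl calculus gives $\Opw(b)(\huh - E_0) = \Opw(1-g) + \bigo h 0 (h)$ on $L^2$, and since $\hh - E_h = (\huh - E_0) - ihV_2 + (E_0-E_h)$, the same identity holds with $\hh - E_h$ in place of $\huh - E_0$. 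Writing $R = (\hh - (E_h+i0))\inv$ and using $(\hh-E_h)R = I$ from the limiting absorption principle, one gets $\Opw(1-g)R = \Opw(b) - O(h)\,R$; the first term is bounded $L^{2,\a}\to L^{2,-\a}$, and $hR=O(1)$ between those spaces by the resolvent estimate. The symmetric argument from $R(\hh-E_h)=I$ shows that $R\,\Opw(1-g)=O(1)$ as well.

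For (ii), choose $g$ as in (i) with the additional requirement that $\supp g\cap N\G_0$ lies inside the open set where $f\equiv 1$. This is possible since $N\G_0\cap\Sigma_0 = \bar{\negg_0}$ (by \eqref{hyp3}, which ensures $V_1(z)<E_0$ on $\bar\G$). Then at every point of $N\G_0$ either $g$ or $1-f$ vanishes identically on an open neighborhood, so every term in the asymptotic symbol expansion of $\Opw(g)\Op(1-f)$ vanishes in some open neighborhood of $N\G_0$. After truncating this symbol by a compactly supported phase-space cutoff (controlling the far region by kernel estimates using the bounded physical support of $S_h$), Proposition~\ref{microloc-sh} yields $\Opw(g)\Op(1-f)S_h = \bigo h 0 (h^\infty)$ in $L^{2,\a}$. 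Consequently $\Op(1-f)S_h = \Opw(1-g)\Op(1-f)S_h + O(h^\infty)$, and applying $R$ together with the right-sided version of (i) and Proposition~\ref{prop-norme-S} gives $\nr{R\,\Op(1-f)S_h}_{L^{2,-\a}} \leq O(1)\cdot O(\sqrt h) = O(\sqrt h)$; (ii) follows by splitting $u_h = R\Op(f)S_h + R\Op(1-f)S_h$.

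For (iii), pick $\tilde g\in C_0^\infty(\R)$ equal to $1$ near $E_0$, with $\supp\tilde g$ so small that $p\inv(\supp\tilde g)\cap N\G_0\subset\{f\equiv 1\}$. Functional calculus presents $\tilde g(\huh)$ as a semiclassical pseudodifferential operator whose symbol is, at every order in $h$, supported in $p\inv(\supp\tilde g)$, so the argument of (ii) gives $\tilde g(\huh)\Op(1-f)S_h = O(h^\infty)$ in $L^{2,\a}$. Commuting through the resolvent yields $\tilde g(\huh)R = R\,\tilde g(\huh) - R[\tilde g(\huh),\hh]R$; the first piece applied to $\Op(1-f)S_h$ is $O(h^\infty)$ in $L^{2,-\a}$. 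For the second, $[\tilde g(\huh),\hh] = -ih[\tilde g(\huh),V_2]$, and the principal symbol of $[\tilde g(\huh),V_2]$ has order $h$ and contains a factor $\nabla V_2 = O(\pppg{x}^{-\rho-1})$ by \eqref{h1}, so this commutator is $O(h^2)$ from $L^{2,-\a}$ to $L^{2,\a}$ (for $\a$ in the admissible range $\a\leq(\rho+1)/2$). Combining with (ii) and $\nr{R}_{L^{2,\a}\to L^{2,-\a}}=O(h\inv)$ gives a contribution of $O(h\inv)\cdot O(h^2)\cdot O(\sqrt h)=O(h^{3/2})$, which is exactly (iii). The main obstacle throughout is arranging the cutoffs $g$ and $\tilde g$ so that their supports contain a genuine neighborhood of the non-compact set $\Sigma_0$ (needed for the parametrix) while their intersections with the non-compact $N\G_0$ are absorbed into $\{f\equiv 1\}$; this rests crucially on $N\G_0\cap\Sigma_0=\bar{\negg_0}$. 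The extraction of the extra $\sqrt h$ in (iii) depends on turning the $O(h)$ commutator into a weight-gaining operator via the decay in \eqref{h1}. For $u_h^T$, the identity $u_h^T = R(S_h - U_h^E(T)S_h)$ and the $T$-independence of every bound used above give the three analogues verbatim.
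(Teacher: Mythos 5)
Your argument is correct, and it reaches all three estimates by a route that is genuinely different in its technical mechanism from the paper's, so it is worth comparing. For (i) the paper does not build a parametrix by dividing by $p-E_0$: it writes the dissipative resolvent identity $(\hh-z)\inv=(\huh-z)\inv(1+hV_2(\hh-z)\inv)$ and invokes the Helffer--Robert description of $(\huh-z)\inv$ as $\Opw((p-z)\inv)+O(h)$, whereas you invert $\Opw(1-g)$ directly through $b=(1-g)/(p-E_0)$ and the relation $(\hh-E_h)R=I$; both hinge on the same implicit fact (shared with the paper) that $\abs{p-E_0}$ is bounded below on $\supp(1-g)$, and both need the $O(h)$ symbolic remainders to act boundedly on the weighted spaces before being composed with $R=O(h\inv)$ -- a routine point you, like the paper, leave tacit. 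For (ii) the paper uses the spectral cutoff $\h(\huh)$ (so that $\Op(1-f)\h(\huh)$ is microsupported off $N\G_0$ and $R(1-\h)(\huh)=O(1)$), while you use a second phase-space cutoff $g$ adapted to $f$ and the right-sided version of (i); the two are interchangeable, and your compactness argument showing one can take $\supp g\cap N\G_0\subset\{f\equiv1\}$ (using \eqref{hyp3} and the compactness of $\bar{\negg_0}$), together with the truncation handling the non-compact support of the composed symbol, is exactly the kind of bookkeeping needed (note only that $S_h$ is not compactly supported, merely $O(h^\infty)$-concentrated near $\G$, which is what your non-stationary-phase step actually uses). The real divergence is (iii): the paper gains the extra factor $h$ purely algebraically, from $(1-\h)\tilde g=0$ and the dissipative resolvent identity, producing $h\,\tilde g(\huh)(\hh-z)\inv V_2(1-\h)(\huh)(1-\tilde g)(\huh)(\huh-z)\inv$ with uniformly bounded factors; you instead commute $\tilde g(\huh)$ through $R$ and exploit $[\tilde g(\huh),\hh]=-ih[\tilde g(\huh),V_2]=O(h^2)$ as an operator $L^{2,-\a}\to L^{2,\a}$, the weight gain coming from $\nabla V_2=O(\pppg x^{-1-\rho})$, and then conclude by $O(h\inv)\cdot O(h^2)\cdot O(\sqrt h)$ using your own estimate from (ii). This costs you the restriction $\a\leq(1+\rho)/2$, which is harmless -- proving the three statements for one admissible $\a_0>\frac12$ gives them for every larger $\a$ by monotonicity of the weighted norms -- but you should say so; and, as with the paper, the commutator identity with the boundary value $R$ is cleanest if carried out at $\Im z>0$ with uniform bounds and then passed to the limit. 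In exchange, your route needs the decay of $\nabla V_2$, which the paper's algebraic trick avoids, while your argument is perhaps more transparent about where the extra power of $h$ comes from. The transfer to $u_h^T$ via $u_h^T=R(S_h-U_h^E(T)S_h)$ is fine since none of your bounds depends on $T$.
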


\begin{proof}
\noindent
(i) 
For $\Im z > 0$ we have:
\[
\Op(1-g)(\hh-z)\inv = \Op(1-g)(\huh -z)\inv (1+hV_2(\hh-z)\inv ) 
\]
According to \cite{helfferr83} we have:
\[
(\huh -z)\inv = \Opw \left((p(x,\x)-z)\inv\right) + \bigo h 0 (h)
\]
Since $(p(x,\x)-z)\inv$ is bounded on $\supp(1-g)$ uniformly for $z$ close to $E_0$, $\Im z > 0$, the operator ${\Opw(1-g)(\huh -z)\inv}$ is uniformly bounded in $h>0$ and $z$ close to $E_0$, $\Im z >0$. Moreover ${(1+hV_2(\hh-z)\inv)}$ is uniformly bounded as an operator from $L^{2,\a}( \R^n)$ to $L^{2,-\a}( \R^n)$ so:
\begin{equation*}
\nr{\Opw(1-g) (\hh-z)\inv }_{L^{2,\a}(\R^n) \to L^{2-\a}(\R^n)} = \bigo h 0 (1)
\end{equation*}
uniformly in $z$. Taking the limit $z \to E_h + i0$ gives \eqref{surfa}.\\

\noindent
(ii)
Let $\Uc$ be a neighborhood of $\bar {\negg _0}$ in $\R^{2n}$ such that $f=1$ on $\Uc$. We can find $\e > 0$ such that $p\inv([E_0-2\e,E_0+2\e])\setminus \Uc$ does not intersect $\bar {N\G_0}$. Let $\h \in C^\infty_0(\R)$ supported in $]E_0-2\e,E_0+2\e[$ and equal to 1 on $]E_0-\e,E_0+\e[$. Since modulo $O(h^\infty)$ the operator $\h(H_1)$ is a pseudo-differential operator with symbol supported in $\supp (\h \circ p)$ and $S_h$ is microlocalized on $N\G_0$ we have in $L^{2,\a}(\R^n)$:
\[
(\hh -(E_h+i0))\inv \Op(1-f) \h(H_1) S_h = \bigo h 0 (h^\infty)
\]
On the other hand, as we proved \eqref{surfa} we see that:
\[
(\hh -(E_h+i0))\inv (1-\h)(\huh) = \bigo h 0 (1)
\]
so \eqref{surfb} follows since $\Opw(1-f)S_h = O(\sqrt h)$.\\

\noindent
(iii)
Let us refine this last estimate. Let $\tilde g \in C_0^\infty(\R)$ supported in $[E_0-\e,E_0+\e]$ and equal to 1 in a neighborhood of $E_0$. Since $(1-\h) \tilde g= 0$, we have:
\begin {eqnarray*}
\lefteqn{\tilde g (\huh) (\hh -z)\inv (1-\h)(\huh)}\\
&& = \tilde g (\huh) (\hh -z)\inv (1-\h)(\huh) (1-\tilde g)(\huh)\\
&& = \tilde g (\huh) (1+h(\hh-z)\inv V_2 )(\huh -z)\inv  (1-\h)(\huh)(1-\tilde g)(\huh)\\
&& = h \tilde g (\huh) \,  (\hh-z)\inv V_2 \, (1-\h)(\huh) \, (1-\tilde g) (\huh) \, (\huh -z)\inv
\end{eqnarray*}
It only remains to see that the operators  $(\hh-z)\inv V_2 (1-\h)(\huh)$ and $(1-\tilde g) (\huh)(\huh -z)\inv$ are bounded uniformly in $h \in ]0,1]$ and $z$ close to $E_0$ with $\Im z > 0$.
\end{proof}

As a first consequence of this proposition we see that the solution $u_h$ consentrates on $p\inv(\singl{E_0})$:

\begin{corollary} \label{cor-loc-surf}
If $q \in C_0^\infty(\R^n)$ has support outside $p\inv(\singl{E_0})$ then:
\[
\innp{\Opw(q) u_h} {u_h} \limt h 0 0
\]
\end{corollary}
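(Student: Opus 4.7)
The plan is to sandwich $\Opw(q)$ between two copies of a microlocal cutoff $\Opw(1-g)$ away from the energy shell, and then invoke the off-shell resolvent estimate of Proposition \ref{prop-loc-surf}(i) on \emph{both} sides of the pairing, which will provide a gain of a full power of $h$ instead of the naive $\bigo h 0 (1)$ bound.

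Since $\supp q$ is compact and disjoint from the closed hypersurface $p\inv(\singl{E_0})$, I choose $\chi \in C_0^\infty(\R)$ equal to $1$ near $0$ with support small enough that
\[
g(x,\x) := \chi\bigl(p(x,\x) - E_0\bigr)
\]
equals $1$ in a neighborhood of $p\inv(\singl{E_0})$ and vanishes in a neighborhood of $\supp q$. Because $\chi'$ localises $p = \x^2 + V_1(x)$ in a compact interval, it confines $\x$ to a bounded set so all derivatives of $g$ are bounded, hence $g \in \symbor$. Now $q$ and $g$ are smooth symbols with uniformly disjoint supports, so every term of the asymptotic Moyal expansion of $q\#g$ and $g\#q$ is identically zero, and standard pseudodifferential calculus (non-stationary phase in the Weyl product) gives
\[
\Opw(q) = \Opw(1-g)\,\Opw(q)\,\Opw(1-g) + R_h,
\]
with $R_h = \bigo h 0 (h^\infty)$ in the operator norm from $L^{2,-\a}(\R^n)$ to $L^{2,\a}(\R^n)$.

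Setting $v_h := \Opw(1-g) u_h$ and using that $\Opw(1-g)$ is formally self-adjoint (as $1-g$ is real), this decomposition yields
\[
\innp{\Opw(q) u_h}{u_h} = \innp{\Opw(q) v_h}{v_h} + \innp{R_h u_h}{u_h}.
\]
Proposition \ref{prop-loc-surf}(i) combined with the bound $\nr{S_h}_{L^{2,\a}} = \bigo h 0 (\sqrt h)$ from Proposition \ref{prop-norme-S} gives
\[
\nr{v_h}_{L^{2,-\a}} = \nr{\Opw(1-g)\,(\hh - (E_h+i0))\inv S_h}_{L^{2,-\a}} = \bigo h 0 (\sqrt h).
\]
Since $q$ has compact support, $\Opw(q)$ is bounded uniformly in $h$ from $L^{2,-\a}(\R^n)$ to $L^{2,\a}(\R^n)$, so Cauchy--Schwarz gives
\[
\abs{\innp{\Opw(q) v_h}{v_h}} \leq \nr{\Opw(q) v_h}_{L^{2,\a}}\nr{v_h}_{L^{2,-\a}} \leq C \nr{v_h}_{L^{2,-\a}}^2 = \bigo h 0 (h),
\]
and $\abs{\innp{R_h u_h}{u_h}} \leq \nr{R_h}\nr{u_h}_{L^{2,-\a}}^2 = \bigo h 0 (h^\infty)$. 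Together this yields $\innp{\Opw(q) u_h}{u_h} \to 0$.

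The key point is the \emph{two-sided} insertion of $\Opw(1-g)$: a purely one-sided identity $\Opw(q) = \Opw(q)\Opw(1-g) + \bigo h 0 (h^\infty)$ combined with Proposition \ref{prop-loc-surf}(i) only controls $\Opw(q) u_h$ by $\bigo h 0 (\sqrt h)$ in $L^{2,\a}$, which, once paired with $u_h = \bigo h 0 (h^{-1/2})$ in $L^{2,-\a}$, yields a useless $\bigo h 0 (1)$ bound. The whole argument hinges on the fact that because $q$ has compact support disjoint from $\{g \neq 0\}$, the cutoff can be placed on both sides with negligible Moyal remainder, each insertion buying one factor of $\sqrt h$.
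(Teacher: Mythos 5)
Your proof is correct and follows essentially the same route as the paper: the paper inserts a second cutoff $\tilde q \in C_0^\infty(\R^{2n})$ supported off the energy shell and equal to $1$ on $\supp q$, writes $\innp{\Opw(q)u_h}{u_h} = \innp{\Opw(q)u_h}{\Opw(\tilde q)u_h} + \bigo h 0 (h^\infty)$, and gains $\sqrt h$ on each side via Proposition \ref{prop-loc-surf}(i) and Proposition \ref{prop-norme-S}, exactly the two-sided localization you carry out with $\Opw(1-g)$. Your write-up just makes explicit the symbol-calculus and weighted-boundedness details that the paper leaves implicit.
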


\begin{proof}
Let $\tilde q \in C_0^\infty(\R^{2n})$ supported outside $p\inv(\singl{E_0})$ and equal to 1 on $\supp q$. We have:
\[
\innp{\Opw(q) u_h} {u_h} = \innp{\Opw(q) u_h} {\Opw(\tilde q) u_h} + \bigo h 0 (h^\infty) = \bigo h 0 (h)
\]
\end{proof}

\section{Around $\G$}

\subsection{WKB method}  \label{sec-bkw}

According to proposition IV.14 in \cite{robert} or lemma 10.10 in \cite{evansz} applied with the symbol $p_E : (x,\x) \mapsto \x^2 +V_1(x) - E_0$ we know that if $\t_0$ is small enough, then there exists a function $\f \in C^\infty([-3\t_0,3\t_0] \times \R^{2n})$ such that:
\begin{equation} \label{eqphi}
\left\{ \begin{array} l
\partial_t \f(t,x,\x) +  \abs{\partial_x \f(t,x,\x)}^2  + V_1(x) - E_0 = 0\\
\f(0,x,\x) = \innp x \x
        \end{array} \right.
\end{equation}
Moreover $\f$ is unique and:
\begin{equation} \label{defphi}
\begin{aligned}
\f(t,x,\x) 
& = \innp {\bar y(t,x,\x)} \x  + 2 \int _0^t \tilde\x(s,t,x,\x)^2\, ds - t p_E (x,\x)\\
& = \innp x \x - 2 \int_0^t \innp{\tilde\x(s,t,x,\x)} \x \, ds  + 2 \int _0^t \tilde\x(s,t,x,\x)^2\, ds - t p_E (x,\x)\\
& = \innp x \x - t p_E (x,\x) + t^2 r(t,x,\x)
\end{aligned}
\end{equation}
where $\bar y (t,x,\x)$ is the unique point in $\R^n$ such that $\bar x (t,\bar y(t,x,\x),\x) = x$ (note that $\bar y (t,x,\x)$ is well-defined for $t$ small enough, see \cite{robert}) and:
\begin{equation*}
\begin{aligned}
r(t,x,\x) 
& = \frac 2 {t^2}\int_{s=0}^t \int_{\t= s}^t \innp{\tilde \x (s,t,x,\x)}{ \nabla V_1 (\tilde x (\t,t,x,\x))} \, d\t\,ds  = \innp \x {\nabla V_1(x)} + \bigo t 0 (t)
\end{aligned}
\end{equation*}

\begin{proposition} \label{prop-bkw}
Let $f \in C_0^\infty(\R^{2n},\R)$. We can find a function $a(h) \in C_0^\infty([0,3\t_0] \times \R^{2n})$ such that:
\begin{equation}\label{cond-init}
a(0,x,\x,h) = f(x,\x) 
\end{equation}
and:
\begin{equation}\label{estim-bkw}
\sup_{t\in[0,3\t_0]} \nr{a(t,x,\x,h) e^{\frac ih \f(t,x,\x)} - e^{-\frac {it}h (\hh-E_h)} \left( f(x,\x) e^{\frac ih \innp x \x} \right)}_{L^2(\R^{2n})} \limt h 0 0
\end{equation}
\end{proposition}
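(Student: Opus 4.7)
The plan is standard WKB: set $v(t,x,\x,h) = a(t,x,\x,h)\, e^{i\f(t,x,\x)/h}$ with $\f$ the Hamilton--Jacobi solution from \eqref{eqphi}, choose $a$ so that $v$ approximately solves the Schr\"odinger equation, and compare to the exact evolution via a Duhamel argument.

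A direct computation yields
\[
(ih\partial_t - \hh + E_h)\, v = \Bigl[-a\bigl(\partial_t\f + |\partial_x\f|^2 + V_1 - E_h\bigr) + ih\bigl(\partial_t a + 2\partial_x\f\cdot\partial_x a + (\D_x\f + V_2)\,a\bigr) + h^2 \D_x a\Bigr] e^{i\f/h}.
\]
By \eqref{eqphi} and \eqref{h6}, the first bracket equals $E_0 - E_h = -hE_1 + o(h)$. Looking for $a = a_0$ independent of $h$ and requiring that the $O(h)$ coefficients cancel gives the transport equation
\[
\partial_t a_0 + 2\partial_x\f\cdot\partial_x a_0 + (\D_x\f + V_2 - iE_1)\, a_0 = 0, \qquad a_0(0,x,\x) = f(x,\x).
\]
Its characteristics are exactly the classical trajectories $s\mapsto \bar x(s,y,\x)$, since by the standard generating-function property of $\f$ one has $\partial_x\f(s,\bar x(s,y,\x),\x) = \bar\x(s,y,\x)$ (cf.\ \eqref{defphi}). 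Using the smooth inverse $x\mapsto \bar y(t,x,\x)$ defined on $[0,3\t_0]$, integration along characteristics gives the explicit closed-form expression
\[
a_0(t,x,\x) = f(\bar y(t,x,\x),\x)\,\exp\!\left(-\int_0^t (\D_x\f + V_2 - iE_1)(s,\bar x(s,\bar y(t,x,\x),\x))\,ds\right).
\]
The compactness of $\supp f$ and smoothness of the Hamiltonian flow on the bounded time interval $[0,3\t_0]$ then force $a_0 \in C_0^\infty([0,3\t_0]\times\R^{2n})$, with $\x$-support contained in that of $f$.

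With this choice the remainder $R(t,\cdot,\cdot,h) := e^{-i\f/h}(ih\partial_t - \hh + E_h)(a_0 e^{i\f/h})$ reduces to $h^2\D_x a_0 + a_0\cdot h\,o(1)$ uniformly on the fixed compact support of $a_0$, hence $\nr{R(t)}_{L^\infty(\R^{2n})} = o(h)$ uniformly in $t\in[0,3\t_0]$, and the same bound therefore holds in $L^2(\R^{2n})$. Since $a_0(0)e^{i\f(0)/h} = fe^{i\innp{\cdot}{\x}/h}$, fiberwise Duhamel on $L^2(\R^n)$ gives
\[
w(t,\cdot,\x) := a_0(t,\cdot,\x)e^{i\f(t,\cdot,\x)/h} - e^{-it(\hh-E_h)/h}\bigl(f(\cdot,\x)e^{i\innp{\cdot}{\x}/h}\bigr) = \frac{1}{ih}\int_0^t e^{-i(t-s)(\hh-E_h)/h} R(s,\cdot,\x)e^{i\f(s,\cdot,\x)/h}\,ds.
\]
A short direct check shows that $-i(\hh-E_h)/h$ is dissipative on $L^2(\R^n)$ (using $V_2\geq 0$ and $\Im E_h\geq 0$ to see that $T+T^* = -2V_2 - 2\Im(E_h)/h \leq 0$), so the propagator is a contraction for $t\geq 0$. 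Taking $L^2(\R^n)$ norms gives $\nr{w(t,\cdot,\x)}_{L^2(\R^n)} \leq (t/h)\sup_s \nr{R(s,\cdot,\x)}_{L^2(\R^n)}$; squaring and integrating in $\x$ over the uniform compact $\x$-support of $R$ yields $\nr{w(t)}_{L^2(\R^{2n})} \leq (3\t_0/h)\cdot o(h) = o(1)$ uniformly in $t\in[0,3\t_0]$, which is \eqref{estim-bkw}. The only delicate point is producing an $o(h)$ (rather than merely $O(h)$) remainder, which is forced by the qualitative hypothesis $E_h - E_0 = hE_1 + o(h)$ in \eqref{h6}; taking a single $h$-independent $a_0$ suffices here precisely because the conclusion only asks for convergence to zero, not a rate.
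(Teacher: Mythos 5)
Your proof is correct and follows essentially the same route as the paper: a WKB ansatz with the phase $\f$ of \eqref{eqphi}, the same transport equation solved along the classical characteristics (your closed-form $a_0$ coincides with the paper's $f(\bar y(t,x,\x),\x)\,\y(0,t,x,\x)$), and a Duhamel argument using that $V_2\geq 0$ and $\Im E_h\geq 0$ make $U_h^E(t)$ a contraction for $t\geq 0$. The only difference is that the paper carries the two-term amplitude $a=a_0+ha_1$ (relevant for the $O(h^\infty)$ remark under \eqref{h6top}), whereas you rightly note that $a_0$ alone already gives an $o(h)$ remainder, hence the required $o(1)$, since the error is dominated by the $o(h)$ term in \eqref{h6}.
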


\begin{proof}
We define:
\[
\y(s,t,x,\x) = \exp \left( \int_s^t ( i E_1 - V_2 (\tilde x (\t,t,x,\x) - \D_x \f(\t,\tilde x (\t,t,x,\x),\x))\, d\t \right)
\]
Then:
\begin{equation*}
a_0(t,x,\x) = f(\bar y (t,x,\x),\x) \y(0,t,x,\x)
\end{equation*}
and:
\begin{equation*}
a_1(t,y,\x) =   i \int_0^t \D_x a_{0}(s,\tilde x (s,t,x,\x), \x)  \y(s,t,x,\x) \, ds 
\end{equation*}
where for $0\leq s \leq t \leq \t_0$ we have set $\tilde x (s,t,x,\x) = \bar x (s , \bar y(t,x,\x),\x)$. Then we set $a(h) = a_0 + h a_1$. Initial condition \eqref{cond-init} is true and we can check that:
\begin{equation*} 
\left( \partial_t + 2 \partial_x \f . \partial_x +  \D_x \f + V_2 -i E_1 \right) a_0 (t,x,\x)=0
\end{equation*}
and:
\begin{equation*} 
\left( \partial_t + 2 \partial_x \f . \partial_x + \D_x \f + V_2 -i E_1 \right) a_1 (t,x,\x)=  i  \D_x a_{0} (t,x,\x)  
\end{equation*}
which, with \eqref{eqphi}, give \eqref{estim-bkw}. Note that the function $a(h)$ is of compact support and the absorption coefficient $V_2$ does not change the phase $\f$. Only $a$ depends on $V_2$ and the bigger $V_2$ is the faster $a$ decays with time.
\end{proof}

\begin{remark*}
If \eqref{h6} is replaced by: 
\begin{equation} \label{h6top}
E_h = \sum_{j=0}^N h^j E_j + O(h^{N+1}) \quad \text{for all } N \in \N
\end{equation}
then we can define:
\begin{equation*}
a_j(t,y,\x) =   i \int_0^t \left( \D_x a_{j-1}(s,\tilde x (s,t,x,\x), \x) + \sum_{k=0}^{j-2} E_{j-k} a_k(x,\tilde x (x,t,x,\x),\x) \right)  \y(s,t,x,\x) \, ds 
\end{equation*}
for all $j\geq 2$ and $a \sim \sum h^j a_j$ by Borel theorem (see \cite[th. 4.16]{evansz}). Then the rest is of size $O(h^\infty)$ instead of $o(1)$ in \eqref{estim-bkw} and hence in \eqref{B1J} and \eqref{b-zero} below.
\end{remark*}

\subsection{Critical points of the phase function}   \label{sec-ptcrit}

For $t\in[0,3\t_0]$, $x,\x \in \R^{n}$ and $z\in\G_1$ we write:
\begin{equation*} 
\p(t,x,z,\x) = \f(t,x,\x) - \innp z \x
\end{equation*}

In this section we study the critical points of $\p$ with report to $t,\x$ and $z$ with $t \in ]0,3\t_0]$, that is the solutions of the system:
\begin{equation} \label{syst-pt-crit}
\begin{cases} \partial_t \p(t,x,z,\x) = 0 \\ \partial_z \p(t,x,z,\x) = 0 \\ \partial_\x \p(t,x,z,\x) = 0  \\ t \in ]0,3\t_0]\end{cases}
\eqv
\begin{cases} \partial_t \f(t,x,\x) = 0 \\ \x \in N_z \G_1 \\ \partial_\x \f(t,x,\x) = z  \\t \in ]0,3\t_0] \end{cases}
\end{equation}

\begin{proposition} \label{prop-equiv-ptcrit}
Let $t \in ]0,3\t_0]$, $x,\x\in\R^n$ and $z \in \G$. If $(t,x,\x,z)$ is a solution of \eqref{syst-pt-crit} then $(z,\x) \in \negg_1$ and $x = \bar x(t,z,\x)$.
\end{proposition}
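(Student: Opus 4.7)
The plan is to invoke the classical generating function property of $\f$. Namely, $\f(t,\cdot,\cdot)$, being the solution of the Hamilton--Jacobi equation \eqref{eqphi}, is a generating function for the Hamiltonian flow $\vf^t$ of the symbol $p_E(x,\x) = \x^2 + V_1(x) - E_0$: for $t \in ]0, 3\t_0]$ with $\t_0$ small enough one has
\begin{equation*}
\partial_\x \f(t,x,\x) = \bar y(t,x,\x) \quad \text{and} \quad \partial_x \f(t,x,\x) = \bar \x(t, \bar y(t,x,\x), \x).
\end{equation*}
These identities follow either by direct differentiation of the explicit formula \eqref{defphi} or from the standard references \cite{robert, evansz}. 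In particular, the relation $\partial_\x \f(t,x,\x) = z$ is equivalent to $x = \bar x(t,z,\x)$, and in that case $\partial_x \f(t,x,\x) = \bar \x(t,z,\x)$.

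Plugging this into the critical point system \eqref{syst-pt-crit} gives the three pieces of information we need. The third line, $\partial_\x \f(t,x,\x) = z$, immediately yields $x = \bar x(t,z,\x)$ together with $\partial_x \f(t,x,\x) = \bar \x(t,z,\x)$. The second line, interpreted as the vanishing of the differential of $\p$ in the $z$-variable along the submanifold $\G_1$, reads $\x.Z = 0$ for every $Z \in T_z \G_1$; since $z \in \G \subset \G_1$ and $T_z \G = T_z \G_1$, this means $\x \in N_z \G$. The first line, combined with the Hamilton--Jacobi equation \eqref{eqphi}, forces
\begin{equation*}
\abs{\partial_x \f(t,x,\x)}^2 + V_1(x) - E_0 = 0.
\end{equation*}

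Substituting the values obtained from the third line into this last identity produces $\abs{\bar \x(t,z,\x)}^2 + V_1(\bar x(t,z,\x)) = E_0$, which by conservation of energy along the Hamiltonian flow of $p$ is the same as $\abs \x ^2 + V_1(z) = E_0$, i.e., $\abs \x ^2 = E_0 - V_1(z)$. Combined with $\x \in N_z \G$ and $z \in \G \subset \G_1$ this proves $(z,\x) \in \negg_1$, while the third line has already given $x = \bar x(t,z,\x)$. The only delicate point is the generating function property of $\f$, but on the range of $t$ under consideration it is guaranteed by the standard theory of the Hamilton--Jacobi equation and is implicit in the explicit formula \eqref{defphi}; everything else is a direct computation.
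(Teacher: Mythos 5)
Your proof is correct and follows essentially the same route as the paper: the identity $\partial_\x \f(t,x,\x)=\bar y(t,x,\x)$, $\partial_x\f(t,x,\x)=\bar\x(t,\bar y(t,x,\x),\x)$ is exactly the relation $(x,\partial_x\f(t,x,\x))=\vf^t(\partial_\x\f(t,x,\x),\x)$ that the paper quotes from proposition IV.14 of \cite{robert}, and the rest (reading $\x\in N_z\G$ from $\partial_z\p=0$, and combining $\partial_t\f=0$ with the Hamilton--Jacobi equation and conservation of $p$ along the flow to get $\abs\x^2=E_0-V_1(z)$) matches the paper's argument step for step.
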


\begin{proof}
Assume that $(t,x,\x,z)$ is such a solution. We already know that $\x \in N_z \G_1$. By proposition IV.14 in \cite{robert} we have:
\begin{equation} \label{phi-et-flot}
(x,\partial_x \f(t,x,\x)) = \vf^t(\partial_\x \f (t,x,\x),\x)  = \vf^t(z,\x)
\end{equation}
and in particular: $x =\bar x(t,z,\x)$. Moreover, since $\f$ is a solution of \eqref{eqphi} we also have:
\[
p(z,\x) = p (x,\partial_x\f (t,x,\x)) = \abs{\partial_x \f (t,x,\x)}^2 + V_1(x) = E_0 - \partial_t \f(t,x,\x) = E_0
\]
which proves that $\abs \x^2 = E_0 - V_1(z)$.
\end{proof}

We prove that for $x$ close to $\G$ (but not on $\G_1$), there is a solution $(t,x,\x,z)$ for \eqref{syst-pt-crit}. By proposition \ref{prop-equiv-ptcrit}, this solution must be $(t_x,x,z_x, \x_x)$ (defined in proposition \ref{prop-superbij}), so we already have uniqueness.

We consider the function $\Phi$ defined as follows: 
for $y \in \tilde \G_1(0,3\t_0)$, $\x \in \R^n$, $\z \in T_{z_y}\G_1$ of norm less than 1, $\d \in [0,\g_1]$ (where $\g_1 \in]0,1]$ is chosen small enough for $\exp_ {z} (\d \z)$ being defined in $\G_2$ for all $z \in \G_1$ and $\z$ of norm less than 1) and $\theta \in ]0,3\t_0/\g_1]$ then:
\begin{equation} \label{def-grand-phi}
\Phi (\theta,y,\z,\x,\d) = \begin{cases} \frac 1 \d \left(\f(\d \theta , \bar x( \d t_y, z_y,\x_y) , \x ) - \innp{\exp_ {z_y} (\d \z)} \x\right) & \text{ if } \d\neq 0 \\ \innp{\x_y - \z} \x - \theta (\x^2 +V_1(z_y) -E_0) & \text{ if }\d=0\end{cases}
\end{equation}

For $\d \in ]0,\g_1]$, $t\in\left] 0,  \frac {3\t_0 \d} {\g_1}\right]$, $x \in \tilde \G_1(0,\d \t_0)$, $z$ such that $d_\G(z_x,z) \leq \d$ and $\x \in \R^n$ we have:
\[
\p(t,x,\x,z) = \d \Phi \left( \frac t \d , \bar x \left( \frac {t_x } \d ,z_x, \x_x\right) , \frac 1 \d (\exp_{z_x})\inv(z) , \x  ,\d  \right)
\]
Thus:
\begin{subequations} \label{equiv-phi}
\begin{eqnarray}
\partial_t \p (t,x,z,\x) = 0 &\eqv& \partial_\theta \Phi \left( \frac t \d ,  \bar x \left( \frac {t_x } \d ,z_x, \x_x\right), \frac 1 \d (\exp_{z_x})\inv(z), \x \right)=0\\
\partial_\x \p (t,x,z,\x) = 0 &\eqv& \partial_\x \Phi \left( \frac t \d , \bar x \left( \frac {t_x } \d ,z_x, \x_x\right), \frac 1 \d (\exp_{z_x})\inv(z), \x \right)=0\\
\partial_z \p (t,x,z,\x) = 0 &\eqv& \partial_\z\Phi \left( \frac t \d ,  \bar x \left( \frac {t_x } \d ,z_x, \x_x\right), \frac 1 \d (\exp_{z_x})\inv(z), \x  \right)=0
\end{eqnarray}
\end{subequations}

\begin{proposition}  \label{prop-gd-phi}
Let $K = \Tc\left( \left[\frac{\t_0}2,3\t_0\right] \times \bar \negg \right)$. There exists $\d_0 \in ]0,\g_1]$ such that for all $y  \in K$ and $\d\in[0,\d_0]$ the system:
\begin{equation} \label{eqptcrit}
\left\{ \begin{array} l 
	\partial_{\theta,\x,\z}\Phi ( \theta,y,\z, \x,\d) = 0 \\
        \theta \in \left] 0, \frac {3\t_0}{\g_1} \right]
        \end{array} \right.
\end{equation}
has a solution $(\theta,\x,\z) \in ]0,\t_0/\g_1] \times  \R^n  \times T_{z_y}\G$.
\end{proposition}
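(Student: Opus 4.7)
The plan is to reduce the rescaled critical-point system \eqref{eqptcrit} for $\d>0$ to the unrescaled system \eqref{syst-pt-crit} via the correspondence \eqref{equiv-phi}, and then invoke proposition \ref{prop-equiv-ptcrit}; the boundary case $\d=0$ is handled by direct computation on the explicit formula \eqref{def-grand-phi}. Compactness of $K$ will provide the required uniformity in $y$.

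For $\d \in {]0,\g_1]}$, I would fix $y=\Tc(t_y,z_y,\x_y)\in K$, so $t_y\in[\t_0/2,3\t_0]$ and $(z_y,\x_y)\in\bar\negg$. Since $\d t_y\leq\g_1\cdot 3\t_0\leq 3\t_0$, the point $x:=\bar x(\d t_y,z_y,\x_y)$ belongs to $\tilde\G(0,3\t_0)$, and the injectivity of $\Tc$ from proposition \ref{prop-superbij} gives $(t_x,z_x,\x_x)=(\d t_y,z_y,\x_y)$. The correspondence \eqref{equiv-phi} then says that $(\theta,\z,\x)$ solves \eqref{eqptcrit} if and only if $(t,z,\x):=(\d\theta,\exp_{z_y}(\d\z),\x)$ solves \eqref{syst-pt-crit} at this $x$; proposition \ref{prop-equiv-ptcrit} furnishes the unique such solution, namely $(t,z,\x)=(\d t_y,z_y,\x_y)$, which reverts to $(\theta,\z,\x)=(t_y,0,\x_y)$ in the rescaled coordinates. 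After further shrinking $\g_1$ if needed so that $3\t_0\leq\t_0/\g_1$, this $\theta$ lies in ${]0,\t_0/\g_1]}$, as required.

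For $\d=0$ I would work directly from \eqref{def-grand-phi}. Decomposing $\x=\x_T+\x_N$ along $T_{z_y}\G\oplus N_{z_y}\G$, the equation $\partial_\z\Phi(\theta,y,\z,\x,0)=0$ (the derivative being taken along $T_{z_y}\G$) forces $\x_T=0$; then $\partial_\theta\Phi=0$ gives $\abs\x^2=E_0-V_1(z_y)=\abs{\x_y}^2$; and finally $\partial_\x\Phi=0$ is a linear identity in $N_{z_y}\G$ which determines $\z=0$ and $\theta>0$ explicitly. This provides the required critical point at $\d=0$, with the same geometric content as the one obtained above by the rescaling argument.

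The main obstacle is really just the bookkeeping of scales: one has to shrink $\g_1$ (and then take $\d_0=\g_1$) so that the $\theta$ produced by the correspondence lands in the advertised narrower range ${]0,\t_0/\g_1]}$ rather than only in the system's nominal constraint ${]0,3\t_0/\g_1]}$, and one has to verify that the critical point $(t_y,0,\x_y)$ obtained for $\d>0$ matches continuously the one produced at $\d=0$ by the explicit algebra (which follows, via Taylor expansion in $\d$ of $\f$ and of the exponential map, from the very definition of $\Phi$ at $\d=0$ in \eqref{def-grand-phi}). Beyond this, the proof is essentially a direct invocation of propositions \ref{prop-superbij} and \ref{prop-equiv-ptcrit}.
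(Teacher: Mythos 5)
Your reduction for $\d>0$ contains a genuine gap: Proposition \ref{prop-equiv-ptcrit} is a \emph{necessity} statement only --- it says that \emph{if} $(t,x,z,\x)$ solves \eqref{syst-pt-crit} \emph{then} $(z,\x)\in\negg_1$ and $x=\bar x(t,z,\x)$ --- so it cannot ``furnish'' a solution, and your sentence asserting that it does conflates uniqueness with existence. Existence of a critical point is precisely what Proposition \ref{prop-gd-phi} is there to prove; in the paper, existence of a solution of \eqref{syst-pt-crit} for $x$ close to $\G$ is obtained only in the corollary that \emph{follows} Proposition \ref{prop-gd-phi}, whose proof states that after Proposition \ref{prop-equiv-ptcrit} ``there only remains to prove existence'' and then invokes Proposition \ref{prop-gd-phi} together with \eqref{equiv-phi}. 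Thus your argument, which passes through \eqref{equiv-phi} in the opposite direction and appeals to Proposition \ref{prop-equiv-ptcrit} to produce the solution $(t,z,\x)=(\d t_y,z_y,\x_y)$, is circular relative to the paper's logical structure. The step could in principle be repaired without the implicit function theorem, but only by actually verifying that $(t_x,z_x,\x_x)$ is a critical point of $\p(\cdot,x,\cdot,\cdot)$: from \eqref{phi-et-flot} and the uniqueness of $\bar y(t,x,\x)$ one gets $\partial_\x\f(t_x,x,\x_x)=\bar y(t_x,x,\x_x)=z_x$; from the eikonal equation \eqref{eqphi}, \eqref{phi-et-flot} and conservation of $p$ along the flow one gets $\partial_t\f(t_x,x,\x_x)=E_0-p(\vf^{t_x}(z_x,\x_x))=E_0-p(z_x,\x_x)=0$; and $\x_x\in N_{z_x}\G$ holds because $(z_x,\x_x)\in\negg$. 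None of this verification appears in your write-up, and it is exactly the missing heart of the existence claim.

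For comparison, the paper expands $\Phi(\theta,y,\z,\x,\d)=\innp{2t_y\x_y-\z}\x-\theta\left(\x^2+V_1(z_y)-E_0\right)+\d R$ with $R$ of class $C^1$, solves the system explicitly at $\d=0$ (unique solution $(t_y,0,\x_y)$), checks that $\Hess_{\theta,\z,\x}\Phi$ is invertible there, and applies the implicit function theorem, covering the compact set $K$ by finitely many neighborhoods to obtain a uniform $\d_0$. This route gives more than bare existence: local uniqueness, uniform nondegeneracy of the Hessian, and continuity of $(\tilde\theta_{y,\d},\tilde\z_{y,\d},\tilde\x_{y,\d})$ in $(y,\d)$, all of which are used in step 4 of the proof of Proposition \ref{lem3.5} (stationary phase with uniform control of the inverse Hessian). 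Your treatment of the case $\d=0$ and your observation that $\g_1$ must be taken with $3\t_0\leq\t_0/\g_1$ are fine, but even after repairing the circularity, a purely pointwise existence argument would not by itself supply these additional properties on which the later analysis relies.
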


\begin{proof}
For $\d \in ]0,\g_1]$ we compute:
\begin{equation*} 
\begin{aligned}
\Phi(\theta,y,\z,\x,\d)
& = \frac 1 \d \left(\f (\d \theta , \bar x (\d t_y,z_y,\x_y),\x) - \innp{\exp_{z_y}(\d \z)} \x \right)\\
& = \frac 1 \d \Big( \innp{ \bar x (\d t_y,z_y,\x_y)} \x - \d \theta (\x^2 +V_1(\bar x (\d t_y, z_y,\x_y))-E_0) \\
& \hspace{1cm} + \d^2 \theta^2 r(\d\theta, \bar x (\d t_y,z_y,\x_y),\x) - \innp{\exp_{z_y}(\d \z)} \x \Big) \\
& = \innp{2 t_y \x_y- \z} \x -  \theta \left( \x^2 + V_1(z_y) - E_0\right) + \theta(V_1(\bar x (\d t_y,z_y,\x_y)) - V_1(z_y)) \\
& \hspace{1cm}+ \d \theta ^2 r(\d \theta ,\bar x (\d t_y,z_y,\x_y),\x,h)  - \frac 1 \d \innp{\exp_{z_y}(\d \z) - z_y - \d\z} \x\\
& = \innp{2 t_y \x_y-\z} \x -\theta \left(\x^2 + V_1(z_y) - E_0\right) + \d R(\theta,y,\x,\z,\d)
\end{aligned}
\end{equation*}
%
%
where $R$ is of class $C^1$. This proves that $\Phi$ is of class $C^1$. The point $(\theta,y,\z,\x,0)$ is a solution of \eqref{eqptcrit} if and only if:
\begin{equation*}
\left\{ \begin{array}{l}
\abs { \x} = \sqrt{E_0 - V_1(z_y)}\\
\x \in N_{z_y}^*\G\\
2 t_y \x_y - \z =  2 \theta \x\\
\theta  \in \left] 0, \frac {\t_0} {\g_1} \right]
\end{array} \right.
\end{equation*}

Let $y  \in K$. This system has a unique solution which we denote $(\tilde \theta _{y,0}, \tilde \z _{y,0},\tilde \x _{y,0})$. It is given by:
\begin{equation}
\tilde \theta _{y,0} =  {t_y}  ; \quad \tilde \z_{y,0} = 0; \quad \tilde \x _{y,0}   = \x_y
\end{equation}

For $z \in \G$ and $\x \in \R^n$ we denote by $\x_z^\sslash$ the orthogonal projection of $\x$ on $T_z\G$ and $\x_z^\bot = \x - \x_z^\sslash$. Then we have:
\begin{equation*} 
\begin{aligned}
\Hess_{\theta,\z,\x} \Phi (\theta,y, \z,\x,\d) = 
\begin{pmatrix}
0 & 0 & -2\trsp{\x_z^\sslash} & -2\trsp{\x_z^\bot} \\
0 & 0 & -I_d & 0 \\
-2 \x_z^\sslash & -I_d & -2 \th I_d &  0 \\
-2 \x_z^\bot & 0 & 0 & -2 \th I_{n-d}
\end{pmatrix}
+ \bigo \d 0 (\d)
\end{aligned}
\end{equation*}
and in particular:
\begin{equation*}
\det \Hess_{\theta,\z,\x} \Phi (\tilde\theta_{y,0},y,\tilde \z_{y,0},\tilde \x_{y,0},0)
= 2^{n-d+1} (-1)^{n-d}  t_y^{n-d-1} \abs{\x_z}^2
\end{equation*}

The derivative of the function:
\begin{equation*} 
(\theta , y,\z, \x ,  \d) \mapsto  \partial _{\theta,\z,\x} \Phi (\theta, y, \z, \x,\d) \in \R^{n+d+1}
\end{equation*}
with report to $\theta$, $\z$ and $\x$ at the point $(\tilde \th_{y,0},0,\tilde \z _{y,0} , \tilde \x_{y,0} ,0)$ is:
\begin{equation*}
\Hess_{\theta,\z,\x} \Phi((\tilde \th_{y,0},0,\tilde \z _{y,0} , \tilde \x_{y,0} ,0)) \in \GL_{n+d+1}(\R)
\end{equation*}
so we can apply the implicit function theorem around $(\tilde\theta_{y,0},y, \tilde \z_{y,0},\tilde \x_{y,0},0)$. We obtain that there exists $\d_y>0$, a neighborhood $\Vc_y$ of $y$ in $\R^n$ and a function $\f_{y}$ which maps $\Vc_y\times [0,\d_y]$ into a neighborhood $\Uc_y$ of $(\tilde \theta_{y,0},\tilde \z_{y,0},\tilde \x_{y,0})$ in $]0,\t_0/\g_1] \times T_{z_y}\G \times \R^n $ such that:
\begin{equation*}
\forall (v,\d) \in \Vc_y \times [0,\d_y], \forall (\theta,\z,\x) \in \Uc_y, \quad  \partial_{\theta,\z,\x} \Phi(\theta,v,\z,\x,\d) = 0 \eqv (\theta,\z,\x) = \f_y(v,\d)
\end{equation*}
$K$ is covered by a finite number of such neighborhoods $\Vc_y$. We get the result if we take for $\d_0$ the minimum of the corresponding $\d_y$.
\end{proof}

\begin{corollary}
For all $x \in\tilde \G(0,2\d_0\t_0)$ there is a unique $(t,z,\x) \in ]0,\t_0] \times \G \times \R^n$ such that $(t,x,z,\x)$ is a solution of the system \eqref{syst-pt-crit}. Moreover this solution is given by $(t_x,x,z_x,\x_x)$.
\end{corollary}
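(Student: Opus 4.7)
The plan is to split the proof into a routine uniqueness part and a substantive existence part, the latter being essentially a restatement of Proposition~\ref{prop-gd-phi} under a suitable change of variables.

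\textbf{Uniqueness (and identification of the solution).} Suppose $(t, z, \x) \in {]0, \t_0]} \times \G \times \R^n$ satisfies the system \eqref{syst-pt-crit}. Proposition~\ref{prop-equiv-ptcrit} forces $(z, \x) \in \negg_1$ and $x = \bar x(t, z, \x) = \Tc(t, z, \x)$. Since Proposition~\ref{prop-superbij} asserts injectivity of $\Tc$ on $]0, 3\t_0] \times \negg_1$, the triple $(t, z, \x)$ must coincide with the unique preimage of $x$ under $\Tc$, which is $(t_x, z_x, \x_x) \in ]0, 3\t_0] \times \negg \subset ]0, 3\t_0] \times \negg_1$.

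\textbf{Existence via Proposition~\ref{prop-gd-phi}.} Fix $x \in \tilde\G(0, 2\d_0 \t_0)$ and set
$$
t_y = 2\t_0 \in [\t_0/2, 3\t_0], \qquad \d = \frac{t_x}{2\t_0} \in ]0, \d_0], \qquad y = \bar x(t_y, z_x, \x_x) \in K.
$$
Note that $(z_y, \x_y) = (z_x, \x_x)$ and $\bar x(\d t_y, z_y, \x_y) = \bar x(t_x, z_x, \x_x) = x$, so the right-hand expressions in the equivalences \eqref{equiv-phi} are exactly the critical point conditions of $\p(\cdot, x, \cdot, \cdot)$ for this particular $y$ and $\d$. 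Proposition~\ref{prop-gd-phi} then produces $(\theta, \z, \x') \in ]0, \t_0/\g_1] \times T_{z_y}\G \times \R^n$ solving $\partial_{\theta, \z, \x}\Phi(\theta, y, \z, \x', \d) = 0$. Setting $(t, z, \x) := (\d\theta, \exp_{z_y}(\d\z), \x')$ and reading \eqref{equiv-phi} backwards yields a solution of \eqref{syst-pt-crit}.

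\textbf{Checking the range and concluding.} The bound $t = \d\theta \leq \d_0 \cdot \t_0/\g_1 \leq \t_0$ is immediate from $\d_0 \leq \g_1$. Applying the uniqueness step, the solution just constructed must be $(t_x, z_x, \x_x)$; in particular $(z, \x) \in \negg$, so $z \in \G$ and the triple indeed belongs to $]0, \t_0] \times \G \times \R^n$, proving both existence and the stated identification. The only real obstacle is the numerical bookkeeping: one must choose $t_y$ large enough to absorb the full range $t_x \in ]0, 2\d_0\t_0]$ inside $\d \leq \d_0$ while keeping $t_y \in [\t_0/2, 3\t_0]$, and the choice $t_y = 2\t_0$ is tailored precisely to this constraint.
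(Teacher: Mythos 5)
Your proof is correct and follows essentially the same route as the paper: uniqueness and the identification with $(t_x,z_x,\x_x)$ come from Proposition \ref{prop-equiv-ptcrit} combined with the injectivity of $\Tc$, and existence comes from Proposition \ref{prop-gd-phi} read through the equivalences \eqref{equiv-phi} after choosing $\d$ so that $y=\bar x(t_x/\d,z_x,\x_x)$ lies in $K$ (the paper picks $\d$ with $y\in\tilde\G(\t_0,2\t_0)$; you simply make the explicit endpoint choice $\d=t_x/(2\t_0)$). The only cosmetic caveat is that when you invoke your ``uniqueness step'' on the constructed solution, its $z=\exp_{z_x}(\d\z)$ is a priori only in $\G_1$ (near $z_x$) rather than in $\G$, but Proposition \ref{prop-equiv-ptcrit} and the injectivity of $\Tc$ on $]0,3\t_0]\times\negg_1$ apply verbatim in that case, so the identification (and hence $z=z_x\in\G$) goes through exactly as in the paper.
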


\begin{proof}
After proposition \ref{prop-equiv-ptcrit}, there only remains to prove existence. Let $x \in \tilde \G (0,2\d_0\t_0)$. There is $\d \in ]0,\d_0]$ such that $y = \bar x \left( \frac {t_x}\d , z_x ,\x_x \right) \in \tilde \G (\t_0,2\t_0)$. Proposition \ref{prop-gd-phi} and equations \eqref{equiv-phi} give the result.
\end{proof}

\subsection{Small times control} \label{sec-tps-ptt}

We can find a neighborhood $\Gc$ of $\negg_0$ such that for all $t \in [0,\t_0]$ and $(x,\x) \in \Gc$ we have $0<d_1 \leq \abs \x \leq d_2$ and $\bar x (t,x,\x) \in \tilde \G(2\t_0)$. We choose a function $\h \in C_0^\infty(\R)$ supported in $]-1,\t_0[$ and equal to 1 in a neighborhood of 0. For $f \in C_0^\infty(\R^{2n})$ supported in $\Gc$, we set:
\begin{equation} \label{eq3.11}
B_0(h) = \frac i h \int_0^\infty \h(t) e^{-\frac {it} h(\hh-E_h)} \Op(f) S_h \, dt
\end{equation}

Egorov theorem (see proposition \ref{prop-Egorov}) yields:
\begin{equation} \label{b0-loin}
\nr{\1{\R^n\setminus \tilde \G (2\t_0)} B_0(h)}_{L^2(\R^n)} = \bigo h 0 (h^\infty)
\end{equation}

\begin{proposition} \label{lem3.5}
If $\t_0 >0$ is small enough, then for all $\e > 0$, there exists $\t_1 \in ]0, \t_0]$ and $h_0 > 0$ such that for all $f \in C_0^\infty(\R^{2n})$ supported in $\Gc$ we have:
\begin{equation}
\forall h \in ]0,h_0],\quad  \nr{\1 {\tilde \G(\t_1)} B_0(h)}_{L^2(\R^n)}  \leq  \e 
\end{equation}
\end{proposition}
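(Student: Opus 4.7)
My plan is to use the WKB description from Proposition \ref{prop-bkw} to make the oscillatory structure of $B_0(h)$ explicit, then combine the critical-point analysis of the phase $\p$ from Section \ref{sec-ptcrit} with the tubular change of variable of Proposition \ref{prop-superbij}.

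\textbf{Step 1 (Oscillatory integral form of $B_0$).}
Writing
\[
\Op(f) S_h(x) = \frac{h^{(1-n-d)/2}}{(2\pi)^n}\int_\G\!\int_{\R^n} e^{\frac ih \innp{x-z}{\x}} f(x,\x) A(z) \hat S(\x)\,d\x\,d\s(z),
\]
I would move the propagator $e^{-\frac{it}h(\hh-E_h)}$ under the $z$- and $\x$-integrals and apply Proposition \ref{prop-bkw} pointwise in $\x$. This produces
\[
B_0(h)(x) = C\,h^{-(n+d+1)/2}\!\int_0^{3\t_0}\!\!\!\int_\G\!\int_{\R^n}\! \y(t)A(z)\hat S(\x)\,a(t,x,\x,h)\, e^{\frac ih \p(t,x,z,\x)}\,d\x\,d\s(z)\,dt + R_h(x),
\]
with $\p(t,x,z,\x)=\f(t,x,\x)-\innp z\x$. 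A higher-order WKB expansion of the kind sketched in the remark following Proposition \ref{prop-bkw} gives $\nr{R_h}_{L^2(\R^n)}=\littleo{h}{0}(1)$.

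\textbf{Step 2 (Rescaled stationary phase).}
Fix $x=\Tc(t_x,z_x,\x_x)\in\tilde\G(\t_1)$ with $\t_1\leq\d_0\t_0$. By Proposition \ref{prop-equiv-ptcrit} and its corollary, the unique critical point of $(t,z,\x)\mapsto\p(t,x,z,\x)$ in $]0,3\t_0]\times\G\times\R^n$ is $(t_x,z_x,\x_x)$. Its Hessian, computed directly in these variables, has determinant proportional to $t_x^{n-d-1}\abs{\x_x}^2$ and therefore degenerates as $t_x\to 0^+$. I bypass this with the rescaling $(t,z)=(\d\th,\exp_{z_x}(\d\z))$, $\d=t_x$: under it $\p(t,x,z,\x)=\d\,\Phi(\th,x,\z,\x,\d)$ for the function $\Phi$ of \eqref{def-grand-phi}, and Proposition \ref{prop-gd-phi} guarantees that $\Phi$ has a non-degenerate critical point in $(\th,\z,\x)$ uniformly in $\d\in[0,\d_0]$. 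The change of variable brings a Jacobian $\d^{1+d}$, the effective semiclassical parameter becomes $h/\d$, and stationary phase in dimension $1+d+n$ yields
\[
\abs{B_0(h)(x)} \leq C\,\d^{1+d}\,(h/\d)^{(1+d+n)/2}\,h^{-(n+d+1)/2} + \abs{R_h(x)} = C\,t_x^{-(n-d-1)/2} + \abs{R_h(x)},
\]
uniformly in $x\in\tilde\G(\t_1)$.

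\textbf{Step 3 (Tubular integration).}
Combining the pointwise bound with the change of variable formula \eqref{chgt-var} of Proposition \ref{prop-superbij}, the Jacobian $t^{n-d-1}$ exactly absorbs the blow-up $t_x^{-(n-d-1)}$ coming from squaring the leading term:
\[
\nr{\1_{\tilde\G(\t_1)}B_0(h)}_{L^2(\R^n)}^2 \leq C'\!\int_0^{\t_1}\!\!\int_\negg t^{n-d-1}\cdot t^{-(n-d-1)}\abs\x\,d\snegg\,dt + \nr{R_h}_{L^2}^2 = C''\t_1 + \littleo{h}{0}(1).
\]
Given $\e>0$, first pick $\t_1\in\,]0,\t_0]$ with $C''\t_1\leq\e^2/2$, then $h_0>0$ small enough that the $\littleo{h}{0}(1)$ term is at most $\e^2/2$ for $h\in\,]0,h_0]$.

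\textbf{Main obstacle.}
The crux is Step 2: the Hessian of $\p$ in the original $(t,z,\x)$ variables degenerates as $t_x\to 0^+$, so direct stationary phase is not uniform in $x$. The rescaling $\d=t_x$ borrowed from Section \ref{sec-ptcrit} desingularizes it by converting a degenerate phase with parameter $h$ into a non-degenerate phase with effective parameter $h/\d$. The powers of $\d$ coming from the Jacobian $\d^{1+d}$, from the stationary phase factor $(h/\d)^{(1+d+n)/2}$ and from the prefactor $h^{-(n+d+1)/2}$ then align perfectly with the tubular volume form $t^{n-d-1}$ to leave the integrable residue $\sim\t_1$.
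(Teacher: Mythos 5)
Your overall strategy (WKB form of $B_0$, rescaling $(t,z)=(\d\th,\exp_{z_x}(\d\z))$ with $\d=t_x$ to desingularize the phase, then the tubular change of variables \eqref{chgt-var}) is the same as the paper's, and your power counting in Steps 2--3 reproduces the expected size $t_x^{-(n-d-1)/2}$ and the final bound of order $\sqrt{\t_1}$. The genuine gap is in Step 2, where you assert the pointwise bound uniformly in $x\in\tilde\G(\t_1)$, i.e.\ for $t_x$ arbitrarily small. To reduce to the non-degenerate rescaled stationary phase of Proposition \ref{prop-gd-phi} you must first discard the non-critical regions (the set $\abs{\x_z^\sslash}\gtrsim t_x$, the times $t$ not comparable to $t_x$, and the points $z$ outside a ball of radius $O(t_x)$ around $z_x$) by non-stationary phase; each of these steps costs an additive error $O\big((h/t_x)^N\big)$, and the stationary-phase estimate itself is only useful relative to the effective parameter $h/t_x$. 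These errors are not small --- not even bounded relative to the main term --- once $t_x\lesssim h$, so your claimed bound is not established on the part of $\tilde\G(\t_1)$ at distance $\lesssim h$ from $\G$; and even where $t_x\gg h$ the omitted $O((h/t_x)^N)$ terms must be squared, integrated and summed over scales. This is exactly why the paper decomposes dyadically $t_x\sim\d=2^{-m}$ with $h^{1-\m}<\d<\t_1/\t_0$, tracks the $h^N\d^{\frac{n-d}2-N}$ contribution of each shell (with $\m$ small and $N$ large chosen at the end), and treats the innermost region $\tilde\G(2h^{1-\m}\t_0)$ by a completely different argument: a trivial volume bound on the integration domain ($\abs{\tilde J_\g^-}\leq c\,\g^{1+d}$), which in $L^2$ gives $\g^{1+\frac{n+d}2}$ and is absorbed into the $h^\n$ term. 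Without such a splitting, your Step 3 integration over $t\in\,]0,\t_1]$ is not justified near $t=0$.

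A secondary point: you dispose of the WKB remainder by claiming $\nr{R_h}_{L^2}=o(1)$ via the higher-order expansion of the remark following Proposition \ref{prop-bkw}, but that remark requires the stronger hypothesis \eqref{h6top} on $E_h$, which is not assumed in the theorem; under \eqref{h6} the error in \eqref{estim-bkw} is only $o(1)$ before multiplication by the prefactor $h^{-\frac{n+d+1}2}$, so it has to be carried as a relative error as in \eqref{B1J}, not as an absolute $o(1)$ added to your pointwise bound.
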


\begin{proof}
{\bf 1.}
If $\Four_h$ denotes the semiclassical Fourier transform we have:
\[
\begin{aligned}
\Four_h S_h(\x) 
& = h^{\frac{1-n-d}2} \int_{\R^n} \int_{\G} e^{-\frac ih \innp x \x} A(z) S\left(\frac {x-z}h \right) \, d\s(z) \, dx\\
& = h^{\frac {1+n-d} 2} \int_{\G} A(z) e^{-\frac ih \innp z \x}  \int_{\R^n}  e^{ -i \innp y \x} S(y)  \, dy \, d\s(z)\\
& = h^{\frac {1+n-d} 2} \hat S (\x) \int_{\G}  A(z)  e^{-\frac ih \innp z \x}  \, d\s(z)
\end{aligned}
\]
where $\hat S$ is the usual Fourier transform of $S$, and then:
\[
\begin{aligned}
\Op(f) S_h(x) 
& = \frac 1 {(2\pi h)^n} \int_{\R^n} e^{\frac ih \innp x \x} f(x,\x) \Four_h S_h(\x)\, d\x\\
& = \frac {h^{\frac {1+n-d} 2}}{(2\pi h)^n} \int_{\G} \int_{\R^n} e^{\frac ih \innp {x-z} \x} A(z) f(x,\x) \hat S(\x)\, d\x  \, d\s(z)  \\
\end{aligned}
\]
so:
\begin{equation} \label{expr-b1}
B_0 (h) = \frac{ih^{-\frac {1+n+d} 2}}{(2\pi)^{n}} \int_{0}^{+\infty} \!\!  \int_{\G}  \int_{\R^n}\h(t) A(z)  e^{-\frac ih \innp {z} \x} e^{-\frac{it}h(\hh-E_h)} \left(e^{\frac ih \innp {\cdot} \x}  f(\cdot,\x)\right) \hat S(\x)\, d\x \, d\s(z) \,dt
\end{equation}

Let $a$ and $\f$ given by WKB method (see section \ref{sec-bkw}). We define:
\[
J(x,h) = \int_{0}^\infty \!\! \int_{\G} \int_{\R^n}  \h(t)  e^{\frac ih (\f(t,x,\x) - \innp z \x)} a(t,x,\x,h) A(z) \hat S (\x) \, d\x\,d\s(z)\, dt
\]
so that by \eqref{estim-bkw}:
\begin{equation} \label{B1J}
B_0(h) = \frac{ih^{-\frac {1+n+d} 2}}{(2\pi)^{n}}  J(h) \left( 1 + \littleo h 0 (1)\right) \quad \text{in } L^2(\R^n)
\end{equation}

Let:
\[
\k(t,x,z,\x,h) = \h(t) a(t,x,\x,h) A(z) \hat S(\x)
\]
$\k$ is smooth and of compact support in $t,x,z,\x$ so all its derivatives are bounded. We recall that we wrote $\p(t,x,\x,z) = \f(t,x,\x) - \innp z \x$.\\

\noindent
{\bf 2.}
Let $N\in\N$. To estimate $J$, we define, for all $\d \in ]0,\d_0]$:
\[
J_\d(x) = \1 {\tilde \G(\d \t_0,2\d \t_0)}(x) \int_{\R} \int_{\G}\int_{\R^n}  e^{\frac ih \p(t,x,z,\x)} \k(t,x,z,\x,h)  \, d\x \, d\s(z)\, dt
\]

Let:
\[
J_\d^\sslash(x) = \1 {\tilde \G(\d \t_0,2\d \t_0)}(x) \int_{\R} \int_{\G}\int_{\abs{\x_z^\sslash}> d_1 \d }  e^{\frac ih \p(t,x,z,\x)} \k(t,x,z,\x,h)  \, d\x \, d\s(z)\, dt
\]
Since $\partial_z \p (t,x,z,\x)  = \x_z^\sslash$, $N$ partial integrations in $z$ show that:
\[
\abs{J_\d^\sslash(x)} \leq c\,\1{\tilde \G(\d \t_0,2\d \t_0)}(x) \left(\frac h \d\right)^N
\]
and hence:
\begin{equation} \label{estim-Jsslash}
\nr{J_\d^\sslash}_{L^2(\R^n)} \leq c \, h^N \, \d^{\frac {n-d}2 -N}
\end{equation}

\noindent
{\bf 3.}
By \eqref{defphi} we have:
\[
\partial_\x \p(t,x,z,\x) = x - (z + 2t\x) + t^2 \partial_\x r(t,x,\x)
\]
and hence:
\[
[x-(z+2t\x)]^\wedge.\partial_\x \p(t,x,z,\x) = \abs{x-(z+2t\x)} + t^2 [x-(z+2t\x)]^\wedge.\partial_\x r(t,x,\x)
\]
where $\hat x$ stands for $\frac x {\abs x }$.
For $t \leq \d \t_0 \min\left(1,\frac{\g_m}{4d_2}\right)$ ($\g_m$ is defined in proposition \ref{prop-superbij}) and $x \in \tilde\G (\d\t_0,2\d \t_0)$ we have:
\begin{equation*}
\begin{aligned}
\abs{x-(z+2t\x)}
& \geq \abs{x-z} -  2 t \abs \x \geq \d \t_0 \g_m  - 2td_2 \geq \frac {\d \t_0 \g_m}2  
\end{aligned}
\end{equation*}
and hence:
\begin{equation} \label{minor-tsmall}
\abs{x-(z+2t\x)} + t^2 [x-(z+2t\x)]^\wedge.\partial_\x r \geq \d \left( \frac {\t_0 \g_m}2 - M \t_0^2 \right)
\end{equation}
where $M = \nr{\partial_\x r}_{L^\infty([0,\t_0]\times \R^{2n})}$. Taking $\t_0$ smaller we may assume that the quantity in brackets is positive.

On the other hand if $t\in \left[ \d   \frac{2\t_0(2d_1 +  \g_M) + 1}{d_1} ,\t_0\right]$, $z_{xx}$ is a point of $\G_1$ for which $\abs{x-z_{xx}} = d(x,\G_1)$ and $\abs{\x_z^\sslash} \leq \d d_1$, then:
\[
\begin{aligned}
\abs{x-(z+2t\x)} 
& \geq \abs{z+2t\x - z_{xx}} - \abs {x- z_{xx}}\\
& \geq \abs{z+2t\x_z^\bot - z_{xx}} -2 \d \t_0 d_1 - 2 \d \t_0 \g_M \\
& \geq 2td_1 - 2  \d \t_0 ( 2d_1 + \g_M)
\end{aligned}
\]
since for $t$ small enough $d(z+2t\x_z^\bot,\G) = \abs{2t\x_z^\bot} \geq 2t\abs \x - 2t \abs{\x_z^\sslash}$. Thus:
\begin{equation}\label{minor-denom}
\abs{x-(z+2t\x)} + t^2 [x-(z+2t\x)]^\wedge.\partial_\x r \geq t(d_1-\t_0 M) + td_1 - 2 \d \t_0 (2d_1 + \g_M)\geq  \d + t \frac {d_1}2
\end{equation}
if $d_1 \geq 2 \t_0 M$, which may be assumed. In particular we have proved that there exists $C , c_0 >0$ such that:
\[
\forall \d \in ]0,\d_0], \forall t \in \left[ 0, \frac \d C\right] \cup [C\d,\t_0], \quad  \abs{x-(z+2t\x)} + t^2 [x-(z+2t\x)]^\wedge.\partial_\x r \geq c_0 \d
\]
on the support of $\1{\tilde \G(\d \t_0,2 \d \t_0)}(x) \k(t,x,z,\x,h) $. We get:
\[
\abs{\partial_\x^\a\frac{[x-(z+2t\x)]^\wedge}{\abs{x-(z+2t\x)} + t^2 [x-(z+2t\x)]^\wedge. \partial_\x r(t,x,\x)}} \leq {c_\a} {\d^{-\abs{\a}}}
\]
on this support, since the derivatives of $[x-(z+2t\x)]^\wedge$ with report to $\x$ are bounded for $t \in [0,\d/C]\cup [C \d ,\t_0]$ according to \eqref{minor-tsmall} and \eqref{minor-denom}. We choose a function $\h_1 \in C^\infty(\R)$ equal to 1 in a neighborhood of $\left] -\infty, \frac 1 {2C}\right] \cup [2C,+\infty]$ and zero on $\left[\frac 1 C,  C \right]$ and $\h_0 = 1 - \h_1$. Then we have $J_\d = J_\d^1 + J_\d^0+ J_\d^\sslash$ where, for $j\in\{0,1\}$:
\[
J_\d^j(x) = \1 {\tilde \G(\d \t_0,2\d \t_0)}(x)\int_{0}^\infty \int_{\G} \int_{\abs{\x_z^\sslash}\leq \d d_1}  \h_j\left(\frac t \d\right) e^{\frac ih \p(t,x,z,\x)} \k(t,x,z,\x,h)  \, d\x \, d\s(z) \, dt
\]

We consider the operator: 
\[
L : u \mapsto \left( (t,x,z,\x,h) \mapsto -ih \frac {[x-(z+2t\x)]^\wedge . \partial_\x u}{\abs{x-(z+2t\x)} + t^2 [x-(z+2t\x)]^\wedge.\partial_\x r}\right)
\]
The function $(t,x,z,\x,h)\mapsto \exp\left(\frac ih \p(t,x,z,\x) \right)$ is invariant by $L$ and the adjoint $L^*$ is given by:
\[
L^* : v \mapsto \left( (t,x,z,\x) \mapsto ih\divg_\x \left( \frac {[x-(z+2t\x)]^\wedge . v}{\abs{x-(z+2t\x)} + t^2 [x-(z+2t\x)]^\wedge.\partial_\x r}\right)\right)
\]
$N$ partial integrations with $L$ prove:
\[
\abs{J_\d^1(x)} \leq C_N \left(\frac h {\d}\right) ^N \1 {\tilde \G(\d \t_0,2\d \t_0)}(x)
\]
and hence:
\begin{equation} \label{estim-jd1}
\nr{J_\d^1} \leq C_N h^N \d^{\frac {n-d}2 - N}
\end{equation}

\noindent
{\bf 4.}
We now turn to $J_\d^0$. We recall that for all $z \in \G_1$ and $\z \in T_z\G_1$ of norm less than $\g_1$ then $\exp_z(\z)$ is well-defined (on $\G_2$) and $d_{\G_1} (z,\exp_z(\z)) = \abs{\z}$. For $\t_0$ small enough, if $x \in \tilde \G(\d \t_0, 2 \d \t_0)$ and $d_{\G_1}(z,z_x) \geq \g_1 \d$ then $\abs {x-z} \geq \frac {\g_1 \d} 2$ and $\abs {x-(z+2t\x)} \geq \frac{\g_1 \d} 4$. As a result we can do partial integrations with $L$ as before and see that modulo $O((h/\d)^N)$, $J_\d^0(x)$ is given by integration over $z$ in a neighborhood of radius $\d$ around $z_x$:
\begin{equation*}
\begin{aligned}
J_\d^0(x) 
& = \1 {\tilde \G(\d \t_0,2\d \t_0)}(x)\int_{0}^\infty \hspace{-0.2cm} \int_{ B_\G(z_x,\g_1 \d)} \int_{\abs{\x_z^\sslash}\leq \d d_1}  \h_0\left(\frac t \d\right) e^{\frac ih \p(t,x,z,\x)} \k(t,x,z,\x,h)  \, d\x d\s(z) \, dt\\
& \quad  + O \left( \left( h /\d\right)^N\right)
\end{aligned}
\end{equation*}
After the change of variables $t = \theta  \d$ and $z = \exp_{z_x}(\d \z)$, $\z \in T_{z_x}\G$, we get for $y\in\R^n$:
\begin{equation*}
\begin{aligned}
J_\d^0( \bar x (\d t_y, z_y,\x_y))
& =  \d ^{1+d} \1{\tilde \G(\t_0,2\t_0)} (y) \int \!\!\! \int_{\R^n} \int_{0}^{\infty}  \h_0 (\theta)  \tilde \k (\theta,y,\x,\z,h) e^{\frac ih \d \Phi ( \theta , y,\x,\z,\d)}  \, d\theta \, d\x  \, d\z  \\
& \quad  + O \left( \left( h /\d\right)^N\right)
\end{aligned}
\end{equation*}
where integral in $\z$ is over the ball or radius $\g_1$ in $T_{z_y}\G$ and: 
\begin{equation*}
\tilde \k (\theta,y,\x,\z,h,\d) =  \tilde \h (y)  \k (\d\theta,\d y, \x, \exp_{z_x}(\d\z), h) \partial_\z \exp_{z_x}(\d\z)
\end{equation*}
with $\tilde \h \in C_0^\infty(\R^{2n})$ supported in $\singl{\t_0/2\leq t_y \leq 3\t_0}$ and equal to 1 on $\singl{\t_0\leq t_y \leq 2\t_0}$. $\tilde \k(h,\d)$ is of compact support in $]0,+\infty[ \times (\R^{n} \setminus \G) \times  (\R^n \setminus \singl 0) \times T_{z_y}\G$. $\Phi$ is defined in \eqref{def-grand-phi}. For $y$ such that $\t_0/2\leq t_y \leq 3\t_0$ and $\d \in ]0,\d_0]$, there is by proposition \ref{prop-gd-phi} a unique $(\tilde \theta_{y,\d},\tilde \x_{y,\d},\tilde \z_{y,\d})$ such that $(\tilde \theta_{y,\d},y,\tilde \x_{y,\d},\tilde \z_{y,\d},\d)$ is a critical point of $\phi$ and $\tilde \theta > 0$. Moreover:
\begin{equation*}
\begin{aligned}
\partial_{\theta,\x,z} \Phi(\theta,y,z,\x,\d)
& = \Hess_{\theta,z,\x} \Phi ( \tilde \theta_{y,\d},y,\tilde \z_{y,\d},\tilde \x_{y,\d} ,\d) ((\theta,z,\x) - (\tilde \theta_{y,\d} ,\tilde \z _{y,\d}, \tilde \x_{y,\d})) \\
& \quad   + \bigo {(\theta,\z,\x)} {(\tilde \theta_{y,\d},\tilde\z_{y,\d}, \tilde \x_{y,\d})} ( \abspetit{ \theta - \tilde \theta_{y,\d}},\abspetit{\z - \tilde \z_{y,\d}}, \abspetit {\x - \tilde \x_{y,\d}})
\end{aligned}
\end{equation*}
and hence:
\begin{equation*}
\begin{aligned}
(\theta,\z,\x) - (\tilde \theta_{y,\d},\tilde \z_{y,\d} , \tilde \x_{y,\d}) 
& = \big[\Hess_{\theta,\z,\x} \Phi ( \tilde \theta_{y,\d},y,\tilde \z _{y,\d},\tilde \x_{y,\d},\d)\big] \inv     (\partial_{\theta,\z,\x} \Phi(\theta,\z,\x))  \\
& \quad   + \bigo {(\theta,\z,\x)} {(\tilde \theta_{y,\d},\tilde\z_{y,\d}, \tilde \x_{y,\d})} ( \abspetit{ \theta - \tilde \theta_{y,\d}},\abspetit{\z - \tilde \z_{y,\d}}, \abspetit {\x - \tilde \x_{y,\d}})
\end{aligned}
\end{equation*}
$y$ and $\d$ stay in a compact set and zero is never an eigenvalue of $\Hess_{\theta,\z,\x}\Phi (\tilde \theta_{y,\d},y, \tilde \z_{y,\d},\tilde \z_{y,\d},\d)$, so the norm of $\Hess_{\theta,\z,\x}(\theta,\z,\x) \inv $ is bounded.

As a consequence the quantity:
\begin{equation*}
\frac { \abs{ (\theta,\z,\x) - (\tilde \theta_{y,\d},\z_{y,\d}, \tilde \x_{y,\d})}}   {\abs{\partial_{\theta,\z,\x} \Phi (\theta,y ,z,\x,\d)}}
\end{equation*}
is uniformly bounded. So we can use theorems 7.7.5 and 7.7.6 in \cite{hormander1}, which give:
\begin{equation*}
\abs{J_\d^0(x) } \leq c \d^{1+d} \left( \frac h \d \right) ^{\frac {n+d+1} 2} \1 {\tilde \G(\d \t_0,2\d \t_0)}(x) + c \left( \frac h \d \right)^N \1 {\tilde \G(\d \t_0,2\d \t_0)}(x)
\end{equation*}
and thus:
\begin{equation} \label{jdzero}
\nr{J_\d^0} \leq c \d^{\frac {1} 2} h^{\frac {n+d+1} 2}+ c\, h^N \d ^{\frac {n-d}2-N}
\end{equation}

\noindent
{\bf 5.}
For $\g \in ]0,1]$ we define :
\[
\tilde J_\g(x) = \1 {\tilde \G(2\g \t_0)}(x) \int_{\R} \int_{\G}\int_{\R^n}  e^{\frac ih \p(t,x,z,\x)} \k(t,x,z,\x,h)  \, d\x \, d\s(z)\, dt
\]
$\tilde J_\g^\sslash$ is defined as $J_\d^\sslash$ with $ \1 {\tilde \G(\d \t_0, 2\d \t_0)}$ replaced by $ \1 {\tilde \G(2\g \t_0)}$. An estimate analog to \eqref{estim-Jsslash} holds for $\tilde J_\g^\sslash$. We now note $\h_+ = \1{[C,+\infty[}\h_1$, $\h_- = 1 - \h_+$, and:
\begin{equation*}
\tilde J_{\g}^{\pm}(x) = \1 {\tilde \G(2\g \t_0 )}(x) \int_\R \int_\G \int_{\abs{\x_z^\sslash} \leq \g d_1}  \h_\pm\left(\frac t \g \right) e^{\frac ih \p(t,x,z,\x)} \k(t,x,z,\x,h) \,d\s(z) \, d\x\, dt
\end{equation*}
As we did for $J_\d^1$ we see that:
\begin{equation} \label{estim-jd+}
\nr{\tilde J_\g^+} \leq C_N h^N \g^{\frac {n-d}2 - N}
\end{equation}

To estimate $J_\g^-$, we remark that we are integrating a bounded function over a set of size $O(\g)$ in $t$ and over $\{(z,\x), \big|\x_z^\sslash\big| \leq \g d_1\}$ whose volume is of size $O(\g^d)$, so:
\[
\abs{\tilde J_\g^-(x)}\leq c \g^{1+d} \1 {\tilde \G(2\g \t_0)}(x)
\]
Taking the $L^2(\R^n)$ norm in $x$ gives:
\begin{equation}\label{estim-jd-}
\nr{\tilde J_{\g}^-}_{L^2(\R^n)} \leq c \g^{1+\frac {n+d} 2}
\end{equation}

\noindent
{\bf 6.}
Estimates \eqref{estim-Jsslash}, \eqref{estim-jd1}, \eqref{jdzero}, \eqref{estim-jd+} and \eqref{estim-jd-} allow to conclude: let $\t_1\in]0,\d_0 \t_0]$ and $\m \in ]0, 1[$, we use a dyadic decomposition $\d = 2^{-m}$ with $h^{1-\m} < \d  < \t_1/\t_0$, that is ${\ln_2 (\t_0) -\ln_2(\t_1)} < m <  -(1-\m) \ln_2 h$. We write $m_- = \ln_2 (\t_0) -\ln_2(\t_1)$ and $m_+ =   -(1-\m) \ln_2 h$. Then:
\[
\nr{\1{\tilde \G(\t_1)} J} \leq \nr{\tilde J_{h^{1-\m}}} + \sum_{m_- < m < m_+} \nr{J_{2^{-m}}} 
\]
with:
\[
\begin{aligned}
\nr{\tilde J_{h^{1-\m}}}
& \leq \nr{\tilde J_{h^{1- \m}}^\sslash} + \nr{\tilde J_{h^{1- \m}}^-} + \nr{\tilde J_{h^{1- \m}}^+} \\
& \leq c_{N} \left(h^{(1-\m)\left(\frac {n+d} 2 + 1\right)} + h^{(1-\m)\frac {n-d} 2 + \m N} \right)\\
& \leq c_{N} h^{ \frac{n+d+1}2}  \left(  h^{\frac {1} 2-\m \left(\frac {n+d} 2 +1\right)} + h^{\m N - \frac 12 -d - \m \frac {n-d} 2} \right)
\end{aligned}
\]
and:
\begin{equation*}
\begin{aligned}
\sum_{m_- < m < m_+} \nr{J_{2^{-m}}}
& \leq  \sum_{m_-<m<m_+}  \left( \nr{J_{2^{-m}}^1} + \nr{J_{2^{-m}}^0}   + \nr{J_{2^{-m}}^\sslash} \right)\\
%
%
& \leq c_{N} \left( h^N\sum _{m\leq m_+}  \left( 2^{N-\frac{n-d}2}\right)^m +  h ^{\frac{n+d+1}2} \sum_{m_-\leq m }{2}^{-\frac m 2}   \right)\\
& \leq c_{N} \left(   h^{N-(1-\m)\left(N - \frac {n-d}2\right)}  +  h ^{\frac{n+d+1}2} \sqrt {\t_1}   \right)\\
& \leq c_{N} h^{ \frac{n+d+1}2}  \left(  h^{\m N - \frac 12 -d - \m \frac {n-d} 2}  + \sqrt {\t_1} \right)
\end{aligned}
\end{equation*}

We now take $\m > 0$ small enough to have $\nu : = \frac 12 - \m \left( \frac {n+d}2 + 1 \right) > 0$ and then $N$ big enough to have $\m N - \frac 12 - d -\m\frac{n-d}2 \geq 0$. This gives:
\begin{equation*}
\nr{\1{\tilde \G(\t_1)} J}_{L^2(\R^n)} \leq c\,h^{\frac{n+d+1}2} (\sqrt {\t_1} + h^\n)
\end{equation*}
If $\t_1$ and $h_0$ are small enough we have $c (\sqrt {\t_1} + h^\n) \leq \frac \e 2$ for all $h \in ]0,h_0]$. By (\ref{B1J}), if $h_0$ is small enough we finally reach the result:

\begin{equation*}
\nr{\1 {\tilde \G(\t_1)} B_0(h)}_{L^2(\R^n)} \leq    \e  
\end{equation*}
\end{proof}

For $z \in \G$ and $x \in \R^n$ we set:
\begin{equation} \label{def-tilde-psi}
\tilde \p _{x,z} : (t,\z,\x) \mapsto  \p (t, x , \exp_z (\z),\x)
\end{equation}
This is defined for $t \in ]0,\t_0]$, $\x \in \R^n$ and $\z$ in a neighborhood $\Uc_z$ of 0 in $T_z \G$.
Now for $ x \in \tilde \G (0,2\t_0)$ we let $\p(x) = \p(t_x,x,z_x,\x_x) = \f(t_x,x,\x_x) - \innp{z_x}{\x_x}$ and:
\begin{equation} \label{symb-b0}
b_0(x) = i (2\pi) ^{\frac {d+1-n}2} \frac{e^{\frac {i\pi}4 \sgn  \Hess \tilde \p_{x,z_x} (t_x,0,\x_x)}} {\abs{\det  \Hess \tilde \p_{z_x} (t_x,0,\x_x)}^{\frac 12} } A(z_x) a_0(t_x,x,\x_x) \hat S(\x_x) \h(t_x)
\end{equation}

\begin{proposition}  \label{prop-b0-moy}
Let $\Uc$ be a neighborhood of $\G_0$ in $\R^n$. Then on $\tilde \G(\t_0) \setminus \Uc$ the function $B_0$ is a lagrangian distribution of phase $\p$ and principal symbol $b_0$.
\end{proposition}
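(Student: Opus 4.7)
The plan is to obtain the statement by applying the stationary phase theorem to the oscillatory integral representation of $B_0(h)$ established in the proof of Lemma \ref{lem3.5}, exploiting the uniqueness and non-degeneracy of the critical points already analyzed in Section \ref{sec-ptcrit}. Recall that by \eqref{B1J} we have
\[
B_0(h) = \frac{ih^{-\frac{1+n+d}{2}}}{(2\pi)^n}\, J(x,h) \,(1 + \littleo h 0 (1)) \quad \text{in } L^2(\R^n),
\]
with $J(x,h) = \int_0^\infty \!\int_\G \int_{\R^n} e^{\frac ih \p(t,x,z,\x)} \k(t,x,z,\x,h)\, d\x\, d\s(z)\, dt$. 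The strategy is to localize $J$ near the unique interior critical point $(t_x,z_x,\x_x)$, show that the complement contributes $O(h^\infty)$, and then apply stationary phase to the localized piece.

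First, on $\tilde \G(\t_0) \setminus \Uc$ we have $t_x \geq c > 0$ uniformly, so $\h(t_x) = 1$ and the cutoff $\h$ plays no role at the leading order. By Proposition \ref{prop-equiv-ptcrit} and the corollary that follows it, the system \eqref{syst-pt-crit} admits a unique solution $(t_x, z_x, \x_x)$, with $(z_x,\x_x) \in \negg_1$. I would then pass to local coordinates on $\G$ via the exponential map, writing $z = \exp_{z_x}(\z)$ with $\z$ in a small ball of $T_{z_x}\G$, so that the phase becomes $\tilde \p_{x,z_x}(t,\z,\x)$ as defined in \eqref{def-tilde-psi} and the critical point is at $(t_x,0,\x_x)$. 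The measure $d\s(z)$ becomes $d\z$ up to a smooth Jacobian factor equal to $1$ at $\z = 0$. The Hessian computation carried out in the proof of Proposition \ref{prop-gd-phi} (specialized to $\d \neq 0$ via the relations \eqref{equiv-phi}) shows that $\Hess \tilde \p_{x,z_x}(t_x,0,\x_x)$ is invertible, with determinant proportional to $2^{n-d+1}(-1)^{n-d}\, t_x^{n-d-1}\abs{\x_x}^2$ up to lower-order corrections.

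Outside a small neighborhood of this critical point I would use standard non-stationary phase arguments in each of the three groups of variables: for $\z$ bounded away from $0$, the gradient $\partial_\z \tilde \p_{x,z_x} = \x_z^\sslash$ together with an operator of the form of the $L$ introduced in the proof of Lemma \ref{lem3.5} gives contributions $O(h^\infty)$; for $(t,\x)$ bounded away from $(t_x,\x_x)$ after $\z$ has been localized, the full gradient of $\tilde \p_{x,z_x}$ does not vanish and repeated integration by parts produces a gain of arbitrarily large powers of $h$. This reduces $J(x,h)$ modulo $O(h^\infty)$ to an integral localized in a small neighborhood of $(t_x,0,\x_x)$ in $n+d+1$ variables, with a smooth amplitude that equals $\h(t_x) a_0(t_x,x,\x_x)A(z_x)\hat S(\x_x)$ at the critical point (since $a = a_0 + O(h)$).

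The stationary phase theorem (Theorem 7.7.5 in \cite{hormander1}, already used above) then gives
\[
J(x,h) = (2\pi h)^{\frac{n+d+1}{2}}\,\frac{e^{\frac{i\pi}{4}\sgn\Hess\tilde \p_{x,z_x}(t_x,0,\x_x)}}{\abs{\det \Hess \tilde \p_{x,z_x}(t_x,0,\x_x)}^{\frac 12}}\, \h(t_x)A(z_x)a_0(t_x,x,\x_x)\hat S(\x_x)\, e^{\frac ih \p(x)} + O(h^{\frac{n+d+3}{2}}),
\]
uniformly for $x$ in any compact subset of $\tilde \G(\t_0) \setminus \Uc$. Multiplying by $ih^{-\frac{1+n+d}{2}}/(2\pi)^n$ yields exactly $b_0(x)e^{\frac ih \p(x)}$ as principal symbol, with the expected prefactor $i(2\pi)^{\frac{d+1-n}{2}}$ in \eqref{symb-b0}. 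The main obstacle is the bookkeeping: verifying uniformity of the stationary phase estimates as $x$ varies over $\tilde \G(\t_0)\setminus \Uc$ (since $z_x$ moves with $x$, forcing the local chart on $\G$ to be chosen $x$-dependently) and matching the geometric normalizations so that the Jacobian of $\exp_{z_x}$, the measure $d\s(z)$, and the factor $\det \Hess \tilde \p_{x,z_x}$ combine to produce precisely the symbol $b_0$ written in \eqref{symb-b0}.
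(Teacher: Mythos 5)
Your route is essentially the paper's: reduce to the WKB oscillatory integral \eqref{B1J}, localize the $z$-integration near $z_x$ via the exponential chart so the phase becomes $\tilde\p_{x,z_x}$, invoke the uniqueness and non-degeneracy of the critical point $(t_x,0,\x_x)$ from Section \ref{sec-ptcrit} together with the Hessian computation, and conclude by stationary phase; the bookkeeping you flag (Jacobian of $\exp_{z_x}$, determinant \eqref{det-hess}, prefactors) indeed reproduces \eqref{symb-b0}. However, there is one genuine gap: your claim that $t_x\geq c>0$ uniformly on $\tilde\G(\t_0)\setminus\Uc$ is false in general. The set $\Uc$ is only a neighborhood of $\G_0=\supp A$, while $\tilde\G(\t_0)$ is built over all of $\negg$ (i.e.\ over $\G$), so points $x=\bar x(t_x,z_x,\x_x)$ with $z_x\in\G\setminus\G_0$ and $t_x$ arbitrarily small lie in $\tilde\G(\t_0)\setminus\Uc$. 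On that part of the region your argument breaks down twice: the stationary phase estimate is not uniform there, since $\abs{\det\Hess\tilde\p_{x,z_x}(t_x,0,\x_x)}\sim 2^{n-d+1}t_x^{n-d-1}\abs{\x_x}^2$ degenerates as $t_x\to 0$ (and the critical-point analysis of Proposition \ref{prop-gd-phi} is only uniform after the $\d$-rescaling, i.e.\ for $t_x$ comparable to a fixed scale), and your non-stationary-phase treatment of small times $t$ uses a lower bound on $\abs{x-z}$ of the form $2\g_m t_x$, which is no longer bounded below. This is precisely the step the paper does not skip: it first uses Egorov's theorem to write $\1{\tilde\G(\t_0)\setminus\Uc}B_0=\1{\tilde\G(\t_2,\t_0)}B_0+\bigo h 0(h^\infty)$ for some $\t_2>0$ (the microsupport of $B_0$ near $\G_2$ at small distance only sits above $\G_0$, hence inside $\Uc$), and correspondingly replaces $\h$ by a cutoff $\bar\h$ vanishing near $t=0$ using \eqref{minor-tsmall}; only then is the stationary phase applied, uniformly on $\tilde\G(\t_2,\t_0)$. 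You need either this Egorov reduction or a separate argument (e.g.\ a rescaled/dyadic analysis as in Lemma \ref{lem3.5}, exploiting that $A$ vanishes near such $z_x$) to dispose of the small-$t_x$ part.

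A minor point in addition: $\h$ is equal to $1$ only near $t=0$ and is supported in $]-1,\t_0[$, so ``$\h(t_x)=1$'' is not correct and the cutoff does enter the principal symbol (this is why $\h(t_x)$ appears in \eqref{symb-b0}); since you keep the factor $\h(t_x)$ in your final stationary phase formula, this slip does not affect your final expression, but the justification should be removed.
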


This means that $B_0$ is of the form $B_0(x) = e^{\frac ih \p(x)} b_0(x) + o(1)$. Note that if \eqref{h6top} holds we can have $B_0(x) = e^{\frac ih \p(x)} b(x,h)+O(h^\infty)$ where $b(x,h) \sim \sum h^j b_j(x)$ for some functions $b_j$, $j\geq 1$. See \cite{sogge} for more details about lagrangian distributions (in the microlocal setting).

\begin{proof}
Everything we need is already in the proof of proposition \ref{lem3.5}. By Egorov theorem there exists $\t_2 \in ]0, \t_0]$ such that:
\[
\1 {\tilde \G (\t_0) \setminus \Uc } B_0 = \1{\tilde \G (\t_2,\t_0)} B_0 + \bigo h 0 (h^\infty)
\]
 
Let us come back to the proof of \eqref{minor-tsmall} with $\d = \t_2$. We see that if $\bar \h \in C_0^\infty(\R_+^*)$ is such that $\bar \h(t) = \h(t)$ for $t \geq \frac {\g_m \t_2 \t_0}{4 d_2}$ then in $L^2(\tilde \G(\t_2,\t_0))$:
\[
\begin{aligned}
B_0(x)
= \frac{ih^{-\frac {1+n+d} 2}}{(2\pi)^{n}}  \int_{0}^\infty  \hspace{-0,2cm} \int_{\G} \int_{\R^n}  \bar \h(t)  e^{\frac ih \p(t,x,z,\x)} a(t,x,\x,h) A(z) \hat S (\x) \, d\x\,d\s(z)\, dt  \left( 1 + \littleo h 0 (1)\right)
\end{aligned}
\]

Moreover as we explained for $J_\d^0$ the only relevant part of integration on $z$ is around $z_x$, so:
\begin{equation} \label{b-zero}
\begin{aligned}
B_0(x,h)
& = \frac{ih^{-\frac {1+n+d} 2}}{(2\pi)^{n}}  \int_{0}^\infty \!\!\!\int_{\Uc_{z_x}}\!\! \int_{\R^n}  \bar \h(t)  e^{\frac ih \tilde \p_{x,z_x} (t,\z,\x)} a(t,x,\x,h) A(z) \hat S (\x) \Jac (\exp_{z_x})(\z) \, d\x\,d\z\, dt\\
& \quad  \times \left( 1 + \littleo h 0 (1)\right)
\end{aligned}
\end{equation}
Then, as we did to study $J_\d^0$, we use the results of section \ref{sec-ptcrit} and stationnary phase method to get the result (in particular the only stationnary point for $\tilde \p _{x,z_x}$ is $(t_x,0,\x_x)$.
\end{proof}

\begin{proposition} Let $x \in \tilde \G(\t_0)$. We have:
\begin{equation} \label{det-hess}
\abs{\det \Hess \tilde \p_{x,z_x}(t_x,0,\x_x) } = 2^{n-d+1} t_x^{n-d-1} \abs{\x_x}^2 + \bigo {t_x} 0 (t_x^{n-d})
\end{equation}
where the size of the rest is uniform in $x$.
\end{proposition}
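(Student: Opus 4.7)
The plan is to reduce the determinant computation to the one already carried out in the proof of Proposition~\ref{prop-gd-phi}, by exploiting the scaling relation between $\psi$ and $\Phi$ that appears in Section~\ref{sec-ptcrit}.

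First, I would choose the parameters so that the scaling identity applies with the critical point pushed into the regime where $\Phi$ was analyzed. Assuming $t_x \leq \tau_0$ (the other case is trivial), set $\delta = t_x/\tau_0 \in (0,1]$ and $y = \bar x(\tau_0,z_x,\xi_x)$. Then $(t_y,z_y,\xi_y)=(\tau_0,z_x,\xi_x)$, so $y\in K$ and $x=\bar x(\delta t_y,z_y,\xi_y)$. The identity preceding \eqref{equiv-phi} then reads
\[
\tilde\psi_{x,z_x}(t,\zeta,\xi) \;=\; \delta\,\Phi\bigl(t/\delta,\,y,\,\zeta/\delta,\,\xi,\,\delta\bigr)
\]
near the critical point $(t_x,0,\xi_x)$ of $\tilde\psi_{x,z_x}$, which corresponds under the rescaling $(t,\zeta,\xi)\mapsto(t/\delta,\zeta/\delta,\xi)$ to the critical point $(\tilde\theta_{y,\delta},\tilde\zeta_{y,\delta},\tilde\xi_{y,\delta})$ of $\Phi(\cdot,y,\cdot,\cdot,\delta)$ provided by Proposition~\ref{prop-gd-phi}.

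Next I would compute the Hessian of $\tilde\psi_{x,z_x}$ by the chain rule. The change of variables is affine with Jacobian $L=\mathrm{diag}(\delta^{-1},\delta^{-1}I_d,I_n)$, so
\[
\Hess_{(t,\zeta,\xi)}\tilde\psi_{x,z_x} \;=\; \delta\,L^{T}\,\bigl(\Hess_{(\theta,\zeta,\xi)}\Phi\bigr)\,L,
\]
and hence
\[
\det \Hess \tilde\psi_{x,z_x} \;=\; \delta^{\,n+d+1}(\det L)^{2}\det\Hess\Phi \;=\; \delta^{\,n-d-1}\det\Hess\Phi,
\]
since $(\det L)^{2}=\delta^{-2(1+d)}$. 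This is the only content of the chain-rule step.

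Finally I would plug in the Hessian computation already performed in the proof of Proposition~\ref{prop-gd-phi}: at the critical point for $\delta=0$ we have $\det\Hess_{(\theta,\zeta,\xi)}\Phi = \pm\,2^{n-d+1}\,\tau_0^{n-d-1}|\xi_x|^{2}$. Continuity of the Hessian in $\delta$, together with the smooth dependence of the critical point on $\delta$ provided by the implicit function theorem (used in the proof of Proposition~\ref{prop-gd-phi}), yields $\det\Hess\Phi = \pm\,2^{n-d+1}\tau_0^{n-d-1}|\xi_x|^{2}+O(\delta)$, uniformly for $y$ in the compact set $K$. Multiplying by $\delta^{n-d-1}=(t_x/\tau_0)^{n-d-1}$ and taking absolute values gives exactly \eqref{det-hess}, with uniform remainder $O(t_x^{n-d})$ thanks to the compactness of $K$.

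The only delicate point is the bookkeeping of the scaling exponents (verifying that $\delta^{n+d+1}(\det L)^{2}=\delta^{n-d-1}$ and that the critical points correspond correctly); the genuine computation — the Hessian of $\Phi$ at the limit $\delta=0$ — has already been done, so no new analytic work is required.
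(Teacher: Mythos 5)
Your argument is correct, and it is organized differently from the paper's. The paper proves \eqref{det-hess} by a direct computation: it plugs \eqref{defphi} into $\tilde \p_{x,z_x}$, writes $\Hess \tilde \p_{x,z_x}(t,0,\x)$ as an explicit block matrix (with the $-2tI$ blocks and the $\exp$-map block $A$) times $(1+O(t))$, and expands the determinant in $t$, noting that $A$ only contributes to the remainder. You instead transfer the computation already done in the proof of Proposition \ref{prop-gd-phi} through the exact scaling identity $\tilde\p_{x,z_x}(t,\zeta,\xi)=\delta\,\Phi(t/\delta,y,\zeta/\delta,\xi,\delta)$ with $\delta=t_x/\tau_0$, $y=\bar x(\tau_0,z_x,\xi_x)\in K$; the chain-rule bookkeeping $\delta^{\,n+d+1}(\det L)^2=\delta^{\,n-d-1}$ is right, the critical point maps to the fixed point $(\tau_0,0,\xi_x)$, and the statement ``$\det\Hess\Phi$ at that point equals $\pm 2^{n-d+1}\tau_0^{n-d-1}\abs{\xi_x}^2+O(\delta)$'' is exactly the expansion $\Hess_{\theta,\zeta,\xi}\Phi = M + O(\delta)$ recorded (uniformly on compacts) in the proof of Proposition \ref{prop-gd-phi}, so no new analysis is needed. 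What your route buys is that the $t_x^{\,n-d-1}$ factor comes from the Jacobian scaling rather than from re-expanding the block determinant, and the $\exp$-map term is automatically absorbed into the $O(\delta)$; what it costs is some domain bookkeeping you should state: the identity defining $\Phi$ requires $\delta\in(0,\gamma_1]$ (and Proposition \ref{prop-gd-phi} itself only $\delta\le\delta_0$), so your choice $\delta=t_x/\tau_0$ is only admissible for $t_x$ small, and for $t_x$ bounded below (say $t_x\ge \gamma_1\tau_0$) you should simply observe that \eqref{det-hess} is trivial by smoothness and compactness, since then the remainder $O(t_x^{\,n-d})$ dominates any bounded error; also your parenthetical ``the other case is trivial'' is vacuous, as $x\in\tilde\G(\t_0)$ already forces $t_x\le\tau_0$ --- the genuine case split is the one in $\delta$ just described. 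With that sentence added, the proof is complete and uniform in $x$.
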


\begin{proof}
(ii). By \eqref{defphi} we have:
\[
\begin{aligned}
\det \Hess \tilde \p_{x,z} (t,0,\x)
& = \begin{vmatrix} \partial_t^2 \f (t,x,\x) & 0 & - 2 \trsp {\x_z^\sslash}  & -2 \trsp{\x_z^\bot}  \\ 0 & A &-I_d & 0  \\  - 2  {\x_z^\sslash} & -I_d & -2 t I_d & 0  \\ -2 {\x_z^\bot} & 0 & 0 & -2t I_{n-d} \end{vmatrix} \left( 1 + \bigo t 0 (t) \right) \\
& = (-1)^{n-d} 2^{n-d+1} t^{n-d-1} \abs{\x_z^\bot}^2 +\bigo t 0 (t^{n-d})
\end{aligned}
\]
where for $1\leq i,j \leq d$:
\[
A_{ij} = -  \partial^2_{\z_i \z_j}  \innp{\exp_z (\z)}\x
\]
only appears in the rest, and $(\x_x)_{z_x}^\bot = \x_x$ since $(z_x,\x_x) \in \negg$.
\end{proof}

\section{Partial result for finite times}

\subsection{Intermediate times contribution}\label{sec-tps-inter}

We begin with a proposition which proves that for $w\in\R^{2n}$ and $q \in C_0^\infty(\R^{2n})$ supported close to $w$, then in the integral:
\[
u_h^T = \frac ih \int_0^{T} U_h^E(t) S_h\,dt
\]
only times around $t_{w,k}$ for $1 \leq k \leq K_w^T$ (and on a neighborhood of 0 if $w \in \negg$) give a relevant contribution.

\begin{proposition} \label{prop-terme-reste}
Let $w \in \R^{2n}$, $T > 0$ and $\tilde \h \in C_0^\infty(\R)$ a function which is zero near $t_{w,k}$ for $k\in\Ii 1 {K_w}$ (and 0 if $w \in \negg$). Then there exists a neighborhood $\Vc_{w,T}$ of $w$ in $\R^{2n}$ and a neighborhood $\Gc_{w,T} \subset \Gc$ of $\negg$ ($\Gc$ was defined in section \ref{sec-tps-ptt}) such that for all $q \in C_0^\infty(\R^{2n})$ supported in $\Vc_{w,T}$ and $f\in C_0^\infty(\R^{2n})$ supported in $\Gc_{w,T}$, we have in $L^2(\R^n)$:
\begin{equation*}
\Opw (q)\left(\frac ih \int_0^T \tilde \h (t)  U_h^E(t) \Op(f) S_h \, dt \right) = \bigo h 0 (h^\infty)
\end{equation*}
\end{proposition}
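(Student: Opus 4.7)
The overall strategy is to apply Egorov's theorem (Proposition \ref{prop-Egorov}) to push $\Opw(q)$ through the unitary evolution $U_h(t)$, and then exploit the fact that the transferred symbol and $f$ have disjoint supports for every $t$ in the support of $\tilde\h$. The first step is geometric. By Corollary \ref{dist-e0} consecutive $t_{w,k}$ are separated by at least $3\t_0$, so only finitely many $t_{w,k}$ lie in $[0,T]$, namely $t_{w,1}<\dots<t_{w,K_w^T}$. The set $K := \supp\tilde\h \cap [0,T]$ is therefore a compact subset of $[0,T]$ disjoint from $\{t_{w,1},\dots,t_{w,K_w^T}\}$ (and also from $\{0\}$ when $w \in \negg$, by the assumption on $\tilde\h$). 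By the very definition of the $t_{w,k}$'s one has $\vf^{-t}(w) \notin \negg$ for all $t \in K$; a compactness and continuity argument then yields open neighborhoods $\Vc_{w,T}$ of $w$ in $\R^{2n}$ and $\Gc_{w,T} \subset \Gc$ of $\negg$ such that
\[
\vf^{-t}(\Vc_{w,T}) \cap \Gc_{w,T} = \emptyset \qquad \text{for every } t \in K.
\]

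Fix $N \in \N$ and let $q, f$ be as in the statement, supported respectively in $\Vc_{w,T}$ and $\Gc_{w,T}$. Proposition \ref{prop-Egorov}(i) provides
\[
\Opw(q) U_h(t) = U_h(t) \Opw(A_N(t,h)) + \bigo h 0 (h^{N+1})
\]
in $\Lc(L^2(\R^n))$, uniformly for $t \in [0,T]$, with $A_N(t,h) = \sum_{j=0}^N h^j \a_j(t)$. Part (iii) of the same proposition asserts $\supp \a_j(t) \subset \vf^{-t}(\supp q) \subset \vf^{-t}(\Vc_{w,T})$, which by the previous step is disjoint from $\supp f \subset \Gc_{w,T}$ for every $t \in K$. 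Standard semiclassical symbolic calculus (disjoint supports produce an $O(h^\infty)$ residual after composition and asymptotic summation) then shows that $\Opw(A_N(t,h)) \Op(f)$ is $\bigo h 0 (h^{N+1})$ in $\Lc(L^2(\R^n))$, uniformly in $t \in K$; since $N$ is arbitrary this yields $\bigo h 0 (h^\infty)$.

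It only remains to assemble. Writing $U_h^E(t) = e^{it E_h / h} U_h(t)$ and noting that $\nr{U_h^E(t)}_{\Lc(L^2)} \leq 1$ for $t \geq 0$ (since $\Im E_h \geq 0$ and $V_2 \geq 0$ so that $-i\hh$ generates a contraction semigroup), substituting the Egorov expansion inside the integral reduces the quantity to an $\bigo h 0 (h^\infty)$ operator applied to $S_h$; combining with the bound $\nr{S_h}_{L^2} = \bigo h 0 (\sqrt h)$ from Proposition \ref{prop-norme-S} and the $h^{-1}$ prefactor gives the announced estimate. The main obstacle is the uniformity in $t$ of the Egorov expansion and of the associated composition bound over the compact interval $[0,T]$; both are guaranteed by Proposition \ref{prop-Egorov}(i) and by standard pseudodifferential calculus. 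The only delicate endpoint is $t=0$ when $w \in \negg$, where one would have $\vf^0(w) = w \in \negg$ and the disjointness step would fail; it is precisely the second clause in the hypothesis on $\tilde\h$ that excludes this case by removing $0$ from $K$.
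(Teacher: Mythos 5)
You follow the same route as the paper: a compactness argument produces neighborhoods $\Vc_{w,T}$ of $w$ and $\Gc_{w,T}\subset\Gc$ of $\negg$ such that $\vf^t(\Gc_{w,T})$ avoids $\Vc_{w,T}$ for every $t\in\supp\tilde\h\cap[0,T]$ (your formulation $\vf^{-t}(\Vc_{w,T})\cap\Gc_{w,T}=\emptyset$ is equivalent), and then the dissipative Egorov theorem combined with disjointness of supports gives $\nr{\Opw(q)U_h^E(t)\Op(f)}=\bigo h 0 (h^\infty)$ uniformly for $t$ in the relevant compact set; integration in $t$, the contraction property of $U_h^E(t)$ and $\nr{S_h}_{L^2}=\bigo h 0 (\sqrt h)$ (Proposition \ref{prop-norme-S}) finish the proof. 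This is exactly the paper's argument, and your remark about the endpoint $t=0$ when $w\in\negg$ is the right observation.

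There is, however, one step that does not hold as written. Proposition \ref{prop-Egorov} gives $U_h(t)^*\Opw(q)U_h(t)=\Opw(A_N(t,h))+\bigo h 0 (h^{N+1})$, and since $\hh$ is dissipative the propagator $U_h(t)$ is not unitary, so $U_h(t)^*\neq U_h(t)^{-1}$ and this cannot be rearranged into the intertwining $\Opw(q)U_h(t)=U_h(t)\Opw(A_N(t,h))+\bigo h 0 (h^{N+1})$ that you invoke. That identity is in fact false in general: $U_h(t)^{-1}\Opw(q)U_h(t)$ is a pseudodifferential operator with principal symbol $q\circ\vf^t$, without any damping factor, whereas $\a_0(t)=(q\circ\vf^t)\exp\left(-2\int_0^tV_2\circ\vf^s\,ds\right)$, so the two sides differ at order $h^0$ as soon as $V_2>0$ somewhere along the trajectories issued from $\supp q$. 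The gap is local and easily repaired, because only the support property of the symbols matters: for instance write $\nr{\Opw(q)U_h(t)\Op(f)}^2=\nr{\Op(f)^*U_h(t)^*\Opw(q)^*\Opw(q)U_h(t)\Op(f)}$ and apply Proposition \ref{prop-Egorov} (together with its support statement (iii)) to the Weyl symbol of $\Opw(q)^*\Opw(q)$, whose expansion terms are supported in $\supp q$; or multiply the Egorov identity on the left by $(U_h(t)^*)^{-1}=e^{-\frac{it}h\hh^*}$, whose norm is bounded by $e^{t\nr{V_2}_\infty}$ and hence uniformly in $h$ for $t\in[0,T]$; or conjugate by the unitary propagator of the selfadjoint part $\huh$ as is done in the proof of Proposition \ref{prop-super-egorov}. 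In each case $\Opw(A_N(t,h))\Op(f)=\bigo h 0 (h^{N+1})$ by disjointness of supports, uniformly in $t$, and the rest of your argument goes through unchanged.
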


\begin{proof}
There exists a neighborhood $\Gc_{w,T} \subset \Gc$ of $\negg$ in $\R^{2n}$ and $\rho >0$ such that for all $\tilde w  \in \Gc$ and $t \in \supp \tilde \h$ we have:
\begin{equation*}
\abs {\vf^t(\tilde w) - w} \geq 2 \rho
\end{equation*}
Otherwise for all $m \in \N^*$ we can find $t_m \in \supp\tilde \h$ and $w_m \in \R^{2n}$ with $d(w_m),\negg) \leq \frac 1 m$ such that $\abs{\vf^{t_m}(w_m) - w} \leq \frac 1 m$. We can extract a subsequence so that $t_{m_k} \to t \in \supp \tilde \h$ and $w_{m_k} \to w_\infty \in\negg$. Then we have $\vf^t(w_\infty) = w$, which is impossible since $t \notin\{ t_{w,1},\dots,t_{w,K_w}\}$ ($\cup \singl 0$ if $w\in \negg$).

Let $\Vc_{w,T}$ be the ball $B(w,\rho)$ and $q \in C_0^\infty(\R^{2n})$ supported in $\Vc_{w,T}$. By Egorov theorem, we have for all $t\in [0,T]$:
\begin{equation*}
\nr{\Opw (q)  U_h^E(t) \Op(f)} = \bigo h 0 (h^\infty)
\end{equation*}
where the remainder is uniform in $t \in [0,T]$. An integration over $t$ gives the result.
\end{proof}

\begin{remark*}
Note that neither the neighborhoods $\Gc_{w,T}$ and $\Vc_{w,T}$ nor the size of the remainder can be uniform in $T$. That is the main reason why we cannot deal directly with $u_h$ and have to begin with a study of $u_h^T$.
\end{remark*}

Let $w \in \L$ and $\t_w = \min(t_{w,1},\t_0)$. We consider $\h_w \in C_0^\infty(\R)$ supported in $]0,2\t_w[$ and equal to 1 in a neighborhood of $\t_w$, and set:
\begin{equation*} 
B_w(h) = \frac ih \int_{t=0}^{\infty} \h_w(t) U_h^E(t) \Op (f) S_h \, dt
\end{equation*}
Moreover, for $k\in \Ii 1 {K_w}$ we denote:
\begin{equation} \label{def-bwk}
B_{w,k}(h) = \frac ih \int_{t=0}^{\infty} \h_w(t-t_{w,k}+\t_w) U_h^E(t)  \Op(f) S_h \, dt
\end{equation}

As in proposition \ref{prop-b0-moy} (and we do not even have to worry about very small times since $\h_w$ vanishes around 0) we see that $B_w(h)$ is a lagrangian distribution of submanifold
\begin{equation*} 
\begin{aligned}
\L_0 
& = \singl{(x,\partial_x \psi), x \in \tilde \G(0,2\t_0)}  = \singl{ \vf^{t_x} (z_x,\x_x), x\in\tilde \G(0,2\t_0)}\\
&  = \singl{\vf^t(z,\x), t \in ]0,2\t_0] , (z,\x) \in \negg}
\end{aligned}
\end{equation*}
and of principal symbol
\begin{equation*}
b_w(x) = i (2\pi) ^{\frac {d+1-n}2} \frac{e^{\frac {i\pi}4 \sgn  \Hess \tilde \p_{x,z_x} (t_x,0,\x_x)}} {\abs{\det  \Hess \tilde \p_{z_x} (t_x,0,\x_x)}^{\frac 12} } A(z_x) a_0(t_x,x,\x_x) \hat S(\x_x) \h_w(t_x)
\end{equation*}

\begin{proposition} \label{prop-b2k}
For all $w\in\L$ and $k \in \Ii 1 {K_w}$, $B_{w,k}(h)$ is a lagrangian distribution of lagrangian submanifold $\L_{w,k} := \vf^{t_{w,k}} \L_0$. We denote by $b_{w,k}$ and $\p_{w,k}$ the principal symbol and the phase of this distribution.
\end{proposition}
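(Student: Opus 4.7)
The strategy is to reduce the analysis of $B_{w,k}(h)$ to that of $B_w(h)$ through a time-shift identity, and then propagate. First, in \eqref{def-bwk} I would substitute $s = t - t_{w,k} + \t_w$. Since $t_{w,k} \geq \t_w$ and $\h_w$ is supported in $]0, 2\t_w[$, the integration range effectively becomes $s \in ]0, 2\t_w[$, and the semigroup property $U_h^E(s + t_{w,k} - \t_w) = U_h^E(t_{w,k} - \t_w) U_h^E(s)$ then yields
\[
B_{w,k}(h) = U_h^E(t_{w,k} - \t_w)\, B_w(h).
\]
By the discussion preceding the statement (arguments identical to proposition \ref{prop-b0-moy} but with $\h_w$ in place of $\h$), $B_w(h)$ is a lagrangian distribution carried by $\L_0$, with phase $\p$ and principal symbol $b_w$. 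Hence it suffices to prove that for $T = t_{w,k} - \t_w > 0$ fixed, the dissipative propagator $U_h^E(T)$ sends such a lagrangian distribution to a lagrangian distribution on $\vf^T\L_0$.

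To achieve this, I would decompose $[0,T]$ into finitely many subintervals each of length less than $3\t_0$, so that on each piece proposition \ref{prop-bkw} (the dissipative WKB construction) applies. Starting from the WKB representation $B_w(x,h) \sim b_w(x) e^{\frac{i}{h}\p(x)}$ and propagating by $U_h^E(3\t_0)$, proposition \ref{prop-bkw} produces a new lagrangian distribution whose phase solves the Hamilton--Jacobi equation associated to $p - E_0$ (so the underlying submanifold evolves by $\vf^{3\t_0}$) and whose symbol satisfies a transport equation damped by $V_2$ and $\Im E_1$ along the trajectories. Iterating the construction on the successive subintervals and composing phases and symbols yields the desired lagrangian structure on $\vf^T\L_0$, which coincides microlocally near $w$ with $\L_{w,k} = \vf^{t_{w,k}}\L_0$ up to a translation of parameter along the Hamiltonian flow. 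The phase $\p_{w,k}$ and symbol $b_{w,k}$ are then read off explicitly, the symbol acquiring the damping factor $\exp\bigl(-T\Im E_1 - \int_0^T V_2(\bar x(s,z_{w,k},\x_{w,k}))\,ds\bigr)$ inherited from the transport equation.

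The main obstacle I expect is the possible appearance of caustics along the evolution: as $t$ grows, the flow-out $\vf^t\L_0$ may fail to project diffeomorphically onto $\R^n_x$, and then $\p_{w,k}$ cannot be globally written as a function of $x$ alone. However, only a microlocal description near $w$ is required for the proposition, and corollary \ref{dist-e0} guarantees that contributions from distinct times $t_{w,j}$ are well separated. Near the single trajectory $(z_{w,k},\x_{w,k})$ one can work locally with a generating function, introducing auxiliary Fourier variables when the lagrangian fails to be a graph at $w$, and then appeal to the standard theory of lagrangian distributions. The dissipative character of $U_h^E$ enters only through a positive multiplicative factor in the amplitude, and so introduces no essential difficulty beyond what proposition \ref{prop-bkw} already handles.
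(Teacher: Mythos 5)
Your proposal is correct and follows essentially the same route as the paper: the change of variables giving $B_{w,k}(h)=U_h^E(t_{w,k}-\t_w)B_w(h)$, then the fact that the (dissipative) propagator sends a lagrangian distribution on $\L_0$ to one on the flowed-out submanifold, with $V_2$ and $\Im E_1$ affecting only the amplitude. The paper simply cites the self-adjoint result and remarks that the WKB computation shows the imaginary part does not touch the phase, whereas you sketch that propagation step (iterated short-time WKB, microlocal treatment of caustics near $w$) in more detail; this is a faithful expansion of the same argument.
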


\begin{remark*}
Again, with \eqref{h6} this means that $B_{w,k}(h) = e^{\frac ih \p_{w,k}} b_{w,k} + o(1)$, but with assumption \eqref{h6top} we can write $B_{ w,k}(h) = e^{\frac ih \p_{w,k}} \tilde b_{w,k}(h) + O(h^\infty) $ where $\tilde b_{w,k}(h)$ is a classical symbol of principal symbol $b_{w,k}$.
\end{remark*}

\begin{proof}
We have:
\[
\begin{aligned}
B_{w,k}(h)
& = \frac ih \int_{t=0}^{\infty} \h_w(t-t_{w,k}+\t_w) U_h^E(t)  \Op(f) S_h \, dt\\
& = \frac ih \int_{t=-t_{w,k}+\t_w}^{\infty} \h_w(t)U_h^E(t+t_{w,k}-\t_w)   \Op(f) S_h \, dt \\
& = U_h^E(t_{w,k}-\t_w) B_w(h)
\end{aligned}
\]
It is known that $e^{-\frac {i(t_{w,k}-\t_w)} h (\huh-E_h)}$ turns a lagrangian distribution of submanifold $\L_0$ into a lagrangian distribution of submanifold $\vf^{t_{w,k}-\t_w} \L_0$ (see \cite{sogge,evansz}). We can similarly see that this also applies to $U_h^E(t_{w,k}-\t_w)$. Computations are actually close to what is done for WKB method, where we see that the imaginary part does not affect the phase factor but only the amplitude. Here again $V_2$ only appears in the symbol $b_{w,k}$ of the lagrangian distribution.
\end{proof}

We give another property of $B_{w,k}$ we are going to use in section \ref{sec-caract}:

\begin{proposition}  \label{prop-b2nul}
Let $w \in \L$. For all $k\in\Ii 1 {K_w}$ we have:
\begin{equation*}
(\hh-E_h) B_{w,k}(h) = 0 \quad \text{microlocally near $w$}
\end{equation*}
\end{proposition}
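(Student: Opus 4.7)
My plan is to exploit the fact that applying $\hh - E_h$ to the propagator $U_h^E(t)$ produces a time derivative, and then integrate by parts so that the cutoff $\h_w$ is replaced by its derivative $\h_w'$, which vanishes near the critical time $\t_w$. Concretely, since $ih \partial_t U_h^E(t) = (\hh - E_h) U_h^E(t)$ and $(\hh - E_h)$ passes through the time integral, I would write
\begin{equation*}
(\hh - E_h) B_{w,k}(h) = -\int_0^{\infty} \h_w(t - t_{w,k} + \t_w)\, \partial_t \!\left[U_h^E(t) \Op(f) S_h\right] dt.
\end{equation*}
Integration by parts then gives
\begin{equation*}
(\hh - E_h) B_{w,k}(h) = \int_0^{\infty} \h_w'(t - t_{w,k} + \t_w)\, U_h^E(t) \Op(f) S_h \, dt,
\end{equation*}
provided the boundary contribution at $t=0$ vanishes. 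This boundary term is $\h_w(\t_w - t_{w,k}) \Op(f) S_h$; since $\t_w = \min(t_{w,1}, \t_0) \leq t_{w,k}$ and $\h_w$ is supported in $]0, 2\t_w[$, we have $\t_w - t_{w,k} \leq 0$, and in the equality case $\h_w(0) = 0$ by the open-support condition. So the boundary terms drop out.

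Next I would apply Proposition \ref{prop-terme-reste} to $\tilde\h(t) := \h_w'(t - t_{w,k} + \t_w)$. The hypothesis of that proposition requires $\tilde\h$ to vanish near every $t_{w,j}$ (and near $0$ if $w\in\negg$). At $t = t_{w,k}$, the argument of $\h_w'$ equals $\t_w$, where $\h_w \equiv 1$ in a neighborhood, so $\h_w'$ is identically zero there. For the other $t_{w,j}$, $j \neq k$, Corollary \ref{dist-e0} gives $|t_{w,j} - t_{w,k}| \geq 3\t_0 \geq 3\t_w > 2\t_w$, which places $t_{w,j}$ outside the translated support $]t_{w,k} - \t_w, t_{w,k} + \t_w[$ of $\tilde\h$. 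The argument for $t = 0$ (when $w \in \negg$) is essentially the same as the boundary-term computation above. Thus $\tilde\h$ satisfies the hypothesis of Proposition \ref{prop-terme-reste}.

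Consequently there is a neighborhood $\Vc$ of $w$ (and a corresponding $\Gc_{w,T} \subset \Gc$ for the source microlocalization) such that for any $q \in C_0^\infty(\R^{2n})$ supported in $\Vc$,
\begin{equation*}
\Opw(q)(\hh - E_h) B_{w,k}(h) = \tfrac{h}{i}\, \Opw(q) \cdot \tfrac{i}{h} \!\int_0^T \tilde\h(t) U_h^E(t) \Op(f) S_h \, dt = \bigo h 0 (h^\infty),
\end{equation*}
which is exactly what it means for $(\hh - E_h) B_{w,k}$ to be microlocally zero near $w$. I do not expect any serious obstacle: the one subtle point is verifying that the translated support of $\h_w'$ meets none of the $t_{w,j}$, and this is handled cleanly by the quantitative separation provided by Corollary \ref{dist-e0} together with the fact that $\t_w \leq \t_0$.
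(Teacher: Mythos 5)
Your argument is essentially the paper's own proof: you apply $\hh-E_h$ under the integral to produce $ih\,\partial_t U_h^E(t)$, integrate by parts so that $\h_w$ is replaced by $\h_w'(\cdot-t_{w,k}+\t_w)$, and conclude with Proposition \ref{prop-terme-reste} (Egorov), since that cutoff vanishes near every $t_{w,j}$ and near $0$. The only difference is that you spell out the boundary-term and support bookkeeping (via Corollary \ref{dist-e0} and $\t_w\leq t_{w,k}$) that the paper leaves implicit; this is correct.
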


\begin{proof}
We have:
\begin{equation*}
\begin{aligned}
(\hh-E_h) B_{w,k}(h)
& = (\hh-E_h) \frac ih \int _0^{+\infty}  \h_w (t - t_k + \t_w) U_h^E(t) \Op(f) S_h\,dt  \\
& =  - \int _0^{+\infty}  \h_w (t - t_k + \t_w) \partial_t U_h^E(t) \Op(f) S_h\,dt\\
& =   \int _0^{+\infty}    \h'_w (t - t_k + \t_w)  U_h^E(t) \Op(f) S_h\,dt
\end{aligned}
\end{equation*}
As $\partial_t \h_w (t -t_k + \t_w)$ is zero near $t= t_j$ for $j\in\Ii 1 {K_w}$ (and $t=0$), the result is given by Egorov theorem as in the proof of theorem \ref{prop-terme-reste}.
\end{proof}

\subsection{Convergence toward a partial semiclassical measure}

We are now ready to give the semiclassical measure for $u_h^T$.

\begin{theorem} \label{prop3.6}
Let $T \geq 0$. There exists a nonnegative Radon measure $\m_T$ on $\R^{2n}$ such that for all $q\in C_0^\infty(\R^{2n})$ we have:
\begin{equation}  \label{conv-tot}
\innp{\Opw (q) u_h^T}{u_h^T} \limt h 0 \int q \, d\m_T
\end{equation}
\end{theorem}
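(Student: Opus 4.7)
The plan is to prove existence of $\m_T$ locally in phase space and then glue via a partition of unity. First, by Corollary \ref{cor-loc-surf} (whose analogue for $u_h^T$ follows from Proposition \ref{prop-loc-surf}) it is enough to treat $q \in C_0^\infty(\R^{2n})$ supported arbitrarily close to the energy surface $p\inv(\singl{E_0})$; moreover, by Proposition \ref{prop-loc-surf}(ii) we may replace $S_h$ by $\Op(f)S_h$ with $f \in \symbor$ supported in the neighborhood $\Gc$ of $\bar{\negg_0}$ introduced in Section \ref{sec-tps-ptt}.

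Fix $w \in p\inv(\singl{E_0})$. By Corollary \ref{dist-e0} the pre-image times $t_{w,k}$ are separated by at least $3\t_0$, so $K_w^T \leq T/(3\t_0) + 1$ is finite. Choose disjoint cutoffs in $t$ supported in narrow intervals around each $t_{w,k}$, $k\in \Ii {\d_w}{K_w^T}$ (with $\d_w = 0$ if $w\in\negg$ and $\d_w = 1$ otherwise), plus a remainder $\tilde\h$ vanishing near every such $t_{w,k}$ (and near $0$ when $w \in \negg$). By Proposition \ref{prop-terme-reste}, the contribution of $\tilde\h$ to $\Opw(q)u_h^T$ is $\bigo h 0 (h^\infty)$ for $q$ supported in a sufficiently small neighborhood $\Vc_{w,T}$ of $w$, so that modulo this error
\begin{equation*}
\Opw(q)\, u_h^T = \sum_{k=\d_w}^{K_w^T} \Opw(q)\, B_{w,k}(h) \qquad \text{in } L^2(\R^n).
\end{equation*}

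Expand now $\innp{\Opw(q) u_h^T}{u_h^T}$ as a sum of diagonal and off-diagonal brackets in $j,k \in \Ii{\d_w}{K_w^T}$. Each $B_{w,k}$ is a lagrangian distribution on $\L_{w,k}$ with principal symbol $b_{w,k}$ (Propositions \ref{prop-b0-moy} and \ref{prop-b2k}), hence the diagonal bracket $\innp{\Opw(q) B_{w,k}}{B_{w,k}}$ converges as $h \to 0$ to the integral of $q$ against the natural push-forward measure on $\L_{w,k}$ weighted by $\abs{b_{w,k}}^2$ (the standard lagrangian calculation via stationary phase). For $j\neq k$, the oscillating factor $e^{i(\p_{w,j}-\p_{w,k})/h}$ is non-stationary precisely off $\L_{w,j}\cap\L_{w,k}$, which by Assumption \eqref{hyp4} in the form \eqref{hyp4bis} has measure zero in each lagrangian. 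Excising a neighborhood of this intersection of arbitrarily small mass, applying stationary phase on the complement, and then letting this cutoff size go to $0$ after $h \to 0$ makes these cross terms vanish, producing a nonnegative Radon measure on $\Vc_{w,T}$.

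Finally, one covers $\supp q \cap p\inv(\singl{E_0})$ (compact modulo the first reduction) by finitely many such $\Vc_{w,T}$ and patches the local contributions via a partition of unity to obtain $\m_T$; its nonnegativity follows from taking $q \geq 0$ together with a sharp G{\aa}rding estimate in the weighted spaces $L^{2,\pm\a}$. The main obstacle is the vanishing of the off-diagonal brackets: \eqref{hyp4bis} gives only a measure-zero intersection, not clean intersection, so a direct stationary-phase argument is not available and one must approximate the intersection by small neighborhoods and exploit the lagrangian structure quantitatively, mirroring the non-trapping treatment of \cite{bony}.
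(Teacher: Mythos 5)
Your route is essentially the paper's: localize near a point $w$, split the time integral with Proposition \ref{prop-terme-reste}, treat each piece as a lagrangian distribution, kill the cross terms via \eqref{hyp4bis} by excising small neighborhoods of the intersections, and glue by a partition of unity. The genuine gap is in the diagonal term for $k=0$ when $w\in\negg$, i.e.\ exactly where the measure is created. Proposition \ref{prop-b0-moy} only says that $B_{w,0}$ is a lagrangian distribution on $\tilde\G(\t_0)\setminus\Uc$ for a neighborhood $\Uc$ of $\G_0$: near the source the stationary point of the phase degenerates (by \eqref{det-hess}, $\abs{\det\Hess\tilde\p_{x,z_x}}\sim t_x^{\,n-d-1}$, and $b_0$ blows up as $t_x\to 0$), so "the standard lagrangian calculation" cannot be applied to $\innp{\Opw(q)B_{w,0}}{B_{w,0}}$ on all of $\Vc_{w,T}$. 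To conclude you must split $B_{w,0}=vB_{w,0}+(1-v)B_{w,0}$ with $v$ supported in $\tilde\G(\t_1)$ and prove that $\nr{vB_{w,0}}_{L^2}$ is arbitrarily small, uniformly in $h$, as $\t_1\to 0$; this is exactly Proposition \ref{lem3.5}, a substantial dyadic stationary/non-stationary phase estimate which your proposal never invokes. Without it, the convergence of the small-time bracket, hence the very existence of $\m_T$ near $\negg$, is not established.

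A secondary point: replacing $S_h$ by $\Op(f)S_h$ "by Proposition \ref{prop-loc-surf}(ii)" is not enough inside the quadratic form, since the error $u_h^T-v_h^T$ is only $O(\sqrt h)$ while $u_h^T$ has norm $O(h^{-1/2})$, so the cross terms are a priori $O(1)$. The paper uses item (iii) as well — the error is microlocalized off the $E_0$-shell modulo $O(h^{3/2})$ — together with the localization of $q$ near the energy surface (see \eqref{first-dec}); your first reduction makes this workable, but it has to be said. Finally, nonnegativity needs no G{\aa}rding-type argument: the local measures are defined by the nonnegative densities $\abs{b_{w,k}}^2$, and one only has to check consistency of the local definitions on overlaps before gluing.
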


\begin{proof}
\noindent{\bf 1. Localization around a point $w\in \R^{2n}$.}
We are going to show that for any $w\in \R^{2n}$ and $T \geq 0$, there is a neighborhood $\Vc_{w,T} \subset \R^{2n}$ such that for all $q \in C_0^\infty(\R^{2n})$ supported in $\Vc_{w,T}$ we have:
\begin{equation}  \label{conv-loc}
\innp{\Opw(q) u_h^T }{u_h^T} \limt h 0 \int q \, d\m_{w,T}
\end{equation}
where $\m_{w,T}$ is a Radon measure on  $\Vc_{w,T}$. If $w_1,w_2 \in \R^{2n}$ are such that $\Vc_{w_1,T} \cap \Vc_{w_2,T} \neq \emptyset$, then the two measures $\m_{w_1,T}$ and $\m_{w_2,T}$ coincide on $\Vc_{w_1,T} \cap \Vc_{w_2,T}$ (we only have to consider the two versions of \eqref{conv-loc} for $q \in C_0^\infty(\R^{2n})$ supported in $\Vc_{w_1,T} \cap \Vc_{w_2,T}$). Thus we can define the measure $\m_T$ on $\R^{2n}$ as the only measure which coincides with $\m_{w,T}$ on $\Vc_{w,T}$ for all $w \in \R^{2n}$. Then for all $q\in C_0^\infty(\R^{2n})$ a partition of unity and a finite numbers of applications of \eqref{conv-loc} give \eqref{conv-tot}.

So let $w \in \R^{2n}$. If $w \notin (\negg \cup \L)$ we can choose a neighborhood $\Vc_{w}$ of $w$ which does not intersect $ \negg \cup \L$. Proposition \ref{prop-terme-reste} with $\tilde \h = 1$ on $[0,T]$ shows:
\[
\innp{ \Opw(q) u_h^T }{u_h^T} \limt h 0 0
\]
for all $q \in C_0^\infty(\R^{2n})$ supported in $\Vc_w$. Hence we set $\m_{w,T}= 0$ on $\Vc_{w,T}$. This proves that if $\m_T$ exists then we must have:
\begin{equation} \label{mt-zero}
\m_T = 0 \quad \text {outside } \negg \cup \L
\end{equation}
We now assume that $w \in \negg \cup \L$.\\

\noindent {\bf 2. Localization around relevant times.}
%
Let $\d_w = 1$ if $w \in \negg$ and $\d_w = 0$ otherwise. We recall that $\h$ and $\h_w$ have been chosen in sections \ref{sec-tps-ptt} and \ref{sec-tps-inter}. By corollary \ref{dist-e0}, if $w\in \negg$ then $t_{w,1} \geq 3\t_0$ so for all $w \in \negg \cup \L$ supports of functions $\d_w \h$ and $\h_w(\cdot-t_{w,k}+\t_w)$ for $1\leq k \leq K_w^T$ are pairwise disjoint, so we can consider a function $\tilde \h \in C_0^\infty(\R,[0,1])$ such that:
\begin{equation*}
\forall t\in[0,T], \quad \d_w \h(t) + \sum_{k=1}^{K_w^T}  \h_w (t-t_k + \t_w) + \tilde \h(t) = 1
\end{equation*}

By proposition \ref{prop-terme-reste} there exists a function $f_{w,T} \in C_0^\infty(\R^{2n})$ equal to 1 around $\negg$ and a neighborhood $\Vc_{w,T}$ of $w$ in $\R^{2n}$ such that for $q$ supported in $\Vc_{w,T}$ we have in $L^2(\R^n)$:
\begin{equation*} 
\Opw(q)v_h^T = \Opw (q) \tilde u_h^T +\bigo h 0 (h^\infty)
\end{equation*}
where:
\[
v_h^T = \frac ih \int_0^T   U_h^E(t) \Op(f_{w,T}) S_h \, dt \quad \text{and}\quad \tilde u_h^T = \d_w B_{w,0}^T + \sum_{k=1}^{K_w^T} B_{w,k}^T
\]
with $B_{w,0}^T$ defined in \eqref{eq3.11} and the $B_{w,k}^T$ given by \eqref{def-bwk} with $f$ replaced by $f_{w,T}$. 
Let $\tilde g$ be given by proposition \ref{prop-loc-surf}. We have:
\begin{eqnarray}
\lefteqn{\innp{\Opw(q) \tilde u_h^T}{\tilde u_h^T}}   \label{first-dec}  \\
\nonumber && = \innp{\Opw(q) \left( v_h^T + (1-\tilde g)(\huh) (u_h^T - v_h^T) + O(h)\right)}{ v_h^T + (1-\tilde g)(\huh) (u_h^T - v_h^T) + O(h)}\\
\nonumber && = \innp{\Opw(q) v_h^T}{v_h^T} + \innp{\Opw(q)(u_h^T-v_h^T)} {(1-\tilde g)(\huh)v_h^T}\\
\nonumber && \quad  + \innp{\Opw(q)(1-\tilde g)(\huh)v_h^T} {u_h^T-v_h^T} + \bigo h 0 (\sqrt h)\\
\nonumber && = \innp{\Op(q) \tilde u_h^T}{\tilde u_h^T} + \bigo h 0 (\sqrt h)
\end{eqnarray}

\noindent{\bf 3. Definition of the measure $\m_{w,T}$.}
For $k\in\Ii 1 {K_w^T}$ and $\O$ a borelian set in $\Vc_{w,T}$ we define:
\begin{equation*}
\m_{w,T,k}(\O) = \int_{\R^n} \!\!  \1 \O (x, \partial \p_{w,k} (x)) \abs{b_{w,k}(x)}^2 \, dx  
\, ; \quad
\m_{w,T,0}(\O) = 
\d_w \int_{\R^n} \!\! \1 \O (x, \partial \p (x)) \abs{b_{0}(x)}^2 \, dx
\end{equation*}
and finally:
\begin{equation*}
\m_{w,T}(\O) = \sum_{k=0}^{K_w^T} \m_{w,T,k}
\end{equation*}
which defines a measure on $\Vc_{w,T}$. Note that all these measures are nonnegative. $\Vc_{w,T}$ and $\m_{w,T}$ are now fixed, and we have to prove that for any $\e >0$ and $q \in C_0^\infty(\R^{2n})$ supported in $\Vc_{w,T}$, there is $h_0 >0$ such that for all $h \in ]0,h_0]$:
\begin{equation} \label{contr-fin}
\abs{\innp{\Opw(q) u_h^T}{u_h^T} - \int q \, d\m_{w,T}} \leq \e
\end{equation}

Let $\e > 0$ and $q$ supported in $\Vc_{w,T}$. \eqref{first-dec} yields:
\begin{equation} \label{contr-decomp}
\abs{ \innp{\Opw (q)  u_h^T}{ u_h^T} - \innp{\Opw (q) \tilde u_h^T }{\tilde u_h^T}} \leq \frac \e 9
\end{equation}
with $h \in ]0,h_0]$ for some $h_0 > 0$.\\

\noindent{\bf 4. Self-intersections of $\L$.}
Let $j,k\in \Ii 1 {K_w}$ with $j\neq k$ ($j,k\in \Ii 0 {K_w}$ if $w \in \negg$). $\L_{w,j}\cap \L_{w,k}$ is a closed set of measure 0 in the smooth manifold $\L_{w,j}$, hence by regularity of the measure on $\L_{w,j}$, for all $m\in\N$  we can find an open subset $U_j^m$ of $\L_{w,j}$ of measure less than $\frac 1 m$ such that $\L_{w,j} \cap \L_{w,k} \subset U_j^m$. We can find an open sett $V_j^m$ in $\R^{2n}$ of measure less than $\frac 1 m$ such that $U_j^m = V_j^m \cap \L_{w,j}$, and by Uryshon lemma there exists a function $\g_j^m \in C_0^\infty(\R^{2n},[0,1])$ equal to 1 outside $V_j^m$ and zero in a neighborhood of $\L_{w,j}\cap \L_{w,k}$. We construct similarly a function $\g_k^m$ interverting $j$ and $k$, we set $\g_{j,k}^m = \g_j^m \g_k^m$ and finally:
\begin{equation}
\g_{m} = \prod_{1\leq j<k\leq K_w^T} \g_{j,k}^m  \hspace{1cm} \left( \text{ or }\prod_{0\leq j<k\leq K_w^T} \g_{j,k}^m \text{ if } w\in\negg\right)
\end{equation}
so that the sets $\L_{w,k}\cap \Vc_{w,T}$ for $1\leq k \leq K_w^T$ (or $0\leq k \leq K_w^T$) do not intersect on the support of $\g_{m}$ and:
\begin{equation}
\operatorname{mes}_\L \left(\supp(1 - \g_{m}) \cap  \left(\cup_{j=0}^{K_w^T}\L_{w,k} \right)\right) \leq \frac 1 m
\end{equation}

For all $k \in \Ii 0 {K_w^T}$, the support of the function $x \mapsto (1-\g_m)(x,\partial \p_k (x))$ is of measure less than $\frac C m$ in $\R^n$ where $C$ only depends on $\G$.
$\Opw (\g_m) B^T_{w,k}$ is a lagrangian distribution microlocally supported in $\L_{w,k} \cap \supp (\g_m)$ with symbols uniformly bounded in $h$ and $k$, so there is $c \geq 0$ such that for all $h \in ]0,h_0]$:
\begin{equation}  \label{contr-recoup}
\abs{ \tilde u_h^T- \Opw(\g_m) \tilde u_h^T } \leq \frac {c} {m}
\end{equation}

Moreover, for $j\neq k \in \Ii {0} {K_w}$ the distributions $\Opw(q  \g_m) B^T_{w,j}$ and $\Opw(\tilde q \g_m) B^T_{w,k}$ have disjoint microsupports, so we have:
\begin{equation} \label{contr-crois}
\innp{\Opw(q   \g_m) B^T_{w,j}} {\Opw(\tilde q \g_m) B^T_{w,k}} = \bigo h 0 (h^\infty)
\end{equation}
Taking $m \in \N$ large enough and using \eqref{contr-decomp}, \eqref{contr-recoup} et \eqref{contr-crois}, we obtain for all $h \in ]0,h_0]$:
\begin{equation}  \label{decom-total}
\abs{  \innp{\Opw(q) u_h}{u_h} - \d_w \innp{\Opw(q\g_m) B_{w,0}^T}{B_{w,0}^T} -  \sum_{k=1}^{K_w^T} \innp{\Opw(q \g_m) B^T_{w,k}}{B^T_{w,k}}} \leq \frac \e 3 
\end{equation}

\noindent{\bf 5. Convergence for intermediate times.}

Let $k \in \Ii 1 {K_w^T}$. We know that $B^T_{w,k}$ is a lagrangian distribution of phase $\p_{w,k}$ and of principal symbol $b_{w,k}$, hence we have:
\begin{equation*} 
\innp{\Opw (q) \Opw (\g_m) B^T_{w,k}}{B^T_{k,w}} = \int_{\R^n} q(x,\partial \p_{w,k}(x)) \g_m(x,\partial \p_{w,k} (x)) \abs{b_{w,k}(x)} ^2 \, dx  +  \littleo h 0 (1)
\end{equation*}
If $m$ is large enough and $h_0$ small enough, we have for all $h \in ]0,h_0]$:
\begin{equation} \label{conv-inter}
\abs{ \innp{\Opw (q) \Opw (\g_m) B^T_{w,k}}{B^T_{w,k}} - \int_{\R^n} q(x,\partial \p_{w,k} (x)) \abs{b_{w,k}(x)} ^2 \, dx} \leq \frac \e {3K_w^T}
\end{equation}

\noindent{\bf 6. Convergence for small times.}

It only remains to consider the term $\d_w \innp{\Opw (q) \Opw (\g_m) B^T_{w,0}}{B^T_{w,0}}$. We assume that $w$ belongs to $\negg$.

Let $\t_1 \in ]0,\t_0]$ and $v\in C^\infty_0(\R^{2n},[0,1])$ such that $\supp v \subset \tilde \G (\t_1)$ and $v$ is equal to 1 in a neighborhood of $\supp A$. By proposition \ref{lem3.5}, if $\t_1>0$ is small enough we have:
\begin{equation}
\nr{v B^T_{w,0}}_{L^2(\R^n)} \leq \frac \e 6
\end{equation}

On the other hand, since $(1-v)$ vanishes around $\supp A$, we can write $(1-v(x))B_{w,0}^T$ as a lagrangian distribution (see proposition \ref{prop-b0-moy}):
\begin{eqnarray*}
\lefteqn{\innp{\Opw(q)\Opw(1-v) \Opw(\g_m) B^T_{w,0}}{B^T_{w,0}}}\\
&& = \int_{\R^n} (q \g_m)(x, \partial \p(x)) (1-v(x)) \abs{b_0(x)}^2 \, dx + \littleo h 0 (1)
\end{eqnarray*}
Thus, if $\t_1$ and $h_0$ are small enough, then for all $h \in ]0,h_0]$:
\begin{equation} \label{conv-petit}
\abs{\innp{\Opw(q)\Opw(1-v) \Opw(\g_m) B^T_{w,0}}{B^T_{w,0}} - \int q \, d\m_{w,0}} \leq \frac \e 6
\end{equation}

\noindent {\bf 7. Conclusion.}
According to \eqref{decom-total}, \eqref{conv-inter} and \eqref{conv-petit}, we can conclude that \eqref{contr-fin} holds.
\end{proof}

\section{Convergence toward a semiclassical measure}

\subsection{Large times control}\label{sec-tps-gd}

For $R > 0$, $d> 0$ and $\s \in ]-1,1[$ we note:
\begin{eqnarray*}
\G_\pm(R,d,\s)&  =& \singl{(x,\x)\in\R^{2n} \tqe \abs x \geq R,   \abs \x  \geq d  \text{ and } \innp x \x \gtrless \s \abs x \abs \x}
\\
\G_\pm(d,\s)&  =& \singl{(x,\x)\in\R^{2n} \tqe \abs \x \geq  d   \text{ and } \innp x \x \gtrless \s \abs x \abs \x}
\end{eqnarray*}

As mentionned in the introduction, the following proposition states that the outgoing solution $u_h$ is microlocally zero in the incoming region. The proof of this proposition is postponed to section \ref{sec-incoming}.

\begin{proposition} \label{prop-incoming}
Let $d>0$, $\s \in ]0,1[$ and $E_h$ such that $\Im E_h >0$ or $E_h$ is positive and satisfies \eqref{hyp2}. Then there exists $R >0$ such that if $\o_-,\o \in \Sc_0$ are supported in $\G_-(R,d,-\s)$ (respectively outside $\G_-(R_1,d_1,-\s_1)$ for some $R_1<R$, $d_1<d$ and $\s_1<\s$) then:
\[
\nr{\Op(\o_-) (\hh -(E_0+i0))\inv \Op(\o)} = \bigo h 0 (h^\infty)
\]
\end{proposition}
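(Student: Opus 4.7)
The plan is to adapt the Isozaki-Kitada modifier construction to the dissipative setting and reduce the claim to the corresponding incoming estimate for the free resolvent $(\hoh - z)\inv$. Since $(\hh - (E_0+i0))\inv$ is obtained as the limit of $(\hh - z)\inv$ when $z \to E_0 + i0$ from the upper half plane (by the limiting absorption principle of \cite{royer}), it suffices to prove the estimate for $\Im z > 0$ uniformly up to $\Im z = 0$ and then pass to the limit.

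The main ingredient is the construction of a Fourier integral operator
\[
J_- u(x) = \frac 1 {(2\pi h)^n} \int_{\R^{2n}} e^{\frac ih (S_-(x,\x) - \innp y \x)} a_-(x,\x,h) u(y)\, dy\, d\x
\]
with phase $S_-(x,\x) = \innp x \x + \p_-(x,\x)$ and amplitude $a_-(h) \sim \sum_{j \geq 0} h^j a_{-,j}$, both supported in a slightly weakened incoming region $\G_-(R_2, d_2, -\s_2)$ with $R_1 < R_2 < R$, $d_1 < d_2 < d$, $\s_1 < \s_2 < \s$, intertwining the two operators in the sense that
\[
(\hh - z) J_- - J_- (\hoh - z) = R_-(h,z),
\]
where $R_-(h,z) = \bigo h 0 (h^\infty)$ as a bounded operator on $L^2(\R^n)$ uniformly for $\Im z \geq 0$ small. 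The phase $\p_-$ solves the classical eikonal $\abs{\partial_x S_-}^2 + V_1 = \abs \x^2$ by integration along the hamiltonian flow of $p$, which escapes to infinity backwards in $\G_-$; the decay \eqref{h1} of $V_1$ yields $\abs{\partial^\a \p_-(x,\x)} \leq c_\a \pppg x ^{1-\rho-\abs \a}$. The only new feature of the dissipative case appears in the leading transport equation
\[
2\partial_x S_- \cdot \partial_x a_{-,0} + (\D S_- + V_2)\, a_{-,0} = 0,
\]
the extra $V_2$ coming from the $-ihV_2$ term in $\hh$. This is still a real transport equation along the flow of $\partial_x S_-$, integrable with the boundary condition $a_{-,0} \to 1$ at infinity, and the decay of $V_2$ from \eqref{h1} keeps the exponential damping factor along incoming trajectories uniformly bounded. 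Recursive transport equations determine the $a_{-,j}$ for $j \geq 1$, summed by Borel into $a_-(h)$.

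Once $J_-$ is built, the resolvent identity
\[
(\hh - z)\inv J_- = J_-(\hoh - z)\inv - (\hh - z)\inv R_-(h,z)(\hoh - z)\inv,
\]
together with the bound $\nr{\pppg x^{-\a} (\hh - z)\inv \pppg x^{-\a}} = O(h\inv)$ from \cite{royer} and its selfadjoint free analogue, shows that the remainder contributes $O(h^\infty)$ between suitable weighted spaces. Microlocalizing $\Op(\o_-)$ onto $\supp a_-$, where $J_-$ is close to the identity and microlocally invertible, the estimate reduces modulo $O(h^\infty)$ to the free bound
\[
\nr{\Op(\o_-)(\hoh - z)\inv \Op(\o)} = O(h^\infty).
\]
This last inequality is classical: writing $(\hoh - (E_0 + i\e))\inv = \frac i h \int_0^\infty e^{-\frac{it}h (\hoh - E_0 - i\e)}\, dt$ and using that the free flow $\vf_0^t(x,\x) = (x + 2t\x,\x)$ moves any point of $\supp \o$ out of $\G_-(R,d,-\s)$ for every $t \geq 0$ (thanks to $R_1 < R$, $d_1 < d$, $\s_1 < \s$), a non-stationary phase / integration-by-parts argument in $t$ and $\x$ gives $O(h^\infty)$ uniformly in $\e \geq 0$.

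The main technical obstacle is the uniform-in-$h$ control of the amplitudes $a_{-,j}$ and of the remainder $R_-(h,z)$ as elements of suitable symbol classes on $\G_-(R_2, d_2, -\s_2)$: the estimates $\abs{\partial^\a a_{-,j}(x,\x)} \leq c_\a \pppg x ^{-\abs \a}$ must be propagated along the incoming flow by induction on $j$, exploiting the decay \eqref{h1} of $V_1$ and $V_2$. The dissipative term only adds an exponentially damped factor along trajectories and is therefore harmless; once $J_-$ is in hand, the self-adjoint scheme of \cite{robertt89} transposes essentially verbatim.
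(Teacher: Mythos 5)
There is a genuine gap, and it sits exactly at the point your last sentence waves away. Your construction takes a \emph{time-independent} Isozaki--Kitada amplitude $a_{-,0}$ solving $2\partial_x S_-\cdot\partial_x a_{-,0}+(\D S_-+V_2)a_{-,0}=0$ with the normalization $a_{-,0}\to 1$ at infinity; such a solution is obtained by integrating $(\D S_-+V_2)$ along the escaping hamiltonian trajectories all the way to infinity, i.e.\ it involves $\exp\bigl(-\int_0^{\infty}\bigl(\D_x S_-+V_2\bigr)(r(s,x,\x),\x)\,ds\bigr)$. The term $\D_x S_-$ is short range ($O(\pppg{x}^{-1-\rho})$, hence $O(\pppg{s}^{-1-\rho})$ along the flow) and causes no trouble, but hypothesis \eqref{h1} only gives $V_2(x)=O(\pppg x^{-\rho})$ with $\rho>0$ possibly $\leq 1$: along an escaping trajectory $V_2(r(s))\sim\pppg s^{-\rho}$ is in general \emph{not} integrable, so the boundary condition at infinity cannot be imposed, the amplitude is not well defined, and its derivatives do not satisfy the symbol estimates $\abs{\partial^\a a_{-,0}}\leq c_\a\pppg x^{-\abs\a}$ needed for $J_-$, for the factorization $\Op(\o_-)\approx J_-E_-^*$, and for the $O(h^\infty)$ remainder. ``Uniformly bounded damping factor'' is not the issue; convergence and symbol-class membership are. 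This is precisely the obstruction the paper points out: following the self-adjoint scheme of Robert--Tamura/Wang verbatim would lead to $a_0(x,\x)=\exp\bigl(\int_0^\infty F(t,x,\x)\,dt\bigr)$, ``which may have no sense for a long range imaginary part of the potential.'' The paper's proof therefore replaces the stationary modifier by \emph{time-dependent} amplitudes $a_{j,\pm}(t)$ solving the transport equations with $\mp V_2$, bounded uniformly in time thanks to $V_2\geq 0$ but only polynomially controlled in $t$ in their derivatives, writes $(\hh-z)\inv=\frac ih\int_0^\infty e^{\frac{it}h z}U_h(t)\,dt$, and compensates the loss in $t$ by the decay estimates of Proposition \ref{propdect} (repeated non-stationary-phase integrations, at the price of the extra weight $\pppg x^{-\n}$ with $\n$ large) together with the $O(h\inv)$ weighted resolvent bound of \cite{royer}.

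Two secondary points. First, your reduction ``microlocalizing $\Op(\o_-)$ onto $\supp a_-$, where $J_-$ is close to the identity'' is not a proof step: to move a modifier through $(\hh-z)\inv$ on the left you need an intertwining with $\hh^*$ (the adjoint of $(\hh-z)J_--J_-(\hoh-z)=R_-$ involves $\hh^*$, not $\hh$), whose transport equation carries $+V_2$ and must be handled on the correct time direction; this is why the paper constructs both families $a_{j,\pm}$ and proves \eqref{estimrob4} for $\hh^*$ before taking adjoints. Second, the remainder terms such as $(\hh-z)\inv R_-(\hoh-z)\inv$ are only controlled between weighted spaces, uniformly as $z\to E_0+i0$, through the limiting absorption bounds; without tracking these weights (the $\pppg x^{-\a}$, $\pppg x^{-\n}$ of Theorem \ref{estimrobert}) the $O(h^\infty)$ claim does not follow from the ingredients you list. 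Your free-resolvent estimate and the geometric fact that the forward free flow never sends $\supp\o$ into $\G_-(R,d,-\s)$ are fine, but they are not where the difficulty of the dissipative, long-range case lies.
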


We now use this proposition to show that for $T$ large enough, $\innp{\Opw(q) u_h^T}{u_h^T}$ is a good approximation of $\innp{\Opw(q) u_h}{u_h}$.

\begin{proposition}  \label{tps-grands}
Let $q\in C_0^\infty(\R^{2n})$ be supported in $p\inv(I)$ and $\e > 0$. Then there exists $T_0 \geq 0$ such that for all $T \geq T_0$ we can find $h_T > 0$ which satisfies:
\[
\forall h \in ]0,h_T], \quad  \abs{\innp{\Opw(q)u_h}{u_h} - \innp{\Opw(q)u_h^T}{u_h^T}} \leq \e
\]
\end{proposition}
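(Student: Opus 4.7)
The starting point is the semigroup identity
\begin{equation*}
u_h - u_h^T = U_h^E(T) u_h,
\end{equation*}
which follows from \eqref{uh-int} by translating the integration variable and using the semigroup property of $U_h^E$. This yields
\begin{equation*}
\innp{\Opw(q) u_h}{u_h} - \innp{\Opw(q) u_h^T}{u_h^T} = \innp{\Opw(q) U_h^E(T) u_h}{u_h} + \innp{\Opw(q) u_h^T}{U_h^E(T) u_h},
\end{equation*}
so it suffices to show that for $v_h \in \{u_h, u_h^T\}$ we have $|\innp{\Opw(q) U_h^E(T) u_h}{v_h}| \leq \e$ for $T \geq T_0(\e)$ and $h \leq h_T$.

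After restricting to the energy shell near $E_0$ via proposition~\ref{prop-loc-surf}, I would decompose $u_h$ microlocally. Since $S_h$ is microlocalized on the bounded set $N\G_0$ (proposition~\ref{microloc-sh}), proposition~\ref{prop-incoming} gives that $u_h$ is microlocally $O(h^\infty)$ in any sufficiently large incoming region $\G_-(R, d, -\s)$, so we need only handle the outgoing and compact pieces. Choose a pseudodifferential partition $\chi_b + \chi_+$ of unity (modulo the incoming region and $O(h^\infty)$) on a neighborhood of the energy shell, with $\chi_b \in C_0^\infty(\R^{2n})$ and $\chi_+$ supported in an outgoing region $\G_+(R_1, d_1, \s_1)$. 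For the compact piece, proposition~\ref{prop-super-egorov} applied with $q$ and $\chi_b$ yields $T_0$ and, for each $T \geq T_0$, some $h_T$ such that $\|\Opw(q) U_h(T) \Opw(\chi_b)\|_{L^2 \to L^2} \leq \e$; the extra factor $|e^{iTE_h/h}| = e^{-T\Im E_h/h} \leq 1$ from $U_h^E(T) = e^{iTE_h/h} U_h(T)$ is harmless since $\Im E_h \geq 0$. For the outgoing piece, writing $U_h(T) \Opw(\chi_+) \simeq \Opw(\chi_+ \circ \vf^{-T}) U_h(T)$ modulo lower-order terms via Egorov (proposition~\ref{prop-Egorov}(iii)) and noting that $\vf^T(\supp \chi_+)$ escapes to spatial infinity while $\supp q$ is compact, we obtain $\Opw(q) U_h(T) \Opw(\chi_+) = O(h^\infty)$ in operator norm.

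The main obstacle is converting these operator-norm bounds into bounds on the pairings $\innp{\Opw(q) U_h^E(T) u_h}{v_h}$: since $\|u_h\|_{L^{2,-\a}} = O(h^{-1/2})$, a naive Cauchy--Schwarz would give $\e \cdot O(h^{-1})$, which diverges. This is handled using the Lagrangian-distribution structure of $u_h$ and $u_h^T$ on compact microlocal sets developed in sections~3 and~4: on such compact sets, $\|\Opw(\chi_b) u_h\|_{L^2}$ and $\|\Opw(\chi_b) u_h^T\|_{L^2}$ are $O(1)$ uniformly in $h$ and in $T$, thanks to the convergence of the series of amplitudes of the Lagrangian distributions (ensured by hypothesis \eqref{hyp4} and the absorption condition \eqref{hyp2}). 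With these uniform local $L^2$ bounds, together with the fact that $\Opw(q)$ maps $L^2$ into $L^{2,\a}$ since $q$ has compact phase-space support, one pairs with $v_h \in L^{2,-\a}$ to obtain the required $O(\e)$ bound on the compact-piece contribution, while the outgoing contribution is $O(h^\infty)$, yielding the proposition.
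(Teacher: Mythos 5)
Your overall skeleton is the same as the paper's: you start from $u_h - u_h^T = U_h^E(T)u_h$, localize near the energy shell (proposition \ref{prop-loc-surf}), kill the incoming piece with proposition \ref{prop-incoming}, send the remaining far/non-incoming piece to infinity by Egorov, and use proposition \ref{prop-super-egorov} to make the compact piece small for large $T$. The problem is the step you yourself identify as the main obstacle: converting operator-norm smallness into smallness of the pairings. Your resolution --- that $\nr{\Opw(\chi_b)u_h}_{L^2}$ and $\nr{\Opw(\chi_b)u_h^T}_{L^2}$ are $O(1)$ uniformly in $h$ \emph{and in $T$} ``thanks to the Lagrangian-distribution structure of sections 3 and 4 and the convergence of the series of amplitudes'' --- is circular. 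Sections 3 and 4 only give a Lagrangian description of $u_h^T$ for each \emph{fixed} $T$ (finitely many pieces $B_{w,k}$, $k\leq K_w^T$), hence via theorem \ref{prop3.6} a bound on $\nr{\Opw(Q)u_h^T}$ for fixed $T$ with no uniformity in $T$; no Lagrangian description of $u_h$ itself is established anywhere at this stage, and the convergence of the infinite series of amplitudes is exactly what the damping assumption is supposed to yield \emph{through} this proposition and the subsequent limit $\mu_T\to\mu$. So the uniform bounds you invoke are essentially equivalent to what is being proved.

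The missing idea is the bootstrap that constitutes the heart of the paper's proof. Writing, with cutoffs $Q,\tilde Q$ equal to $1$ near $\supp q$,
\begin{equation*}
\left(1-\Opw(Q)\,U_h^E(T)\,\Opw(\tilde Q)\right)\Opw(Q)u_h \;=\; \Opw(Q)u_h^T+\bigo h 0 (\sqrt h),
\end{equation*}
(the error collecting the energy-shell, incoming and Egorov terms), one uses proposition \ref{prop-super-egorov} to make $\nr{\Opw(Q)U_h^E(T)\Opw(\tilde Q)}\leq \frac12$ for $T\geq T_0$, $h\leq h_T$, so the operator on the left is invertible with uniformly bounded inverse; combined with the fixed-$T$ bound $\nr{\Opw(Q)u_h^T}=O(1)$ from theorem \ref{prop3.6}, this yields the uniform bound $\nr{\Opw(Q)u_h}\leq C$ for all small $h$ (the left-hand side no longer depends on $T$). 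Only then does a second application of the same decomposition, now with $q$ in place of $Q$, give $\nr{\Opw(q)(u_h-u_h^T)}\leq\d$ together with $\nr{\Opw(q)u_h^T}\leq C$ uniformly in $T\geq T_0$, $h\leq h_T$, and the difference of the two quadratic forms is then estimated by inserting $\Opw(\tilde q)$ on the right and using Cauchy--Schwarz. Without this Neumann-series step your argument has no legitimate source for the $O(1)$ microlocal bounds, and the naive pairing indeed loses the factor $h^{-1/2}$ on each side, as you noted. (A secondary, fixable point: your partition needs the second piece supported on the \emph{complement of an incoming region}, e.g. $\G_+(R_1,d_1,-\s')$, not on a genuinely outgoing region $\G_+(R_1,d_1,\s_1)$ with $\s_1>0$, since far points with nearly tangential momentum would otherwise be missed.)
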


\begin{proof}
\noindent {\bf 1.} Let $R_b \geq 0$ such that $\G \subset B_{\R^n}(R_b)$, $\supp q \subset B_x(R_b) = \singl{(x,\x) \in \R^{2n} \tqe \abs x < R_b}$ and any trajectory of energy in $J$ which leaves $B_x(R_b)$ never comes back (and goes to infinity). Let $\h \in C_0^\infty(\R^n)$ supported in $B(2 R_b)$ and equal to 1 on $ {B} (R_b)$. Let $Q \in C_0^\infty(\R^{2n})$ supported in $p\inv(J)$ and equal to 1 in a neighborhood of $p \inv (I) \cap  {B_x} (2 R_b)$ and of $\supp q$. Let $T\geq 0$ and $\o_-$ equal to 1 in the incoming region $\G_-(R_b,-1/2)$ and zero outside $\G_-(R_b/2,-1/4)$. We have:
\begin{equation} \label{decomp-uh}
\begin{aligned}
\Opw(Q) u_h
& = \frac i h \int_{t=0}^T \Opw(Q) U_h^E(t) S_h \, dt + \Opw(Q) U_h^E(T) u_h\\
& = \Opw(Q) u_h^T + \Opw(Q)  U_h^E(T) \Opw(Q) u_h \\
& \quad + \Opw(Q) U_h^E(T) \Opw(1-Q) \h(x)   u_h\\
& \quad + \Opw(Q) U_h^E(T) \Opw(1-Q) (1-\h(x)) \Op (\o_-) u_h \\
& \quad + \Opw(Q) U_h^E(T) \Opw(1-Q) (1-\h(x)) \Op (1-\o_-) u_h\\ 
\end{aligned}
\end{equation}
For $T$ large enough the last three terms are $O_{h\to 0}(\sqrt h)$ respectively by the localization close to the $E_0$-energy hypersurface (proposition \ref{prop-loc-surf}, which implies that $\Opw(1-Q) \h(x) u_h$ is small), estimates on the incoming region ($\Opw(\o_-) u_h$ is small by proposition \ref{prop-incoming}, changing quantization is harmless here) and Egorov theorem ($\Opw(Q) U_h^E(T) \Op(1-\o_-)  (1-\h(x))$ is small). Hence we have:
\begin{equation} \label{eg-uT}
 \left(1-\Opw(Q) U_h^E(T) \Opw(\tilde Q) \right) \Opw(Q) u_h =  \Opw(Q) u_h^T + \bigo h 0 (\sqrt h)
\end{equation}
where $\tilde Q \in C_0^\infty(\R^{2n})$ is supported in $p\inv(J)$ and equal to 1 on the support of $Q$. Furthermore:
\begin{equation*}
\nr{ \Opw(Q) u_h^T}^2 =  \innp{\Opw(Q)^2 u_h^T} {u_h^T} \limt h 0 \int Q^2 d\m_T < + \infty
\end{equation*}
Hence for any (large enough) fixed $T$, the right-hand side of \eqref{eg-uT} is uniformly bounded in $h$. Moreover, by proposition \ref{prop-super-egorov}, there exists $T_0$ such that for all $T \geq T_0$ there is $h_T >0$ which satisfies:
\begin{equation*}
\forall h \in ]0,h_T], \quad \nr{\Opw(Q) U_h^E(T) \Opw(\tilde Q)} \leq \frac 12
\end{equation*}
As a consequence, the operator $(1-\Opw(Q) U_h^E(T) \Opw(\tilde Q))$ is invertible and its inverse is bounded uniformly in $T \geq T_0$ and $h\in]0,h_T]$. This proves that the quantity: 
\begin{equation*} 
\Opw(Q)  u_h =   \left(1-\Opw(Q) U_h^E(T) \Opw(\tilde Q)\right)\inv \Opw(Q) u_h^T + \bigo h 0 (\sqrt h)
\end{equation*}
is bounded uniformly in $h \in ]0,h_T]$ for fixed $T \geq T_0$ and hence is bounded uniformly for $h$ small enough since the left hand side does not depend on $T$.\\

\noindent
{\bf 2.}
As for \eqref{decomp-uh} we see that:
\begin{equation} \label{decomp-uh-2}
\begin{aligned}
  \Opw(q) u_h
& = \Opw(q) u_h^T + \Opw(q)  U_h^E(T) \Opw(Q) u_h \\
& \quad + \Opw(q) U_h^E(T) \Opw(1-Q) \h(x)   u_h\\
& \quad + \Opw(q) U_h^E(T) \Opw(1-Q) (1-\h(x)) \Op (\o_-) u_h \\
& \quad + \Opw(q) U_h^E(T) \Opw(1-Q) (1-\h(x)) \Op (1-\o_-) u_h\\ 
\end{aligned}
\end{equation}
And as for \eqref{decomp-uh} the last three terms are $\bigo h 0 (\sqrt h)$ by localization close to $E_0$-energy hypersurface, estimates in the incoming region and Egorov theorem. Moreover the second term is:
\[
\Opw(q)  U_h^E(T) \Opw(Q) u_h = \Opw(q)  U_h^E(T) \Opw(\tilde Q) \left(\Opw(Q) u_h \right) + \bigo h 0 (\sqrt h)
\]
But $\Opw(Q) u_h $ is bounded uniformly in $h$ and the operator $\Opw(q)  U_h(T) \Op(\tilde Q)$ is of norm less than any $\d>0$ for $T$ big enough and $h$ small enough (depending of the chosen $T$). Hence we have proved:
\begin{equation} \label{controle-temps-grands}
\forall \d > 0, \exists T_0 \geq 0, \forall T \geq T_0, \exists h_T > 0, \forall h \in ]0,h_T], \quad \nr{ \Opw (q) (u_h - u_h^T)} \leq \d
\end{equation}
and in particular:
\begin{equation}\label{estim-qut}
\exists C \geq 0, \forall T \geq T_0,  \forall h \in ]0,h_T], \quad \nr{\Opw (q) u_h^T} \leq C
\end{equation}

We consider $\tilde q \in C_0^\infty(\R^{2n})$ supported in $p\inv(I)$, equal to 1 on $\supp q$ and such that $Q= 1$ on a neighborhood of $\supp \tilde q$. We can assume that \eqref{controle-temps-grands}-\eqref{estim-qut} hold for $q$ and $\tilde q$. Let $\d \in \left]0, \frac \e {4C}\right]$ and then $T$ and $h_T$ given by \eqref{controle-temps-grands}. For all $h \in ]0, h_T]$ we have:
\begin{eqnarray*}
\lefteqn{ \abs{ \innp{\Opw(q) u_h}{u_h} -  \innp{\Opw(q) u_h^T}{u_h^T}}}\\
&& = \abs{ \innp{\Opw(q) u_h}{\Opw(\tilde q)u_h} -  \innp{\Opw(q) u_h^T}{\Opw(\tilde q)u_h^T}} + \bigo h 0 (h^\infty) \\
&& \leq  \abs{ \innp{\Opw(q) (u_h - u_h^T)}{\Opw(\tilde q)u_h^T}} +  \abs{ \innp{\Opw(q) u_h}{\Opw(\tilde q)(u_h - u_h^T) }}+ \bigo h 0 (h^\infty)\\
&& \leq  \d \left(\nr{\Opw(q) u_h^T} + \nr{\Opw(\tilde q) u_h^T} \right) + \bigo h 0 (h^\infty)\\
&& \leq \frac \e 2 + \bigo h 0 (\sqrt h)
\end{eqnarray*}
and this last quantity is less than $\e$ if we choose $h$ small enough.
\end{proof}

\subsection{Convergence of the partial semiclassical measure}

\begin{proposition}
There exists a Radon measure $\m$ on $\R^{2n}$ such that for all $q \in C_0^\infty(\R^{2n})$:
\[
\int q \, d\m_T \limt T {+\infty} \int q \, d\m
\]
and we have:
\[
\innp{\Opw(q) u_h}{u_h} \limt h 0 \int q \, d\m
\]
\end{proposition}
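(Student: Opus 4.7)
The plan is to construct $\m$ as the monotone limit of the family $(\m_T)_{T\geq 0}$ built in Theorem \ref{prop3.6}, and then exchange the $T\to+\infty$ and $h\to 0$ limits. I would proceed in three steps.

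\textbf{Step 1 (Monotonicity of $(\m_T)$).} I first check that $(\m_T)_{T\geq 0}$ is a non-decreasing family of nonnegative Radon measures. The local expression for $\m_T$ on $\Vc_{w,T}$ derived in the proof of Theorem \ref{prop3.6} is $\m_{w,T}=\sum_{k=0}^{K_w^T}\m_{w,T,k}$, and each $\m_{w,T,k}$ is manifestly nonnegative. The principal symbol $b_{w,k}$ entering the definition of $\m_{w,T,k}$ is independent of $T$, because the cutoff $f_{w,T}$ may be chosen equal to $1$ in a fixed neighborhood of $\negg$ containing all relevant $(z_{w,k},\x_{w,k})$. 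Since $T\mapsto K_w^T$ is non-decreasing, this gives $\m_T\leq\m_{T'}$ whenever $T\leq T'$, and also that $\supp\m_T\subset\L\cup\negg\subset p\inv(\singl{E_0})$.

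\textbf{Step 2 (Definition of $\m$).} By Corollary \ref{cor-loc-surf} applied to $u_h$, and by the support property above, it suffices to prove the convergence for $q\in C_0^\infty(\R^{2n})$ with support in $p\inv(I)$. For such a $q$, Theorem \ref{prop3.6} gives $\int q\,d\m_T=\lim_{h\to 0}\innp{\Opw(q)u_h^T}{u_h^T}$, and the bound \eqref{estim-qut} established inside the proof of Proposition \ref{tps-grands} controls $\nr{\Opw(q)u_h^T}$ uniformly in $T\geq T_0$ and $h\in(0,h_T]$; combined with Cauchy-Schwarz (after inserting a fixed $\tilde q\in C_0^\infty$ equal to $1$ on $\supp q$), this makes $\int q\,d\m_T$ bounded uniformly in $T$. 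Monotone convergence then produces, for $q\geq 0$, a finite positive limit $L(q):=\lim_{T\to+\infty}\int q\,d\m_T$. Extending $L$ by linearity gives a locally bounded positive functional on $C_0^\infty(\R^{2n})$; Riesz representation provides a nonnegative Radon measure $\m$ on $\R^{2n}$ with $\int q\,d\m_T\to\int q\,d\m$ for every $q\in C_0^\infty(\R^{2n})$.

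\textbf{Step 3 (Exchange of limits).} Given $q\in C_0^\infty(\R^{2n})$ (again reduced to $\supp q\subset p\inv(I)$) and $\e>0$, I split via the triangle inequality
\[
\abs{\innp{\Opw(q)u_h}{u_h}-\textstyle\int q\,d\m}\leq A_{h,T}+B_{h,T}+C_T,
\]
with $A_{h,T}=\abs{\innp{\Opw(q)u_h}{u_h}-\innp{\Opw(q)u_h^T}{u_h^T}}$, $B_{h,T}=\abs{\innp{\Opw(q)u_h^T}{u_h^T}-\int q\,d\m_T}$ and $C_T=\abs{\int q\,d\m_T-\int q\,d\m}$. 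Proposition \ref{tps-grands} and Step 2 allow me first to fix $T$ large enough so that $A_{h,T}\leq\e/3$ for every $h\in(0,h_T]$ and $C_T\leq\e/3$; with this $T$ fixed, Theorem \ref{prop3.6} lets me take $h$ small enough to get $B_{h,T}\leq\e/3$. The essential difficulty is the non-uniformity in $T$ of the Egorov-type approximations underlying Theorem \ref{prop3.6}: this is exactly what forces the order ``first $T$, then $h$'' in Step 3, and is why Step 1's monotone structure is indispensable to know that $\int q\,d\m_T$ converges \emph{before} taking any $h\to 0$ limit.
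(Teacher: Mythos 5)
Your proof is correct and follows essentially the same route as the paper: monotonicity of $T\mapsto\int q\,d\m_T$ from the local structure of $\m_{w,T}$, the uniform bound \eqref{estim-qut} plus Riesz representation to define $\m$, and the three-term triangle inequality combining Proposition \ref{tps-grands}, Theorem \ref{prop3.6} and the convergence of $\m_T$, taken in the order ``first $T$, then $h$''. No substantive differences from the paper's argument.
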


\begin{proof}
{\bf 1.}
We can assume that for any $w \in \R^{2n}$, the family of neighborhoods $\Vc_{w,T}, T \geq 0,$ decreases when $T$ increases. Let $T_1\leq T_2 \in \R_+$. For $w \in \R^{2n}$ and $q \in C_0^\infty(\R^{2n})$ supported in $\Vc_{w,T_2} \subset \Vc_{w,T_1}$ we have:
\[
\int q \, d\m_{T_1} = \int q \, d\m_{w,T_1} = \sum_{k=0}^{K_w^{T_1}}\int_q \, d\m_{w,T_1,k} \leq \sum_{k=0}^{K_w^{T_2}}\int  q \, d\m_{w,T_2,k} = \int q \, d\m_{T_2}
\]
Since any $q \in C_0^\infty(\R^{2n})$ can be written as a finite sum $\sum q_i$ where $q_i$ is supported in $\Vc_{w_i,T_2}$ for some $w_i$, the same applies for all $q \in C_0^\infty(\R^{2n})$. This proves that $\int q \, d\m_T$ grows with $T$, and hence has a limit in $\R_+ \cup \{ + \infty\}$ when $T$ goes to $+\infty$.\\

\noindent
{\bf 2.}
If $\supp q \cap p\inv(\singl{E_0}) = \emptyset$, then 
\[
\int q \, d\m_T = 0 \limt T {+\infty} 0
\]
This is consistent with corollary \ref{cor-loc-surf}.\\

\noindent
{\bf 3.}
Now let $q\in C_0^\infty(\R^{2n})$ supported in $p\inv(I)$, $\tilde q$ and $C$ as in the proof of proposition \ref{tps-grands} (see \eqref{estim-qut}). We have:
\[
\int q \, d\m_T = \lim_{h\to 0} \innp{\Opw(q) u_h^T}{u_h^T} = \lim_{h\to 0} \innp{\Opw(q) u_h^T}{\Opw(\tilde q) u_h^T} \leq C^2
\]

As a result, $\int q \, d\m_T$ as a finite limit when $T$ goes to $+\infty$. This limit defines a nonnegative (each $\m_T$ is a nonnegative measure) linear form on $C_0^\infty(\R^{2n})$. Let $K$ be compact in $\R^{2n}$ and $Q \in C_0^\infty(\R^{2n})$ equal to 1 on $K$. Then for all $q \in C_0^\infty(\R^{2n})$ supported in $K$ we have:
\[
\abs{\int q \, d\m}  \leq  \lim_{T \to \infty} \int \abs q \, d\m_T \leq  \nr q _\infty \lim_{T \to \infty} \int Q \, d\m_T \leq c \nr q _\infty
\] 
and hence this limit is a continuous function of $q$ (is the space of compactly supported continuous functions). Thus the application $q \mapsto \lim_{T\to+\infty} \int q\, d\m_T$ can be extended to a nonnegative continuous linear form on the space of compactly supported continuous functions so, by Riesz theorem, there is a nonnegative Radon measure $\m$ on $\R^{2n}$ such that:
\[
\lim_{T \to \infty} \int q \, d\m_T = \int q \, d\m
\]

\noindent
{\bf 4.}
For $q \in C_0^\infty(\R^{2n},[0,1])$ there exists $T\geq 0$ such that:
\[
0 \leq \int q \, d\m - \int q \, d\m_T \leq \frac \e 3
\]
According to proposition \ref{tps-grands}, if $T$ is chosen large enough there is $h_T > 0$ such that:
\[
\forall h \in ]0,h_T], \quad \abs{ \innp{\Opw(q) u_h}{u_h} -  \innp{\Opw(q) u_h^T}{u_h^T}} \leq \frac \e 3
\]
and by theorem \ref{prop3.6}, there is $h_0 \in ]0,h_T]$ such that for all $h \in ]0,h_0]$ we have:
\[
\abs{ \innp{\Opw(q) u_h^T}{u_h^T} - \int q \, d\m_T} \leq \frac \e 3
\]

Hence we get:
\[
\forall h \in ]0,h_0], \quad \abs{ \innp{\Opw(q) u_h}{u_h} - \int q \, d\m} \leq \e
\]
which proves the proposition.
\end{proof}

\subsection{Characterization of the semiclassical measure}   \label{sec-caract}

We now finish the proof of theorem \ref{th2.1}:

\begin{proof}
\noindent {\bf 1.}
Statement (i) is already proved and similarly, (ii) is a consequence of the estimate in the incoming region (see proposition \ref{prop-incoming}).\\

\noindent {\bf 2.} 
Let $q \in C_0^\infty(\R^{2n})$ such that $\supp q \cap (\negg \cup \L) = \emptyset$. We have:
\[
\begin{aligned}
\int q \, (H_p + 2\Im E_1 + 2V_2) d\m
& = \int(-H_p + 2\Im E_1 + 2V_2)  q \, d\m \\
& = \lim_{T \to \infty} \int(-H_p + 2\Im E_1 + 2V_2)  q \, d\m_T \\
& =0
\end{aligned}
\]
according to \eqref{mt-zero} since the support of $(-H_p + 2\Im E_1 + 2V_2)  q$ does not meet $\negg \cup \L$.\\

\noindent {\bf 3.} 
Let $w \in \L$, $T \geq 0$ and $q \in C_0^\infty(\R^{2n})$ such that $\supp q \subset \Vc_{w,T}$.
 
Since $2ih\Im E_1 = E_h - \bar E_h + \littleo h 0 (h)$ and $H_{p}(q) = \{ p,q\}$ is the principal symbol of the operator $\frac ih [\huh, \Opw(q)]$, we have: 
\[
\Opw(H_p(q)) = \frac ih [\huh,\Opw(q)] + h \Opw (r_1) + \bigo h 0 (h^2)
\]
for some symbol $r_1 \in C_0^\infty(\R^{2n})$. But $\innp{\Opw (r_1) B_{w,k}^T}{B_{w,k}^T}$ as a limit as $h$ goes to 0 (which is $\int r_1 d\m_{w,T,k}$, see step 5 in the proof of theorem \ref{prop3.6}) and $\nr{B_{w,k}^T} = O(h^{-\frac 12})$, so: 
\begin{eqnarray} 
\label{eq3.83}
\lefteqn{\int (-H_{p}  + 2 \Im E_1 + 2V_2 )q \,  d\m_{w,T,k} }\\
\nonumber
&& = \lim _{h \to 0}   \innp{  \Opw(- H_p(q) + 2\Im E_1 q + 2 V_2 q) B_{w,k}^T}{B_{w,k}^T}\\
\nonumber
&& = \lim _{h \to 0}  \innp{ - \frac ih [\huh,\Opw(q)]   + 2\Im E_1 \Opw(q) + 2 V_2 \Opw(q) B_{w,k}^T}{B_{w,k}^T} \\
\nonumber
&& = -  \lim_{h\to 0} \frac i h \innp{((\hh-E_h)^* \Opw(q) - \Opw(q)(\hh-E_h))B_{w,k}^T} {B_{w,k}^T} \\
\nonumber
&& = - \lim_{h\to 0} \frac i h \left( \innp{\Opw(q) B_{w,k}^T} {(\hh-E_h)B_{w,k}^T}-\innp {(\hh-E_h)B_{w,k}^T}{\Opw(q) B_{w,k}^T} \right) \\
\nonumber
&& = 0
\end{eqnarray}
according to proposition \ref{prop-b2nul}.\\

\noindent {\bf 4.} 
Let $q \in C_0^\infty(\R^{2n})$ and $\e > 0$. There exists $T \geq 0$ such that:
\[
\int q \, d\m_T \geq \int q \,  d\m - \frac \e 2
\]
We can find a finite number of $w_i \in \R^{2n}$ such that $\supp q \subset  \cup \Vc_{w_i,T}$ and either $w_i \in \negg \cup \L$ or $\Vc_{w_i,T} \cap (\negg \cup \L) = \emptyset$. With a partition of unity, we can write $q = \sum q_i$ with ${\supp q_i \subset \Vc_{w_i,T}}$ and show the result for each $q_i$. So without loss of generality we can assume that $\supp q \subset \Vc_{w,T}$ for some $w \in \negg \cup \L$. According to \eqref{eq3.83} we have:
\[
\begin{aligned}
\int (-H_{p}  + 2 \Im E_1 + 2V_2 )q \, d\m_T
& = \sum_{j=0}^{K_w^T} \int (-H_{p}  + 2 \Im E_1 + 2V_2 )q \,  d\m_{w,T,k}\\
& = \int (-H_{p}  + 2 \Im E_1 + 2V_2 )q \,  d\m_{w,T,0}
\end{aligned}
\]

This is zero unless $w \in \negg$, which we now assume. Let $g \in C_0^\infty(\R)$ supported in $]-\infty,1]$ with $g = 1 $ near 0. For $m\in\N$ and $(x,\x) \in \tilde \G (\t_0) \times \R^n $ we set $g_m(x,\x) = g (m t_x)$. In particular the function $(1-g_m)q$ vanishes near $\negg$, so:
\begin{equation*}
\int (-H_{p} + 2 V_2 +  2 \Im E_1)(1- g_m)q \, d\m  = 0
\end{equation*}
Then since $g_m$ is supported in $\tilde \G(0,\t_0)$ for all $m\in\N$, we can use \eqref{chgt-var} to have:
\begin{eqnarray*}
\lefteqn{\int_{\R^{2n}}  (-H_{p}  + 2 \Im E_1 + 2V_2 )q \, d\m_{w,T,0}}\\
&& = \int_{\R^{2n}}  (-H_{p}  + 2 \Im E_1 + 2V_2 )q g_m \, d\m_{w,T,0}\\
&& = \int_{\tilde \G (0,\t_0)} (-H_p + 2 \Im E_1 + 2V_2) (q g_m) (x,\partial\p(x)) \abs{b_0(x)}^2 \,dx\\
&& = 2 ^{n-d} \int_{0}^{\t_0} \hspace{-0.2cm} \int_{\negg} t^{n-d-1} \abs \x \Big(1+\bigo t 0(t)\Big)  \abs{b_0(\bar x (t,z,\x))}^2\\
&& \hspace{3cm} \times (-H_p + 2 \Im E_1 + 2V_2) (q g_m) (\bar x (t,z,\x),\partial\p(\bar x (t,z,\x)))   \, d\snegg(z,\x)\,dt
\end{eqnarray*}
According to \eqref{phi-et-flot} we have $(x,\partial \p (x)) = \vf^{t_x}(z_x,\x_x)$. On the other hand, by \eqref{symb-b0} and \eqref{det-hess} we have:
\begin{equation} \label{def-c}
2 ^{n-d} t^{n-d-1} \abs \x \abs{b_0(\bar x (t,z,\x))}^2 \limt {t} 0 \pi (2\pi)^{d-n} A(z)^2 \abs \x \inv \hat S(\x)^2 =: c(z,\x)
\end{equation}
so:
\begin{eqnarray*}
\lefteqn{\int_{\R^{2n}}  (-H_{p}  + 2 \Im E_1 + 2V_2 )q \, d\m_{w,T,0}}\\
&& = -  \int _0^{\t_0 }  \hspace{-0.2cm}\int_{\negg} (\partial_t - 2 \Im E_1 - 2V_2)( q (\vf^t(z,\x)) g(mt)) c(z,\x)  \Big(1+\bigo t 0(t)\Big)  \, d\snegg(z,\x) \, dt \\
&& = -  \int _0^{\t_0 } \hspace{-0.2cm} \int_{ \negg} g(t m) (\partial _t  - 2 \Im E_1 - 2V_2) (q(\vf^t(z,\x))) c(z,\x)  \Big(1+\bigo t 0(t)\Big)  \, d\snegg(z,\x) \, dt\\
&& \quad - \int _0^{\t_0 }  \hspace{-0.2cm}\int_{ \negg} m g'(tm) q (\vf^t(z,\x)) c(z,\x)  \Big(1+\bigo t 0(t)\Big)  \, d\snegg(z,\x) \, dt 
\end{eqnarray*}
and hence:
\begin{eqnarray*}
\lefteqn{ \abs{\int  (-H_{p}  + 2 \Im E_1 + 2V_2 )q  \,  d\m_{w,T,0}- \int_{ \negg} q (z,\x)  c(z,\x)    \, d\snegg(z,\x) }}\\
&& \leq O \left( \frac 1m \right) + \abs {\int _0^{\t_0 } \hspace{-0.2cm} \int_{ \negg}  m g'(tm) \big( q(z,\x) - q (\vf^t(z,\x))\big) c(z,\x)  \, d\snegg(z,\x) \, dt} \\
&& \leq O \left( \frac 1m \right) + \int _0^{\t_0 } \hspace{-0.2cm} \int_{ \negg}  m \abs{ g'(tm) } \sup_{0\leq t \leq \frac 1m} \abs{q(z,\x) - q (\vf^t(z,\x)) \big)} c(z,\x)  \, d\snegg(z,\x) \, dt \\
&& = O\left( \frac 1m \right)  
\end{eqnarray*}
It only remains to choose $m$ so large that the rest is less than $\frac \e 2$. 
\end{proof}

As said in the introduction, $\m$ is actually characterized by the three properties of theorem \ref{th2.1} and is given by \eqref{expr-mu}:

\begin{proposition} \label{prop-caract}
Let $\nu$ be a Radon measure on $\R^{2n}$ which satisfies the three properties of theorem \ref{th2.1}. Then for all $q \in C_0^\infty(\R^{2n})$ we have:
\begin{equation} \label{expr-nu}
\int_{\R^{2n}} q \, d\nu = \int_0^{+\infty} \int_\negg c(z,\x)  q(\vf^t(z,\x)) e^{-2t\Im E_1 - 2\int_0^t V_2(\bar x (s,z,\x))\,ds} \,d\snegg(z,\x) \,dt
\end{equation}
where the function $c$ is defined in \eqref{def-c}.
\end{proposition}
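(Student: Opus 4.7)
The plan is a duality argument against the transport equation (iii). Given $q \in C_0^\infty(\R^{2n})$ and writing $a := \Im E_1 + V_2$, I would introduce for each $T>0$ the time-truncated test function
\[
\psi_T(w) := \int_0^T q(\varphi^t(w))\, \exp\!\left(-2t\Im E_1 - 2\int_0^t V_2(\bar x(s,w))\,ds\right)\,dt.
\]
Because $(t,w)\mapsto \varphi^t(w)$ is smooth and $q$ is compactly supported, $\psi_T$ lies in $C_0^\infty(\R^{2n})$: its support is contained in $\bigcup_{t\in[0,T]}\varphi^{-t}(\supp q)$, the continuous image of the compact set $[0,T]\times \supp q$. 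Differentiating $\psi_T(\varphi^s(w))$ at $s=0$ and using the group property of the flow yields the pointwise transport identity
\[
(-H_p + 2a)\psi_T(w) = q(w) - q(\varphi^T(w))\,e^{-\alpha(T,w)}, \qquad \alpha(T,w) := 2T\Im E_1 + 2\int_0^T V_2(\bar x(s,w))\,ds.
\]

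Applying property (iii) with the compactly supported test function $\psi_T$ gives
\[
\int q\,d\nu \;-\; \int (q\circ\varphi^T)\,e^{-\alpha(T,\cdot)}\,d\nu \;=\; \int_{\negg} c(z,\xi)\,\psi_T(z,\xi)\,d\snegg(z,\xi),
\]
with $c(z,\xi) = \pi (2\pi)^{d-n} A(z)^2 |\xi|^{-1} \hat S(\xi)^2$. On the right-hand side $A$ is compactly supported on $\G$ and $\psi_T$ is uniformly bounded (by $\|q\|_\infty/(2\Im E_1)$ if $\Im E_1 > 0$, and otherwise by a constant coming from Lemma \ref{prop-amortis} on trapped orbits and from compactness of $\supp q$ on escaping orbits), so dominated convergence produces $\int c\psi\,d\snegg$ in the limit $T\to\infty$, which Fubini rewrites as exactly the right-hand side of \eqref{expr-nu}.

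It remains to prove that the boundary term $\int (q\circ\varphi^T)\,e^{-\alpha(T,\cdot)}\,d\nu$ tends to zero; this is the main obstacle, and the step where property (ii) is genuinely needed. Pointwise vanishing on $\supp\nu$ follows from property (i): for a point $w\in p^{-1}(\{E_0\})$ with unbounded forward orbit, $|\varphi^t(w)|\to\infty$ forces $q(\varphi^T(w))=0$ for $T$ large, while for a point with bounded forward orbit Lemma \ref{prop-amortis} gives exponential decay $e^{-\alpha(T,w)} \leq Ce^{-\delta T}$. To obtain a uniform domination, I would use the classical scattering picture on $p^{-1}(\{E_0\})$: the support of $w\mapsto q(\varphi^T w)$ is $\varphi^{-T}(\supp q)$, and as $T\to\infty$ this set decomposes into a bounded piece (from orbits that stay bounded at $-\infty$, which by \eqref{hyp2} meet $\Oc$) and a piece that escapes to infinity in the incoming region $\G_-(R,d,-\s)$ for any fixed $\s<1$ once $T$ is large enough. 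Property (ii) kills the latter contribution on $\supp\nu$, so $(q\circ\varphi^T)\,e^{-\alpha(T,\cdot)}$ is $\nu$-a.e. dominated by $\|q\|_\infty\,\mathbf 1_{K}$ for a fixed compact $K$ (containing a neighborhood of the trapped orbits passing through $\supp q$), and $\nu(K)<\infty$ since $\nu$ is a Radon measure. Dominated convergence then closes the argument and yields \eqref{expr-nu}.
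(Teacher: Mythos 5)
Your proposal is correct and is essentially the paper's own argument: applying property (iii) to the truncated test function $\psi_T$ is just the integrated form of the identity the paper obtains by differentiating $\t\mapsto\int (q\circ\vf^\t)\,e^{-2\t\Im E_1-2\int_0^\t V_2\circ\vf^{\t-s}\,ds}\,d\nu$, and you kill the boundary term the same way, using property (ii) to confine its support ($\nu$-a.e.) to a fixed compact set plus the incoming region, finiteness of $\nu$ on compacts, and Lemma \ref{prop-amortis} / escape to infinity for the decay. The only soft spot is your claimed uniform bound on $\psi_T$ when $\Im E_1=0$ (orbits that linger near the trapped set before escaping are covered neither by Lemma \ref{prop-amortis}, which needs forward-bounded orbits, nor by a uniform escape time from $\supp q$), but that step is dispensable: for each finite $T$, Fubini gives $\int_\negg c\,\psi_T\,d\snegg=\int_0^T\!\int_\negg c(z,\x)\,q(\vf^t(z,\x))\,e^{-2t\Im E_1-2\int_0^t V_2(\bar x(s,z,\x))\,ds}\,d\snegg\,dt$, so once the boundary term vanishes this quantity converges to $\int q\,d\nu$ and its limit is exactly the right-hand side of \eqref{expr-nu} read as an improper iterated integral, with no domination needed.
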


\begin{proof}
Let $I_1$ be an open interval such that $\bar I \subset I_1 \subset \bar {I_1} \subset J$. Let $q \in C_0^\infty(\R^{2n})$. According to property (i), if $\supp q \subset p\inv(\R \setminus I)$ then $\int q \, d\nu = 0$ which is consistent with \eqref{expr-nu}, since both sides are zero. So we can assume that $\supp q \subset p\inv(I_1)$.

Using property (iii) we see that:
\begin{eqnarray*}
\lefteqn{\frac d {dt} \int_{\R^{2n}} (q \circ \vf^t) e^{-2t\Im E_1 - 2\int_0^t V_2 \circ \vf^{t-s} \,ds } \, d\nu}\\
&& = \int_{\R^{2n}} (H_p-2\Im E_1 - 2V_2) \left( (q \circ \vf^t) e^{-2t\Im E_1 - 2\int_0^t V_2\circ \vf^{t-s} \,ds} \right)  \,d\nu\\
&& = -\int_\negg c(z,\x)   \left( (q \circ \vf^t) e^{-2t\Im E_1 - 2\int_0^t V_2\circ \vf^{t-s} \,ds} \right)(z,\x) \,d\snegg(z,\x)
\end{eqnarray*}
and hence, for all $\t \geq 0$:
\[
\begin{aligned}
\int_{\R^{2n}} q \, d\m
& = \int_{\R^{2n}} (q \circ \vf^\t) e^{-2 \t \Im E_1 - 2\int_0^\t V_2\circ \vf^{\t-s} \,ds}\, d\nu\\
& \quad + \int_0^\t \int_\negg c(z,\x)   \left( (q \circ \vf^t) e^{-2 t\Im E_1 - 2\int_0^t V_2\circ \vf^{t-s} \,ds} \right)(z,\x) \,d\snegg(z,\x) \, dt
\end{aligned}
\]
So we only have to prove that:
\[
 \int_{\R^{2n}} (q \circ \vf^\t) e^{-2\t\Im E_1 - 2\int_0^\t V_2\circ \vf^{\t-s} \,ds}\, d\nu \limt \t {+\infty} 0
\]
For $R \geq 0$ we set: $K_R = p\inv(\bar I_1) \cap B_x(R)$. According to property (ii), we can find $R \geq 0$ such that $\nu$ vanishes on $\G_-(R,- \frac 1 2)$ and:
\[
\bigcup _{t\geq 0} \supp (q\circ \vf^t) \subset \G_-\left( R, -\frac 12 \right) \cup K_R
\]
Let $\h \in C_0^\infty(\R^{2n})$ supported in $p\inv(J)$ and equal to 1 on $K_R$. For $\t \geq 0$, since $\nu$ vanishes on $\G_- \left(R,-\frac 12 \right)$:
\[
\int_{\R^{2n}} (q \circ \vf^\t) e^{-2t\Im E_1 - 2\int_0^\t V_2\circ \vf^{\t-s} \,ds}\, d\nu =  \int_{\R^{2n}} \h (q \circ \vf^\t) e^{-2t\Im E_1 - 2\int_0^\t V_2\circ \vf^{\t-s} \,ds}\, d\nu
\]
As $\nu$ is a Radon measure, there is a constant $C\geq 0$ such that for all $\tilde q \in C_0^\infty(\R^{2n})$ with $\supp q \subset \supp \h$ we have:
\[
\abs{\int_{\R^{2n}} \tilde q \, d\nu} \leq C \nr {\tilde q}_{L^\infty(\R^{2n})}
\]
so we only need to prove that:
\[
\sup_{w \in \R^{2n}} \abs{\h(w) (q \circ \vf^\t)(w) e^{-2\t\Im E_1 - 2\int_0^\t (V_2\circ \vf^{\t-s})(w) \,ds}} \limt \t {+\infty} 0
\]
This is clear if $\Im E_1 > 0$. Otherwise, this can be done with lemma \ref{prop-amortis} as in the proof of proposition \ref{prop-super-egorov}.
\end{proof}

\section{Estimate of the outgoing solution in the incoming region}  \label{sec-incoming}

The theorem we want to prove in this section is the following:

\begin{theorem} \label{estimrobert}
Let $N \in \N$ and $E_h = E_0 + O(h)$ be an energy such that for all $h \in ]0,h_0]$, $\Im E_h > 0$ or $E_h$ satisfies \eqref{hyp2}. Let $d > 0$ and $\s \in ]0, 1[$. Then there exits $\n\in\N$ and $R >0$ such that if the symbols $\o_+,\o\in \symb_0$ have supports in $\G_+( R, d ,\s)$ (respectively outside $\G_+( R_1,  d_1, \s_1)$ with $ R_1 <  R$, $ d_1 < d$ and $ \s_1 < \s$) then for all $\a > \frac 12$ we have:
\begin{eqnarray}
\label{estimrob2} & \nr{\pppg x ^{-\a} \Op(\o) (\hh - (E_h+i0))\inv \Op(\o_+) \pppg x ^{-\n}} = \bigo h 0 (h^N)
\end{eqnarray}
Similarly, if $\supp \o_- \subset \G_-(R,d,-\s)$ and $\supp \o\cap \G_-(R_1,d_1,-\s_1)=\emptyset$ then:
\begin{equation} \label{estimrob4}  
\nr{\pppg x ^{-\a} \Op(\o) (\hh^* - (E_h-i0))\inv \Op(\o_-) \pppg x ^{-\n}} =\bigo h 0 (h^N)
\end{equation}
\end{theorem}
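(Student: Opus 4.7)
The plan is to adapt the Isozaki--Kitada parametrix construction of \cite{robertt89} for the selfadjoint case to our dissipative setting. I focus on \eqref{estimrob2}; the estimate \eqref{estimrob4} is obtained by the analogous argument applied to $\hh^* = -h^2\D + V_1 + ihV_2$ (which differs from $\hh$ only by the sign of the subprincipal dissipative term), together with a time-reversal $\x \mapsto -\x$ exchanging incoming and outgoing regions. The essential geometric input from assumption \eqref{h1} is the forward invariance of the outgoing region under the classical flow: choosing $R$ large enough relative to $R_1,d_1,\s_1$, one has $\vf^t(\G_+(R,d,\s)) \subset \G_+(R_1,d_1,\s_1)$ for every $t \geq 0$, and $\abs{\bar x(t,w)} \to +\infty$ uniformly on $\G_+(R,d,\s)$ as $t \to +\infty$. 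In particular $\vf^t(\supp \o_+) \cap \supp \o = \emptyset$ for all $t \geq 0$.

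Using these escaping trajectories, one constructs the Isozaki--Kitada phase functions $\f_\pm(x,\x)$ by solving the eikonal equation $\abs{\nabla_x \f_\pm}^2 + V_1(x) = \x^2$ with $\f_\pm(x,\x) = \innp x \x + O(\abs x^{1-\rho})$ at infinity (possible thanks to the escape property). Amplitudes $a_j(x,\x)$ are then obtained from transport equations along these characteristics. Compared with the selfadjoint case, the only new feature is that the first transport equation acquires the additional subprincipal term $2iV_2 a_0$, amounting to multiplying $a_0$ by the convergent damping factor $\exp\bigl(-\int_0^{+\infty} V_2 \circ \bar x(s,\cdot)\,ds\bigr)$; convergence of this integral is ensured by the decay \eqref{h1}, and analogous inhomogeneous contributions appear at higher orders. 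One then assembles $J_\pm(h)$ as semiclassical Fourier integral operators with phases $\f_\pm$ and amplitudes $\sum_{j=0}^{N} h^j a_j$, and defines the parametrix $K_+(h) = J_+(h) R_0(E_h+i0) J_-(h)^*$, where $R_0$ is the outgoing resolvent of $-h^2\D$. The intertwining identity $\hh J_\pm(h) = J_\pm(h)(-h^2 \D) + \bigo h 0 (h^{N+1})$, valid microlocally in the outgoing region, yields $(\hh-E_h) K_+(h) \Op(\o_+) = \Op(\o_+) + \bigo h 0 (h^{N+1})$ in the relevant weighted operator norm.

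Combining the parametrix identity with the limiting absorption principle of \cite{royer}, which provides $\nr{(\hh-(E_h+i0))\inv}_{L^{2,\a}\to L^{2,-\a}} = O(h\inv)$ under \eqref{hyp2} (or trivially when $\Im E_h > 0$), one deduces that $(\hh-(E_h+i0))\inv\Op(\o_+)\pppg x^{-\nu}$ agrees with $K_+(h)\Op(\o_+)\pppg x^{-\nu}$ modulo $\bigo h 0 (h^N)$ in operator norm. It then remains to prove $\pppg x^{-\a}\Op(\o)K_+(h)\Op(\o_+)\pppg x^{-\nu} = O(h^\infty)$, which comes from a non-stationary phase argument: the microlocal support of $K_+(h)\Op(\o_+)$ in the $(x,\x)$-variables is confined to the outgoing region (the phase $\f_+$ satisfies $\nabla_\x \f_+(x,\x) \sim x$ at infinity, linking the position to the outgoing momentum), where $\o$ vanishes by hypothesis, so the composition with $\Op(\o)$ integrates against a phase whose $\x$-gradient stays bounded away from zero on $\supp \o \times \supp \o_+$. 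The main technical obstacle is to carry out these constructions globally on the unbounded outgoing region with uniform estimates in $h$ (the standard IK setup), while tracking the subprincipal $V_2$ contributions in the transport equations; the eikonal equation itself is identical to the selfadjoint case, so the structural part of the proof of \cite{robertt89} transfers to our setting essentially without modification.
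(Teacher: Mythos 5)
Your overall strategy (Isozaki--Kitada phases, dissipative transport equations, parametrix, then the limiting absorption principle plus non-stationary phase) is the right family of ideas, but there is a genuine gap at the precise point where the dissipative term has to be handled. You absorb $V_2$ into a \emph{time-independent} amplitude by multiplying $a_0$ by $\exp\bigl(-\int_0^{+\infty} V_2(\bar x(s,\cdot))\,ds\bigr)$ and assert that convergence of this integral "is ensured by the decay \eqref{h1}". That is false in general: \eqref{h1} only gives $\rho>0$, and along an outgoing trajectory $\abs{\bar x(s)}\gtrsim \abs x + s$, so $V_2(\bar x(s))$ decays only like $\pppg s^{-\rho}$, whose time integral diverges whenever $\rho\leq 1$ (the long-range case, which the theorem must cover). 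This is exactly the obstruction the paper points out: following the classical construction verbatim "would have led to consider $a_0(x,\x)=\exp(\int_0^\infty F(t,x,\x)\,dt)$, which may have no sense for a long range imaginary part of the potential". The paper circumvents it by making the amplitudes \emph{time-dependent}, $a_{j,\pm}(t,x,\x)$, solving transport equations in which the damping is only accumulated up to time $t$ (proposition \ref{prop-eqtrans}), building a parametrix for the propagator $U_h(t)$ through the Duhamel identity \eqref{diff-int}, and only afterwards passing to the resolvent via $(\hh-z)\inv=\frac ih\int_0^\infty e^{\frac{it}h z}U_h(t)\,dt$, the LAP bound $O(h\inv)$ being used solely on the $h^{N+1}$ remainder. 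Your resolvent-level parametrix $K_+(h)=J_+(h)R_0(E_h+i0)J_-(h)^*$ cannot be assembled as stated without first resolving this divergence.

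A second, related omission: you never explain where the (possibly large) weight $\pppg x^{-\n}$ in the statement comes from; in your write-up it is carried along as if it were bookkeeping. In the paper it is forced by the time-decay estimates (proposition \ref{propdect}): because the time-dependent amplitudes have derivatives growing polynomially in $t$ (estimate \eqref{estima0}), each integration by parts in $\x$ gains only a factor $t^{-\rho}$ while costing a power of $\pppg x$, so one must take $\n$ large depending on $N$ and $\rho$ to make the time integral defining the resolvent converge with the stated $h^N$ gain. This quantitative loss is precisely why the theorem is stated with $\n\neq\a$ and is the main technical content of the section; the claim that the selfadjoint structure of \cite{robertt89} transfers "essentially without modification" glosses over both this and the long-range issue above. (Minor further point: the dissipative term enters the transport equations as a real damping $\mp a_0V_2$, not as $2iV_2a_0$.)
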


\begin{remark*}
This is the analog of lemma 2.3 in \cite{robertt89} in the dissipative case. Note that here $\nu$ is different from $\a$ and may be large.
\end{remark*}

\begin{remark*} 
Taking the adjoint in \eqref{estimrob4} gives:
\begin{equation*} 
\nr{ \pppg x ^{-\n}\Op(\o_-) (\hh - (E_h+i0))\inv \Op(\o) \pppg x ^{-\a}} = \bigo h 0 (h^N)
\end{equation*}
which proves proposition \ref{prop-incoming}. This theorem proves that the solution $u_h =(\hh - (E+i0))\inv S_h$ is microlocally zero in the incoming region.
\end{remark*}

To prove this theorem we follow \cite{wang88}. In particular we use the following result taken from \cite{isozakik85}:

\begin{proposition}  \label{isozakik}
Let $d_0 \in ]0,d_1[$ and $\s_0 \in ]0,\s_1[$. There exists $R_0 > 0$ and ${\vf_\pm \in C^\infty (\R ^{2n})}$ satisfying:
\begin{equation} \label{prophi1}
\forall (x,\x) \in \G_\pm(R_0,d_0,\pm\s_0),\quad \abs{\nabla_x \vf_\pm(x,\x)}^2 + V_1(x) = \abs \x ^2
\end{equation}
and:
\begin{eqnarray}
\label{prophi2}
\forall (x,\x) \in \R^{2n},\forall \a,\b \in \N^n, & & \abs{\partial_x^\a \partial_\x^\b ( \vf_\pm(x,\x) - \innp x \x)} \leq C_{\a,\b}\pppg x ^{1 - \rho - \abs \a}
\end{eqnarray}
for some $\rho >0$.
\end{proposition}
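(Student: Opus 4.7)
The plan is to adapt the Isozaki--Kitada Fourier integral parametrix construction of \cite{wang88, robertt89} to the dissipative setting. Given the phase $\vf_+$ from Proposition \ref{isozakik}, I would build a semiclassical FIO
\[
J_+ u(x) = \frac{1}{(2\pi h)^n}\int e^{\frac ih (\vf_+(x,\x) - \innp{y}{\x})}\, b_+(x,\x,h)\, u(y)\, dy\, d\x,
\]
with a classical amplitude $b_+ \sim \sum_{j\ge 0} h^j b_{+,j}$ supported in a slight enlargement of $\G_+(R,d,\s)$, arranged so that $(\hh - E_h)J_+ = \Op(\o_+) + O(h^\infty)$ as an operator on weighted $L^2$-spaces. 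The weighted limiting absorption principle from \cite{royer} then yields
\[
(\hh - (E_h+i0))\inv \Op(\o_+) = J_+ + (\hh - (E_h+i0))\inv r_h,
\]
where $r_h = O(h^\infty)$ between suitable weighted spaces; the $O(h\inv)$ resolvent bound costs one power of $h^{-1}$ and some polynomial weight, which is absorbed by the factor $\pppg x^{-\n}$ provided $\n$ is large enough. A non-stationary phase argument then shows that composition on the left with $\pppg x^{-\a}\Op(\o)$ makes the principal term $O(h^\infty)$. The estimate \eqref{estimrob4} follows symmetrically with $\vf_-$ and $(\hh)^*$.

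\textbf{Construction of $J_+$.} The eikonal equation \eqref{prophi1} cancels the principal part of $(\hh - E_h) e^{i\vf_+/h}b_+$, leaving at order $h$ the transport equation
\[
(2\nabla_x \vf_+ \cdot \nabla_x + \D_x\vf_+ + V_2 + i\,\Im E_1)\, b_{+,0} = \o_+,
\]
and homogeneous analogues at higher orders. By \eqref{prophi2}, the integral curves of $2\nabla_x\vf_+ \cdot \nabla_x$ are short-range perturbations of the free rays $s \mapsto x+2s\x$; they remain in the outgoing cone and exit every bounded set in positive time, so the $b_{+,j}\in\symb_0$ are obtained globally by integration backward along these characteristics, with the short-range decay \eqref{h1}, \eqref{prophi2} ensuring the $\symb_0$-derivative bounds. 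The dissipative term $V_2 \geq 0$ enters only as a bounded exponential factor $\exp(-\int_0^\t V_2 \, ds)$ along the characteristics, which is benign for upper bounds. Borel summation then produces an amplitude $b_+$ satisfying the desired identity modulo $O(h^\infty)$.

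\textbf{Non-stationary phase and main obstacle.} The kernel of $\Op(\o) J_+$ is an oscillatory integral with phase $\vf_+(x,\x) - \innp{y}{\x}$, whose critical point equation in $\x$ reads $y = \nabla_\x \vf_+(x,\x)$. By \eqref{prophi2}, this implies that $y$ is a short-range perturbation of a point on the Hamilton ray from $(x,\x)$; combined with the outgoing support condition on $b_+$ in $\x$ and the definition of $\G_+$, it forces $(x, \nabla_x\vf_+(x,\x))$ into $\G_+(R,d,\s) \subset \G_+(R_1, d_1, \s_1)$, contradicting the support assumption on $\o$. Iterated integration by parts in $\x$ then delivers the $O(h^\infty)$ bound in any weighted operator norm, closing the proof of \eqref{estimrob2}. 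The main obstacle is to carry through the parametrix construction uniformly in $h$ with $\symb_0$-amplitudes in the presence of the extra term $V_2 b_+$ absent in the selfadjoint treatment of \cite{wang88}: the sign $V_2 \geq 0$ combined with the outgoing (respectively incoming, for the adjoint case) direction of integration keeps the exponential transport factor bounded, so the construction goes through, but one must verify the full symbol-class derivative estimates. A secondary technical point is that the weighted resolvent bound from \cite{royer} loses one power of $h^{-1}$, which is why $\nu$ in the theorem statement may need to be chosen much larger than $\a$: this extra weight is precisely what absorbs the polynomial cost of converting the $O(h^\infty)$ parametrix error into the required operator bound.
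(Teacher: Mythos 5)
There is a genuine gap: you have not proved the statement at all. Proposition \ref{isozakik} asserts the \emph{existence} of the phase functions $\vf_\pm$, i.e.\ that one can construct $\vf_\pm \in C^\infty(\R^{2n})$ solving the eikonal equation $\abs{\nabla_x\vf_\pm}^2 + V_1 = \abs\x^2$ on the outgoing/incoming regions $\G_\pm(R_0,d_0,\pm\s_0)$ together with the symbol-type bounds \eqref{prophi2} on $\vf_\pm - \innp x\x$. Your proposal begins ``Given the phase $\vf_+$ from Proposition \ref{isozakik}, I would build a semiclassical FIO\dots'', i.e.\ it takes precisely the object whose existence is to be established as an input, and then sketches the Isozaki--Kitada parametrix and the non-stationary phase argument for the resolvent bound. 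That is (an outline of) the proof of Theorem \ref{estimrobert}, which the paper carries out separately in section \ref{sec-incoming} and for which the proposition is an ingredient; it is not a proof of the proposition. Nothing in your text constructs $\vf_\pm$, solves the Hamilton--Jacobi equation in the regions $\G_\pm$, or verifies the estimates \eqref{prophi2}, so the statement in question remains entirely unaddressed.

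For the record, the paper itself does not reprove this proposition either: it quotes it from Isozaki--Kitada \cite{isozakik85} (as used also in \cite{wang88}). A proof would have to construct $\vf_\pm$ as generating functions of the outgoing/incoming classical dynamics: one solves the eikonal equation $p(x,\nabla_x\vf_\pm) = \x^2$ along the non-trapped trajectories that stay in the outgoing (resp.\ incoming) cone $\G_\pm(R_0,d_0,\pm\s_0)$, using the decay \eqref{h1} of $V_1$ to run an iteration (or fixed-point) scheme for $\vf_\pm - \innp x\x$ producing the gain $\pppg x^{1-\rho-\abs\a}$ in \eqref{prophi2}, and then extends $\vf_\pm$ smoothly to all of $\R^{2n}$ by a cutoff as in \eqref{phir}. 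Note also that the dissipative part $V_2$ and the operator $\hh$ play no role here: the proposition concerns only the real Hamiltonian $p(x,\x) = \x^2 + V_1(x)$, so the discussion of $V_2$, the transport equations, and the resolvent bound in your proposal, whatever its merits for Theorem \ref{estimrobert}, does not bear on the claim you were asked to prove.
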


Without loss of generality we may assume that this is the same constant $\rho$ as in \eqref{h1}.

\begin{remark*} 
As mentioned in \cite{wang88} (see (2.4)), we can assume that the constants $C_{\a,\b}$ in \eqref{prophi2} are as small as we wish if we take $R$ large enough. Indeed, if we take a function $\h \in C^\infty(\R^n)$ such that $\h(x) =0$ if $\abs x \leq \frac 12$ and $\h(x) = 1$ if $\abs x \geq 1$, and, for $R > R_0$:
\begin{equation} \label{phir}
\vf_{R,\pm} : (x,\x) \mapsto (\vf_\pm(x,\x) - \innp x \x) \h \left( \frac x R \right ) + \innp x \x
\end{equation}
Then:
\begin{equation} \label{prophi1r}
\forall (x,\x) \in \G_\pm(R,d_0,\s_0),\quad \abs{\nabla_x \vf_{R,\pm}(x,\x)}^2 + V_1(x) = \abs \x ^2
\end{equation}
and for any $\rho_1,\rho_2 >0$ such that $\rho = \rho_1+\rho_2$:
\begin{equation} \label{phi-innp}
\forall ( x,\x) \in \R^{2n}, \quad \abs{ \partial_x^\a \partial_\x^\b ( \vf_{R,\pm}(x,\x) - \innp x \x)} \leq C_{\a, \b}  R^{-\rho_1} \pppg x ^{1-\rho_2 - \abs \a}
\end{equation}
where $C_{\a,\b}$ does not depend on $R$.
\end{remark*}

We are going to use the Fourier integral operators $I_h(a,\vf)$ defined as follows:
\[
I_h(a,\vf) u (x) = \frac 1 {(2\pi h)^n} \int_{\R^n}\int_{\R^n} e^{\frac ih (\vf(x,\x)- \innp y \x)} a(x,\x) u(y) \, dy\,d\x
\]

As in \cite{wang88}, the idea of the proof is to find two symbols $a$ and $e$ such that:
\[
U_h(t)I_h(a,\vf) \approx I_h(a,\vf) U_0^h(t) 
\quad \text{and} \quad
\Op(\o_+) \approx I_h(a,\vf)I_h(e,\vf)^*
\]
when $h$ goes to 0. For a short range absorption coefficient $V_2$, we can actually do as in \cite{wang88}, but in the long range case, we have to consider a time dependant symbol $a(t,h)$. In this situation we have:
\begin{eqnarray}
\label{diff-int}
\lefteqn{U_h(t)I_h(a(t,h),\vf_\pm) - I_h(a(t,h),\vf_\pm) U_0^h(t)}\\
\nonumber
&& = \int_0^t U_h(t) \left( -\frac ih \hh I_h(a(s,h),\vf_\pm) + I_h(\partial_t a(s,h),\vf_\pm) + \frac ih I_h(a(s,h),\vf_\pm) \hoh\right) U_0^h(t-s)\, ds
\end{eqnarray}

\begin{proposition} \label{propap}
Let $a(t,h) \in \symbor$ be a time-dependant symbol, $\vf=\vf_+$ or $\vf_-$ given by proposition \ref{isozakik} and $h\in]0,1]$. Then we have:
\begin{equation*}
-\frac ih \hh I_h(a(t,h),\vf) + I_h(\partial_t a(t,h),\vf) +\frac ih I_h(a(t,h),\vf) \hoh = I_h(p(t,h),\vf)
\end{equation*}
where:
\begin{eqnarray} \label{defp}
\lefteqn{p(t,h)}\\
\nonumber && = -\frac ih (\abs{\partial_x \vf}^2 + V_1 - \x^2) a(t,h) + \Big(\partial_t a(t,h) - 2\partial_x a(t,h) . \partial_x \vf - a(t,h) \D_x \vf - a(t,h) V_2\Big)\\
\nonumber && \quad  +  ih  \D_x a(t,h)
\end{eqnarray}
\end{proposition}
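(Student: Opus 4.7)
This proposition is a direct, purely formal computation: one applies $\hh$ to the oscillatory integral defining $I_h(a(t,h),\vf)$, applies $\hoh$ on the right by integration by parts in $y$, and collects terms according to their homogeneity in $h$. No analytic input is needed beyond the definitions, so the only ``difficulty'' is bookkeeping of signs and powers of $h$.

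The plan is to compute the three pieces of the left-hand side separately, all in the form $I_h(\cdot,\vf)u$, so that they can be added inside a single Fourier integral operator. For the first piece, write
\[
I_h(a(t,h),\vf) u (x) = \frac 1 {(2\pi h)^n} \int\!\!\int e^{\frac ih (\vf(x,\x)- \innp y \x)} a(t,x,\x,h) u(y) \, dy\,d\x ,
\]
and apply the product rule twice to evaluate $-h^2\D_x[e^{\frac ih \vf(x,\x)} a(t,x,\x,h)]$. Writing $\nabla_x(e^{\frac ih \vf} a) = e^{\frac ih \vf}(\frac ih \nabla_x\vf\, a + \nabla_x a)$ and taking divergence gives
\[
-h^2\D_x\bigl[e^{\frac ih \vf} a\bigr] = e^{\frac ih \vf}\Bigl[\abs{\partial_x\vf}^2 a - ih\bigl(2\partial_x\vf.\partial_x a + a\D_x\vf\bigr) - h^2\D_x a\Bigr],
\]
so that adding $(V_1-ihV_2)a \, e^{\frac ih \vf}$ yields the full expression for $\hh[e^{\frac ih \vf}a]$ inside the integral.

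The second piece is trivial: $I_h(\partial_t a(t,h),\vf)u$ is already in the desired form. For the third piece, use the fact that $\hoh$ acts only on $u(y)$ and integrate by parts twice in $y$; since $-h^2\D_y e^{-\frac ih \innp y \x} = \abs\x^2 e^{-\frac ih \innp y \x}$, one obtains
\[
I_h(a(t,h),\vf)\, \hoh u = I_h(\abs\x^2 a(t,h),\vf) u .
\]
Formally this uses smoothness and decay of $u$, but since the identity will be applied in the sense of operators on a suitable dense subspace (e.g.\ $\Sc(\R^n)$), this is harmless.

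Finally I would assemble the three contributions, multiplying the first by $-i/h$ and the third by $+i/h$. The $\abs{\partial_x\vf}^2+V_1$ term from $\hh$ and the $\abs\x^2$ term from $\hoh$ combine (with opposite signs after multiplication by $\mp i/h$) into the prefactor $-\frac ih(\abs{\partial_x\vf}^2 + V_1 - \x^2) a$. The $-ih$ terms from $-h^2\D_x$, multiplied by $-i/h$, give $-(2\partial_x\vf.\partial_x a + a\D_x\vf)$, and the $-ihV_2 a$ term gives $-V_2 a$, which together with $\partial_t a$ form the middle bracket in~\eqref{defp}. The $-h^2\D_x a$ term, multiplied by $-i/h$, produces the remaining $ih\D_x a$. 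Since every term carries the factor $e^{\frac ih(\vf - \innp y \x)}$, the result is $I_h(p(t,h),\vf)u$ with $p(t,h)$ as stated. The only step that requires attention is keeping track of the sign changes in the imaginary units; everything else is the product and chain rules.
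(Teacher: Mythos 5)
Your computation is correct: the chain/product rule applied to $-h^2\Delta_x\bigl[e^{\frac ih \vf}a\bigr]$, the integration by parts in $y$ giving $I_h(a,\vf)\hoh = I_h(\x^2 a,\vf)$, and the bookkeeping of the $\mp i/h$ factors all check out and reproduce exactly the symbol $p(t,h)$ of \eqref{defp}, including the signs of the $V_2$ and $ih\D_x a$ terms. This is precisely the direct calculation the paper has in mind (it states the proposition without proof as an elementary computation), so nothing further is needed.
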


\begin{remark*}
If moreover $a(t,h)$ is of the form:
\begin{equation*} 
a(t,h)= \sum_{j=0}^N h^j a_j(t) 
\end{equation*}
with $a_j\in\symbor$ for all $j\in \Ii 0 N$, then $p(t,h)$ takes the form:
\begin{equation*} 
\begin{aligned}
p(t,h) =
& -\frac ih (\abs{\partial_x \vf}^2 + V_1 - \x^2) a(t,h) 
 + \Big(\partial_t a_0(t,h) - 2 \partial_x a_0(t) . \partial_x \vf - a_0(t) \D_x \vf - a_0(t) V_2 \Big)\\
& + \sum_{j=1}^N h^{j} \Big(\partial_t a_j(t,h) - 2\partial_x a_j(t) . \partial_x \vf - a_j(t) \D_x \vf - a_j(t) V_2 + i \D_x a_{j-1}(t)\Big) \\
& + i h^{N+1} \D_x a_N(t)
\end{aligned}
\end{equation*}
This gives the transport equations the symbols $a_j$ have to satisfy if we want $I_h(p(t,h),\vf) = \bigo h 0 (h^{N+1})$.
\end{remark*}

\begin{remark*}
Similarly we have:
\begin{equation*}
-\frac ih \hh^* I_h(a(t,h),\vf) + I_h(\partial_t a(t,h),\vf) + \frac ih I_h(a(t,h),\vf) \hoh = I_h(p_*(t,h),\vf)
\end{equation*}
where:
\begin{eqnarray*}
\lefteqn{p_*(t,h)}\\
&& = -\frac ih (\abs{\partial_x \vf}^2 + V_1 - \x^2) a(t,h) + \Big(\partial_t a(t,h) - 2\partial_x a(t,h) . \partial_x \vf - a(t,h) \D_x \vf + a(t,h) V_2\Big)\\
&& \quad  +  ih  \D_x a(t,h)
\end{eqnarray*}
\end{remark*}

\begin{lemma} \label{flotphi} Let $\vf$ be a function which satisfies (\ref{prophi2}). Then for all $(x,\x) \in \R^{2n}$, the Cauchy problem:
\begin{equation*}
\left\{ \begin{array}l
\frac {\partial r}{\partial t}(t,x,\x) = \partial_x \vf (r(t,x,\x),\x)\\ r(0,x,\x) = x
\end{array} \right.
\end{equation*}
has a unique solution defined on $\R$. Furthermore, for $\gamma \in]0,\s_1[$, if $R$ is large enough, we have the following properties:
\begin{enumerate}[(i)]
\item For $(x,\x) \in \G_\pm (d_1,\pm\s_1)$ and $\pm t\geq 0$ we have:
\begin{equation} \label{minflot2}
\abs{r(t,x,\x)} \geq \abs x + (\s_1-\g) d_1  \abs t 
\end{equation} 
\item  For $(x,\x) \in \G_\pm (d_1,\pm\s_1)$, $\pm t\geq 0$ and $\abs \a + \abs \b \geq 1$, there is a constant $c_{\a,\b}$ such that:
\begin{equation} \label{estimderflot}
\abs{ \partial_x^\a \partial_\x ^\b r(t,x,\x)} \leq c_ {\a,\b} \max(\abs t , \pppg x ) \pppg x ^{ - \abs\a}
\end{equation}
\end{enumerate}
\end{lemma}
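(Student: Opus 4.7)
Global existence of $r(\cdot,x,\x)$ is immediate: by \eqref{prophi2} (or \eqref{phi-innp} after replacing $\vf$ by $\vf_{R,\pm}$ as in the remark following Proposition~\ref{isozakik}) one has $\partial_x\vf(x,\x)=\x+O(\pppg x^{-\rho})$, so the vector field is smooth and its $x$-derivatives are bounded on $\R^{2n}$. Cauchy--Lipschitz yields a unique $C^\infty$ global flow $r(t,x,\x)$. Throughout, fix $R$ large enough (to be enlarged finitely many times) so that $|\partial_x^\a\partial_\x^\b(\vf_{R,\pm}-\innp x\x)|\le \e\,\pppg x^{1-\rho-|\a|}$ with $\e=\e(R)$ arbitrarily small.

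For (i), I treat the $+$ case with $t\ge 0$ (the others are symmetric). The main step is a \emph{bootstrap on the outgoing condition}: fix $\g' \in ]0,\g[$ and consider
\[
F(t) = \innp{r(t)}{\x} - (\s_1-\g')\abs{r(t)}\abs{\x}.
\]
At $t=0$, $F(0)\ge \g'\abs x\abs\x>0$ by the assumption $(x,\x)\in\G_+(d_1,\s_1)$. Writing $\partial_x\vf(r,\x)=\x+e(r,\x)$ with $\abs e\le \e\pppg r^{-\rho}$, at any point $t>0$ where $F(t)=0$ the derivative satisfies
\[
\dot F(t) \ge \bigl(1-(\s_1-\g')^2\bigr)\abs\x^2 - C\e\abs\x \;\ge\; \bigl(1-\s_1^2\bigr)d_1^{\,2} - C\e\, d_2,
\]
which is strictly positive for $R$ large enough (here $d_2$ is an upper bound for $\abs\x$ on any bounded set containing the initial $\x$, but the estimate is uniform in $\abs\x\ge d_1$ because the leading term grows quadratically while the correction is linear). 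Thus $F$ cannot cross zero from above, so $F(t)\ge 0$ on $\R_+$. With this in hand,
\[
\tfrac{d}{dt}\abs{r(t)} \;=\; \tfrac{\innp{r}{\partial_x\vf(r,\x)}}{\abs r} \;=\; \tfrac{\innp r \x}{\abs r}+\tfrac{\innp r e}{\abs r} \;\ge\; (\s_1-\g')\abs\x - \e \;\ge\; (\s_1-\g)\,d_1
\]
after enlarging $R$ once more so that $\e \le (\g-\g')d_1$. Integration in $t$ gives \eqref{minflot2}.

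For (ii), I proceed by induction on $|\a|+|\b|\ge 1$. Differentiating the ODE $\dot r=\partial_x\vf(r,\x)$ in $x$ and $\x$ yields a linear system of the form
\[
\dot A_{\a,\b}(t) \;=\; \Hess_x\vf(r(t),\x)\,A_{\a,\b}(t)+B_{\a,\b}(t),
\qquad A_{\a,\b}(0)=\d_{|\a|=1,\b=0}\cdot\text{Id},
\]
where $B_{\a,\b}$ is a polynomial expression in the derivatives of $r$ of order strictly less than $|\a|+|\b|$ and in derivatives of $\vf$ of order at most $|\a|+|\b|+1$, evaluated at $(r(t),\x)$. By \eqref{prophi2}, $\bigl|\Hess_x\vf(r(t),\x)\bigr|\le C\pppg{r(t)}^{-1-\rho}$, and by part~(i) we have $\pppg{r(t)}\ge 1+\abs x+(\s_1-\g)d_1 t$; hence this norm is integrable in $t\in\R_+$ with integral bounded independently of $(x,\x)$. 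The inductive hypothesis controls $B_{\a,\b}$ in terms of $\max(\abs t,\pppg x)\pppg x^{-|\a|}$-type factors, and derivatives of $\vf$ contribute the power $\pppg x^{-\rho-|\a|}$ that is responsible for the $\pppg x^{-|\a|}$ gain. A Gronwall argument applied to the variation-of-constants formula then closes the induction and yields \eqref{estimderflot}.

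The main obstacle is the bootstrap in (i): verifying that the outgoing cone is (almost) invariant under the flow, uniformly in $\abs\x\ge d_1$, is the only step that truly exploits the eikonal structure of $\vf_{R,\pm}$ (via the smallness in \eqref{phi-innp}); the rest is a linearization argument combined with the decay of $\Hess_x\vf$ along trajectories escaping to infinity.
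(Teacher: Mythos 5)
Your proof is correct, and for part (i) it takes a genuinely different route from the paper's. The paper exploits the cutoff construction \eqref{phir}: for $R$ large one has $\abs{\partial_x\vf-\x}\leq \g d_1$ on all of $\R^{2n}$, so the integral formula $r(t,x,\x)=x+t\x+\int_0^t(\partial_x\vf(r(s,x,\x),\x)-\x)\,ds$ gives $\abs{r(t,x,\x)-x-t\x}\leq \g d_1\abs t$, and \eqref{minflot2} follows in one line from the elementary outgoing estimate $\abs{x+t\x}\geq\abs x+\s_1\abs t\abs\x$ valid on $\G_\pm(d_1,\pm\s_1)$. Your cone-invariance bootstrap (propagating $\innp{r(t)}{\x}\geq(\s_1-\g')\abs{r(t)}\,\abs\x$ and then integrating $\frac{d}{dt}\abs{r(t)}\geq(\s_1-\g)d_1$) reaches the same conclusion using only the smallness of $\partial_x\vf-\x$; it is a bit longer and needs the small continuity details you gloss over (the first-zero argument for $F$, the fact that $\abs{r(t)}$ stays bounded away from $0$ so that it is differentiable, and the factoring $\abs\x\bigl((1-(\s_1-\g')^2)\abs\x-C\e\bigr)$ to get uniformity in $\abs\x\geq d_1$), but it is more robust in that it does not rely on comparing with the free trajectory globally. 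The global existence argument and part (ii) follow essentially the paper's route (induction on $\abs\a+\abs\b$, integrability of $\Hess_x\vf$ along the flow via part (i), Gronwall/variation of constants). The one point you should state more carefully in (ii) is that the $\vf$-derivatives appearing in the inhomogeneous terms are evaluated at $(r(t),\x)$, not at $(x,\x)$: the weight they contribute is $\pppg{r(t)}^{-\abs{\g}-\rho}$, and it is the lower bound $\pppg{r(t)}\geq c\max(\abs t,\pppg x)$ from part (i) that simultaneously absorbs the $\abs t$-growth of the product of lower-order factors and converts the decay along the trajectory into the $\pppg x^{-\abs\a}$ gain, exactly as in the paper's estimate of the terms $B_j$ in the inductive step.
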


\begin{proof}
Let $(x,\x) \in \R^{2n}$. We have:
\begin{equation} \label{exprflot}
r(t,x,\x) = x + t\x + \int _0 ^t (\partial_x \vf(r(s,x,\x),\x)-  \x) \, ds
\end{equation}
where $r(\cdot,x,\x)$ is defined, that is everywhere since $(\partial_x \vf (r(t,x,\x),\x) - \x)$ is bounded according to \eqref{prophi2}.\\

\emph{(i)} By \eqref{phi-innp}, if $R$ is large enough we can assume that:
\begin{equation*}
\forall (x,\x) \in \R^{2n}, \quad \abs{\partial_x \vf (x,\x) - \x } \leq \gamma d_1
\end{equation*}
and hence:
\begin{equation*} 
\abs{r(t,x,\x) - x - t\x} \leq  \abs t \gamma d_1
\end{equation*}
If $(x,\x) \in \G_\pm(d_1,\pm\s_1)$ and $\pm t\geq 0$, then:
\begin{equation*}
\abs{x +t \x} \geq \frac 1 {\abs x}\innp {x} {x+t\x} \geq \abs x  + \s_1 \abs t   \abs \x \geq \abs x + \abs t \s_1 d_1
\end{equation*}
so:
\begin{equation*} 
\abs{r(t,x,\x)}  \geq \abs{x+t\x} - \gamma \abs t d_1 \geq \abs x + (\s_1 - \g) d_1 \abs t
\end{equation*}
which proves \eqref{minflot2}.\\

\emph{(ii)} We prove \eqref{estimderflot} by induction on $\abs\a + \abs \b$, beginning by the case $\abs \a = 1$, $\b= 0$. Let $\pm t \geq 0$ and $(x,\x)\in \G_+(d_1,\s_1)$. We have:
\begin{equation*}
\begin{aligned}
\partial_t\partial_x r(t,x,\x) = \partial^2_x \vf(r(t,x,\x),\x).\partial_x r(t,x,\x)
\end{aligned}
\end{equation*}
According to Gronwall lemma, \eqref{prophi2} and \eqref{minflot2}, we obtain the estimate:
\[
\begin{aligned}
\nr{\partial_x r(t,x,\x) }
& \leq \exp\left( \int_0^t \nr{\partial_x^2 \vf (r(s,x,\x),\x)} \, ds \right) \leq \exp\left( \int_0^t c \pppg{r(s,x,\x)}^{-1-\rho} \, ds \right)\\
& \leq \exp\left( \int_0^t c \pppg s ^{-1-\rho} \, ds \right) \leq c \leq c \max (\abs t,\pppg x) \pppg x \inv
\end{aligned}
\]

Similarly, if $\a=0$ and $\abs \b = 1$ we have:
\begin{equation*}
\begin{aligned}
\partial_t  \partial_\x r(t,x,\x) 
 = \partial_x^2 \vf(r(t,x,\x),\x).\partial_\x r(t,x,\x) +  \partial_{x} \partial_\x \vf(r(t,x,\x),\x) 
\end{aligned}
\end{equation*}
and then:
\begin{equation*}
\begin{aligned}
\nr{\partial_t  \partial_\x r(t,x,\x) }  \leq  \abs{\int_{s = 0}^t \nr{\partial_{x} \partial_\x \vf(r(s,x,\x),\x) } \exp \left( \int_{\t=s}^t \nr{ \partial^2_x \vf(r(\t,x,\x),\x)}\, d\t\right)\, ds}  \leq c  \abs t 
\end{aligned}
\end{equation*}

We now assume that we have proved \eqref{estimderflot} for $1 \leq \abs \a + \abs \b \leq k \in\N^*$ and we consider $\a$ and $\b$ such that $\abs \a + \abs \b = k+ 1$. For $j\in\Ii 1 n$ we have:
\begin{equation*}
\begin{aligned}
\partial_t \partial_x^\a \partial_\x^\b r_j(t,x,\x)  
& = \partial_x^\a \partial_\x^\b (\partial_{x_j} \vf(r(t,x,\x),\x)) \\
& = \sum_{l=1}^n \partial_{x_l,x_j}^2 \vf(r(t,x,\x),\x) \, \partial_x^\a \partial_\x^\b r_l(t,x,\x) + B_j(t,x,\x)
\end{aligned}
\end{equation*}
where $B_j$ is a sum of terms of the form:
\begin{equation*} \label{terme-bj}
(\partial_x^{\g}\partial_\x^\d  \partial_{x_j}\vf)(r(t,x,\x),\x) \prod_{s=1}^{\abs \g} (\partial^{\a_s}_x \partial_\x^{\b_s} r_{l_s})(t,x,\x)
\end{equation*}
with $\abs \gamma + \abs \d \geq 2$ and for all $s$ : $l_s  \in \Ii 1 n$, $\abs{\a_s} + \abs {\b_s} \leq k$, $\sum \a_s=\a$ and $\d + \sum \b_s= \b$. Then $B_j$ is smaller than:
\[
\pppg {r(t,x,\x)}^{-\abs \g- \rho} \prod_{s= 1} ^{\abs \g} \max ( \abs t, \pppg x) \pppg x ^{-\abs {\a_s}} \leq c \pppg x ^{-\a} 
\]
and finally \eqref{estimderflot} holds since:
\begin{equation*}
\nr{\partial_t \partial_x^\a \partial_\x^\b r(t,x,\x)} \leq  \abs{\int_{s = 0}^t \nr{B(t,x,\x)} \exp \left( \int_{\t=s}^t \nr{ \partial^2_x \vf(r(\t,x,\x),\x)}\, d\t\right)\, ds}\leq c \abs t \pppg x ^{-\a}
\end{equation*}
\end{proof}

Let $r_\pm$ be the functions defined in this proposition for $\vf = \vf_\pm$ and:
\begin{equation*} 
F_\pm(t,x,\x) =  \D_x \vf_\pm(r_\pm(t,x,\x),\x) \pm V_2 (r_\pm(t,x,\x))
\end{equation*}
In particular we have:
\begin{equation*}
F_\pm(0,x,\x) =  \D_x \vf_\pm (x,\x) \pm V_2(x)
\quad \text{and} \quad 
F_\pm(t,r_\pm(s,x,\x),\x) = F_\pm(t+s,x,\x)
\end{equation*}

\begin{proposition} \label{prop-eqtrans}
The functions $a_{j,\pm}(t,h), j\in \N$ defined by:
\begin{equation*} 
a_{0,\pm}(t,x,\x) = \exp\left({- \int _{s=0}^t \left(F_\pm(2s,x,\x)\right)}\, ds \right)
\end{equation*}
and for $j \geq 1$:
\begin{equation*}  
a_{j,\pm}(t,x,\x) = i \int_{\t = 0 }^t \D_x a_{j-1,\pm}(\t,r_\pm(2\t,x,\x),\x )a_0(\t,x,\x) \, d\t
\end{equation*}
are solutions of the transport equations:
\begin{equation}  \label{transpa0}
\partial_t a_{0,\pm}(t,h) - 2\partial_x a_{0,\pm}(t) . \partial_x \vf_\pm - a_{0,\pm}(t) \D_x \vf_\pm   \mp a_{0,\pm}(t) V_2 = 0
\end{equation}
and for $j\geq 1$:
\begin{equation} \label{transpaj}
\partial_t a_{j,\pm}(t,h) - 2 \partial_x a_{j,\pm}(t) . \partial_x \vf_\pm - a_{j,\pm}(t) \D_x \vf_\pm  \mp a_{j,\pm}(t) V_2 + i \D_x a_{j-1}(t) = 0
\end{equation}
and satisfy estimates:
\begin{equation} \label{estima0}
\text{for } \pm t \geq 0,  (x,\x) \in \G_\pm(d_1, \pm \s_1),\quad \abs{\partial_x^\a \partial_\x^\b a_{j,\pm}(t,x,\x)} \leq c_{\a,\b} \abs{t} ^{j+(\abs \a + \abs \b)(1-\rho)} \pppg x ^{- \abs \a}
\end{equation}
\end{proposition}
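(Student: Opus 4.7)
The strategy is to verify the transport equations \eqref{transpa0}--\eqref{transpaj} by a direct computation exploiting the fact that $r_\pm$ realises, up to a factor $2$, the characteristic flow of the first-order operator $L_\pm := \partial_t - 2\partial_x\vf_\pm(x,\x)\cdot\partial_x$, and then to establish the estimates \eqref{estima0} by a double induction on $j$ and on $\abs\a+\abs\b$.

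For \eqref{transpa0}, I would differentiate the exponential formula defining $a_{0,\pm}$ under the integral sign. The key identity comes from differentiating the cocycle relation $r_\pm(2s,r_\pm(\tau,x,\x),\x)=r_\pm(2s+\tau,x,\x)$ in $\tau$ at $\tau=0$: this gives $\partial_x r_\pm(2s,x,\x)\cdot\partial_x\vf_\pm(x,\x) = \partial_x\vf_\pm(r_\pm(2s,x,\x),\x)$, which after a chain-rule computation implies $2\partial_x\vf_\pm(x,\x)\cdot\partial_x F_\pm(2s,x,\x) = \partial_s F_\pm(2s,x,\x)$. Substituting this into $L_\pm a_{0,\pm}$, the $x$-derivative contribution telescopes to $a_{0,\pm}(F_\pm(2t,x,\x)-F_\pm(0,x,\x))$ and combines with $\partial_t a_{0,\pm}$ to leave precisely the inhomogeneity $a_{0,\pm}F_\pm(0,x,\x) = a_{0,\pm}(\D_x\vf_\pm\pm V_2)(x,\x)$ demanded by \eqref{transpa0}. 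Equation \eqref{transpaj} for $j\geq 1$ is then the corresponding Duhamel representation of the inhomogeneous transport equation with source $-i\D_x a_{j-1,\pm}$, transported along the characteristics and integrated against $a_{0,\pm}$.

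The estimates \eqref{estima0} follow by induction on $j$, and for each fixed $j$ by induction on $\abs\a+\abs\b$. The base case $j=\abs\a=\abs\b=0$ reduces to controlling the exponent: on the cone $(x,\x)\in\G_\pm(d_1,\pm\s_1)$ with $\pm s\geq 0$, \eqref{minflot2} gives $\pppg{r_\pm(2s,x,\x)}\geq c(1+\abs s)$, so by \eqref{prophi2} applied to $\partial_x^2\vf_\pm$ the $\D_x\vf_\pm$ part of $F_\pm$ is integrable in $s$, while the $V_2$ part enters the exponent with non-positive sign (combining the choice of orientation $\pm t\geq 0$ with $V_2\geq 0$) and yields an exponential factor $\leq 1$. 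For the inductive step in $\abs\a+\abs\b$, each extra derivative of $a_{0,\pm}$ brings down a factor $\int_0^t \partial F_\pm(2s,x,\x)\,ds$; expanding by the chain rule, the derivatives of the flow are controlled by \eqref{estimderflot} and those of $\vf_\pm$, $V_2$ by \eqref{prophi2}--\eqref{h1}, giving a factor of order $\abs t^{1-\rho}\pppg x^{-1}$ per $\partial_x$ and $\abs t^{1-\rho}$ per $\partial_\x$. The induction in $j$ then uses the defining formula for $a_{j,\pm}$: applying the inductive hypothesis to $\D_x a_{j-1,\pm}$ (two extra $x$-derivatives) produces an integrand of size $\abs\t^{\,j-1+2(1-\rho)}\pppg{\cdot}^{-2}$, which integrates over $\t\in[0,t]$ to yield the claimed $\abs t^{\,j}$ factor.

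The main technical obstacle is the bookkeeping of the simultaneous growth in $\abs t$ and decay in $\pppg x$ under repeated composition with the flow $r_\pm$. The crucial saving feature is that each additional derivative of $\vf_\pm-\innp x\x$ or of $V_2$ gains an extra factor $\pppg\cdot^{-1}$ from \eqref{prophi2} or \eqref{h1}, which exactly compensates the worst-case factor $\max(\abs t,\pppg x)\pppg x^{-1}$ arising from \eqref{estimderflot}. It is this matched cancellation, carried through every level of the chain rule and every step of both inductions, that produces the announced exponent $j+(\abs\a+\abs\b)(1-\rho)$ in $\abs t$ together with the $\pppg x^{-\abs\a}$ factor.
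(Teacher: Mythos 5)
Your treatment of the estimates \eqref{estima0} is essentially the paper's own proof (which, for the record, only proves the estimates): boundedness of $a_{0,\pm}$ from $V_2\geq 0$ together with the integrability of $\D_x\vf_\pm$ along the outgoing flow via \eqref{minflot2} and \eqref{prophi2}, then a chain-rule bookkeeping in which each derivative of $\vf_\pm-\innp x\x$ or of $V_2$ gains $\pppg\cdot^{-1}$ against the loss $\max(\abs t,\pppg x)\pppg x^{-1}$ from \eqref{estimderflot}, and finally an induction on $j$ through the defining integral formula. That part is fine and matches the paper.

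The gap is in your verification of \eqref{transpa0}. Your identity $2\partial_x\vf_\pm(x,\x)\cdot\partial_x F_\pm(2s,x,\x)=\partial_s\bigl(F_\pm(2s,x,\x)\bigr)$ is correct, but carrying it through gives $\partial_t a_{0,\pm}-2\partial_x\vf_\pm\cdot\partial_x a_{0,\pm} = -a_{0,\pm}F_\pm(2t,x,\x)+a_{0,\pm}\bigl(F_\pm(2t,x,\x)-F_\pm(0,x,\x)\bigr) = -\,a_{0,\pm}\bigl(\D_x\vf_\pm(x,\x)\pm V_2(x)\bigr)$, whereas \eqref{transpa0} demands $+\,a_{0,\pm}\bigl(\D_x\vf_\pm\pm V_2\bigr)$. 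So the $a_{0,\pm}$ defined with the minus sign in the exponent satisfies \eqref{transpa0} with the signs of the zeroth-order terms reversed; the equation as literally written is solved by $\exp\bigl(+\int_0^t F_\pm(2s,x,\x)\,ds\bigr)$, which would destroy the uniform bound you (and the paper) obtain from $V_2\geq 0$. Your sentence claiming the telescoping ``leaves precisely the inhomogeneity demanded by \eqref{transpa0}'' is therefore a sign slip, and it conceals a genuine convention mismatch inside the statement itself (note that the paper flips all first- and zeroth-order signs of $p_\pm$ between \eqref{defp} in proposition \ref{propap} and the definition of $p_\pm$ preceding proposition \ref{estimbp}); a correct write-up must either reorient the characteristics/exponent or state explicitly which sign convention the defined $a_{j,\pm}$ actually satisfy and which one the later Duhamel argument needs. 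The same caveat applies to your one-line claim that the formula for $a_{j,\pm}$, $j\geq 1$, ``is the Duhamel representation'' of \eqref{transpaj}: the Duhamel weight along the characteristic through $(t,x)$ is an exponential of an integral from $\t$ to $t$ along the flow issued from $x$, and identifying it with the factor $a_{0,\pm}(\t,x,\x)$ together with the argument $r_\pm(2\t,x,\x)$ in the source term requires the group property of $r_\pm$ and exactly the same sign care; as written this step is asserted rather than checked, and given the discrepancy at $j=0$ it cannot be taken for granted.
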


\begin{proof}
We prove \eqref{estima0}. For $\a,\b \in \N^n$, the derivative $\partial_x^\a \partial_\x^\b a_{0,\pm}(t,x,\x,h)$ is a sum of terms of the form:
\begin{equation*}
\prod_{k=1}^J \partial_x^{\m_k} \partial_\x^{\n_k} \left( \int_0^t F_\pm(2s,x,\x) \, ds \right) a_{0,\pm}(t,x,\x)
\end{equation*}
with $\sum \m_k = \a$, $\sum \n_k = \b$ and for all $k \in \Ii 1 J$: $\abs {\m_k} + \abs {\n_k} \geq 1$ (and in particular ${J\leq \abs \a + \abs \b}$).
We first remark that according to \eqref{prophi2} and \eqref{minflot2} together with non\-ne\-ga\-ti\-veness of $V_2$ the symbol $a_0$ is bounded uniformly in $\pm t \geq 0$. Hence we have to prove:
\begin{equation*}
\abs{\int_0^t \partial_x^{\m_k} \partial_\x^{\n_k} F_\pm(2s,x,\x) \, ds }  \leq c_{\a,\b} \abs t^{(\abs{\m_k}+\abs {\n_k})(1-\rho)}  \pppg x ^{-\abs{\m_k}}
\end{equation*}

Let $\pm t\geq 0$, $(x,\x) \in \G_\pm(d_1,\pm\s_1)$ and $\m,\n \in \N^n$. Then:
\begin{equation*}
\partial_x^\m \partial_\x^\n \left(\int_0 ^t  F_\pm(2s,x,\x) \, ds \right)
\end{equation*}
is a sum of terms of the form:
\begin{equation}  \label{estima0terme}
\int_0^t \partial_x^\d \partial_\x^\l(\D_x\vf_\pm + V_2) (r_\pm(2s,x,\x),\x) \prod_{k=1}^{\abs{\d}} \partial_x^{\m_k} \partial_\x^{\n_k} r_\pm (2s,x,\x)\, ds 
\end{equation}
with $\sum_{j=1}^{\abs \d} \m_k = \m$, $\sum_{j=1}^{\abs{\d}}  \n_k + \l = \n$ and for all $k \in \Ii 1{\abs \d}$: $\abs {\m_k} + \abs{\n_k} \geq 1$. 
By \eqref{h1}, \eqref{prophi2} and \eqref{estimderflot} we have:
\[
\begin{aligned}
\abs{\partial_x^\m \partial_\x^\n \left(\int_0 ^t  F_\pm(s,x,\x) \, ds \right)}
& \leq c \abs t^{1-\rho}  \pppg x ^{-\abs {\m}}
\end{aligned}
\]
this proves \eqref{estima0} for $j=0$. We now prove the general case by induction. For $\a,\b \in \N^n$ the derivative $\partial_x^\a \partial_\x^\b a_{j+1,\pm}(t,x,\x)$ is a sum of terms of the form:
\[
i \int _{\t = 0}^t  \partial_x^\m \partial_\x^\n (\D_x a_{j,\pm}(t,r_\pm(2(\t-t),x,\x),\x)) \times \partial_x^{\a-\m} \partial_\x^{\b - \n} a_{0,\pm}(\t,x,\x)  \,d\t
\]
We already know that for $\t \in [0,t]$:
\[
\abs{\partial_x^{\a-\m} \partial_\x^{\b - \n}a_{0,\pm}(\t,x,\x)} \leq c \abs t^{(1-\rho) (\abs{\a - \m} + \abs{\b - \n})} \pppg x ^{-\abs{\a - \m}}
\]
So it remains to show:
\begin{equation*} 
 \abs{ \partial_x^\m \partial_\x^\n (\D_x a_{j,\pm}(\t,r_\pm(2\t,x,\x),\x))}  \leq c \, \abs t^{j+(1-\rho)(\abs \m + \abs \n)}\pppg x ^{-\abs \m}
\end{equation*}

But $\partial_x^\m \partial_\x^\n (\D_x a_{j,\pm}(\t,r_\pm(2\t,x,\x),\x))$ is a sum of terms of the form:
\[
(\partial_x^\d \partial_\x^\l \D_x a_{j,\pm}) (t,r_\pm(2\t,x,\x),\x) \prod_{k= 1}^{\abs{\d}} (\partial_x^{\m_k} \partial_\x^{\n_k} r_\pm)(2\t,x,\x)
\]
with $\m = \sum_{k=1}^{\abs \d} \m_k$ and $\n = \l + \sum_{k=1}^{\abs \d} \n_k$, and:
\begin{eqnarray*}
\lefteqn{ \abs{(\partial_x^\d \partial_\x^\l \D_x a_{j,\pm}) (\t,r_\pm(2\t,x,\x),\x) \prod_{j= 1}^{\abs{\d}} (\partial_x^{\m_j} \partial_\x^{\n_j} r_\pm)(2\t,x,\x)}}\\
&& \leq c  \abs \t^{j+ (1-\rho)(\abs \d + \abs \l + 2)} \pppg {r_\pm(2\t,x,\x)} ^{-\abs \d-2}  \max(\abs{2\t},\pppg x) ^{\d}  \pppg x ^{- \sum_{j=1}^{\abs\d} \m_j} \\
&& \leq c \, \abs t ^{j + (1-\rho)(\abs \d + \abs \l)} \pppg x ^{-\abs \m}
\end{eqnarray*}
which concludes the proof after integration over $\t\in[0,t]$.
\end{proof}

\begin{remark*} 
This is for this part of the proof that we need a time-dependant symbol. Indeed, following exactly the proof of \cite{wang88} would have led to consider:
\[
a_0(x,\x) = \exp\left(\int_0^\infty F(t,x,\x)\, dt\right)
\]
which may have no sense for a long range imaginary part of the potential $V_2$. For a short range potential we do not have such a problem and the sign of $V_2$ we have used here does not matter.
\end{remark*}

Let $\s_2$ and $\s_3$ such that $\s_1 < \s_2 < \s_3<\s$, $R_2$ and $R_3$ such that $R_1 < R_2 <R_3<R$ and $d_2,d_3$ such that $d_1 < d_2 < d_3 < d$. We consider functions $\rho_1 \in C^\infty (\R)$ such that $\rho_1(s) = 0$ if $s\leq \s_2$ and 1 if $s \geq \s_3$, $\rho_2 \in C^\infty (\R)$ such that $\rho_2(s) = 0$ and $s \leq d_2$ and 1 if $s \geq d_3$ and $\rho_3 \in C^\infty (\R)$ such that $\rho_3(s) = 0$ if $s\leq R_2$ and $\rho_3(s) =1$ if $s \geq R_3$. Then we set:

\begin{equation*} 
b_{\pm}(t,x,\x,h) = \p_{\pm}(x,\x) \sum_{j=0}^N h^j a_{j,\pm}(t,x,\x) 
\quad \text{where: }
\p_\pm(x,\x) =  \rho_1 \left( \frac {\pm\innp x \x}{\abs x \abs \x} \right) \rho_ 2 (\abs \x) \rho_3(\abs x )
\end{equation*}
We also set:
\begin{equation*} 
\begin{aligned}
p_{\pm}(t,h)
& = \frac ih (\abs{\partial_x \vf_\pm}^2 + V_1 - \x^2) b_{\pm}(t,h)  \\
& \quad + (\partial_t b_{\pm}(t,h) + 2\partial_x b_{\pm}(t,h) . \partial_x \vf _\pm + b_{\pm}(t,h) \D_x \vf_\pm \pm b_{\pm}(t,h) V_2) \\
& \quad -  ih^{N+1}  \D_x b_{\pm}(t,h)
\end{aligned}
\end{equation*}
as given by proposition \ref{propap}.

\begin{proposition} \label{estimbp}
The symbols $b_\pm$ and $p_\pm$ satisfy:
\begin{enumerate} [(i)]
\item $\supp b_\pm \subset \G_\pm(R_2,d_2,\pm\s_2)$ and for $\pm t \geq 0$, $(x,\x) \in \G_\pm(R_2,d_2,\pm\s_2)$ and $\a,\b \in \N^n$ we have:
\begin{equation} \label{estimb}
\abs{\partial_x^\a \partial_\x^\b b(t,x,\x,h)} \leq c_{\a,\b} \abs t ^{N+(\abs \a + \abs \b)(1-\rho)} \pppg x ^{-\abs \a}
\end{equation}
\item $\supp p_\pm \subset \G_\pm(R_2,d_2,\pm\s_2)$ and for $\pm t \geq 0$, $(x,\x) \in \G_\pm(R_2,d_2,\pm\s_2)$ and $\a,\b \in \N^n$ we have:
\begin{equation} \label{estimp}
\abs{\partial_x^\a \partial_\x^\b p_\pm(t,x,\x,h)} \leq c_{\a,\b} \abs t ^{N+(2+\abs \a + \abs \b)(1-\rho)} \pppg x ^{- \abs \a}
\end{equation}
If furthermore $(x,\x)\in \G_\pm(R_3,d_3,\pm\s_3)$ then we have:
\begin{equation} \label{estimph}
\abs{\partial_x^\a \partial_\x^\b p_\pm(t,x,\x,h)} \leq c_{\a,\b} h^{N+1} \abs t ^{N+(2+\abs \a + \abs \b)(1-\rho)} \pppg x ^{-2 - \abs \a}
\end{equation}
\end{enumerate}
\end{proposition}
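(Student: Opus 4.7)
The proof has two parts. The support statement in (i) is immediate from the construction of $\p_\pm$ as a product of cutoffs forcing $\pm\langle x,\x\rangle/(|x||\x|) \geq \s_2$, $|\x| \geq d_2$, and $|x| \geq R_2$. For the derivative bounds \eqref{estimb} on $b_\pm = \p_\pm \sum_{j=0}^N h^j a_{j,\pm}$, I first check that $|\partial_x^\a \partial_\x^\b \p_\pm| \leq c_{\a,\b} \pppg{x}^{-|\a|}$ on $\supp\p_\pm$: derivatives of $\rho_3(|x|)$ and of the angular function $\pm\langle x,\x\rangle/(|x||\x|)$ in $x$ produce $|x|^{-|\a|} \lesssim \pppg{x}^{-|\a|}$ (the denominator $|x|$ being bounded below by $R_2$), while derivatives in $\x$ of $\rho_2(|\x|)$ and of the angular function are bounded. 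Combined with Leibniz's rule and the bounds \eqref{estima0} on $a_{j,\pm}$ from Proposition \ref{prop-eqtrans}, summing the finitely many terms in $j$ (and using $h \leq 1$) yields \eqref{estimb}, where the small-$|t|$ regime is absorbed trivially into the constants.

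For part (ii), the support statement is inherited from (i) since every term in the expression for $p_\pm$ contains $b_\pm$ or one of its derivatives. The first estimate \eqref{estimp} rests on the observation that, provided the parameter used to build the modified phase $\vf_{R,\pm}$ (see the remark after Proposition \ref{isozakik}) is chosen no larger than $R_2$, the inclusion $\G_\pm(R_2,d_2,\pm\s_2) \subset \G_\pm(R,d_0,\s_0)$ (using $d_0 < d_2$ and $\s_0 < \s_2$) forces the eikonal equation \eqref{prophi1r} to hold on all of $\supp b_\pm$. This eliminates the potentially singular term $\frac{i}{h}(|\partial_x\vf_\pm|^2+V_1-\x^2) b_\pm$ entirely and no $h^{-1}$ factor survives. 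The remaining terms are then bounded by Leibniz using the estimate from part (i), together with the decay bounds \eqref{prophi2} on $\vf_\pm$ and \eqref{h1} on $V_1, V_2$; the exponent $(2+|\a|+|\b|)(1-\rho)$ on $|t|$ is produced by the $\Delta_x b_\pm$ contribution, which carries two $x$-derivatives beyond those coming from $\partial^\a$.

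The improved estimate \eqref{estimph} on $\G_\pm(R_3,d_3,\pm\s_3)$ is the point of the whole construction. On this smaller set $\p_\pm \equiv 1$, so $b_\pm$ reduces to $\sum_{j=0}^N h^j a_{j,\pm}$ with no contribution coming from derivatives of $\p_\pm$. Substituting into the formula defining $p_\pm$ and invoking the transport equations \eqref{transpa0} and \eqref{transpaj} term by term produces a telescoping cancellation: the bracket $\sum_{j=0}^N h^j(\partial_t a_{j,\pm} + 2 \partial_x a_{j,\pm} \cdot \partial_x \vf_\pm + a_{j,\pm} \Delta_x \vf_\pm \pm a_{j,\pm} V_2)$ combines with the Laplacian term in $p_\pm$ so that all $h^j$ contributions for $j \leq N$ cancel and only the residual $\pm ih^{N+1}\Delta_x a_{N,\pm}$ survives. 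Applying \eqref{estima0} to $\Delta_x a_{N,\pm}$ with $|\a|+2$ $x$-derivatives then delivers \eqref{estimph}. The main obstacle is the careful bookkeeping of signs in this telescoping, which is exactly what the transport equations \eqref{transpa0}--\eqref{transpaj} were designed to make work; once the cancellation is traced, all remaining estimates reduce to routine applications of Leibniz to bounds already in hand.
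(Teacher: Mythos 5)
Your overall route is the paper's: the support claim and \eqref{estimb} come from Leibniz together with \eqref{estima0}, the dangerous $h^{-1}$ term is removed because the eikonal equation holds on $\supp b_\pm$, and the $h^{N+1}$ gain in \eqref{estimph} comes from the transport equations once $\p_\pm\equiv 1$ (so that all derivatives of $\p_\pm$ vanish) on $\G_\pm(R_3,d_3,\pm\s_3)$. Your explicit remark that the cut-off radius used to build $\vf_{R,\pm}$ must be taken no larger than $R_2$, so that \eqref{prophi1r} holds on all of $\supp b_\pm$, addresses a point the paper leaves implicit, and it is consistent with the rest of the construction since $R_2$ may be chosen as large as needed below $R$.

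The one step that is incomplete as written is \eqref{estimp}: after discarding the eikonal term, the ``remaining terms'' include $\partial_t b_\pm$, and this is not controlled by part (i) together with \eqref{prophi2} and \eqref{h1} --- estimate \eqref{estima0} only bounds $x$- and $\x$-derivatives of the $a_{j,\pm}$, not their time derivatives. The paper avoids the issue by invoking the transport equations \eqref{transpa0}--\eqref{transpaj} already at this stage, not only for \eqref{estimph}: on all of $\supp b_\pm$ they reduce $p_\pm$ to $2\,\partial_x\p_\pm\cdot\partial_x\vf_\pm\sum_j h^j a_{j,\pm}$ plus an $O(h^{N+1})$ Laplacian remainder, after which both \eqref{estimp} and \eqref{estimph} follow from \eqref{estima0} and \eqref{estimb} exactly as in your treatment of \eqref{estimph}. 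Equivalently you could bound $\partial_t a_{j,\pm}$ by differentiating the explicit formulas of Proposition \ref{prop-eqtrans}; either patch is one line, but the Leibniz-from-(i) argument you state does not cover the $\partial_t$ term. (A minor imprecision you share with the statement itself: since $b_\pm(0,\cdot)=\p_\pm\not\equiv 0$, the bounds must really be read with $\pppg t$ in place of $\abs t$, so the small-$\abs t$ regime cannot literally be ``absorbed into the constants''.)
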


\begin{proof}
\eqref{estimb} comes from \eqref{estima0}. According to \eqref{transpa0} and \eqref{transpaj} we have:
\[
p_\pm(t,x,\x,h) = 2\partial_x \p_\pm(x,\x).\partial_x \vf_\pm(x,\x) \sum_{j=0}^N a_{j,\pm}(t,x,\x)  -ih^{N+1} \D_x b_\pm(t,x,\x,h)
\]
so \eqref{estimp} is a consequence of \eqref{estima0} and \eqref{estimb}. Finally, it remains to remark that for ${\pm t \geq 0}$ and $(x,\x) \in \G_\pm(R_3,d_3,\pm \s_3)$ we have $p_\pm(t,h) = -ih^{N+1} \D_x b_{\pm}(t,h)$ to get \eqref{estimph} from \eqref{estimb}.
\end{proof}

\begin{proposition} \label{propbe}
Let $R_5 \in ]R_3,R[$, $d_5 \in ]d_3,d[$ and $\s_5 \in ]\s_3,\s[$. There exists a symbol $e_{\pm}(h)$ of the form $e_{\pm}(h) = \sum_{j=0}^N h^j f_{j,\pm}$ with $f_{j,\pm} \in \symb_{-j}$ and $\supp f_{j,\pm} \subset \G_\pm(R_5,d_5,\pm \s_5)$ such that:
\begin{equation*}
I_h(b_\pm(0,h),\vf) I_h(e_{\n,\pm}(h),\vf)^* = \o_\pm(x,hD) + h^{N+1} \Op(r_{\pm}(h))
\end{equation*}
where $r_{\pm} \in \symb_{-N}$ uniformly in $h$.
\end{proposition}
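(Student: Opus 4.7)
The starting observation is that at $t=0$ the formulas of Proposition~\ref{prop-eqtrans} give $a_{0,\pm}(0,\cdot)=1$ and $a_{j,\pm}(0,\cdot)=0$ for $j\geq 1$, so $b_\pm(0,x,\x,h)=\p_\pm(x,\x)$. The Schwartz kernel of $I_h(\p_\pm,\vf_\pm)\,I_h(e_\pm(h),\vf_\pm)^*$ is therefore
$$K(x,y)=\frac{1}{(2\pi h)^n}\int_{\R^n} e^{\frac{i}{h}(\vf_\pm(x,\x)-\vf_\pm(y,\x))}\,\p_\pm(x,\x)\,\overline{e_\pm(y,\x,h)}\,d\x,$$
and the goal is to pick $e_\pm(h)=\sum_{j=0}^N h^j f_{j,\pm}$ so that $K$ coincides with the kernel of $\Op(\o_\pm)$ modulo a kernel produced by a symbol in $h^{N+1}\symb_{-N}$.

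To reduce $K$ to standard pseudodifferential form I straighten the phase by writing
$$\vf_\pm(x,\x)-\vf_\pm(y,\x)=\langle x-y,\,M_\pm(x,y,\x)\rangle,\quad M_\pm(x,y,\x)=\int_0^1\partial_x\vf_\pm(y+s(x-y),\x)\,ds.$$
Applying \eqref{phi-innp} to $\vf_{R,\pm}$ with $R$ large enough, $M_\pm(x,y,\cdot)-\mathrm{Id}$ is a smooth perturbation as small as desired, uniformly in $(x,y)\in\R^{2n}$, so $\x\mapsto\eta:=M_\pm(x,y,\x)$ is a global $C^\infty$ diffeomorphism with inverse $\x=\X_\pm(x,y,\eta)$ satisfying analogous Kohn--Nirenberg estimates. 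The change of variable turns $K$ into the amplitude form
$$K(x,y)=\frac{1}{(2\pi h)^n}\int_{\R^n} e^{\frac{i}{h}\langle x-y,\eta\rangle}\,\tilde a_\pm(x,y,\eta,h)\,d\eta,\quad \tilde a_\pm=\p_\pm(x,\X_\pm)\,\overline{e_\pm(y,\X_\pm,h)}\,|\det\partial_\x M_\pm|^{-1}.$$

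The $f_{j,\pm}$ are then constructed recursively via the standard left-symbol formula
$$\sigma_\pm(x,\eta,h)\sim\sum_{\alpha\in\N^n}\frac{(-ih)^{|\alpha|}}{\alpha!}\,\partial_\eta^\alpha D_y^\alpha\tilde a_\pm(x,y,\eta,h)\Big|_{y=x},$$
imposing $\sigma_\pm=\o_\pm$ modulo $h^{N+1}$. The resulting triangular system reads, at order $h^0$,
$$\p_\pm(x,\X_\pm(x,x,\eta))\,\overline{f_{0,\pm}(x,\X_\pm(x,x,\eta))}\,|\det\partial_\x M_\pm(x,x,\X_\pm(x,x,\eta))|^{-1}=\o_\pm(x,\eta),$$
which is solvable on $\supp\o_\pm\subset\G_\pm(R,d,\pm\s)\subset\G_\pm(R_5,d_5,\pm\s_5)$, where $\p_\pm\equiv 1$ by the construction of $\rho_1,\rho_2,\rho_3$; extending by $0$ outside and multiplying by a cutoff supported in $\G_\pm(R_5,d_5,\pm\s_5)$ defines $f_{0,\pm}\in\symb_0$. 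At each subsequent order, $f_{j,\pm}\in\symb_{-j}$ is determined explicitly from $f_{0,\pm},\dots,f_{j-1,\pm}$ and finitely many derivatives of $\vf_\pm$; the $\symb_{-j}$ class is produced by the $(j-i)$ extra derivatives acting on $\vf_\pm-\langle x,\x\rangle$ in the coefficient of $f_{i,\pm}$, each of which carries a factor $\pppg x ^{-\rho}$ from \eqref{phi-innp}.

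The main obstacle is the rigorous justification of this symbolic calculus uniformly in $h$ on a non-compact region: one must verify that $\tilde a_\pm$ belongs to the appropriate Kohn--Nirenberg class in $(x,y,\eta)$ with the decay in $x$ inherited from \eqref{phi-innp} and from the $\symb_{-j}$ structure of the $f_{j,\pm}$, and that the Taylor remainder after the $N$-th term of the asymptotic expansion of $\sigma_\pm$ lies in $h^{N+1}\symb_{-N}$ uniformly in $h$. The residual symbol $r_\pm(h)\in\symb_{-N}$ of the proposition is precisely this Taylor remainder.
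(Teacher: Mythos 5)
Your proposal is correct in outline, but note how it sits relative to the paper: the paper does not prove this proposition internally at all — its proof is the single remark that $b_\pm(0,h)=\p_\pm$ (which you also make) together with a citation of lemma 4.5 of \cite{wang88}. What you have written is essentially a reconstruction of the proof of that cited lemma: compute the kernel of $I_h(\p_\pm,\vf_\pm)I_h(e_\pm(h),\vf_\pm)^*$, straighten the phase by the Kuranishi trick, invert $\x\mapsto M_\pm(x,y,\x)$ globally for $R$ large using \eqref{phi-innp}, and determine $f_{0,\pm},\dots,f_{N,\pm}$ recursively from the amplitude-to-left-symbol expansion, the leading equation being solvable because $\p_\pm\equiv 1$ on the (slightly distorted) support of $\o_\pm$ and because the strict inequalities $R_3<R_5<R$, $d_3<d_5<d$, $\s_3<\s_5<\s$ leave room for the $O(R^{-\rho_1})$ distortion while keeping $\supp f_{j,\pm}\subset\G_\pm(R_5,d_5,\pm\s_5)$; this is exactly the Isozaki--Kitada/Wang construction, so the two routes are the same in substance, yours simply supplying the details the paper outsources. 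One piece of your bookkeeping should be corrected: to obtain $f_{j,\pm}\in\symb_{-j}$ (and the remainder in $\symb_{-N}$) it is not enough that each extra derivative of $\vf_\pm-\innp x \x$ carries $\pppg x^{-\rho}$ — that accounting only gives $\symb_{-j\rho}$, which is strictly weaker than the claim when $\rho<1$. The correct count is that, in the term with $\abs{\a}=j-i$ derivatives $\partial_\eta^\a D_y^\a$ acting on the order-$h^i$ part of your amplitude, every $y$-derivative gains a full power $\pppg x^{-1}$ on the diagonal $y=x$: if it falls on $f_{i,\pm}\in\symb_{-i}$ this is the definition of the class, while if it falls on $\X_\pm$ or on the Jacobian factor it produces second (or higher) $x$-derivatives of $\vf_\pm-\innp x \x$, which are $O\big(\pppg x^{-1-\rho}\big)$ by \eqref{prophi2}. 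With that fix, and granting the uniform symbol-class and remainder estimates that you flag but do not carry out (the paper likewise delegates them to \cite{wang88}), your argument is complete.
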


\begin{proof}
This is lemma 4.5 in \cite{wang88}. Note that $b_\pm(0,h)$ is just $\p_\pm$.
\end{proof}

\begin{proposition} \label{propdect}
For all $\d \in \R$, there is $\n \in \N$ such that for all $l\in\R$ and $\pm t \geq 0$ we have:
\begin{equation} \label{eqdect}
\nr{ \pppg x ^l I_h(b_\pm(t,h),\vf) U_0^h(t) I_h(e_\pm,\vf)^* \pppg x ^{-1-\n-l}} \leq c \pppg t ^{-\d}
\end{equation}
and:
\begin{equation} \label{eqdect2}
\nr{ \pppg x ^l I_h(p_\pm(t,h),\vf) U_0^h(t) I_h(e_\pm,\vf)^* \pppg x ^{-1- \n -l}} \leq c h^{N+1}  \pppg t ^{-\d}
\end{equation}
\end{proposition}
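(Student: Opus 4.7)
My plan is to realize $A_\pm(t) := I_h(b_\pm(t,h),\vf_\pm) U_0^h(t) I_h(e_\pm,\vf_\pm)^*$ as an integral operator with a single oscillatory kernel in $\x$ and then exploit non-stationary phase, following the strategy of Wang~\cite{wang88} while carrying along the new time-dependence of the amplitude $b_\pm(t,h)$. The intermediate spatial and momentum integrations collapse via Fourier inversion, giving
\[
A_\pm(t)u(x) = (2\pi h)^{-n}\int e^{\frac{i}{h}\Psi_\pm(t,x,w,\x)}\, b_\pm(t,x,\x,h)\,\overline{e_\pm(w,\x,h)}\, u(w)\, dw\, d\x,
\]
with phase $\Psi_\pm(t,x,w,\x) = \vf_\pm(x,\x) - \vf_\pm(w,\x) - t\x^2$. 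Using \eqref{phi-innp} with any decomposition $\rho_1+\rho_2=\rho$,
\[
\partial_\x\Psi_\pm = (x-w-2t\x) + r_\pm(x,w,\x),\qquad |r_\pm|\leq CR^{-\rho_1}\bigl(\pppg{x}^{1-\rho_2}+\pppg{w}^{1-\rho_2}\bigr),
\]
and likewise $|\partial_\x^2\Psi_\pm|\leq C(1+|t|+|x|+|w|)$ on the joint support.

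\textbf{Non-stationary lower bound.} The geometric heart of the argument is the bound
\[
|\partial_\x\Psi_\pm(t,x,w,\x)|\geq c\,(1+|t|+|x|+|w|+|\x|)
\]
on the joint support, for $\pm t \geq 0$ and $R$ large. The large-$|\x|$ regime is automatic from the $-2t\x$ term once $\pm t \geq 0$; for bounded $|\x|$ the cone conditions $(x,\x)\in\G_\pm(R_2,d_2,\pm\s_2)$ and $(w,\x)\in\G_\pm(R_5,d_5,\pm\s_5)$ force $\hat x$, $\hat w$ and $\hat\x$ to be nearly aligned, and projecting $x-w-2t\x$ onto $\hat\x$, $\hat x$ and $\hat w$ in turn (splitting the cases $|x|\lesssim|w|$ and $|w|\lesssim|x|$) yields the inequality. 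The sublinear remainder $r_\pm$ is absorbed by the freedom to take $R$ large in \eqref{phi-innp}.

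\textbf{Integration by parts and Schur test.} With $L^* = -\tfrac{h}{i}\partial_\x\cdot(\partial_\x\Psi_\pm/|\partial_\x\Psi_\pm|^2)$, each iteration gains a factor $h/(1+|t|+|x|+|w|+|\x|)$ and differentiates the amplitude at most once; by \eqref{estimb} one derivative of $b_\pm(t)$ costs $(1+|t|)^{1-\rho}$, while $e_\pm$ contributes bounded symbol factors. After $M$ integrations by parts and integration in $\x$ (which is absolutely convergent thanks to the $|\x|$ in the denominator), the kernel satisfies
\[
|K_\pm(t,x,w)| \leq C_M\,\frac{h^M\,(1+|t|)^{N+M(1-\rho)}}{(1+|t|+|x|+|w|)^M}.
\]
Plugging this into the Schur test for $\pppg{x}^l K_\pm \pppg{w}^{-1-\n-l}$, the $w$- and $x$-integrations absorb a fixed power of $(1+|t|+|x|+|w|)^{-1}$ depending on $n$, $\n$, $l$; the remainder decays as $(1+|t|)^{N+l+C-M\rho}$, which is $\leq c\pppg{t}^{-\d}$ as soon as $M \geq (N+l+C+\d)/\rho$. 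The second estimate \eqref{eqdect2} is identical with $b_\pm$ replaced by $p_\pm$, the extra factor $h^{N+1}$ coming from \eqref{estimph} on $\G_\pm(R_3,d_3,\pm\s_3)$; the strict inclusions $R_5>R_3$, $d_5>d_3$, $\s_5>\s_3$ guarantee that the transition region outside $\G_\pm(R_3,d_3,\pm\s_3)$ does not contribute, since $e_\pm$ is supported where the improved bound applies.

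\textbf{Main obstacle.} The principal difficulty is the geometric lower bound on $|\partial_\x\Psi_\pm|$: it is precisely what the Isozaki--Kitada construction of $\vf_\pm$ is designed to make uniform, and its verification requires balancing the cone opening $\s$ against the sublinear remainder in \eqref{phi-innp} by taking $R$ large. The new feature relative to \cite{wang88} is the polynomial-in-$|t|$ growth of $b_\pm(t,h)$ from \eqref{estimb}, but this is harmless since $\rho>0$ and $M$ may be taken as large as needed.
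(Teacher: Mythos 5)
There is a genuine gap, and it is at the step you yourself call the geometric heart of the argument. The claimed lower bound $\abs{\partial_\x\Psi_\pm(t,x,w,\x)} \geq c\,(1+\abs t+\abs x+\abs w+\abs \x)$ on the joint support is false. Since $\partial_\x\Psi_\pm = \partial_\x\vf_\pm(x,\x)-\partial_\x\vf_\pm(w,\x)-2t\x = x-w-2t\x + O\big(R^{-\rho_1}(\pppg x^{1-\rho_2}+\pppg w^{1-\rho_2})\big)$, the phase is stationary (up to these sublinear corrections) exactly on the propagation set $x \approx w+2t\x$. For $(w,\x)\in\G_+(R_5,d_5,\s_5)$ and $t\geq 0$, the point $x=w+2t\x$ lies even deeper in the outgoing cone, so $(x,\x)\in\G_+(R_2,d_2,\s_2)$ and both amplitudes are nonzero there (take $\x=d_5 e_1$, $w=R e_1$, $x=(R+2td_5)e_1$). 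So non-stationary phase in $\x$ cannot produce an $h^M$-small, $t^{-M}$-decaying kernel bound on the whole support; indeed, if your kernel estimate with the factor $h^M$ were true, $I_h(b_\pm(t,h),\vf)U_0^h(t)I_h(e_\pm,\vf)^*$ would be $O(h^\infty)$ in weighted spaces, contradicting its role as the principal term of the Isozaki--Kitada parametrix for $U_h(t)\Op(\o_\pm)$ — and note that \eqref{eqdect} is stated without any power of $h$ precisely for this reason.

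The actual mechanism is weaker and split in two. For \eqref{eqdect}, the paper uses only the partial non-stationarity $\abs{\partial_\x\vf_\pm(w,\x)+2t\x}\geq c_0(\abs w+\abs t)$ on $\supp e_\pm$ for $\pm t\geq 0$, i.e.\ the $(w,t)$-part of the phase. The associated operator $L$ leaves $e^{-\frac ih(\vf_\pm(w,\x)+t\x^2)}$ invariant but not $e^{\frac ih\vf_\pm(x,\x)}$, so each integration by parts re-loses the factor $h$ (differentiation of $e^{\frac ih\vf_\pm(x,\x)}$ produces $h\inv\partial_\x\vf_\pm(x,\x)$) and costs one power of $\pppg x$, while gaining only $\abs t^{-\rho}$; this is exactly why the weight is $\pppg x^{-1-\n-l}$ with $\n$ depending on $\d$, and why no $h$ appears in \eqref{eqdect}. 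For \eqref{eqdect2}, the factor $h^{N+1}$ does not come from integrations by parts but from the amplitude: one splits $p_\pm=p_{1,\pm}+p_{2,\pm}$ with a cutoff in $\innp x\x/(\abs x \abs\x)$. On $\supp p_{2,\pm}$ one has $p_\pm=-ih^{N+1}\D_x b_\pm$ by \eqref{estimph}, and the previous scheme gives the time decay; on $\supp p_{1,\pm}$, where $(x,\x)$ lies outside the larger outgoing cone while $(w,\x)\in\supp e_\pm$, the full derivative does satisfy $\abs{\partial_\x\Psi_\pm}\geq c_0(\abs x+\abs w+\abs t)$, and only there does integration by parts with the full-phase operator gain both $h$ and $t^{-\rho}$. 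Your proposal is missing both this $x$-region splitting for $p_\pm$ and the correct (weaker) integration-by-parts scheme for $b_\pm$; as written, the uniform lower bound it relies on fails on the set that carries the operator.
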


\begin{proof}
For $u \in \Sc(\R^n)$ we have:
\begin{equation*}
\begin{aligned}
I_h(b_\pm(t,h),\vf_\pm) & U_0^h(t)  I_h(e_\pm (h),\vf_\pm)^*  u (x)\\
& = \frac 1 {(2\pi h)^n}\int_y\int_\x e^{\frac ih \z_\pm(t,x,y,\x)} b_\pm(t,x,\x,h) \bar {e_\pm(y,\x,h)} u(y)\, d\x\, dy
\end{aligned}
\end{equation*}
with $\z_\pm(t,x,y,\x) = \vf_\pm(x,\x) - \vf_\pm(y,\x) -t \x^2$.
If $R$ is large enough then for $(y,\x) \in \supp e_\pm$ we have:
\begin{equation} \label{majdect}
\begin{aligned}
\abs {\partial_\x \vf_\pm(y,\x) + 2 t \x}
 \geq \innp {\partial_\x \vf_\pm (y,\x) + 2 t \x} {\hat y}
\geq \abs y - c \abs y ^{1-\rho} + 2 \s_5 \abs t \abs \x
 \geq c_0 ( \abs y + \abs t)
\end{aligned}
\end{equation}
for some $c_0 > 0$.\\

We consider the operator $L$ such that for $u\in\Sc(\R^{2n})$:
\begin{equation*}
Lu = ih \frac {(\partial_\x \vf_\pm(y,\x) + 2 t \x).\partial_\x u}{\abs{\partial_\x \vf_\pm(y,\x) + 2 t \x}^2}
\end{equation*}
Then we have:
\begin{equation*}
L^*v = ih \divg_\x . \left(\frac {\partial_\x \vf_\pm(y,\x) + 2 t \x}{\abs{\partial_\x \vf_\pm(y,\x) + 2 t \x}^2}\, v \right)
\end{equation*}
In particular $L\left( e^{-\frac ih (\vf_\pm(y,\x) + t\x^2}\right) =    e^{-\frac ih (\vf_\pm(y,\x) + t\x^2)}$ so for $\n\in\N$:
\begin{eqnarray*}
\lefteqn{I_h(b_\pm(t,h),\vf_\pm) U_0^h(t)I_h(e_\pm(h),\vf_\pm)^*  u (x)}\\
&& = \frac 1 {(2\pi h)^n}\int_y \int_\x e^{-\frac ih (\vf_\pm(y,\x) + t\x^2)} (L^*)^\n \left(e^{\frac ih \vf_\pm(x,\x)} b_\pm(t,x,\x,h) \bar {e_\pm(y,\x,h )}\right) u(y) \, d\x\, dy
\end{eqnarray*}

We can check by induction on $\n \in \N$ that:
\[
(L^*)^\n \left(e^{\frac ih \vf_\pm(x,\x)} b_\pm(t,x,\x,h) \bar {e_\pm(y,\x,h)}\right) = \sum_{j=1}^{J_\n} e^{\frac ih \vf_\pm(x,\x)} b_{\n,\pm}^j(t,x,\x,h) \bar {e_{\n,\pm}^j(y,\x,h)}
\]
for some $J_\n \in\N$ and for all $j\in\Ii 1 {J_\n}$ we have:
\[
\abs{\partial_x^\a \partial _\x^\b b_{\n,\pm}^j (t,x,\x,h)} \leq c_{\a,\b} \abs t ^{N-(\abs \a + \abs \b)(1-\rho) - \rho \n} \pppg x ^{\n-\abs \a}
\]
and $e_0 \in \Sc_0$:
Indeed, this is true for $\n= 0$ by \eqref{estimb} and if this is true for some $\n\in\N$ then for $j \in \Ii 1 {J_\n}$ we have to compute:
\begin{eqnarray*}
\lefteqn{ih \divg_\x \left( \frac {\partial_\x \vf_\pm(y,\x) + 2t\x}{\abs {\partial_\x \vf_\pm(y,\x) + 2t\x}^2}e^{\frac ih \vf_\pm(x,\x)} b_{\n,\pm}^j(t,x,\x,h) \bar{e_{\n,\pm}^j(y,\x,h)} \right)}\\
&& = ih \abs { \partial_\x \vf_\pm(y,\x) + 2t\x } ^{-2} \times e^{\frac ih \vf_\pm(x,\x)} \\
&& \quad  \times \bigg[
(\D_\x\vf_\pm (y,\x) + 2tn) b_{\n,\pm}^j(t,x,\x,h) \bar {e_{\n,\pm}^j(y,\x)} \\
&& \quad \quad \quad + 2 \frac{( \Hess_\x \vf_\pm(y,\x) + 2t I_n ).{(\partial_\x \vf_\pm (y,\x) + 2t\x)}^2}{\abs{\partial_\x \vf_\pm (y,\x) + 2t\x}^2} b_{\n,\pm}^j(t,x,\x,h) \bar {e_{\n,\pm}^j(y,\x)}  \\ 
&& \quad \quad \quad + \, \frac ih ( \partial_\x \vf_\pm(y,\x) + 2t\x) \partial_\x\vf_\pm(x,\x) b_{\n,\pm}^j(t,x,\x,h) \bar {e_{\n,\pm}^j(y,\x)}\\
&& \quad \quad \quad + \, ( \partial_\x \vf_\pm(y,\x) + 2t\x). \partial_\x b_{\n,\pm}^j(t,x,\x,h) \, \bar {e_{\n,\pm}^j(y,\x)}\\
&& \quad \quad \quad + \, b_{\n,\pm}^j(t,x,\x,h) \, ( \partial_\x \vf_\pm(y,\x) + 2t\x) . \partial_\x  \bar {e_{\n,\pm}^j(y,\x)}\hspace{3cm} \bigg]
%
%
%
\end{eqnarray*}
and check each term using \eqref{majdect}. Note that the factor $\pppg x ^\n$ in the estimate is due to the third term. We only gain a power $t ^{-\rho \n}$ at each iteration because of the fourth term and the fact that we have a bad estimate in $t$ for the derivatives of $b_{\n,\pm}$. Nonetheless, for all $\n \in \N$ we get:
\begin{equation} \label{decomp-Ib}
I_h(b_\pm(t,h),\vf_\pm) U_0^h(t)I_h(e_\pm(h),\vf_\pm)^* =\sum_{j=1}^{J_\n} I_h(b_{\n,\pm}^j(t,h),\vf_\pm) U_0^h(t)I_h(e_{\n,\pm}^j(h),\vf_\pm)^*
\end{equation}

For any $\n \in \N$, the two operators $U_0^h(t)$ and $I_h(e_{\n,\pm}(h),\vf_\pm)^*$ are uniformly bounded in $t$ and $h$ from $L^{2,1+l+\n}$ into itself.
The norm of $I_h(b_{\n,\pm}(t,h),\vf_\pm)$ from $L^{2,1+l+\n}$ to $L^{2,l}$ is estimated by a finite number of derivatives of $b_{\n,\pm}^j$, say $M$ (see \cite{wang88}). Then we have to choose $\n$ such that $N + M(1-\rho) -\n \rho \leq - \d$ to obtain \eqref{eqdect}.\\

To prove \eqref{eqdect2} we introduce a function $\h \in C^\infty(\R)$ such that $\h(s) = 0$ if $s \leq \s_3$ and $\h(s) = 1$ if $s \geq \s_{4} \in ]\s_3,\s_5[$. Then we write $p_{2,\pm}(t,x,\x,h) = p_\pm(t,x,\x,h) \h\left( \pm \frac {\innp x \x}{\abs x \abs \x} \right)$ and $p_{1,\pm}(t,x,\x,h) = p_\pm(t,x,\x,h) -p_{2,\pm}(t,x,\x,h)$. We have:
\begin{equation*}
\abs{\partial_x^\a \partial_\x^\b p_{2,\pm}(t,x,\x,h)} \leq c_{\a,\b} h^{N+1} \abs t ^{N + (2+\abs \a +\abs \b)(1-\rho)} \pppg x ^{-2-\abs \a}
\end{equation*}
The same argument as above proves \eqref{eqdect2} with $p_\pm$ replaced by $p_{2,\pm}$.

For $p_{1,\pm}$, we remark that for $(x,\x) \in \supp p_{1,\pm} \subset \R^{2n} \setminus \G_\pm(R_4,d_4,\pm\s_{4})$ and $(y,\x)\in\supp e_\pm \subset \G_\pm(R_5,d_5,\pm\s_5)$ we have:
\begin{equation*}
\abs{\partial_\x \z _\pm (x,y,\x,t)} \geq c_0 (\abs x + \abs y + \abs t)
\end{equation*}
for some $c_0 > 0$. Indeed we have:
\begin{equation*}
\begin {aligned}
\abs{\partial_\x \z (x,y,\x,t)}
 = \abs {\partial_x \vf_\pm(x,\x) - \partial_\x \vf_\pm(y,\x) -2t\x}
 \geq \abs{x-(y+2t\x)} - c R^{-\rho}
\end {aligned}
\end{equation*}
But $(y+2t \x,\x) \in \G_\pm(R_4,d_4,\pm \s_4)$ so if $\abs{x} \geq \g \abs{y + 2t\x}$:
\begin{equation*}
\abs{x-(y+2t\x)} \geq (1 - \g\inv ) \abs x  \geq  \frac {1 - \g \inv}2 (\abs x + \abs {y+2t\x}) \geq c_0 (\abs x + \abs y + \abs t)
\end{equation*}
and if $\abs x \leq \abs{y + 2t\x}$:
\begin{equation*}
\begin{aligned}
\abs{x-(y+2t\x)}
& \geq \innp{x-(y+2t\x)}{{\mp \hat \x}} = \frac {\pm 1} {\abs \x} (\innp {y+2t\x} \x -\innp x \x ) \\
& \geq (\s_5 \abs {y+2t\x} - \s_{4} \abs x ) \geq (\s_5-\s_4) \abs {y+2t\x} \geq c_0 (\abs x + \abs{y+2t\x})\\
& \geq c_0(\abs x + \abs y + \abs t)
\end{aligned}
\end{equation*}
Then we can do partial integrations with the operator $L = \frac {\partial_\x \z . \partial_\x}{\abs {\partial_\x \z}^2}$, each iteration giving a new power of $h$ and $t^{-\rho }$.
\end{proof}

\begin{corollary}
For all $\d \in \R$, there is $\n \in \N$ such that for all $l\in\R$ and $\pm t \geq 0$ we have:
\begin{equation} \label{eqdect3}
\nr{ \pppg x ^l \Op(\o) I_h(b_\pm(t,h),\vf) U_0^h(t) I_h(e_\pm,\vf)^* \pppg x ^{-1- \n -l}} \leq c h^{N+1}  \pppg t ^{-\d}
\end{equation}
\end{corollary}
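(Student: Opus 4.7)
The plan is to gain an extra factor $h^{N+1}$ over the estimate \eqref{eqdect} by exploiting the fact that $\supp \o$ and $\supp b_\pm(t,h)$ are disjoint. Indeed, we chose $R_1 < R_2$, $d_1 < d_2$ and $\s_1 < \s_2$, so $\G_\pm(R_2,d_2,\pm\s_2) \subset \G_\pm(R_1,d_1,\pm\s_1)$. Proposition \ref{estimbp}(i) puts $\supp b_\pm(t,h)$ inside the former region, while $\o$ vanishes on the latter by hypothesis.

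First I would rewrite $\Op(\o) I_h(b_\pm(t,h),\vf_\pm)$ as a single Fourier integral operator $I_h(c_\pm(t,h),\vf_\pm)$ with the same phase, modulo an $\bigo h 0 (h^\infty)$ error. Applying stationary phase in the intermediate variables $(y,\x)$ to the composed oscillatory integral, whose total phase is $\innp{x-y}{\x} + \vf_\pm(y,\y) - \innp{z}{\y}$, one finds the unique critical point $y = x$, $\x = \partial_y \vf_\pm(x,\y)$, and obtains an asymptotic expansion in which each term is a product of some $(\partial_\x^\a \o)(x,\partial_y \vf_\pm(x,\y))$ with a combination of derivatives of $b_\pm(t,h)$ at $(x,\y)$ built from the phase $\vf_\pm$. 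The crucial observation is that by \eqref{phi-innp}, for $R$ large enough the point $(x, \partial_y \vf_\pm(x,\y))$ still lies in $\G_\pm(R_1,d_1,\pm\s_1)$ whenever $(x,\y) \in \G_\pm(R_2,d_2,\pm\s_2)$. Thus every $\partial_\x^\a \o$ vanishes at the critical point, every term of the expansion is identically zero, and $c_\pm(t,h)$ is $\bigo h 0 (h^K)$ in every seminorm of $\symbor$ for every $K \in \N$; the implicit constants grow only polynomially in $t$, as dictated by \eqref{estimb}.

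To conclude I would apply the argument of proposition \ref{propdect} verbatim to $I_h(c_\pm(t,h),\vf_\pm) U_0^h(t) I_h(e_\pm,\vf_\pm)^*$. The integration-by-parts scheme with the operator $L$ in $\x$ based on the lower bound \eqref{majdect} produces the analogue of the decomposition \eqref{decomp-Ib} and yields a bound controlled by finitely many seminorms of $c_\pm(t,h)$ times $\pppg t ^{-\d'}$, where $\d'$ can be taken arbitrarily large provided $\n$ is chosen large enough. Choosing $K$ in the previous step large enough so that an $h^{N+1}$ factor survives after absorbing the polynomial $t$-growth, and $\d'$ large enough to compensate for that same growth and yield the desired $\pppg t ^{-\d}$, one obtains \eqref{eqdect3}.

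The main technical point I expect is justifying the stationary-phase composition formula uniformly in $t$: the phase $\vf_\pm$ is not compactly supported, and one must ensure the usual asymptotic expansion is valid in the $\symbor$ calculus with time-dependent seminorms growing only polynomially. However, \eqref{phi-innp} shows that for $R$ large, $\vf_\pm - \innp{\cdot}{\cdot}$ is arbitrarily small as a symbol in the appropriate class, so the composition formula can be treated as a controlled perturbation of the trivial one, as was already done implicitly throughout the preceding part of section \ref{sec-incoming}.
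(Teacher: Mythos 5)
Your central observation is exactly the one the paper uses: for $R$ large, $(x,\partial_x\vf_\pm(x,\x))$ stays in $\G_\pm(R_1,d_1,\pm\s_1)$ whenever $(x,\x)\in\supp b_\pm\subset\G_\pm(R_2,d_2,\pm\s_2)$, so every term of the stationary-phase expansion of the composition $\Op(\o)I_h(\cdot,\vf_\pm)$ carries a vanishing factor $\partial_\x^\a\o(x,\partial_x\vf_\pm(x,\x))$ and only an $O(h^{N+1})$ remainder survives; the paper gets precisely this from lemma 4.4 and proposition A.3 of \cite{wang88}. The difference is the order of operations: the paper first performs the integration-by-parts decomposition \eqref{decomp-Ib} of proposition \ref{propdect}, so that the symbols $b^j_{\n,\pm}$ already carry the $\pppg t^{-\d}$ decay, and only then estimates $\Op(\o)I_h(b^j_{\n,\pm},\vf)$ by the composition remainder, which therefore inherits that decay; you compose first and postpone the $t$-decay to the subsequent integrations by parts.

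As written, this ordering leaves a genuine gap: the ``modulo $\bigo h 0 (h^\infty)$ error'' you discard when replacing $\Op(\o)I_h(b_\pm(t,h),\vf_\pm)$ by $I_h(c_\pm(t,h),\vf_\pm)$ has constants which grow (at best polynomially) in $t$, and it never goes through the integration-by-parts machinery that produces the $\pppg t^{-\d}$ factor. Since \eqref{eqdect3} must hold uniformly in $\pm t\geq 0$ with \emph{both} the factor $h^{N+1}$ and the decay $\pppg t^{-\d}$ (this decay is what makes the $t$-integral in the proof of theorem \ref{estimrobert} converge uniformly as $\Im z \to 0$), a bound of the form $C_K h^K \pppg t^{M}$ is not sufficient: $h$ and $t$ are independent parameters, so no power of $h$ can absorb growth in $t$ — the same objection applies to your sentence about choosing $K$ ``so that an $h^{N+1}$ factor survives after absorbing the polynomial $t$-growth''; the growth in $t$ can only be beaten by the $t^{-\rho}$ gained at each integration by parts, i.e.\ by taking $\n$ large. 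The fix is simple and keeps your ordering: make the composition exact, namely set $c_\pm(t,x,\x,h) = e^{-\frac ih \vf_\pm(x,\x)}\bigl[\Op(\o)\bigl(e^{\frac ih \vf_\pm(\cdot,\x)} b_\pm(t,\cdot,\x,h)\bigr)\bigr](x)$, so that $\Op(\o)I_h(b_\pm(t,h),\vf_\pm) = I_h(c_\pm(t,h),\vf_\pm)$ with no error term; the stationary-phase remainder is then part of the symbol $c_\pm$, which satisfies $O(h^K\pppg t^{M_K})$ estimates in every seminorm, and the $L$-integrations by parts of proposition \ref{propdect} — which use only \eqref{majdect}, a property of $\supp e_\pm$, together with symbol estimates on the left factor — convert this polynomial growth into $\pppg t^{-\d}$ for $\n$ large, as you describe for the main term. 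Alternatively, adopt the paper's ordering, which avoids the issue altogether.
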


\begin{proof}
The proof is the same as for \eqref{eqdect} but instead of an estimate of $\nr{I_h(b_{\n,\pm}^j,\vf)}$ we need an estimate of $\nr{\Op(\o)I_h(b_{\n,\pm}^j,\vf)}$. According to lemma 4.4 in \cite{wang88} if we take $R$ large enough, then the supports of $\o(x,\partial_x \vf(x,\x))$ and $b_{\n,\pm}^j$ are disjoint, so this norm is only the norm of the rest given in proposition A.3 of \cite{wang88}. This rest is of order $O(h^{N+1})$ and the time dependance is given as for $\nr{I_h(b_{\n,\pm}^j,\vf)}$ by a finite number of derivatives of $b_{\n,\pm}^j$ so we conclude the same way.
\end{proof}

Now we can prove the main theorem of this section:

\begin{proof}  [Proof of theorem \ref{estimrobert}]

Let $\n \in\N$ given by proposition \ref{propdect} for $\d = 2$. We prove the ``+'' case, and we omit the + subscript for $\vf$, $b$, $p$ and $r$. Let $t\geq 0$. According to \eqref{diff-int} and proposition \ref{propap}, we have:
\begin{equation*}
U_h(t) I_h(b(0,h),\vf) = I_h(b(t,h),\vf) U_0^h(t) -  \int_0^t  U_h(t-s) I_h(p(s,h),\vf) U_0^h(s)\, ds
\end{equation*}
and then, by proposition \ref{propbe}:
\begin{equation*}
\begin{aligned}
U_h(t) \Op(\o_+)
& = h^{N+1} U_h(t) \Op(r(h)) +  I_h(b(t,h),\vf) U_0^h(t) I_h(e(h),\vf)^* \\
& \quad - \int_0^t  U_h(t-s) I_h(p(s,h),\vf) U_0^h(s)I_h(e(h),\vf)^*\, ds\\
\end{aligned}
\end{equation*}
For $\a > \frac 12$ and $\Im z > 0$, using $(\hh-z) \inv  =\frac ih \int_0^\infty e^{\frac {it}h z} U_h(t) \,  dt$ (see theorem 1.10 in \cite{engel}) gives:
\begin{eqnarray*}
\lefteqn{\pppg x ^{-\a} \Op(\o) (\hh-z)\inv \Op(\o_+)\pppg x ^{-\n}}  \\
\nonumber && = h^{N+1} \pppg x ^{-\a} \Op(\o) (\hh-z)\inv \Op(r(h))  \pppg x^{-\n} \\
\nonumber && \quad  + \frac ih \pppg x ^{-\a}\int_{t=0}^\infty e^{\frac{it}h z} \Op(\o) I_h(b(t,h),\vf) U_0^h(t)I_h(e(h),\vf)^*\pppg x ^{-\n}\, dt \\
\nonumber && \quad -  \pppg x ^{-\a}\Op(\o) \int_{s=0}^\infty e^{\frac{is}h z} (\hh-z)\inv  I_h(p(s,h),\vf) U_0^h(s)I_h(e(h),\vf)^*\pppg x ^{-\n}\, ds
\end{eqnarray*}

According to the uniform estimate for the resolvent (see \cite{royer}) the first term is $O(h^N)$. We use \eqref{eqdect3} and \eqref{eqdect2} for the second and third terms, which, after taking the limit $z \to E_h$ if $E_h \in\R$, proves \eqref{estimrob2}. 
\end{proof}

\begin{remark*}
To prove \eqref{estimrob4} we apply the same argument with:
\begin{equation*}
(\hh^* - z)\inv = -\frac ih \int_{-\infty} ^0 e^{-\frac {it} h (\hh^*-z)}\, dt
\end{equation*}
\end{remark*}

\bibliographystyle{alpha}
\bibliography{bibliothese}

\noindent \hspace{2cm} {\sc Julien Royer}

\noindent \hspace{2cm} {\sc Laboratoire de mathématiques Jean Leray}

\noindent \hspace{2cm} {\sc 2, rue de la Houssinière - BP 92208}

\noindent \hspace{2cm} {\sc F-44322 Nantes Cédex 3}

\noindent \hspace{2cm} {\sc France}

\noindent \hspace{2cm} \url{julien.royer@univ-nantes.fr}

\end{document}